\documentclass[UKenglish,thm-restate,notab,nolineno]{socg-lipics-v2021}

\usepackage[sort, nospace, noadjust]{cite} %
\bibliographystyle{plainurl}

\newcommand*{\lipicsPatchAmsMathEnvironmentForLineno}[1]{%
  \renewenvironment{#1}%
  {\linenomath\postdisplaypenalty=0\csname old#1\endcsname}%
  {\csname oldend#1\endcsname\endlinenomath}}%
\newcommand*\lipicsPatchBothAmsMathEnvironmentsForLineno[1]{%
  \lipicsPatchAmsMathEnvironmentForLineno{#1}%
  \lipicsPatchAmsMathEnvironmentForLineno{#1*}}%
\AtBeginDocument{%
  \lipicsPatchBothAmsMathEnvironmentsForLineno{align}%
  \lipicsPatchBothAmsMathEnvironmentsForLineno{flalign}%
  \lipicsPatchBothAmsMathEnvironmentsForLineno{alignat}%
  \lipicsPatchBothAmsMathEnvironmentsForLineno{gather}%
  \lipicsPatchBothAmsMathEnvironmentsForLineno{multline}}%

\usepackage{amsmath,amssymb,amsfonts,mathtools}
\usepackage{mathrsfs}
\usepackage[c2]{optidef}
\usepackage{datetime}
\usepackage{stmaryrd}
\usepackage{tcolorbox}
\usepackage{wasysym, fancybox}
\usepackage{dsfont}
\usepackage{eqparbox}
\usepackage{comment}
\usepackage{algorithmic,algorithm}
\usepackage{bbold}
\usepackage{verbatim}
\usepackage{xspace}
\usepackage{subcaption} %
\usepackage[normalem]{ulem}
\usepackage[section]{placeins}
\usepackage{caption}

\usepackage{etoc} 
\makeatletter
\def\etocarticlestyle{%
    \etocsettocstyle
    {\section *{\contentsname
         }
         }
    {}}
\makeatother

\usepackage{thm-restate} %
	\newtheorem{property}[theorem]{Property}
\usepackage{hyperref}
\usepackage{xcolor}
	\definecolor{darkred}{RGB}{220,50,0}
	\definecolor{lightblue}{rgb}{.80,.85,1}
	\definecolor{darkgreen}{RGB}{0,100,0}
	\definecolor{firebrick}{RGB}{178,34,34}
	\definecolor{salmon}{RGB}{250,128,114}
	\definecolor{turquoise}{RGB}{0,128,114}
	\definecolor{turquoise2}{RGB}{0,180,140}
	\definecolor{turquoise3}{RGB}{0,160,220}
	\definecolor{turquoise3}{RGB}{0,160,220}
	\definecolor{bordeaux}{RGB}{144, 12, 63}
	\definecolor{darkorchid}{rgb}{0.60,0.20,0.80}
	\definecolor{lightorange}{RGB}{234,219,173}
	\definecolor{bianca}{RGB}{0,92,153}
	\definecolor{lila}{RGB}{193,128,255}

\graphicspath{{Figures/}}

\theoremstyle{plain}

\newcommand{\ignore}[1]{}

\newcommand{\DomiEmph}[1]{{\color{darkorchid}\sethlcolor{yellow} \hl{#1}}}
\newcommand{\Domi}[1]{{\color{darkorchid} #1}}

\newcommand{\BiancasProof}[1]{}

\newcommand{\domi}[1]{\marginpar{\tiny\color{darkorchid} D: #1}}

\newcommand{\bianca}[1]{\sethlcolor{lila}\hl{#1}} 
	\soulregister\cite7
	\soulregister\xspace7
	\soulregister\M7
	\soulregister\ref7
\newcommand{\biancacmt}[1]{{\color{white}\sethlcolor{turquoise}\hl{#1}}}
\newcommand{\sidebianca}[1]{\marginpar{\tiny B: \biancacmt{#1}}}

\newcommand{\itemref}[1]{\textcolor{darkgray}{\sffamily\bfseries\upshape\mathversion{bold}\ref{#1}}\xspace}
\newcommand{\styleitem}[1]{\textcolor{darkgray}{\sffamily\bfseries\upshape\mathversion{bold}#1}\xspace}
\newcommand{\Step}[1]{\medskip{\underline{\bf Step #1:}}\hspace{1mm}}

\newcommand{\Rspace}{\ensuremath{\mathbb{R}}\xspace}
\newcommand{\Sphere}{\mathbb{S}}
\newcommand{\Conv}[1]{\operatorname{conv}(#1)}
\newcommand{\Aff}[1]{\operatorname{Aff}(#1)}
\newcommand{\Bis}[1]{\operatorname{Bis}(#1)}
\newcommand{\card}[1]{\operatorname{card}#1}
\newcommand{\Link}[2]{\operatorname{Lk}(#1,#2)}
\newcommand{\Star}[2]{\operatorname{St}(#1,#2)}
\newcommand{\UpperC}[1]{\operatorname{UpperComplex}_{\M}(#1)}
\newcommand{\LowerC}[1]{\operatorname{LowerComplex}_{\M}(#1)}
\newcommand{\Above}[1]{\operatorname{UpperFacets}_{\M}(#1)}
\newcommand{\Below}[1]{\operatorname{LowerFacets}_{\M}(#1)}
\newcommand{\NaiveVertSimp}{\hyperlink{naivealgo}{\operatorname{\tt NaiveVerticalSimplification}}}
\newcommand{\PracticalVertSimp}{\hyperlink{practicalalgo}{\operatorname{\tt PracticalVerticalSimplification}}}
\newcommand{\PracticalSquash}{\operatorname{\tt PracticalSquash}}
\newcommand{\NaiveSquash}{\operatorname{\tt NaiveSquash}}
\newcommand{\NonCrossingVertSimp}{\operatorname{\tt NonCrossingVertSimp}}
\newcommand{\NonCrossingSquash}{\operatorname{\tt NonCrossingSquash}}
\newcommand{\UpperFacets}[2]{\operatorname{UpperFacets}_{#2}(#1)}
\newcommand{\LowerFacets}[2]{\operatorname{LowerFacets}_{#2}(#1)}
\newcommand{\Bold}[1]{\mbox{\bf #1}}

\newcommand{\TanSpace}{\operatorname{Tan}}

\newcommand{\Offset}[2]{#1^{\oplus #2}}

\newcommand{\M}{\ensuremath{\mathcal{M}}\xspace}
\newcommand{\N}{\ensuremath{\pi_\M(\US{K})}\xspace}
\newcommand{\X}{\ensuremath{\mathcal{X}}\xspace}
\newcommand{\HH}{\ensuremath{\mathcal{H}}\xspace}

\newcommand{\up}[2]{\operatorname{up}_{#1}(#2)}
\newcommand{\low}[2]{\operatorname{low}_{#1}(#2)}
\newcommand{\UpperSkin}[1]{\operatorname{UpperSkin}_{\M}(#1)}
\newcommand{\LowerSkin}[1]{\operatorname{LowerSkin}_{\M}(#1)}

\newcommand{\MA}[1]{\operatorname{axis}(#1)}

\newcommand{\lfs}[1]{\operatorname{lfs}(#1)}
\newcommand{\below}{\prec_{\M}}
\renewcommand{\above}{\succ_{\M}}
\newcommand{\AboveSet}[1]{\operatorname{Above}(#1)}
\newcommand{\BelowSet}[1]{\operatorname{Below}(#1)}
\newcommand{\Side}[1]{\operatorname{Side}(#1)}
\newcommand{\alt}[1]{\operatorname{alt}_{\M}(#1)}

\newcommand{\Nerve}[1]{\operatorname{Nerve}(#1)}
\newcommand{\Vertexset}[1]  {\operatorname{Vert}{#1}}
\newcommand{\Del}[1]{\operatorname{Del}(#1)}
\newcommand{\DelR}[1]{\operatorname{Del}_{\M}(#1)}
\newcommand{\DelC}[1]{\operatorname{CoreDel}_{\M}(#1)}

\newcommand{\US}[1]{\boldsymbol{\lvert}#1\boldsymbol{\rvert}}
\newcommand{\Interior}[1]{\operatorname{relint}(#1)}
\newcommand{\Closure}[1]{\operatorname{closure}(#1)}

\newcommand{\Cl}[1]{\operatorname{Cl}#1}

\newcommand{\Reach}[1]{\operatorname{Reach}(#1)}

\newcommand{\Tangent}[2]{\mathbf{T}_{#1}#2}

\newcommand{\Normal}[2]{\mathbf{N}_{#1}#2}
\newcommand{\normal}[1]{\mathbf{n}({#1})}

\newcommand{\spx}{\gamma}

\newcommand{\reach}{\ensuremath{\mathcal{R}}\xspace}
\newcommand{\rtube}{\ensuremath{r}\xspace}
\newcommand{\rcirc}[1]{\rho({#1})}

\newcommand{\Nintegers}{\mathbb{N}}

\title{
  When alpha-complexes collapse onto codimension-1 submanifolds 
}

\author{Dominique Attali}{Universit{\'e} Grenoble Alpes, CNRS, Grenoble INP, GIPSA-lab, Grenoble, France}{Dominique.Attali@grenoble-inp.fr}{https://orcid.org/0000-0003-4808-6301}{}

\author{Matt{\'e}o Cl{\'e}mot}{Universite Claude Bernard Lyon 1, CNRS, INSA Lyon, LIRIS, Villeurbanne, France}{matteo.clemot@ens-lyon.fr}{https://orcid.org/0009-0000-2524-0244}{}

\author{Bianca B. Dornelas}{Institute of Geometry, TU Graz, Austria \and Institute for Medical Informatics, Statistics and Documentation, MedUni Graz, Austria}{contact@bdornelas.com}{https://orcid.org/0000-0002-4827-4663}{Funded by the Austrian Science Fund (FWF), grant W1230.}

\author{Andr{\'e} Lieutier}{No affiliation, Aix-en-Provence, France}{andre.lieutier@gmail.com}{}{}

\authorrunning{D. Attali, M. Cl\'emot, B. Dornelas, and A. Lieutier} %

\Copyright{Dominique Attali, Matt\'eo Cl\'emot, Bianca Dornelas, and Andr\'e Lieutier} %

\ccsdesc{Theory of computation $\rightarrow$ Computational geometry}%

\keywords{Submanifold reconstruction, triangulation, abstract simplicial complexes, collapses, convexity}%

\nolinenumbers %

\hideLIPIcs

\begin{document}
\maketitle

\begin{abstract}
    Given a finite set of points $P$ sampling an unknown smooth
    surface $\M \subseteq \Rspace^3$, our goal is to triangulate $\M$
    based solely on $P$. Assuming $\M$ is a smooth orientable
    submanifold of codimension $1$ in $\Rspace^d$, we introduce a
    simple algorithm, \emph{Naive Squash}, which simplifies the
    $\alpha$-complex of $P$ by repeatedly applying a new type of
    collapse called \emph{vertical} relative to $\M$. Naive Squash
    also has a practical version that does not require knowledge of
    $\M$. We establish conditions under which both the naive and
    practical Squash algorithms output a triangulation of $\M$. We
    provide a bound on the angle formed by triangles in the
    $\alpha$-complex with $\M$, yielding sampling conditions on $P$
    that are competitive with existing literature for smooth surfaces
    embedded in $\Rspace^3$, while offering a more compartmentalized
    proof. As a by-product, we obtain that the restricted Delaunay
    complex of $P$ triangulates \M when \M is a smooth surface in
    $\Rspace^3$ under weaker conditions than existing ones.
\end{abstract}

\etocdepthtag.toc{mtchapter}

\clearpage
\section{Introduction}
\label{section:introduction}

Given a finite set of points $P$ that sample an unknown smooth surface
$\M \subseteq \Rspace^3$ (example in
Figure~\ref{fig:croissants-left}), we aim to approximate $\M$ based
solely on $P$. This problem, known as {\em surface reconstruction},
has been widely
studied~\cite{ohrhallinger2021surveycurves,bjerkevik2022reconstructEpsilonSample,hoppe1992surface,bernardini1999ball,alexa2001point,carr2001reconstruction,kazhdan2006poisson,giraudot2013noise}. Several
algorithms based on computational geometry have been developed, such
as Crust~\cite{amenta1999surface},
PowerCrust~\cite{Amenta_Choi_Kolluri:2001:powerCrust},
Cocone~\cite{Amenta_Choi_Dey_Leekha:2022:simple_algo},
Wrap~\cite{edels:2003:wrap} and variants based on flow
complexes~\cite{giesen2002surface,giesen2003flow,dey2008critical,bauer2024wrapping}. These
algorithms rely on the Delaunay complex of $P$ and offer theoretical
guarantees, summarized in~\cite{Dey:2006:reconstruction_book}.

The most desirable guarantee is that the reconstruction outputs a
triangulation of \M, that is, a simplicial complex whose support is
{\em homeomorphic} to $\M$, in which case we call the algorithm {\em
  topologically correct}.  That has been established for many of the
aforementioned algorithms, assuming that $P$ is noiseless ($P
\subseteq \M$) and sufficiently dense. Specifically, let $\reach > 0$
be a lower bound on the reach of $\M$, and $\varepsilon \geq 0$ an
upper bound on the distance between any point of $\M$ and its nearest
point in $P$.  Both Crust and Cocone are topologically correct under
the condition $\frac{\varepsilon}{\reach} \leq
0.06$~\cite{dey2017curve}, which, to our knowledge,
{is the weakest such constraint}
guaranteeing topological correctness
  for surface reconstruction algorithms in $\Rspace^3$.
	
Surface reconstruction generalizes to approximating an unknown smooth
submanifold $\M \subseteq \Rspace^d$ from a finite sample $P$. One
approach in that case, similar to the Wrap algorithm in
$\mathbb{R}^3$, involves \emph{collapses}, which are typically applied
to complexes like the {\em $\alpha$-complex} \cite{bauer2017morse}. The $\alpha$-complex of
$P$~\cite{edelsbrunner1994three, edelsbrunner2011alpha,
  Edels-Harer_Comput-Topo} includes simplices whose circumspheres have
radius $\leq \alpha$ and enclose no other points of
$P$~\cite{Edels_Kirkpatrick_Seidel:1983:introducce_alpha_shape}. For
well-chosen $\alpha$, the $\alpha$-complex has the same homotopy type
as
$\M$~\cite{niyogi08:_findin_homol_of_subman_with,chazal2009sampling,chazal08:_smoot_manif_recon_from_noisy,socg24-NSW},
provided that $P$ is sufficiently dense and has low noise relative to the
reach of $\M$. However, it may still fail to capture the topology of
$\M$, as illustrated in Figure~\ref{fig:croissants-middle}: for $\M
\subseteq \Rspace^3$, the $\alpha$-complex of $P$ includes {\em
  slivers}, tetrahedra that have one dimension more than $\M$,
preventing the existence of a homeomorphism. Slivers complicate
reconstructing $k$-dimensional submanifolds in $\Rspace^d$ for $k \geq
2$ for all Delaunay-based reconstruction attempts.

\subparagraph{Contributions.}

We introduce a simple algorithm, $\NaiveVertSimp$, which takes
  as input a simplicial complex $K$ and simplifies it by applying
  collapses guided by the knowledge of \M. We call it naive because
  this knowledge is non-realistic in practice. We find conditions
  under which the algorithm is topologically correct for smooth
orientable submanifolds $\M$ of $\Rspace^d$ with codimension one. Its
variant, $\PracticalVertSimp$, does not rely on $\M$ and remains
topologically correct, though it requires stricter
conditions. When applying both algorithms to the $\alpha$-complex of
$P$ and returning the result, we obtain two reconstruction algorithms
which we refer to as $\NaiveSquash$ and $\PracticalSquash$,
respectively.  We determine conditions on the inputs $P$ and $\alpha$
that guarantee the topological correctness of these squash
algorithms. Moreover, for $d=3$, we show that $\PracticalSquash$ is
correct under the sampling condition $\frac{\varepsilon}{\reach} \leq
0.178$ (see Figure~\ref{fig:croissants-right} for an example output),
while $\NaiveSquash$ is correct for $\frac{\varepsilon}{\reach} \leq
0.225$, assuming suitable choice of $\alpha$.  We also show that the
{\em restricted Delaunay
  complex}~\cite{boissonnat_Oudot_dyer_ghosh-restricted_del} is
generically homeomorphic to $\M$ when $\frac{\varepsilon}{\reach} \leq
0.225$.

In addition, while proving these results, we derive an upper bound for
when triangles with vertices on a smooth submanifold
$\M\subseteq \Rspace^d$ form a small angle with $\M$: for a triangle
$abc$ with $a,b,c \in \M$, longest edge $bc$, and circumradius $\rho$,
we show that the angle between the affine space spanned by $abc$ and
the tangent space to $\M$ at $a$ satisfies:
\begin{align}
  \label{eq:contrib-angle}
  \sin \angle \Aff{abc}, \Tangent a \M \, &\leq \, \frac{\sqrt{3}\, \rho}{\reach}.
\end{align}

\begin{figure}
  \centering
  \begin{subfigure}[b]{0.5em}
	\subcaption{}\label{fig:croissants-left}
  \end{subfigure}%
  \raisebox{-0.1cm}{
	\includegraphics[width=.28\linewidth]{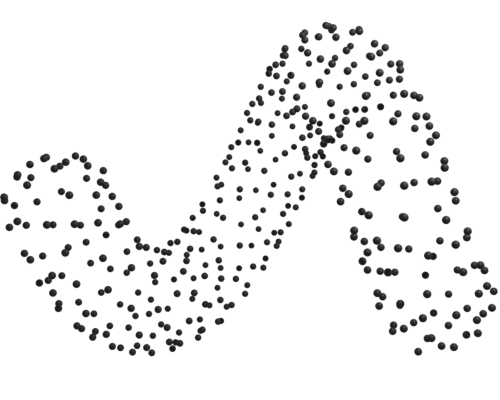}
  }
  \hspace{0.1cm}
  \begin{subfigure}[b]{0.5em}
	\subcaption{}\label{fig:croissants-middle}
  \end{subfigure}%
  \raisebox{-0.1cm}{
	\includegraphics[width=.28\linewidth]{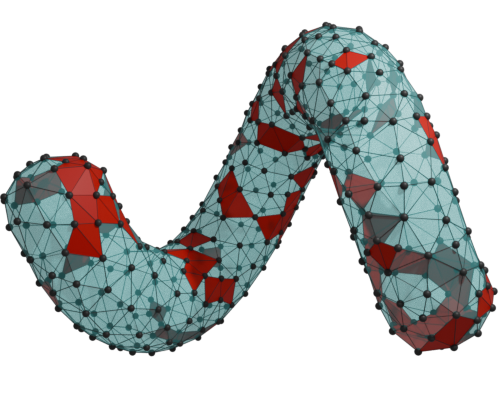} 
  }
  \hspace{0.1cm}
  \begin{subfigure}[b]{0.5em}
	\subcaption{}\label{fig:croissants-right}
  \end{subfigure}%
  \raisebox{-0.1cm}{
	\includegraphics[width=.28\linewidth]{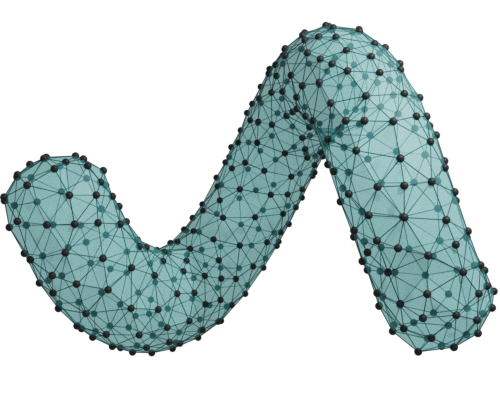}
  }
  \caption{Points sampling a surface in $\mathbb{R}^3$ (a) with the
    corresponding $\alpha$-complex, where tetrahedra are highlighted
    (b). Applying Practical Squash with parameter $\alpha$ outputs
    (c). }
  \label{fig:collapsing_croissants}
\end{figure}
	
\subparagraph{Techniques.}
Our proof of correctness for the squash algorithms is more
compartmentalized than the ones present in the literature: we first
consider a smooth orientable submanifold $\M \subseteq \Rspace^d$ with
codimension one and a general simplicial complex $K$ embedded in
$\Rspace^d$ and contained within a small tubular neighborhood of
\M. We introduce {\em vertical} collapses (relative to $\M$) in $K$
which remove $d$-simplices of $K$ that either have no $d$-simplices of
$K$ above them in directions normal to $\M$ or no $d$-simplices of $K$
below them in directions normal to $\M$.  $\NaiveVertSimp$ iteratively
applies vertical collapses relative to $\M$.  $\PracticalVertSimp$
does not depend on knowledge of $\M$ and applies vertical collapses
relative to a hyperplane, constructed dynamically based on the simplex
currently considered for collapse.
	
We examine conditions for the correctness of these algorithms.  Apart
from the requirement that $K$ has no vertical $i$-simplices relative
to $\M$ for $0 < i < d$ and that its support projects onto $\M$ and
fully covers it, we require the {\em vertical convexity} of $K$
relative to $\M$. This means that each normal line to $\M$ at a point
$m$ (restricted to a small ball around $m$) intersects the support of
$K$ in a convex set. For $\PracticalVertSimp$, an additional
requirement is that the $(d-1)$-simplices of $K$ must form an angle of
at most $\frac{\pi}{4}$ with $\M$.
	
Afterwards, we present $\PracticalSquash$ and $\NaiveSquash$, which
initialize the previous algorithms with $K$ as the $\alpha$-complex of
a point set $P \subseteq \mathbb{R}^d$ that samples $\M$. We show that
correctness is guaranteed when the $i$-simplices in the $\alpha$-complex form small angles
with $\M$ for $0 < i < d$. We provide explicit upper bounds for these
angles, expressed in terms of $\varepsilon$, $\delta$, and $\alpha$,
where $\varepsilon$ and $\delta$ control the sample density and noise
in $P$.
	
We analyze the case $d=3$ and provide numerical bounds on the ratios
$\frac{\varepsilon}{\reach}$ and $\frac{\alpha}{\reach}$ that ensure
the correctness of both squash algorithms. Instrumental to this step,
we derive \eqref{eq:contrib-angle} which enables us to upper bound the
angles of triangles in the $\alpha$-complex relative to the manifold
and is of independent interest.

\subparagraph{Related work.}
The Squash algorithms (both practical and naive) are similar to Wrap
\cite{edelsbrunner1994three,bauer2017morse} in that they compute a subcomplex $K$ of the Delaunay complex
of $P$ and then perform a sequence of collapses. However, the
selection of $K$ and the nature of the collapses differ between the
two methods: in the naive squash, definitions are relative to $\M$,
unlike Wrap, which uses flow lines derived from $P$ to guide the
collapsing sequence. This distinction allows us to address the general
case first and then focus on the specific case of the
$\alpha$-complex.  Moreover, while we guarantee correctness for a
larger interval of the ratio $\frac{\varepsilon}{\reach}$ compared to
previous literature, most existing work addresses non-uniform sampling
cases, whereas our work focuses on uniform sampling.

The vertical convexity assumption, crucial for the correctness of our
algorithms, has been employed in various forms to establish
collapsibility of certain classes of simplicial complexes
\cite{adiprasito2020barycentric,chen2019neural,attali2019convexity}.
Similarly, bounding the angle between the manifold and the
  simplices used for reconstruction has been essential in prior work
\cite{Dey:2006:reconstruction_book,cheng2013delaunay,cheng2005manifold,AttaliLieutierFlatDelaunay2022,Attali_Lieutier:22:Del_Like_Triang}.
Our bound \eqref{eq:contrib-angle} remains true when replacing
$\reach$ with the local feature size of $a$, as explained in
Appendix~\ref{appendix:angles}. The thus modified bound improves upon
the known bound~\cite[Lemma 3.5]{Dey:2006:reconstruction_book}.

At last, for $d=3$, we show in
Appendix~\ref{appendix:restricted-Delaunay-complex}
that the restricted Delaunay complex is generically homeomorphic to $\M$ for
$\frac{\varepsilon}{\reach} \leq 0.225$. In contrast, it is proven to
be homeomorphic to $\M$ only if $\frac{\varepsilon}{\reach} \leq
0.09$~\cite[Theorem 13.16]{cheng2013delaunay}, a result based on the
Topological Ball Theorem~\cite[Theorem 13.1]{cheng2013delaunay}. Our
proof bypasses this requirement, relying instead on $\NaiveSquash$.

\subparagraph{Outline.}
After the preliminaries in Section~\ref{section:preliminaries},
Section~\ref{section:vertically-convex-simplicial-complexes} defines
vertically convex simplicial complexes. We introduce the concepts of
upper and lower skins for these complexes and prove that both are
homeomorphic to their orthogonal projection onto
$\M$. Section~\ref{section:vertical-collapses} presents general
conditions under which a simplicial complex $K$ can be transformed
into a triangulation of $\M$ through either Naive or Practical
vertical simplification. Section~\ref{section:alpha-complexes}
provides conditions ensuring the topological correctness of both the
naive and practical Squash algorithms and the restricted Delaunay
complex.  All missing proofs can be found in the appendices.

\section{Preliminaries}
\label{section:preliminaries}
\paragraph*{Subsets and submanifold.}

Given a subset
$X \subseteq \Rspace^d$, we define several important geometric concepts. The convex hull of $X$ is denoted as $\Conv{X}$ and the affine space spanned by $X$ as $\Aff{X}$. The interior of $X$ is denoted as $X^\circ$. The \emph{relative interior} of $X$, denoted as $\Interior{X}$, represents the interior of $X$ within $\Aff{X}$. 
For any point $x$ and radius $r$, we denote the closed ball with center $x$ and radius $r$ as $B(x,r)$. The $r$-offset of $X$, denoted as $\Offset{X}{r}$, is the union of closed balls centered at each point in $X$ with radius $r$: $\Offset X r = \bigcup_{x \in X} B(x,r)$.
The \emph{medial axis} of $X$, denoted as $\MA{X}$, is the set of points in $\Rspace^d$ that have at least two nearest points in $X$.  The \emph{reach} of $X$, denoted as $\Reach{X}$, is the infimum of distances between $X$ and $\MA{X}$. 
Furthermore, we define the projection map $\pi_X: \Rspace^d \setminus \MA{X} \to X$, which associates each point $x$ with its unique closest point in $X$. This projection map is well-defined on every subset of $\Rspace^d$ that does not intersect $\MA{X}$, particularly on every $r$-offset of $X$ with $r < \Reach{X}$.

\begin{tcolorbox}
Throughout the paper, we designate $\M$ as a compact $C^2$ submanifold
of $\Rspace^d$ of codimension one, and, therefore, orientable (see
e.g. \cite{samelson1969orientability}).
\end{tcolorbox}
Given $m \in \M$, we denote the affine tangent space to $\M$ at $m$ as
$\Tangent m \M$ and the affine normal space as $\Normal m \M$. As $\M$
has codimension one, $\Tangent m \M$ is a hyperplane and $\Normal m
\M$ is a line. Additionally, since $\M$ is $C^2$, it has a positive
reach~\cite{scholtes2013hypersurfaces}.  For all real numbers $\rtube$
such that $0 < \rtube < \Reach \M$, the $\rtube$-offset of $\M$ can be
partitioned into the set of normal segments $\{\Normal m \M \cap
B(m,\rtube)\}_{m \in \M}$~\cite{doC76}, that is,
\[
\Offset \M \rtube = \dot{\bigcup}_{m \in \M} \Normal m \M \cap B(m,\rtube).
\]

\begin{tcolorbox}
  We define $\mathbf{n}: \M \to \Rspace^d$ as a
  differentiable field of unit normal vectors of $\M$~\cite{doC76}. We
  let $\reach$ be a finite arbitrary number such that $0 < \reach \leq
  \Reach\M$, fixed throughout.
\end{tcolorbox}

\paragraph*{Abstract simplicial complexes and collapses.}

We recall some classical definitions of algebraic topology
\cite{munkres1993elements, Edels-Harer_Comput-Topo}. An {\em abstract
  simplicial complex} is a collection $K$ of finite non-empty sets
with the property that if $\sigma$ belongs to $K$, so does every
non-empty subset of $\sigma$. Each element $\sigma$ of $K$ is called
an {\em abstract simplex} and its {\em dimension} is one less than its
cardinality: $\dim \sigma = \card \sigma - 1$. A simplex of dimension
$i$ is called an $i$-simplex and the set of $i$-simplices of $K$ is
denoted as $K^{[i]}$.  If $\tau$ and $\sigma$ are two simplices such
that $\tau \subseteq \sigma$, then $\tau$ is called a \emph{face} of
$\sigma$, and $\sigma$ is called a \emph{coface} of
$\tau$.
The $(d-1)$-dimensional faces
of $\sigma$ are the {\em facets} of
$\sigma$. The {\em vertex set} of $K$ is
$\Vertexset K = \bigcup_{\sigma \in K} \sigma$.  A {\em subcomplex}
$L$ of $K$ is a simplicial complex whose elements belong to $K$. The
     {\em link} of $\sigma$ in $K$, denoted $\Link \sigma K$, is the
     set of simplices $\tau$ in $K$ such that $\tau \cup \sigma \in K$
     and $\tau \cap \sigma = \emptyset$. It is a subcomplex of
     $K$. The {\em star} of $\sigma$ in $K$, denoted as $\Star \sigma
     K$, is the set of cofaces of $\sigma$.  The simplicial complex
     formed by all the faces of $\sigma$ is the \emph{closure} of
     $\sigma$, $\Cl{\sigma}$.

Consider next an abstract simplex $\sigma \subseteq \Rspace^d$.  One
can associate it to the geometric simplex
$\Conv \sigma \subseteq \Rspace^d$, called the {\em support} of $\sigma$. In
general, $\dim(\Aff\sigma) \leq \dim(\sigma)$ and we say that $\sigma$
is {\em non-degenerate} whenever $\dim(\Aff \sigma) = \dim\sigma$.
Given a simplicial complex $K$ with vertices in $\Rspace^d$, we say
that $K$ is {\em canonically embedded} if the following two
conditions are satisfied:
\begin{enumerate}
\item $\dim \sigma = \dim (\Aff \sigma)$ for all $\sigma \in K$;
\item $\Conv {\alpha \cap \beta} = \Conv \alpha \cap \Conv \beta$ for all $\alpha,\beta \in K$.
\end{enumerate}

\begin{tcolorbox}
In this paper we consider exclusively abstract simplicial complexes $K$ with vertex sets in $\Rspace^d$ and which are canonically embedded.
\end{tcolorbox}

Given such a simplicial complex, its \emph{underlying space} (or \emph{support}) is the point set $\US{K} = \bigcup_{\sigma \in K} \Conv \sigma$.
\hypertarget{complexhere}{}
 If $\US{K}$ is homeomorphic to \M, then $K$ is called a {\em triangulation} of \M or is said to triangulate \M. 
Since $K$ is canonically embedded, the link of every $i$-simplex
of $K$ falls into one of the following two categories: (1) it is a
triangulation of the sphere of dimension $d-i-1$ or (2) it is a
proper\footnote{A {\em proper} subset $A$ of $B$ is such that $A \neq
B$. } subcomplex of such a triangulation. The {\em boundary complex}
of a simplicial complex $K$ is the subset of simplices in the second
category, denoted $\partial K$, and it holds that $\US{\partial K} =
\partial \US{K}$. Simplices in $\partial K$ are referred to as {\em
  boundary simplices} of $K$. Given a set of abstract simplices
$\Sigma$, if $\sigma \in \Sigma$ has no coface in $\Sigma$ besides
itself, then $\sigma$ is said to be {\em inclusion-maximal} in
$\Sigma$.

Suppose that $\tau \in K$ is a simplex whose star in $K$ has a unique
inclusion-maximal element $\sigma \neq \tau$. Then $\tau$ is said to
be {\em free} in $K$. Equivalently, $\tau$ is free in $K$ if and only
if the link of $\tau$ in $K$ is the closure of a simplex. 
Consequently,
free simplices of $K$ are always
boundary simplices of $K$. However, not all boundary simplices of $K$ are necessary free. 
There are instances where none of them are free, such as the famous example 
when $K$ triangulates the 2-dimensional subspace of $\Rspace^3$, known as the ``house with two rooms''.  A {\em collapse}
in $K$ is the operation that removes from $K$ a free simplex $\tau$
along with all its cofaces. This operation is known to preserve the
homotopy-type of $\US{K}$.

\paragraph*{Delaunay complexes, $\alpha$-complexes, and $\alpha$-shapes.}

Consider a finite collection of points $P \subseteq \Rspace^d$. The
Voronoi region of $q\in P$ is the collection of points $x
\in \Rspace^d$ that are closer to $q$ than to any other points of $P$:
\[
V(q,P) = \{ x \in \Rspace^d \mid \| x - q \| \leq \| x - p \|, \text{ for all $p \in P$} \}.
\]
Given a subset $\sigma \subseteq P$, 
let $V(\sigma,P) = \bigcap_{q
  \in \sigma} V(q,P)$. The {\em Delaunay complex} is defined as
\[
\Del P = \{  \sigma \subseteq P \mid  \sigma \neq \emptyset \text{ and } V(\sigma,P) \neq \emptyset \}.
\]
A simplex $\sigma \in \Del{P}$ is called a {\em Delaunay simplex} of $P$ and it
is {\em dual} to its corresponding {\em Voronoi cell}
$V(\sigma,P)$. 
Henceforth, we assume that the set of points $P$ is in
{\em general position}. This means that no $d+2$ points of $P$ lie on the same
$d$-dimensional sphere and no $k+2$ points of $P$ lie on the same
$k$-dimensional flat for $k<d$. In that case, $\Del P$ is canonically embedded
\cite{Fortune_Del-triang}.  For $\alpha \geq 0$, the $\alpha$-complex
of $P$ is the subcomplex of $\Del{P}$ defined by:
\[
\Del{P,\alpha} = \{ \sigma \subseteq P \mid  \sigma \neq \emptyset \text{ and }  V(\sigma,P) \cap \Offset P \alpha \neq \emptyset \}.
\]
Its underlying space $\US{\Del{P,\alpha}} = \bigcup_{\sigma \in
  \Del{P,\alpha}} \Conv\sigma$ is called the {\em $\alpha$-shape} of
$P$. It
has the properties: (i) $\US{\Del{P,\alpha}} \subseteq
\Offset P \alpha$ and (ii) $\US{\Del{P,\alpha}}$ is homotopy
equivalent to $\Offset P \alpha$; see \cite{edelsbrunner2011alpha} for
more details.

\begin{figure}[htb]
  \begin{center}
    \includegraphics{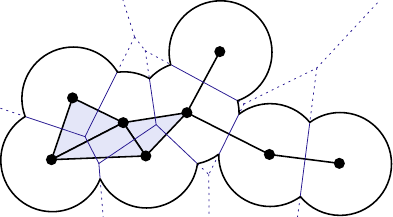}\hfill
    \includegraphics{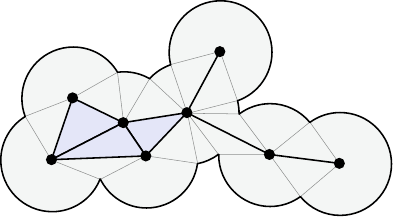}
  \end{center}
  \caption{Left: $P$ is such that neither $\Offset P \alpha$ nor
    $\Del{P,\alpha}$ are vertically convex relative to a horizontal
    line.  Right: Decomposition of $\Offset P \alpha \setminus
    \US{\Del{P,\alpha}}^\circ$ in joins as described in \cite{edelsbrunner2011alpha}.\label{figure:alpha-complex-simple} }
\end{figure}

\section{Vertically convex simplicial complexes}
\label{section:vertically-convex-simplicial-complexes}

In this section, we define the concept of vertical convexity relative
to \M\ for both a set and a simplicial complex. We then study the
boundary of a vertically convex simplicial complex $K$.
Specifically, we divide the boundary of its underlying space into an
upper and a lower skins, enabling us to identify two boundary
subcomplexes: an upper and a lower ones. Furthermore, we show that
each of these subcomplexes triangulates the orthogonal projection of
$\US{K}$ onto \M
(Lemma~\ref{lemma:upper-and-lower-complex-equal-boundary}). We also
extend the definitions for a single $d$-simplex.

\begin{definition}[Vertical convexity]
  \label{definition:vertical-convexity}
  A set $X \subseteq \Rspace^d$ is  \emph{vertically convex} relative to $\M$ if $\exists\, \rtube \in [0, \Reach{\M})$ such that
  \begin{enumerate}
  \item $X \subseteq \Offset \M \rtube$ and
  	\item $\forall\, m \in \M$, $\Normal m \M \cap B(m,\rtube) \cap X$ is convex. 
  \end{enumerate}
  In other words, for any $m \in \M$, the set $\Normal m \M \cap
  B(m,\rtube) \cap X$ is either empty or a line segment (possibly of
  zero-length). A simplicial complex $K$ is {\em vertically convex}
  relative to \M if its underlying space $\US{K}$ is.
\end{definition}

\begin{figure}[htb]
  \def\svgwidth{.9\linewidth}
  \centering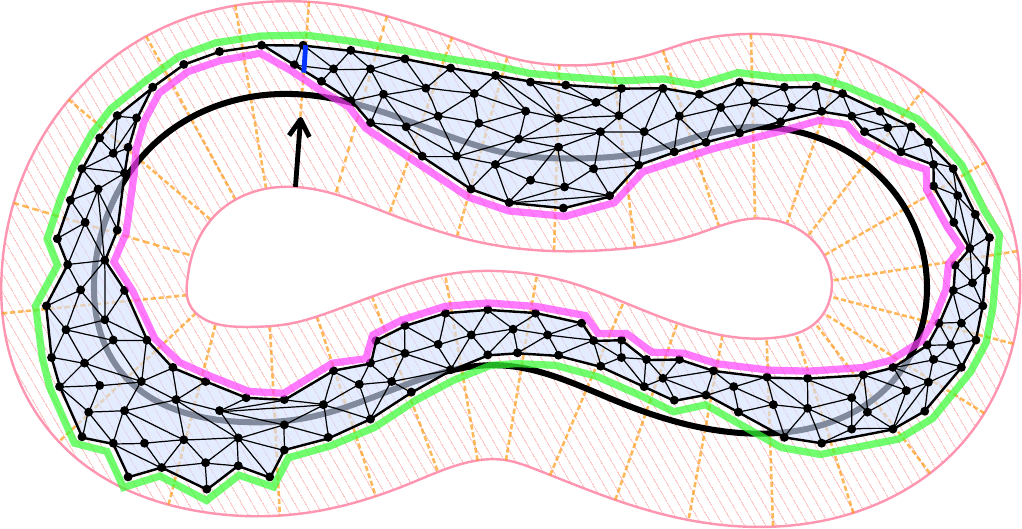
  \caption{
    A simplicial complex $K$ vertically convex relative to the curve \M. Each segment $\Normal m \M \cap B(m,r)$ (in dashed orange) intersects $\US{K}$ in a line segment, as highlighted (blue) for the point $m$ (represented by a black square). Lemma~\ref{lemma:upper-and-lower-skins-homeomorphic-to-manifold} shows that each of the two skins of $\US{K}$, depicted in green and pink according to the labeling arrows, is homeomorphic to \M.} \label{figure:upper-lower-skins}
\end{figure}

Examples of a non-vertically convex and a vertically convex simplicial complexes
are provided in Figures~\ref{figure:alpha-complex-simple} and
\ref{figure:upper-lower-skins}, respectively.

\subsection{Upper and lower skins}
\label{section:upper-and-lower-skins}

Assume that $X \subseteq \Rspace^d$ is vertically convex relative to \M and let $m \in
\pi_\M(X)$.  The endpoints (possibly equal) of the segment $\Normal m
\M \cap B(m,\rtube) \cap X$ are denoted by $\low X m$ and $\up X m$,
with $\up X m$ being above $\low X m$ along the direction of the unit
normal vector $\normal m$. With this notation, $X$ can be expressed as
a union of disjoint normal segments:
\[
X  = \dot{\bigcup}_{m \in \pi_\M(X)} [\low X m, \up X m].
\]
The {\em upper skin} and {\em lower
  skin} of $X$ are, respectively:
\begin{align*}
  \UpperSkin X &= \{ \up X m \mid m \in \pi_\M(X) \},\\
  \LowerSkin X &= \{ \low X m \mid m \in \pi_\M(X) \}.
\end{align*}
Figure~\ref{figure:upper-lower-skins} displays an example.
Our goal is to study the skins of $\US{K}$, for which we need two extra definitions.

\begin{definition}[Vertical simplex]
  A simplex $\sigma \subseteq \Rspace^d$ such that $\Conv\sigma
    \subseteq \Rspace^d \setminus \MA{\M}$ is {\em vertical} relative
  to \M if there exists a pair of distinct points in $\Conv \sigma$
  sharing the same projection onto \M.
\end{definition}

\begin{definition}[Non-vertical skeleton]
  Assume that $\US{K}  \subseteq \Rspace^d \setminus \MA{\M}$.  We say that $K$ has a {\em non-vertical skeleton} relative to \M if $K$ contains no vertical $i$-simplices relative to \M for all integers $0 < i <d$.
\end{definition}

The next lemma is a key property of vertically convex
simplicial complexes:

\begin{restatable}{lemmma}{SkinsHomeomorphicToManifold}
  \label{lemma:upper-and-lower-skins-homeomorphic-to-manifold}
  Suppose that $K$ is vertically convex and has a non-vertical
  skeleton relative to \M. Then, the upper and lower skins of
  $\US{K}$ are closed sets, each homeomorphic to $\pi_\M(\US{K})$.
  The homeomorphism is realized in both cases by $\pi_\M$. In addition,
  \begin{equation}
    \label{eq:boundary-equals-upper-and-lower}
  \partial \US{K} = \UpperSkin {\US{K}} \cup \LowerSkin {\US{K}}.
  \end{equation}
\end{restatable}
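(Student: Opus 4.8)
\emph{Plan.} The strategy is to isolate one geometric fact about the normal fibres of $\pi_\M$ over $\US{K}$ and then conclude by elementary point-set topology. Let $\rtube<\Reach\M$ be a radius witnessing the vertical convexity of $K$, so that $\US{K}\subseteq\Offset\M\rtube$ and $\Offset\M\rtube\cap\MA\M=\emptyset$; then $\pi_\M$ is defined and continuous on $\Offset\M\rtube$, and $N:=\pi_\M(\US{K})$ is compact. By vertical convexity, for each $m\in N$ the fibre $\US{K}\cap\Normal m\M\cap B(m,\rtube)$ is the segment $[\low{\US{K}}m,\up{\US{K}}m]$, so $\pi_\M$ restricts to a continuous bijection $\UpperSkin{\US{K}}\to N$ with set-theoretic inverse $m\mapsto\up{\US{K}}m$, and likewise for $\LowerSkin{\US{K}}$. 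Since a continuous bijection from a compact space onto a Hausdorff space is a homeomorphism, it suffices to prove that the two skins are closed; the identity \eqref{eq:boundary-equals-upper-and-lower} will follow along the way.

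\emph{The key claim.} Everything reduces to:
\[
(\star)\qquad x\in\US{K},\ \ m=\pi_\M(x),\ \ \low{\US{K}}m\neq x\neq\up{\US{K}}m\ \ \Longrightarrow\ \ x\in\US{K}^{\circ}.
\]
To prove $(\star)$: $x$ lying strictly inside its fibre forces $\{x+t\,\normal m:\lvert t\rvert\le\varepsilon_0\}\subseteq\US{K}$ for some $\varepsilon_0>0$. As $K$ is canonically embedded, $x$ lies in the relative interior of a unique carrier $\sigma\in K$; if $\dim\sigma=d$ then $x\in\Conv\sigma^{\circ}\subseteq\US{K}^{\circ}$ and we are done, so assume $\dim\sigma<d$. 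Then $\sigma$ is non-vertical (vacuously when $\dim\sigma=0$, by the non-vertical-skeleton hypothesis otherwise), hence $\pi_\M$ is injective on $\Conv\sigma$ and $\Conv\sigma\cap\Normal m\M=\{x\}$. Picking $t_n\downarrow0$ and using that $K$ is finite, a subsequence of $\{x+t_n\,\normal m\}$ lies in a single simplex $\Conv{\mu^{+}}$; then $x\in\Conv{\mu^{+}}$, so $\sigma\subseteq\mu^{+}$, and $\mu^{+}\neq\sigma$ since $x+t_n\,\normal m\notin\Conv\sigma$. If $\dim\mu^{+}<d$ then $\mu^{+}$ is non-vertical, yet $\Conv{\mu^{+}}$ contains the distinct points $x$ and $x+t_n\,\normal m$ with common projection $m$, a contradiction; so $\dim\mu^{+}=d$, and symmetrically there is a $d$-simplex $\mu^{-}\supsetneq\sigma$ with $x-t_n\,\normal m\in\Conv{\mu^{-}}$ along a subsequence. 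Thus $\Conv{\mu^{+}}$ and $\Conv{\mu^{-}}$ are top-dimensional pieces of $\US{K}$ meeting along $\Conv{\mu^{+}\cap\mu^{-}}\ni x$, with $\normal m$ (resp.\ $-\normal m$) pointing from $x$ into $\Conv{\mu^{+}}$ (resp.\ $\Conv{\mu^{-}}$). It then remains to invoke vertical convexity to conclude that $\US{K}$ fills a $d$-dimensional ball around $x$: the fibres of $\pi_\M$ over points of $\M$ near $m$ are connected, while $\Conv{\mu^{+}}$ and $\Conv{\mu^{-}}$ already contribute $\M$-open families of subsegments of those fibres lying above and below $x$; connectedness rules out the ``pinch'' at $x$ --- two top-dimensional pieces meeting only along a thin face through $x$, with the vertical direction leaving $\US{K}$ to either side --- which is exactly the configuration that produces counterexamples when vertical convexity is dropped, and it forces the neighbouring simplices of $K$ to cover a full neighbourhood of $x$. \textbf{Carrying out this last combinatorial argument --- reconciling how $\Conv{\mu^{+}}$, $\Conv{\mu^{-}}$, and the remaining simplices of $K$ assemble around $x$ under the vertical-convexity constraint --- is the main obstacle.}

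\emph{Deducing the statement from $(\star)$.} If $x=\up{\US{K}}m$, then for small $t>0$ the point $x+t\,\normal m$ lies on $\Normal m\M$ strictly above $x$, hence outside the fibre over $m$, hence --- since $\US{K}\subseteq\Offset\M\rtube$ and the sets $\Normal{m'}\M\cap B(m',\rtube)$ partition $\Offset\M\rtube$ --- outside $\US{K}$; thus $x\in\partial\US{K}$, and symmetrically $\LowerSkin{\US{K}}\subseteq\partial\US{K}$. Conversely, if $x\in\partial\US{K}=\US{K}\setminus\US{K}^{\circ}$, the contrapositive of $(\star)$ gives $x\in\{\low{\US{K}}m,\up{\US{K}}m\}$ with $m=\pi_\M(x)$; this establishes \eqref{eq:boundary-equals-upper-and-lower}. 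For closedness, set $H^{+}(m)=\langle\up{\US{K}}m-m,\normal m\rangle$, so that $\UpperSkin{\US{K}}$ is the image of $m\mapsto m+H^{+}(m)\,\normal m$ on $N$. A routine subsequence-and-compactness argument shows $H^{+}$ is upper semicontinuous, and $(\star)$ gives lower semicontinuity: if the fibre over $m$ is a single point $p$, every subsequential limit of $\up{\US{K}}{m_k}$ for $m_k\to m$ in $N$ projects to $m$ and so equals $p$; if the fibre is non-degenerate, any $w$ in its relative interior lies in $\US{K}^{\circ}$ by $(\star)$, so a ball about $w$ lies in $\US{K}$ and all nearby fibres reach at least the height of $w$, and $w$ may be chosen of height arbitrarily close to $H^{+}(m)$. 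Hence $H^{+}$ is continuous, $\UpperSkin{\US{K}}$ is the continuous image of the compact set $N$ and therefore closed (indeed compact), and symmetrically $\LowerSkin{\US{K}}$; with the first paragraph this proves the statement.
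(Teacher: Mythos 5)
There is a genuine gap, and you acknowledge it yourself: everything in your argument reduces to claim $(\star)$, and your attempt to prove $(\star)$ stops exactly at the hard step, namely showing that the two $d$-simplices $\mu^+,\mu^-$ together with the rest of the star of $\sigma$ must fill a full $d$-dimensional neighbourhood of $x$. That is not a routine finish. The existence of $\mu^+$ above $x$ and $\mu^-$ below $x$ only controls what happens along the single normal line $\Normal m\M$; to rule out a ``pinch'' you must push the information from one fibre to a whole neighbourhood of fibres, and that transfer is precisely where vertical convexity has to be used in a quantitative, not merely suggestive, way. In other words, your reduction is sound (indeed $(\star)$ is equivalent to \eqref{eq:boundary-equals-upper-and-lower}, and closedness is a corollary via your $H^+$ argument), but the proof does not exist yet: you have located the key lemma, not proved it.

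For comparison, the paper does not try to prove your $(\star)$ head-on. It first establishes closedness of the skins by an intrinsic argument: it sets $A=\Closure{\AboveSet{\US{K}}}\cap\US{K}$, shows (Stage one) that $\Closure{\AboveSet{\US{K}}}$ is ``upward-saturated'' along normal segments, uses (Stage two) vertical convexity to get that $A$ meets each normal segment in a connected set, and (Stage three) uses the non-vertical-skeleton hypothesis to collapse that connected set to a single point — the same simplex-must-be-vertical contradiction you use for $\mu^\pm$. With closedness in hand, the homeomorphism is compactness, as you observe. Finally, the paper obtains \eqref{eq:boundary-equals-upper-and-lower} not via $(\star)$ but by decomposing $\Rspace^d\setminus\US{K}$ into $\AboveSet{\US{K}}$, $\BelowSet{\US{K}}$, $\Side{\US{K}}$ and the complement of the offset, and taking closures; this sidesteps the local around-$x$ combinatorics entirely. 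If you want to salvage your route, you would need to replace the hand-wave at the end of $(\star)$ with an argument of that flavour — for instance, by first establishing closedness of the skins through the $\Closure{\AboveSet{\US{K}}}$ device, after which the interiority you want at $x$ follows because $x$ is bounded away from both closed skins.
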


A simple consequence follows:

\begin{restatable}{lemma}{UpperEqualsLowerConsequence}
  \label{lemma:upper-equals-lower-consequence}
  Let $K$ be vertically convex and with non-vertical skeleton
  relative to $\M$. If $\UpperSkin {\US{K}} = \LowerSkin {\US{K}}$,
  then $K = \partial K$.
\end{restatable}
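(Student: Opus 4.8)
The plan is to transfer the hypothesis --- which concerns the two skins --- first to the underlying space $\US{K}$ and then to the complex $K$ itself. Concretely, I would show that $\UpperSkin{\US{K}} = \LowerSkin{\US{K}}$ forces $\US{\partial K} = \US{K}$, and then conclude $K = \partial K$ from the fact that $\partial K$ is a subcomplex of $K$ sharing its underlying space, together with the canonical embedding of $K$.

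For the first implication, recall that vertical convexity of $\US{K}$ provides the decomposition $\US{K} = \dot{\bigcup}_{m \in \pi_\M(\US{K})}[\low{\US{K}}{m}, \up{\US{K}}{m}]$, with the two skins collecting the upper, resp.\ lower, endpoints of these normal segments. I would first upgrade the set equality $\UpperSkin{\US{K}} = \LowerSkin{\US{K}}$ to the pointwise equality $\up{\US{K}}{m} = \low{\US{K}}{m}$ for every $m \in \pi_\M(\US{K})$: fixing such an $m$, the point $\up{\US{K}}{m}$ lies in $\LowerSkin{\US{K}}$, hence equals $\low{\US{K}}{m'}$ for some $m'$; but $\up{\US{K}}{m} \in \Normal m \M \cap B(m,\rtube)$ and $\low{\US{K}}{m'} \in \Normal{m'}\M \cap B(m',\rtube)$, and since $\{\Normal m \M \cap B(m,\rtube)\}_{m \in \M}$ is a partition of $\Offset\M\rtube$ (valid as $\rtube < \Reach\M$), a point lying in two of these cells forces $m = m'$. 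Thus every segment in the decomposition is a single point, so $\US{K} = \UpperSkin{\US{K}} = \LowerSkin{\US{K}}$.

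Next I would apply Lemma~\ref{lemma:upper-and-lower-skins-homeomorphic-to-manifold}, whose hypotheses are exactly those assumed here, to obtain $\partial\US{K} = \UpperSkin{\US{K}} \cup \LowerSkin{\US{K}}$; combined with the previous line, $\partial\US{K} = \US{K}$, i.e.\ $\US{\partial K} = \US{K}$. To finish, suppose some simplex $\sigma$ lay in $K \setminus \partial K$ and pick a point $x$ in the relative interior of $\Conv\sigma$. Then $x \in \US{K} = \US{\partial K}$, so $x \in \Conv\tau$ for some $\tau \in \partial K$, and condition~2 of canonical embedding gives $x \in \Conv\sigma \cap \Conv\tau = \Conv{\sigma\cap\tau}$. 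Since $\Conv{\sigma\cap\tau}$ is a face of $\Conv\sigma$ meeting its relative interior, it equals $\Conv\sigma$; non-degeneracy of $\sigma$ then yields $\sigma \subseteq \tau$, and since $\partial K$ is a subcomplex we get $\sigma \in \partial K$, a contradiction. Hence $K = \partial K$.

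I expect the only delicate point to be the passage from the set-level identity $\UpperSkin{\US{K}} = \LowerSkin{\US{K}}$ to the pointwise statement about the endpoint maps; the partition of $\Offset\M\rtube$ into normal segments is precisely what makes it work. Everything else is routine: unwinding the definition of vertical convexity, one invocation of Lemma~\ref{lemma:upper-and-lower-skins-homeomorphic-to-manifold}, and the standard observation that a subcomplex of a canonically embedded complex is determined by its support.
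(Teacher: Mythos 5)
Your proof is correct and follows essentially the same route as the paper's: show the hypothesis forces every normal segment $[\low{\US{K}}{m},\up{\US{K}}{m}]$ to degenerate to a point, deduce $\US{K}\subseteq\UpperSkin{\US{K}}\cup\LowerSkin{\US{K}}=\partial\US{K}=\US{\partial K}$ via Lemma~\ref{lemma:upper-and-lower-skins-homeomorphic-to-manifold}, and conclude $K=\partial K$. You spell out two steps that the paper leaves implicit --- the passage from the set identity of the skins to the pointwise identity of the endpoint maps (via the partition of $\Offset{\M}{\rtube}$ into normal segments), and the passage from $\US{K}=\US{\partial K}$ to $K=\partial K$ using the canonical embedding --- but the skeleton of the argument is the same.
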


Aiming for a simplicial version of Equation
\eqref{eq:boundary-equals-upper-and-lower}, we define the {\em upper
  complex} of $K$ and the {\em lower complex} of $K$ relative to \M
as follows:
\begin{align*}
  \UpperC K &= \{ \nu \subseteq \partial K \mid  \Conv{\nu} \subseteq \UpperSkin{\US{K}} \}, \\
  \LowerC K &= \{ \nu \subseteq \partial K \mid  \Conv{\nu} \subseteq \LowerSkin{\US{K}} \}.
\end{align*}
By construction, both are subcomplexes of $\partial K$.  A
combinatorial equivalent of
Lemma~\ref{lemma:upper-and-lower-skins-homeomorphic-to-manifold} is:

\begin{restatable}{lemmma}{uppLowComplex}
  \label{lemma:upper-and-lower-complex-equal-boundary}
  Let $K$ be vertically convex with non-vertical skeleton relative to $\M$. Then,
  \begin{align*}
    &\US{\UpperC K} = \UpperSkin{\US{K}},\\
    &\US{\LowerC K} = \LowerSkin{\US{K}} \,\,\, \text{ and } \\
    &\partial K = \UpperC K \, \cup \, \LowerC K.
  \end{align*}
  Moreover, if $\pi_\M(\US{K}) = \M$, both $\UpperC K$ and $\LowerC K$ are triangulations of $\M$.
\end{restatable}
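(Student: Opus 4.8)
The plan is to deduce Lemma~\ref{lemma:upper-and-lower-complex-equal-boundary} from the already-established Lemma~\ref{lemma:upper-and-lower-skins-homeomorphic-to-manifold}, by matching the combinatorial subcomplexes $\UpperC K$ and $\LowerC K$ with the topological skins $\UpperSkin{\US K}$ and $\LowerSkin{\US K}$. First I would establish the two support equalities. The inclusion $\US{\UpperC K} \subseteq \UpperSkin{\US K}$ is immediate from the definition of $\UpperC K$, since every simplex $\nu \in \UpperC K$ has $\Conv \nu \subseteq \UpperSkin{\US K}$ and $\US{\UpperC K}$ is the union of these. For the reverse inclusion, I would take a point $x \in \UpperSkin{\US K}$; by Lemma~\ref{lemma:upper-and-lower-skins-homeomorphic-to-manifold} the skin is contained in $\partial \US K = \US{\partial K}$, so $x$ lies in $\Conv \sigma$ for some $\sigma \in \partial K$, and we may take $\sigma$ to be the carrier, i.e. the unique simplex with $x \in \Interior{\Conv\sigma}$. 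The key claim is that this carrier $\sigma$ satisfies $\Conv\sigma \subseteq \UpperSkin{\US K}$, hence $\sigma \in \UpperC K$ and $x \in \US{\UpperC K}$. This is where the non-vertical skeleton hypothesis does the work: if $\sigma$ had dimension $0 < i < d$ it would be non-vertical, and if it had dimension $d$ it cannot be a boundary simplex's carrier in the relevant way — more precisely, a $d$-simplex $\sigma$ cannot have its relative interior meet the skin since the skin is the graph of the ``$\up{}{}$'' function over $\pi_\M(\US K)$ and a full-dimensional piece would force two interior points of $\Conv\sigma$ onto the same normal fiber, i.e. make $\sigma$ vertical. So the carrier has dimension $\le d-1$; being non-vertical (or a vertex) it projects injectively via $\pi_\M$, and since $x = \up{\US K}{m}$ with $m = \pi_\M(x)$, a small-neighborhood/connectedness argument on $\Conv\sigma$ (using that $\up{\US K}{\cdot}$ is continuous where defined, from the homeomorphism part of Lemma~\ref{lemma:upper-and-lower-skins-homeomorphic-to-manifold}) shows every point of $\Conv\sigma$ is also on the upper skin. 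The symmetric argument handles $\LowerC K$.

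Next, for the decomposition $\partial K = \UpperC K \cup \LowerC K$: the inclusion $\supseteq$ is by construction ($\UpperC K$ and $\LowerC K$ are subcomplexes of $\partial K$). For $\subseteq$, take $\sigma \in \partial K$; then $\Conv\sigma \subseteq \partial\US K = \UpperSkin{\US K} \cup \LowerSkin{\US K}$ by Equation~\eqref{eq:boundary-equals-upper-and-lower}. I would argue that $\Conv\sigma$ is entirely contained in one of the two skins. Consider $\Interior{\Conv\sigma}$; it is connected, and it decomposes as the disjoint union of its intersections with $\UpperSkin{\US K}\setminus\LowerSkin{\US K}$, with $\LowerSkin{\US K}\setminus\UpperSkin{\US K}$, and with $\UpperSkin{\US K}\cap\LowerSkin{\US K}$. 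Using that the skins are closed (Lemma~\ref{lemma:upper-and-lower-skins-homeomorphic-to-manifold}) and that $\pi_\M$ restricted to $\Conv\sigma$ is injective (non-vertical skeleton, plus the fact already shown that $\sigma$ is not a $d$-simplex), I would show the relative interior cannot straddle both skins except on the common part: if $\Interior{\Conv\sigma}$ met $\UpperSkin{\US K}\setminus\LowerSkin{\US K}$ and also $\LowerSkin{\US K}\setminus\UpperSkin{\US K}$, connectedness forces it to meet $\UpperSkin{\US K}\cap\LowerSkin{\US K}$ on a set separating the two, and there one checks via the fiber description that $\up{\US K}{m}=\low{\US K}{m}$ means $\US K$ is degenerate along that fiber — this pins $\Conv\sigma$ into a single skin. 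So $\sigma \in \UpperC K$ or $\sigma\in\LowerC K$. (Faces of $\sigma$ inherit the property automatically, so this is consistent with $\UpperC K, \LowerC K$ being complexes.)

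Finally, for the last sentence: assume $\pi_\M(\US K) = \M$. By Lemma~\ref{lemma:upper-and-lower-skins-homeomorphic-to-manifold}, $\pi_\M$ restricts to a homeomorphism $\UpperSkin{\US K} \to \pi_\M(\US K) = \M$, and we have just shown $\US{\UpperC K} = \UpperSkin{\US K}$; composing, $\US{\UpperC K}$ is homeomorphic to $\M$, so $\UpperC K$ triangulates $\M$ by definition. Likewise for $\LowerC K$.

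I expect the main obstacle to be the carrier argument in the first part: showing that the carrier simplex of a skin point is never a $d$-simplex and that, once it is lower-dimensional and non-vertical, its \emph{entire} convex hull lies in the same skin rather than just the point $x$. The cleanest route is to phrase the skin as the image of a section $m \mapsto \up{\US K}{m}$ over $\pi_\M(\US K)$, note that $\pi_\M|_{\Conv\sigma}$ is injective so $\pi_\M$ maps $\Conv\sigma$ homeomorphically onto a subset $Y \subseteq \pi_\M(\US K)$, and then observe that over $Y$ both $m \mapsto \up{\US K}{m}$ and the inverse of $\pi_\M|_{\Conv\sigma}$ are continuous sections of the same fiber bundle $\Offset\M\rtube \to \M$ whose fibers are segments; they agree at $m = \pi_\M(x)$, and the set where two such sections agree is closed, while the vertical-convexity plus the fact that $\Conv\sigma \subseteq \US K$ forces it to be open as well (locally $\up{\US K}{\cdot}$ is the topmost point of the fiber, and $\Conv\sigma$ being a boundary simplex cannot lie strictly below it on an open set without a $d$-simplex of $K$ sitting above, contradicting $\sigma \in \partial K$ near $x$). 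Hence the agreement set is all of $Y$, giving $\Conv\sigma \subseteq \UpperSkin{\US K}$.
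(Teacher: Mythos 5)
Your overall plan mirrors the paper's: deduce everything from Lemma~\ref{lemma:upper-and-lower-skins-homeomorphic-to-manifold} by showing that the carrier (the simplex whose relative interior contains a given skin point) has its whole support in that same skin, then read off the remaining two conclusions. The paper isolates exactly this carrier claim as a separate Lemma~\ref{lemma:exclusive-upper-and-lower} and proves it by tracking $\normal{\pi_\M(x)}$ against the tangent cone $\TanSpace(x,\US{K})$, which is constant on $\Interior{\Conv\nu}$: using the characterization $x\in\UpperSkin{\US{K}}\iff\normal{\pi_\M(x)}\notin\TanSpace(x,\US{K})$, moving along a segment inside $\Interior{\Conv\nu}$ on which the normal passes from outside the cone to inside forces it to cross $\partial\TanSpace(x,\US{K})$, and that boundary is covered by the cones of simplices in $\Star{\nu}{K}$ of dimension strictly between $0$ and $d$, immediately producing a low-dimensional vertical simplex. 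Your route replaces this with an open-closed connectedness argument on the agreement set of two continuous sections (the upper-skin section $m\mapsto\up{\US{K}}{m}$ and the inverse of $\pi_\M|_{\Conv\sigma}$).

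The gap is in your openness step. You write that $\Conv\sigma$ ``cannot lie strictly below [the upper skin] on an open set without a $d$-simplex of $K$ sitting above, contradicting $\sigma\in\partial K$ near $x$.'' This is not a contradiction: a boundary $(d-1)$-simplex is allowed exactly one $d$-dimensional coface, so a $d$-simplex sitting above $\sigma$ is perfectly compatible with $\sigma\in\partial K$. The genuine obstruction has to come from $x$ being on the upper skin, i.e.\ $\normal{\pi_\M(x)}\notin\TanSpace(x,\US{K})$; and once you argue that way, you must still dispose of the degenerate limit in which the normal direction becomes tangent to $\Aff{\sigma}$ at $x$, which is again a verticality statement requiring the non-vertical-skeleton hypothesis. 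Moreover, if $\dim\sigma<d-1$, the material above $\sigma$ along nearby fibers can be carried by simplices of any dimension between $\dim\sigma+1$ and $d$, so excluding only $d$-simplices is not enough. The tangent-cone formulation handles all of these cases uniformly, since $\partial\TanSpace(x,\US{K})$ is automatically covered by the cones of sub-top-dimensional cofaces, which is exactly where the non-vertical-skeleton hypothesis bites. A smaller point: your separate connectedness argument for $\partial K=\UpperC{K}\cup\LowerC{K}$ is unnecessary once the carrier claim holds (the paper just intersects Equation~\eqref{eq:boundary-equals-upper-and-lower} with the first two support equalities), and as written it does not rule out $\Interior{\Conv\sigma}$ straddling the two skins through the overlap region $\UpperSkin{\US{K}}\cap\LowerSkin{\US{K}}$.
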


\subsection{Upper and lower facets of a $d$-simplex}
\label{section:upper-and-lower-facets}

Let $\sigma$ be a non-degenerate $d$-simplex of $\Rspace^d$ such that $\Conv \sigma \subseteq \Offset \M \rtube$ for some $\rtube < \Reach{\M}$. In that case,  $\Cl \sigma$ is embedded and
vertically convex relative to $\M$. The facets of $\sigma$ can be partitioned into \emph{upper facets} and \emph{lower facets} of $\sigma$ relative to $\M$ as follows:
\begin{align*}
  \Above \sigma &= \{ \nu \text{ facet of } \sigma \mid \nu \in \UpperC{\Cl \sigma} \} \\
  \Below \sigma &= \{ \nu \text{ facet of } \sigma \mid \nu \in \LowerC{\Cl \sigma} \}.
\end{align*}
An example can be seen in Figure~\ref{figure:simplex}, where one can also observe the following property:

\begin{figure}[htb]
  \def\svgwidth{.6\linewidth}
  \centering
\begingroup%
  \makeatletter%
  \providecommand\color[2][]{%
    \errmessage{(Inkscape) Color is used for the text in Inkscape, but the package 'color.sty' is not loaded}%
    \renewcommand\color[2][]{}%
  }%
  \providecommand\transparent[1]{%
    \errmessage{(Inkscape) Transparency is used (non-zero) for the text in Inkscape, but the package 'transparent.sty' is not loaded}%
    \renewcommand\transparent[1]{}%
  }%
  \providecommand\rotatebox[2]{#2}%
  \newcommand*\fsize{\dimexpr\f@size pt\relax}%
  \newcommand*\lineheight[1]{\fontsize{\fsize}{#1\fsize}\selectfont}%
  \ifx\svgwidth\undefined%
    \setlength{\unitlength}{373.67147827bp}%
    \ifx\svgscale\undefined%
      \relax%
    \else%
      \setlength{\unitlength}{\unitlength * \real{\svgscale}}%
    \fi%
  \else%
    \setlength{\unitlength}{\svgwidth}%
  \fi%
  \global\let\svgwidth\undefined%
  \global\let\svgscale\undefined%
  \makeatother%
  \begin{picture}(1,0.68651373)%
    \lineheight{1}%
    \setlength\tabcolsep{0pt}%
    \put(0,0){\includegraphics[width=\unitlength,page=1]{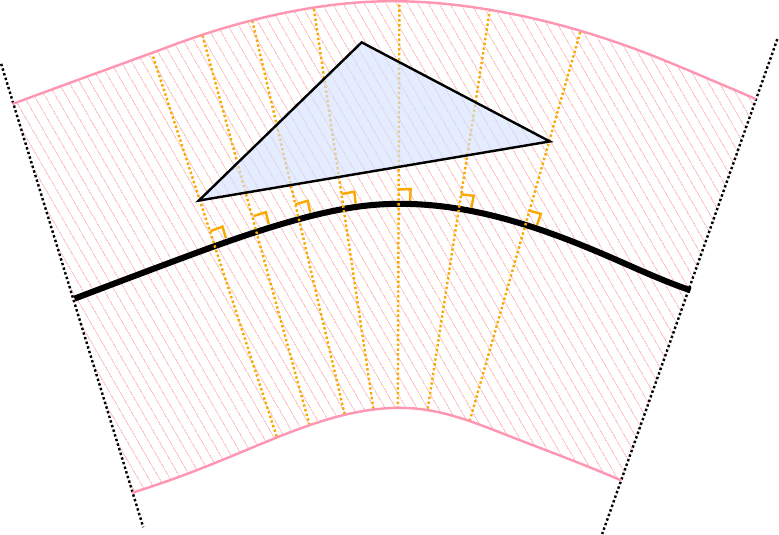}}%
    \put(0.10678879,0.34372967){\makebox(0,0)[lt]{\lineheight{1.25}\smash{\begin{tabular}[t]{l}\M\end{tabular}}}}%
    \put(0.73025487,0.33183589){\color[rgb]{0,0,0}\makebox(0,0)[lt]{\lineheight{1.25}\smash{\begin{tabular}[t]{l}$m$\end{tabular}}}}%
    \put(0.47255693,0.53308083){\makebox(0,0)[lt]{\lineheight{1.25}\smash{\begin{tabular}[t]{l}$\sigma$\end{tabular}}}}%
    \put(0,0){\includegraphics[width=\unitlength,page=2]{simplex-modified.pdf}}%
    \put(0.81039637,0.48523032){\makebox(0,0)[lt]{\lineheight{1.25}\smash{\begin{tabular}[t]{l}$\normal m$\end{tabular}}}}%
    \put(0.02161277,0.607475){\makebox(0,0)[lt]{\lineheight{1.25}\smash{\begin{tabular}[t]{l}$\Offset \M r$\end{tabular}}}}%
    \put(0,0){\includegraphics[width=\unitlength,page=3]{simplex-modified.pdf}}%
  \end{picture}%
\endgroup%

  \caption{Upper (smooth green edges) and lower (dotted pink edge) facets of a $2$-simplex $\sigma\subseteq\Rspace^2$.}
  \label{figure:simplex}
\end{figure}

\begin{restatable}{lemmma}{facetsLem}
  \label{lemma:facets}
  Consider a non-degenerate $d$-simplex $\sigma \subseteq \Rspace^d$ such
  that $\Conv \sigma \subseteq \Offset \M \rtube$ for some $\rtube <
  \Reach{\M}$. If $\sigma$ has no vertical facets relative to \M,
  then $\Above \sigma$ and $\Below \sigma$ are non-empty sets that
  partition the facets of $\sigma$.
\end{restatable}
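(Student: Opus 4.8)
The plan is to apply Lemma~\ref{lemma:upper-and-lower-complex-equal-boundary} and Lemma~\ref{lemma:upper-and-lower-skins-homeomorphic-to-manifold} to the one-simplex complex $K=\Cl{\sigma}$ and then read off the statement about facets. First I would check the hypotheses. The complex $\Cl{\sigma}$ is canonically embedded (non-degeneracy of $\sigma$ passes to all of its faces, and distinct faces of a simplex meet in a common face), and it is vertically convex relative to $\M$, as already noted in the text preceding the statement. The remaining requirement, that $\Cl{\sigma}$ has a non-vertical skeleton relative to $\M$, is exactly where the hypothesis is used: every face $\tau$ of $\sigma$ with $0<\dim\tau<d$ lies in some facet $\nu$ of $\sigma$, and verticality passes from $\tau$ up to $\nu$ (two distinct points of $\Conv{\tau}\subseteq\Conv{\nu}$ with a common projection onto $\M$ witness verticality of $\nu$); so no such $\tau$ is vertical. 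As a harmless preliminary I would also enlarge the tube radius: since $\Conv{\sigma}$ is compact and lies in $\Offset{\M}{\rtube}$, we have $\sup_{x\in\Conv{\sigma}}\|x-\pi_\M(x)\|\le\rtube<\Reach{\M}$, so after replacing $\rtube$ by a slightly larger value, still below $\Reach{\M}$, we may assume $\|x-\pi_\M(x)\|<\rtube$ for all $x\in\Conv{\sigma}$. Then, for every $m\in\pi_\M(\Conv{\sigma})$, each point $y$ of the normal chord $\Normal{m}{\M}\cap B(m,\rtube)\cap\Conv{\sigma}$ satisfies $\pi_\M(y)=m$, hence $\|y-m\|<\rtube$; so the chord's endpoints $\low{\Conv{\sigma}}{m}$ and $\up{\Conv{\sigma}}{m}$ cannot be cut off by the sphere $\partial B(m,\rtube)$ and therefore lie on $\partial\Conv{\sigma}$.

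Granting the two lemmas, $\US{\UpperC{\Cl{\sigma}}}=\UpperSkin{\Conv{\sigma}}$, $\US{\LowerC{\Cl{\sigma}}}=\LowerSkin{\Conv{\sigma}}$, and $\partial\Cl{\sigma}=\UpperC{\Cl{\sigma}}\cup\LowerC{\Cl{\sigma}}$. The facets of $\sigma$ are precisely the inclusion-maximal simplices of $\partial\Cl{\sigma}$, so each of them lies in $\UpperC{\Cl{\sigma}}$ or in $\LowerC{\Cl{\sigma}}$, that is, in $\Above{\sigma}$ or in $\Below{\sigma}$; hence $\Above{\sigma}\cup\Below{\sigma}$ is the full set of facets. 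For disjointness I would argue by contradiction. If a facet $\nu$ belonged to both, then $\Conv{\nu}\subseteq\UpperSkin{\Conv{\sigma}}\cap\LowerSkin{\Conv{\sigma}}$; fixing $p\in\Interior{\Conv{\nu}}$ and putting $m=\pi_\M(p)$, we would get $p=\up{\Conv{\sigma}}{m}=\low{\Conv{\sigma}}{m}$, so $\Normal{m}{\M}\cap B(m,\rtube)\cap\Conv{\sigma}=\{p\}$. Since $p$ is a relative-interior point of a facet of the $d$-simplex $\Conv{\sigma}$, there is a neighborhood $U$ of $p$ with $U\cap\Conv{\sigma}=U\cap H$, where $H$ is the closed half-space bounded by $\Aff{\nu}$ that contains $\Conv{\sigma}$. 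If $\Normal{m}{\M}\not\subseteq\Aff{\nu}$, the line $\Normal{m}{\M}$ meets $\Aff{\nu}$ only at $p$, so one of its two rays from $p$ enters the interior of $H$ and a short initial part of that ray lies in $\Conv{\sigma}\cap B(m,\rtube)$, contradicting $\Normal{m}{\M}\cap B(m,\rtube)\cap\Conv{\sigma}=\{p\}$. Hence $\Normal{m}{\M}\subseteq\Aff{\nu}$, and then, using $\Conv{\sigma}\cap\Aff{\nu}=\Conv{\nu}$, the set $\Normal{m}{\M}\cap\Conv{\sigma}=\Normal{m}{\M}\cap\Conv{\nu}$ is, for $d\ge 2$, a segment of positive length through the relative-interior point $p$ of the $(d{-}1)$-dimensional body $\Conv{\nu}$; two distinct points of this segment lie in $\Normal{m}{\M}\cap B(m,\rtube)$, hence share the projection $m$, so $\nu$ is vertical --- contradicting the hypothesis. (For $d=1$, $\Aff{\nu}$ is a point and $\Normal{m}{\M}\subseteq\Aff{\nu}$ is impossible, so only the first horn arises.) This shows $\Above{\sigma}\cap\Below{\sigma}=\emptyset$.

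For non-emptiness, suppose $\Above{\sigma}=\emptyset$. By the cover just obtained, every facet of $\sigma$ lies in $\LowerC{\Cl{\sigma}}$; since $\LowerC{\Cl{\sigma}}$ is a subcomplex of $\partial\Cl{\sigma}$ containing all of its inclusion-maximal simplices, $\LowerC{\Cl{\sigma}}=\partial\Cl{\sigma}$, whence $\LowerSkin{\Conv{\sigma}}=\US{\LowerC{\Cl{\sigma}}}=\US{\partial\Cl{\sigma}}=\partial\Conv{\sigma}$. By Lemma~\ref{lemma:upper-and-lower-skins-homeomorphic-to-manifold}, $\pi_\M$ restricts to a homeomorphism of $\LowerSkin{\Conv{\sigma}}$ onto $\pi_\M(\Conv{\sigma})$, so $\pi_\M$ is injective on $\partial\Conv{\sigma}$. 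But $\sigma$ is non-degenerate, so $\Conv{\sigma}$ has nonempty interior; pick an interior point $x$ of $\Conv{\sigma}$ and set $m=\pi_\M(x)$. Since $x$ is interior, a short subsegment of $\Normal{m}{\M}$ around $x$ stays in $\Conv{\sigma}$, and it stays in $B(m,\rtube)$ because $\|x-m\|<\rtube$; hence $x$ lies in the relative interior of the chord $\Normal{m}{\M}\cap B(m,\rtube)\cap\Conv{\sigma}$, so the chord's two endpoints are distinct, both lie on $\partial\Conv{\sigma}$ by the preliminary step, and both project to $m$ --- contradicting injectivity. Therefore $\Above{\sigma}\neq\emptyset$, and $\Below{\sigma}\neq\emptyset$ by the symmetric argument.

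I expect disjointness to be the main obstacle. It is the only place where the ``no vertical facets'' hypothesis is genuinely used, and it requires combining the local half-space description of a simplex at a relative-interior point of a facet with the behaviour of the normal lines of $\M$. A minor but recurring technical point, shared by the disjointness and non-emptiness arguments, is ensuring that the endpoints of the normal chords land on $\partial\Conv{\sigma}$ rather than on the sphere $\partial B(m,\rtube)$; enlarging $\rtube$ at the start is what takes care of this.
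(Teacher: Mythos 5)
Your proof is correct and it follows the same high-level strategy as the paper: apply Lemma~\ref{lemma:upper-and-lower-complex-equal-boundary} (and Lemma~\ref{lemma:upper-and-lower-skins-homeomorphic-to-manifold}) to $K=\Cl{\sigma}$, read off that $\Above\sigma\cup\Below\sigma$ is the full facet set, and then argue disjointness and non-emptiness separately. Where you differ is in the two sub-arguments, and the difference is worth noting. For disjointness, the paper asserts tersely that $\Conv\nu\subseteq\UpperSkin{\US K}\cap\LowerSkin{\US K}$ would force $\Conv\sigma\subseteq\Aff\nu$ and hence degeneracy of $\sigma$; your version instead examines a relative-interior point $p$ of $\Conv\nu$, observes that the normal chord through $\pi_\M(p)$ collapses to $\{p\}$, and then uses the local half-space description of $\Conv\sigma$ at $p$ to show the only way to avoid extending the chord is $\Normal m\M\subseteq\Aff\nu$, which makes $\nu$ vertical --- the stated hypothesis is what is contradicted. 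Your route makes the use of ``no vertical facets'' fully explicit, whereas the paper's one-line implication is not spelled out. For non-emptiness, the paper just invokes that both skins are homeomorphic to $\pi_\M(\Conv\sigma)$; you unpack this by noting $\LowerSkin=\partial\Conv\sigma$ would make $\pi_\M$ injective on the whole boundary, contradicted by the two chord endpoints over a projection of an interior point. Both observations are fine. The preliminary step of slightly enlarging $\rtube$ is not in the paper but is harmless and does tidy up the chord-endpoint bookkeeping (the skins themselves are unchanged, since within the tubular neighborhood the chord $\Normal m\M\cap B(m,\rtube)\cap\Conv\sigma$ coincides with $\pi_\M^{-1}(m)\cap\Conv\sigma$ for any admissible $\rtube$). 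Verifying the non-vertical skeleton of $\Cl\sigma$ by passing verticality from a subface up to a facet is also correct.
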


\section{Vertically collapsing simplicial complexes}
\label{section:vertical-collapses}

In this section, assuming that $\US{K} \subseteq \Offset \M \rtube$
for some $\rtube < \Reach{\M}$, we introduce an algorithm for
simplifying $K$ using vertical collapses relative to $\M$
(Section~\ref{section:algorithm}) and establish conditions for when it
outputs a triangulation of $\M$
(Section~\ref{section:correctness}). We first present a naive version
that requires the knowledge of $\M$ and then present a practical
version (Section~\ref{section:in-practice}).

\subsection{Naive algorithm}
\label{section:algorithm}

\begin{definition}[Vertically free simplices]
\label{definition:vertically-free-simplex}
  A simplex $\tau$ is said to be {\em free from
    above (resp., free from below)} in $K$ relative to \M if
  \begin{itemize}
  \item $\tau$ is a free simplex of $K$;
  \item the unique inclusion-maximal simplex $\sigma$ in $\Star \tau K$ has dimension $d$;
  \item the set of $(d-1)$-simplices in $\Star \tau K$ is exactly the set of upper (resp.,
    lower) facets of $\sigma$ relative to \M.
  \end{itemize}
  We say that $\tau$ is {\em vertically free} in $K$ relative to \M if
  $\tau$ is either free from above or from below in $K$ relative to
  \M. See Figures~\ref{figure:collapses} and \ref{figure:sources-and-sinks}
for a depiction.
\end{definition}

\begin{remark}
  \label{remark:vertically-free-relative-to-hyperplanes}
  Definition \ref{definition:vertically-free-simplex} can be naturally extended to non-compact submanifolds \M. In particular, it holds for hyperplanes, a fact that we use in
  Algorithm~\ref{algo:practical}.
\end{remark}

\ignore{
  \bianca{Note that the definition can be naturally extended to non-compact surfaces.
  }
  \biancacmt{I don't think there is need to specifically talk abouit the hyperplanes once we said it holds for non-compact \M. But if you feel it would be more adequate, we could add something like:
  	
  	In particular, it holds for hyperplanes, a fact that we use in Algorithm~\ref{algo:practical}.}
}

\begin{definition}[Vertical collapse]
A {\em vertical collapse} of $K$ 
relative to \M is the operation of removing the star of a simplex $\tau \in K$ that is vertically free
relative to \M.
\end{definition}

\begin{figure}[htb]
  \def\svgwidth{1.0\linewidth}
  \centering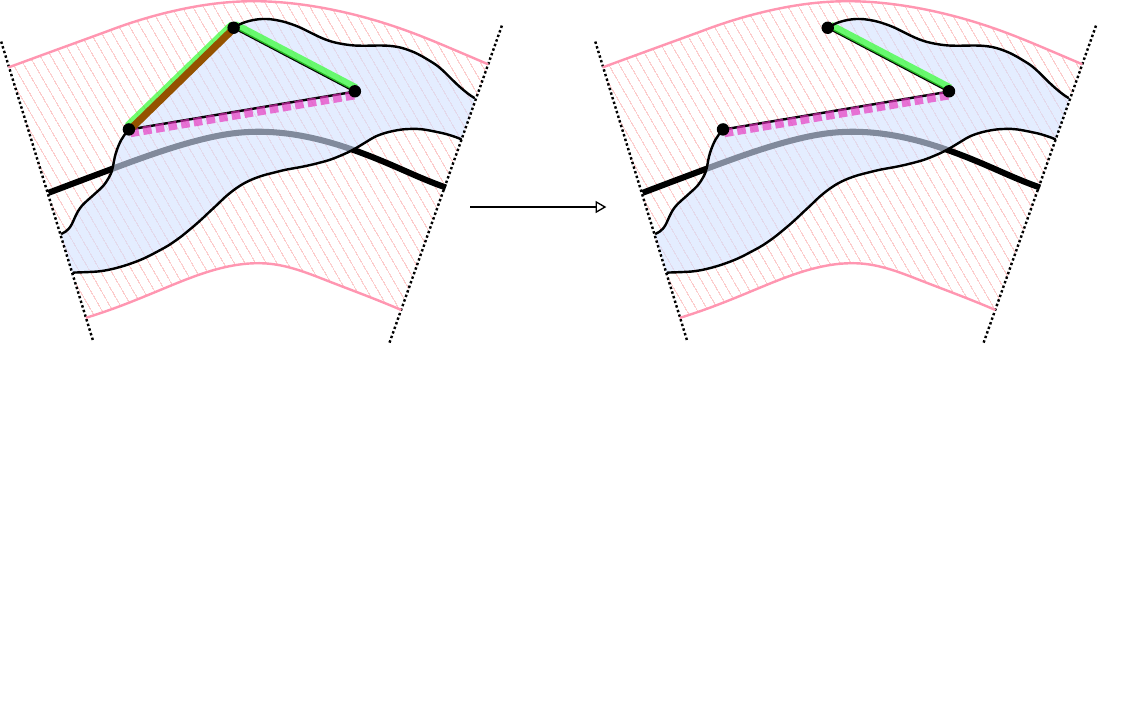
  \caption{ Schematic drawings of $K$ in blue (smooth filled
    areas). Top row: the edge $\tau$ is free but not vertically free
    relative to \M and collapsing $\tau$ does not preserve the
    vertical convexity of $K$. Bottom row: the vertex $\tau$ is free
    from above relative to \M, so that collapsing $\tau$ preserves the
    vertical convexity of $K$ (Lemma~\ref{lemma:invariant}). The
    $(d-1)$-simplices of $K$ that disappear with $\tau$ are precisely
    the upper facets of $\sigma$ (smooth edges, in green).  }
  \label{figure:collapses}
\end{figure}

A vertical collapse of $K$ can be seen as compressing the underlying
space of $K$ by shifting its upper or lower skin along directions
normal to $\M$; see Figure~\ref{figure:collapses}. Our
first algorithm, outlined in Algorithm~\ref{algo:generic}, simplifies
$K$ by iteratively applying vertical collapses relative to $\M$. It is
worth noting that the algorithm operates on any simplicial complex $K$
with $\US{K} \subseteq \Offset{\M}{\rtube}$ for $\rtube < \Reach{\M}$,
irrespective of whether $K$ is vertically convex relative to $\M$ or
not.

\begin{algorithm}
  \caption{\hypertarget{naivealgo}{$\NaiveVertSimp(K)$}}
  \begin{algorithmic}
    \WHILE{ there is a simplex $\tau$ vertically free in $K$ relative to \M }
    \STATE{ Collapse $\tau$ in $K$; }
    \ENDWHILE
  \end{algorithmic}
  \label{algo:generic}
\end{algorithm}

\subsection{Correctness}
\label{section:correctness}

We now establish conditions under which $\NaiveVertSimp(K)$ transforms $K$
into a triangulation of \M.  For that,
we introduce a binary relation over $d$-simplices:

\begin{definition}[Below relation $\below$]
  Let $\sigma_0, \sigma_1 \subseteq \Rspace^d$ be two $d$-simplices
  sharing a common facet $\nu = \sigma_0 \cap \sigma_1$
  and  let $\Conv{\sigma_0} \cup \Conv{\sigma_1} \subseteq
  \Offset \M \rtube$ for some $\rtube < \Reach{\M}$. We say that
  $\sigma_0$ is {\em below} $\sigma_1$ (or that $\sigma_1$ is {\em
    above} $\sigma_0$) relative to \M, denoted $\sigma_0 \below
  \sigma_1$, if $\nu$ is an upper facet of $\sigma_0$ and a lower
  facet of $\sigma_1$ relative to \M.
\end{definition}

Note that the relation $\below$ is not acyclic in general, see Figure~\ref{figure:cycle-2d}.

\begin{figure}[htb]
  \def\svgwidth{0.5\linewidth}
  \centering
\begingroup%
  \makeatletter%
  \providecommand\color[2][]{%
    \errmessage{(Inkscape) Color is used for the text in Inkscape, but the package 'color.sty' is not loaded}%
    \renewcommand\color[2][]{}%
  }%
  \providecommand\transparent[1]{%
    \errmessage{(Inkscape) Transparency is used (non-zero) for the text in Inkscape, but the package 'transparent.sty' is not loaded}%
    \renewcommand\transparent[1]{}%
  }%
  \providecommand\rotatebox[2]{#2}%
  \newcommand*\fsize{\dimexpr\f@size pt\relax}%
  \newcommand*\lineheight[1]{\fontsize{\fsize}{#1\fsize}\selectfont}%
  \ifx\svgwidth\undefined%
    \setlength{\unitlength}{239.38801575bp}%
    \ifx\svgscale\undefined%
      \relax%
    \else%
      \setlength{\unitlength}{\unitlength * \real{\svgscale}}%
    \fi%
  \else%
    \setlength{\unitlength}{\svgwidth}%
  \fi%
  \global\let\svgwidth\undefined%
  \global\let\svgscale\undefined%
  \makeatother%
  \begin{picture}(1,0.87090571)%
    \lineheight{1}%
    \setlength\tabcolsep{0pt}%
    \put(0,0){\includegraphics[width=\unitlength,page=1]{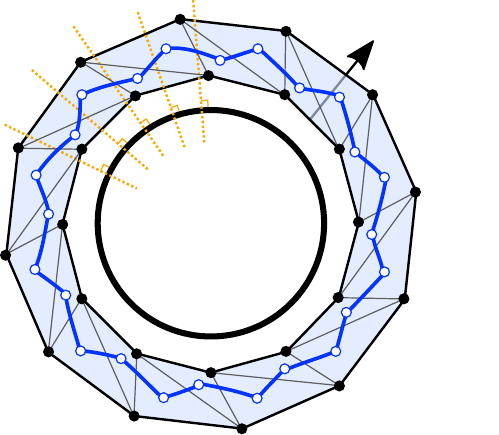}}%
    \put(0.27066703,0.27748119){\makebox(0,0)[lt]{\lineheight{1.25}\smash{\begin{tabular}[t]{l}\M\end{tabular}}}}%
    \put(0.7623089,0.81124543){\makebox(0,0)[lt]{\lineheight{1.25}\smash{\begin{tabular}[t]{l}$\normal m$\end{tabular}}}}%
    \put(0,0){\includegraphics[width=\unitlength,page=2]{socg-cyclic-complex-modified.pdf}}%
    \put(0.53114746,0.54425234){\makebox(0,0)[lt]{\lineheight{1.25}\smash{\begin{tabular}[t]{l}$m$\end{tabular}}}}%
    \put(0.83647339,0.3748277){\makebox(0,0)[lt]{\lineheight{1.25}\smash{\begin{tabular}[t]{l}$K$\end{tabular}}}}%
    \put(0,0){\includegraphics[width=\unitlength,page=3]{socg-cyclic-complex-modified.pdf}}%
  \end{picture}%
\endgroup%

  \caption{ Non-Delaunay triangles that form a cycle in the $\below$
    relation and their dual graph. \label{figure:cycle-2d} }
\end{figure}

\begin{theorem}[Correctness]
  \label{theorem:correctness-generic-simplification}
  Consider $K$ such that $\US{K} \subseteq \Offset \M \rtube$ for some
  $\rtube < \Reach{\M}$ and assume the following:
  \begin{description}
  \item[Injective projection:] $K$ has a non-vertical skeleton relative to \M.
  \item[Covering projection:] $\pi_\M(\US{K}) = \M$.
  \item[Vertical convexity:] $K$ is vertically convex relative to \M.
  \item[Acyclicity:] $\below$ is acyclic over $d$-simplices of $K$.
  \end{description}
  Then, $\NaiveVertSimp(K)$ transforms $K$ into a
  triangulation of $\M$.
\end{theorem}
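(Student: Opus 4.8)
\emph{Approach.} I would argue in two stages: (1) a single vertical collapse relative to $\M$ preserves all four hypotheses, so that they remain invariant along the whole run of $\NaiveVertSimp$; (2) once no vertically free simplex is left, the invariants force $K$ to contain no $d$-simplex, and such a $K$ already triangulates $\M$. Termination needs no argument since each collapse deletes at least one simplex, so only finitely many occur.

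\emph{Stage 1 (invariance).} The \textbf{Injective projection} and \textbf{Acyclicity} hypotheses are inherited for free: deleting simplices cannot create vertical simplices, nor can it create new pairs of $d$-simplices related by $\below$. \textbf{Vertical convexity} is the one with content and is exactly Lemma~\ref{lemma:invariant}: if $\tau$ is free from above with unique inclusion-maximal coface $\sigma$, then the collapse erases $\Conv\sigma$ together with its upper facets but keeps its lower facets; because the upper facets of $\sigma$ carry no $d$-coface other than $\sigma$ (as $\tau$ is free), the cell $\Conv\sigma$ occupies the top of every normal fibre it meets, so deleting it merely shortens each fibre from above and keeps it a segment. \textbf{Covering projection} survives because a fibre that met $\Conv\sigma$ still meets $\US K$ through a surviving lower facet of $\sigma$ — the set $\Below\sigma$ is non-empty by Lemma~\ref{lemma:facets}, and its union is the lower skin of $\Conv\sigma$, hence contains $\low{\Conv\sigma}{m}$.

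\emph{Stage 2 (the crux).} I would prove: \emph{if $K$ satisfies the four hypotheses and has a $d$-simplex, then it has a vertically free simplex.} Using acyclicity, pick a $\below$-maximal $d$-simplex $\sigma$. First, every upper facet $\nu$ of $\sigma$ has $\sigma$ as its only $d$-coface: a canonically embedded complex has at most two $d$-simplices at a given $(d-1)$-simplex, and a second one $\sigma'$ would lie on the side of $\Aff\nu$ opposite $\Conv\sigma$, with $\nu$ being either a lower facet of $\sigma'$ — forbidden by maximality, since it yields $\sigma\below\sigma'$ — or an upper facet of $\sigma'$; the latter is impossible, since then both $\Conv\sigma$ and $\Conv{\sigma'}$ would descend from $\Conv\nu$ along the normals, forcing them onto the same side of $\Aff\nu$ (here one uses that $\nu$ is non-vertical, so a normal line through a relative-interior point of $\Conv\nu$ is not contained in $\Aff\nu$). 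Consequently, near $\Interior{\Conv\nu}$ the set $\US K$ coincides with $\Conv\sigma$, so nothing of $\US K$ lies above $\Conv\nu$; hence every upper facet of $\sigma$ belongs to $\UpperC K$, which by Lemma~\ref{lemma:upper-and-lower-complex-equal-boundary} (invoking covering projection) triangulates the closed manifold $\M$. Now set $\tau=\bigcap\Above\sigma$, a proper face of $\sigma$ (both $\Above\sigma$ and $\Below\sigma$ are non-empty). The remaining work is to show $\Link\tau K=\Cl{(\sigma\setminus\tau)}$: the inclusion $\supseteq$ holds since $\sigma\in K$; for $\subseteq$, a coface $\xi$ of $\tau$ carrying a vertex $w\notin\sigma$ meets $\sigma$ in a proper face containing $\tau$ — if that face is a facet of $\sigma$ (necessarily an upper one), then $\xi$ is a $d$-simplex having it as a facet, a forbidden second $d$-coface; otherwise either $\xi$ is a $d$-simplex sharing a lower-dimensional face with $\sigma$ (pushed back into the previous case by iterating the same dichotomy on that shared face and its upper facets) or $\xi$ is a maximal simplex of dimension $<d$, hence a boundary simplex (Lemma~\ref{lemma:upper-and-lower-complex-equal-boundary}) lying in $\LowerC K$ — which forces $\Conv\tau\subseteq\LowerSkin{\US K}$, impossible because $\tau$ avoids every lower facet of $\sigma$, so the fibre of $\Conv\sigma$ over $\Interior{\Conv\tau}$ is non-degenerate. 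Once $\Link\tau K=\Cl{(\sigma\setminus\tau)}$, the simplex $\tau$ is free, its unique maximal coface is $\sigma$ (of dimension $d$), and the $(d-1)$-simplices of $\Star\tau K$ are exactly the upper facets of $\sigma$: $\tau$ is vertically free from above, a contradiction.

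\emph{Conclusion.} When $\NaiveVertSimp$ halts, the above shows $K$ has no $d$-simplex, so every simplex of $K$ has dimension $<d$ and (trivially for vertices, by the Injective-projection hypothesis otherwise) is non-vertical; hence each fibre $\Normal m\M\cap B(m,\rtube)$ — all of whose points project to $m$ — meets every simplex in at most one point, so meets $\US K$ in a finite set, which vertical convexity makes convex and covering projection makes non-empty for every $m\in\M$. Thus that set is a single point for each $m$, so $\UpperSkin{\US K}=\LowerSkin{\US K}=\US K$; Lemma~\ref{lemma:upper-equals-lower-consequence} then gives $K=\partial K$, and Lemma~\ref{lemma:upper-and-lower-complex-equal-boundary} gives $K=\partial K=\UpperC K$, a triangulation of $\M$. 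The step I expect to be the main obstacle is the identity $\Link\tau K=\Cl{(\sigma\setminus\tau)}$ in Stage 2: this is the place where vertical convexity, acyclicity, the non-vertical skeleton, and the canonical embedding all have to be used together, and where the case analysis ruling out spurious cofaces of $\tau$ is genuinely delicate.
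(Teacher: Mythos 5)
Your overall architecture matches the paper's --- Stage~1 is Lemma~\ref{lemma:invariant}, termination follows from the decreasing $d$-simplex count, and your Conclusion reproduces Lemmas~\ref{lemma:termination}, \ref{lemma:upper-equals-lower-consequence} and \ref{lemma:upper-and-lower-complex-equal-boundary}. The gap is in Stage~2, your stand-in for the key Lemma~\ref{lemma:correspondance-free-terminal-node}. You correctly choose a $\below$-maximal $\sigma$, correctly show that each upper facet of $\sigma$ has $\sigma$ as its only $d$-coface and hence lies in $\UpperC{K}$, and correctly set $\tau=\bigcap\Above{\sigma}$. But the case analysis meant to establish $\Link{\tau}{K}=\Cl{(\sigma\setminus\tau)}$ does not close. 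For a $d$-simplex $\xi\in\Star{\tau}{K}$ with $\dim(\xi\cap\sigma)<d-1$, ``iterating the same dichotomy on that shared face and its upper facets'' is not an argument: you do not say what object you iterate on, what quantity decreases, or why the iteration terminates in Case~1 rather than in yet another low-dimensional intersection. Separately, your claim that a maximal coface $\xi$ of $\tau$ with $\dim\xi<d$ must lie in $\LowerC{K}$ is unjustified as written; it does hold (such a $\xi$ has empty link, so every point of $\Interior{\Conv{\xi}}$ is isolated in its normal fibre and therefore lies on both skins simultaneously), but that observation is missing, and it is exactly what licenses the step ``$\xi\in\LowerC{K}$'' before the contradiction with the non-degenerate fibre over $\Interior{\Conv{\tau}}$ becomes available.

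The paper closes precisely this hole with a global topological sandwich that a local, facet-by-facet dichotomy cannot replicate: by covering projection, $\UpperC{K}$ triangulates the closed $(d-1)$-manifold $\M$, so $\Link{\tau}{\UpperC{K}}$ is a triangulated $\Sphere^{d-k-2}$; it contains the triangulated $\Sphere^{d-k-2}$ $\Link{\tau}{\UpperC{\Cl{\sigma}}}$, hence the two coincide. From there one passes to $\Link{\tau}{\partial K}=\Link{\tau}{\partial\Cl{\sigma}}$, and, since both link complexes are proper subcomplexes of triangulations of $\Sphere^{d-k-1}$ sharing the same boundary, to $\Link{\tau}{K}=\Link{\tau}{\Cl{\sigma}}$. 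That single argument excludes every extraneous coface of $\tau$ at once, including $d$-simplices touching $\sigma$ only along a low-dimensional face or reaching it only through a chain of intermediate $d$-simplices --- exactly the configurations your sketch cannot rule out, because what rules them out is the global sphericity of the link in $\UpperC{K}$, not any local combinatorial constraint at $\sigma$.
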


The remaining of this section aims to prove
Theorem~\ref{theorem:correctness-generic-simplification} and we
consider $K$ such that $\US{K} \subseteq \Offset \M r$ for some $r < \Reach{\M}$.
Using the
relation $\below$, associate to $K$ its dual graph $G_\M(K)$ that has
one node for each $d$-simplex of $K$ and one arc for each pair of
$d$-simplices $\sigma_0, \sigma_1 \in K$ that share a common
facet $\sigma_0 \cap \sigma_1$. Direct an arc from $\sigma_0$
to $\sigma_1$ if $\sigma_0 \below \sigma_1$, and from $\sigma_1$ to
$\sigma_0$ otherwise. Since either $\sigma_0 \below \sigma_1$ or
$\sigma_1 \below \sigma_0$, this yields a well-defined orientation for
each arc in the dual graph. Figures~\ref{figure:cycle-2d} and
\ref{figure:sources-and-sinks} show examples.

\begin{figure}[htb]
  \def\svgwidth{.8\linewidth}
  \centering
\begingroup%
  \makeatletter%
  \providecommand\color[2][]{%
    \errmessage{(Inkscape) Color is used for the text in Inkscape, but the package 'color.sty' is not loaded}%
    \renewcommand\color[2][]{}%
  }%
  \providecommand\transparent[1]{%
    \errmessage{(Inkscape) Transparency is used (non-zero) for the text in Inkscape, but the package 'transparent.sty' is not loaded}%
    \renewcommand\transparent[1]{}%
  }%
  \providecommand\rotatebox[2]{#2}%
  \newcommand*\fsize{\dimexpr\f@size pt\relax}%
  \newcommand*\lineheight[1]{\fontsize{\fsize}{#1\fsize}\selectfont}%
  \ifx\svgwidth\undefined%
    \setlength{\unitlength}{425.59446716bp}%
    \ifx\svgscale\undefined%
      \relax%
    \else%
      \setlength{\unitlength}{\unitlength * \real{\svgscale}}%
    \fi%
  \else%
    \setlength{\unitlength}{\svgwidth}%
  \fi%
  \global\let\svgwidth\undefined%
  \global\let\svgscale\undefined%
  \makeatother%
  \begin{picture}(1,0.60558033)%
    \lineheight{1}%
    \setlength\tabcolsep{0pt}%
    \put(0,0){\includegraphics[width=\unitlength,page=1]{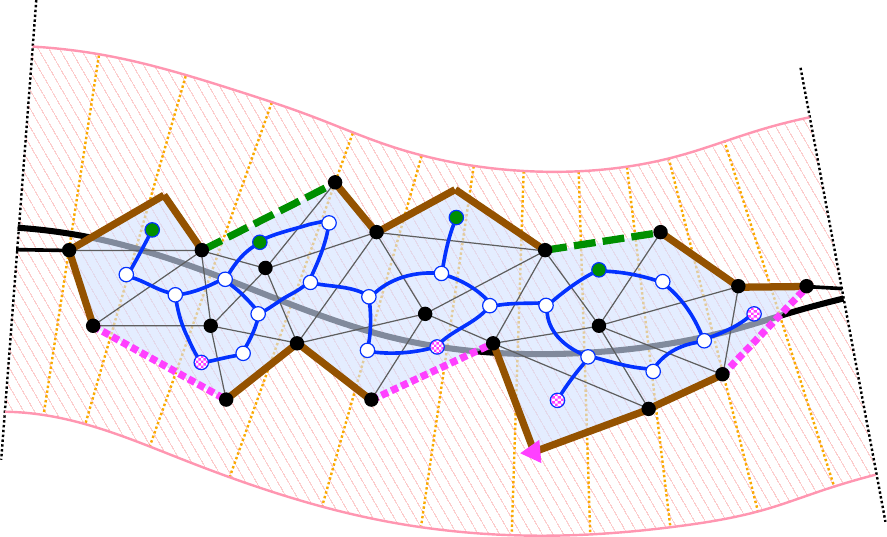}}%
    \put(0.03324047,0.36120361){\makebox(0,0)[lt]{\lineheight{1.25}\smash{\begin{tabular}[t]{l}\M\end{tabular}}}}%
    \put(0.05268301,0.56750332){\makebox(0,0)[lt]{\lineheight{1.25}\smash{\begin{tabular}[t]{l}$\Offset \M r$\end{tabular}}}}%
    \put(0.7873321,0.3276917){\makebox(0,0)[lt]{\lineheight{1.25}\smash{\begin{tabular}[t]{l}$K$\end{tabular}}}}%
    \put(0.91745154,0.22809074){\color[rgb]{0,0,0}\makebox(0,0)[lt]{\lineheight{1.25}\smash{\begin{tabular}[t]{l}$m$\end{tabular}}}}%
    \put(0,0){\includegraphics[width=\unitlength,page=2]{socg-sources-and-sinks-modified.pdf}}%
    \put(0.85251308,0.41854269){\makebox(0,0)[lt]{\lineheight{1.25}\smash{\begin{tabular}[t]{l}$\normal m$\end{tabular}}}}%
    \put(0,0){\includegraphics[width=\unitlength,page=3]{socg-sources-and-sinks-modified.pdf}}%
  \end{picture}%
\endgroup%

  \caption{A vertically convex simplicial complex $K$ relative to $\M$. All free simplices are highlighted by thickness. There are four simplices free from below (represented by three dotted edges and one triangular vertex, in pink) and four simplices free from above (represented by two dashed edges and two square vertices, in green). The dual graph (oriented edges, in blue) has four sources (vertices filled by dots, in pink) and four sinks (vertices with a smooth filling, in green), in
    	one-to-one correspondence with the free simplices from below and above.}
  \label{figure:sources-and-sinks}
\end{figure}

In a directed graph, a \emph{source} is a node with only outgoing
arcs, while a \emph{sink} is a node with only incoming arcs. The next
lemma states that a vertical collapse in $K$ corresponds to the
removal of either a sink or a source in $G_\M(K)$ and conversely. For that, given a
finite set of abstract simplices $\Sigma = \{\sigma_1, \sigma_2,
\ldots, \sigma_k\}$, let $\bigcap \Sigma = \bigcap_{i=1}^k \sigma_i$
denote the set of vertices that belong to all simplices in
$\Sigma$. If $\bigcap \Sigma \neq \emptyset$, it forms an abstract
simplex.

\begin{restatable}[Sinks and sources]{lemmma}{correspondFreeTerminalNode}
  \label{lemma:correspondance-free-terminal-node}
  Consider $K$ such that $\US{K} \subseteq \Offset \M
    \rtube$ for some $\rtube < \Reach{\M}$ and assume that $K$
    satisfies the injective projection, covering projection and
    vertical convexity assumptions of Theorem~\ref{theorem:correctness-generic-simplification}.
  Consider a $d$-simplex $\sigma \in K$ and let $\tau = \bigcap
    {\Above \sigma}$ and $\tau' = \bigcap {\Below \sigma}$. Then,
  \begin{itemize}
  \item $\tau$ is a free simplex of $K$ from above relative to \M
    $\iff$ $\sigma$ is a sink of $G_\M(K)$.
  \item $\tau'$ is a free simplex of $K$ from below relative to \M
    $\iff$ $\sigma$ is a source of $G_\M(K)$.
  \end{itemize}
\end{restatable}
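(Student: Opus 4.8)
The plan is to prove both equivalences simultaneously by symmetry; I will spell out only the statement for $\tau = \bigcap \Above{\sigma}$ and the sink condition, the other case following by flipping the orientation of $\normal{}$ (equivalently, by exchanging the roles of upper and lower throughout). First I would recall what the two sides mean combinatorially. By Lemma~\ref{lemma:facets}, since $K$ has a non-vertical skeleton, $\sigma$ has no vertical facets, so $\Above \sigma$ and $\Below \sigma$ are non-empty and partition the facets of $\sigma$; hence $\tau = \bigcap \Above \sigma$ is a well-defined non-empty simplex (the common face of all upper facets), and $\dim \tau = d - \card{\Above \sigma}$. The simplex $\sigma$ is a \emph{sink} of $G_\M(K)$ precisely when, for every facet $\nu$ of $\sigma$ that is shared with another $d$-simplex $\sigma'$ of $K$, the arc points \emph{into} $\sigma$, i.e. $\sigma' \below \sigma$, i.e. $\nu$ is a lower facet of $\sigma$ (and an upper facet of $\sigma'$). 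In other words: $\sigma$ is a sink iff no upper facet of $\sigma$ is shared with another $d$-simplex of $K$, equivalently every upper facet of $\sigma$ is a boundary simplex of $K$.

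Next I would translate "free from above" into the same language. By Definition~\ref{definition:vertically-free-simplex}, $\tau$ is free from above in $K$ iff (i) $\tau$ is free, (ii) the unique inclusion-maximal coface $\sigma^*$ of $\tau$ has dimension $d$, and (iii) the set of $(d-1)$-simplices in $\Star \tau K$ equals $\Above{\sigma^*}$. The key geometric observation — which I would prove using vertical convexity and the injective-projection hypothesis — is that the star of $\tau$ in $K$ cannot "wrap around" in the normal direction: because $\US{K}$ is vertically convex relative to $\M$ and $\Conv \sigma \subseteq \Offset \M r$, the $d$-simplices of $K$ containing $\tau$ are stacked in a single normal segment, so $\sigma$ is the top $d$-simplex in this stack iff all its upper facets are on the boundary $\partial \US{K}$ (they coincide with the upper skin there), and in that case $\Star\tau K$ retracts onto $\sigma$ with $\tau$ free. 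More concretely: I claim $\tau$ is free from above with inclusion-maximal coface $\sigma$ iff every upper facet of $\sigma$ is a boundary simplex of $K$ and no other $d$-simplex of $K$ contains $\tau$. The forward direction is immediate from the definition (condition (iii) forces the $(d-1)$-cofaces of $\tau$ to be exactly the upper facets of $\sigma$, and an upper facet shared with another $d$-simplex would be non-free, contradicting (i)). For the reverse direction, I would use Lemma~\ref{lemma:facets} and Lemma~\ref{lemma:upper-and-lower-complex-equal-boundary}: $\tau = \bigcap \Above \sigma$ lies in each upper facet of $\sigma$; if each such facet is a boundary simplex not shared with another $d$-simplex, then $\Link \tau K = \Cl{\mu}$ where $\mu = \sigma \setminus \tau$, so $\tau$ is free with unique inclusion-maximal coface $\sigma$, and condition (iii) holds by construction.

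The final step is to close the loop between these two reformulations, which is now essentially bookkeeping: $\sigma$ is a sink $\iff$ every upper facet of $\sigma$ is a boundary simplex of $K$ (shown above) $\iff$ (together with the observation that a $d$-simplex other than $\sigma$ containing $\tau$ would have to share an upper facet of $\sigma$, which would then be interior) $\tau$ is free from above with inclusion-maximal coface $\sigma$. The point needing care here is the direction "$\sigma$ is a sink $\Rightarrow$ no other $d$-simplex contains $\tau$": I would argue that any $d$-simplex $\sigma'\neq\sigma$ with $\tau\subseteq\sigma'$ shares with $\sigma$ a facet $\nu$ with $\tau\subseteq\nu$, and since $\card\tau = d - \card\Above\sigma$ and $\tau$ is contained in all of $\Above\sigma$, the facet $\nu$ of $\sigma$ containing $\tau$ must itself lie in $\Above\sigma$ (a counting / "the facets of $\sigma$ containing $\tau$ are exactly the upper facets" argument, using that $\tau$ is the intersection of the upper facets and $\sigma$ has exactly $d+1$ facets). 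Then $\nu$ is an upper facet shared with $\sigma'$, contradicting that $\sigma$ is a sink.

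\textbf{Main obstacle.} The routine part is the graph-theoretic translation of "sink"; the delicate part is showing that the combinatorial local picture around $\tau$ is forced to be a single "cap" of upper facets with nothing glued beyond — i.e. that $\Star\tau K$ has exactly one inclusion-maximal element and its $(d-1)$-skeleton is exactly $\Above\sigma$, rather than, say, a second $d$-simplex attached along a lower facet of $\sigma$ that also happens to contain $\tau$. This is where vertical convexity and the non-vertical skeleton hypothesis are genuinely used: they prevent $\tau$ (which, being a face of the upper facets, sits near the upper skin) from simultaneously being a face of a $d$-simplex sitting below $\sigma$, and they guarantee via Lemma~\ref{lemma:facets} that the facet partition into upper/lower is clean. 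I expect the write-up to lean on Lemma~\ref{lemma:facets} and the characterization of $\partial\US K$ as the union of the two skins (Lemma~\ref{lemma:upper-and-lower-skins-homeomorphic-to-manifold}) to make this precise.
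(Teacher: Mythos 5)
Your high-level reading of the lemma is right, and your forward direction matches the paper's. But the reverse direction — showing that a sink $\sigma$ yields a free simplex $\tau$ — has a concrete gap, and the bridge you propose to close it is wrong.

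First, your step ``any $d$-simplex $\sigma'\neq\sigma$ with $\tau\subseteq\sigma'$ shares with $\sigma$ a facet $\nu$ with $\tau\subseteq\nu$'' is simply false: two $d$-simplices can share a $k$-face $\tau$ with $k<d-1$ without sharing any $(d-1)$-face. Such a $\sigma'$ would not be adjacent to $\sigma$ in $G_\M(K)$, so it is not touched by the hypothesis that $\sigma$ is a sink; your counting argument that ``the facets of $\sigma$ containing $\tau$ are exactly the upper facets'' is correct but irrelevant to $\sigma'$, which need not intersect $\sigma$ in a facet at all. Second, even if you could rule out extra $d$-simplices, the conclusion $\Link\tau K = \Cl\mu$ does not follow from ``each upper facet of $\sigma$ is a boundary simplex not shared with another $d$-simplex'': you would also have to exclude $(d-1)$-simplices and lower-dimensional ``flaps'' containing $\tau$ that are not faces of $\sigma$. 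You correctly flag this as ``the delicate part'' and correctly guess that vertical convexity and the two-skins characterization are the relevant tools, but you do not supply an argument, and the naive appeal to Lemma~\ref{lemma:facets} and Lemma~\ref{lemma:upper-and-lower-complex-equal-boundary} does not by itself close it.

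The paper's device is different and worth understanding: rather than enumerate and exclude extra simplices in $\Star\tau K$, it reduces freeness to showing $\Link\tau K = \Link\tau{\Cl\sigma}$, which it in turn reduces to an equality of boundary links $\Link\tau{\partial K} = \Link\tau{\partial\Cl\sigma}$. It then proves this in three steps: (1) a local argument that, near a point $x$ in the relative interior of $\Conv\sigma$, the boundary of $\US K$ agrees with the upper skin, so $\Link\tau{\partial K}=\Link\tau{\UpperC K}$; (2) the elementary identity $\Link\tau{\UpperC{\Cl\sigma}}=\Link\tau{\partial\Cl\sigma}$; and (3) the pivotal topological step — because $\US{\UpperC K}$ is a closed $(d-1)$-manifold (homeomorphic to $\M$), $\Link\tau{\UpperC K}$ is a triangulated $(d-k-2)$-sphere, and since it contains the $(d-k-2)$-sphere $\Link\tau{\UpperC{\Cl\sigma}}$, they must coincide. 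It is step (3), the ``sphere in a sphere must be the whole sphere'' argument, that does the work you were trying to do combinatorially, and it is precisely the piece missing from your proposal.
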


\medskip

The next lemma provides an invariant for the while-loop of
Algorithm~\ref{algo:generic}.

\begin{restatable}[Loop invariant]{lemma}{AlgorithmInvariant}
  \label{lemma:invariant}
  Consider $K$ 
  such that $\US{K} \subseteq \Offset \M \rtube$ for some
  $\rtube < \Reach{\M}$. Let $\tau$ be a vertically free simplex in
  $K$ relative to \M. Let $K'$ be obtained from $K$ by collapsing
  $\tau$ in $K$. If $K$ satisfies the assumptions of
  Theorem~\ref{theorem:correctness-generic-simplification}, then so
  does $K'$.  
\end{restatable}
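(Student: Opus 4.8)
The plan is to fix a vertically free simplex $\tau$ in $K$, say free from above (the "free from below" case being symmetric by reversing the orientation of $\normal{}$), let $\sigma$ be the unique inclusion-maximal coface of $\tau$ — necessarily a $d$-simplex — and let $K' = K \setminus \Star\tau K$. Since $\Above\sigma$ is exactly the set of $(d-1)$-simplices of $\Star\tau K$ by Definition~\ref{definition:vertically-free-simplex}, removing $\Star\tau K$ deletes $\sigma$ together with its upper facets and possibly lower-dimensional faces contained only in these, but keeps $\Below\sigma$ and every facet of $\sigma$ in $\Below\sigma$. I would first establish a clean geometric description of $\US{K'}$: it is obtained from $\US{K}$ by excising, along each normal line $\normal m$, the portion of the segment $[\low{\US K}m, \up{\US K}m]$ that belongs only to $\Conv\sigma$. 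Using Lemma~\ref{lemma:facets} ($\Above\sigma, \Below\sigma$ partition the facets of $\sigma$) and the vertical convexity of $\Cl\sigma$, the intersection $\Conv\sigma \cap \normal m$ is a sub-segment whose top endpoint lies on $\US{\Above\sigma}$ and whose bottom endpoint lies on $\US{\Below\sigma}$; so the effect on each normal fiber is to lower the upper endpoint (or to delete the whole fiber if $\sigma$ was the only simplex there, which cannot happen on the interior because of covering projection — this needs a small argument) down to the lower skin of $\sigma$.

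From this fiberwise picture the four assumptions transfer as follows. \emph{Injective projection (non-vertical skeleton):} $K' \subseteq K$, and being vertical is a property of a single simplex, so $K'$ inherits the absence of vertical $i$-simplices for $0 < i < d$ immediately. \emph{Vertical convexity:} each fiber $\US{K'} \cap \normal m \cap B(m,r)$ is obtained from the segment $\US K \cap \normal m \cap B(m,r)$ by chopping off a piece from the top that coincides with $\Conv\sigma \cap \normal m$; since both $\US K \cap \normal m$ and $\Conv\sigma \cap \normal m$ are segments and the removed part is the top sub-segment, the remainder is again a (possibly empty, possibly degenerate) segment, hence convex — this is the heart of why the collapse must be \emph{vertical} and not merely free, as Figure~\ref{figure:collapses} warns. \emph{Covering projection:} I must check that no normal fiber is emptied, i.e. $\pi_\M(\US{K'}) = \M$ still. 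The key point is that the bottom endpoint of $\Conv\sigma\cap\normal m$ lies on $\US{\Below\sigma}\subseteq\US{K'}$, so every $m\in\pi_\M(\Conv\sigma)$ still has a nonempty fiber in $\US{K'}$; for $m\notin\pi_\M(\Conv\sigma)$ the fiber is untouched. Hence $\pi_\M(\US{K'}) = \pi_\M(\US K) = \M$. \emph{Acyclicity:} by Lemma~\ref{lemma:correspondance-free-terminal-node}, $\tau$ free from above corresponds to $\sigma$ being a sink of $G_\M(K)$; deleting $\sigma$ deletes a sink (and all arcs into it) from $G_\M(K)$, and $G_\M(K') = G_\M(K) - \sigma$ because no new adjacencies between surviving $d$-simplices are created and the $\below$ relation on a pair of surviving $d$-simplices is determined intrinsically by their shared facet, unchanged by the deletion. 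Removing a node from a DAG leaves a DAG, so $\below$ stays acyclic on the $d$-simplices of $K'$.

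I expect the main obstacle to be the \emph{geometric description of $\US{K'}$ along normal fibers} — specifically, arguing rigorously that removing $\Star\tau K$ removes from the support exactly $\bigcup_m \big(\Conv\sigma \cap \normal m \cap (\text{top part})\big)$ and nothing more, and that what remains on each fiber stays connected (a segment). The subtlety is that a point of $\Conv\sigma$ could also lie in another $d$-simplex of $K$ (if $K$ is not embedded-disjointly in the fiber), but vertical convexity of $\US K$ forbids the fiber from splitting, and the non-vertical skeleton assumption forces any overlap between $\sigma$ and another $d$-simplex to occur along a shared facet, which is either in $\Above\sigma$ (removed, but then that neighbor's fiber contribution starts exactly where $\sigma$'s ends) or in $\Below\sigma$ (kept). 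Pinning down this case analysis cleanly — and handling the degenerate situation where $\up{\US K}m = \low{\US K}m$ after the collapse, i.e. the fiber shrinks to a point — is where the real care is needed; everything else is bookkeeping on subcomplexes and on the dual graph.
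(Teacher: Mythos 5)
Your plan is correct and follows essentially the same route as the paper's proof: it establishes the same fiberwise description of $\US{K'}$ along normal segments (the top of each fiber drops from $\up K m$ to $\low{\Cl\sigma}m$ for $m\in\pi_\M(\Conv\sigma)$ and is unchanged elsewhere), which simultaneously gives vertical convexity and covering projection, and it invokes Lemma~\ref{lemma:correspondance-free-terminal-node} to reduce acyclicity to deleting a terminal node of $G_\M(K)$. One small simplification you are entitled to but do not quite exploit: since $\tau$ is free from above, every facet in $\Above\sigma$ has $\sigma$ as its only $d$-coface, hence $\Above\sigma\subseteq\partial K$ — so the case in your worry list of a neighboring $d$-simplex sharing an upper facet with $\sigma$ cannot occur at all, and the fiber below $\low{\Cl\sigma}m$ can never be emptied because $\low{\Cl\sigma}m$ itself lies on a retained lower facet; both of the situations you flag as potentially delicate are vacuous once you notice this.
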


  \begin{lemma}[Upon termination]
    \label{lemma:termination}
    Consider $K$ such that $\US{K} \subseteq \Offset \M \rtube$ for
    some $\rtube < \Reach{\M}$ and assume that $K$ satisfies the
    injective projection and vertical convexity assumptions of
    Theorem~\ref{theorem:correctness-generic-simplification}.  If
    $G_\M(K) = \emptyset$, then
    $\LowerSkin{\US{K}}=\UpperSkin{\US{K}}$.
\end{lemma}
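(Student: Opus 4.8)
The plan is to reduce the statement to a contradiction with the injective-projection hypothesis. The first step is to unpack the hypothesis $G_\M(K) = \emptyset$: since the dual graph has exactly one node per $d$-simplex of $K$, an empty graph means $K$ has no $d$-simplices, hence $\dim\sigma \le d-1$ for every $\sigma \in K$. (One can sanity-check this reading: a nonempty acyclic dual graph always has a sink and a source, hence a vertically free simplex by Lemma~\ref{lemma:correspondance-free-terminal-node}, so $\NaiveVertSimp$ would not have terminated.) I would also recall that vertical convexity yields the decomposition $\US{K} = \dot{\bigcup}_{m\in\pi_\M(\US{K})}[\low{\US{K}}{m},\up{\US{K}}{m}]$ into disjoint normal segments, all contained in $\Offset{\M}{\rtube}$ for some $\rtube < \Reach{\M}$, so that $\pi_\M$ sends every point of such a segment to its defining point $m$ and $\US{K}$ stays off $\MA{\M}$.

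Next I would assume, for contradiction, that $\UpperSkin{\US{K}} \neq \LowerSkin{\US{K}}$. Directly from the definitions of the two skins, this produces some $m \in \pi_\M(\US{K})$ with $\low{\US{K}}{m} \neq \up{\US{K}}{m}$ — if the two endpoints coincided for every such $m$, the skins would be identical sets. Set $S := [\low{\US{K}}{m},\up{\US{K}}{m}]$; this is a nondegenerate segment lying on the normal line $\Normal{m}{\M}$ and contained in $\US{K} = \bigcup_{\sigma\in K}\Conv{\sigma}$. Since $S$ is covered by the finitely many closed convex sets $\Conv{\sigma}$, and each intersection $\Conv{\sigma}\cap S$ is a closed convex subset of the line through $S$ — hence an interval, possibly a point or empty — at least one of these intersections must be a nondegenerate subsegment, because a finite family of points cannot cover the infinite set $S$.

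Fixing such a $\sigma$, the set $\Conv{\sigma}$ then contains two distinct points of $S$; both lie on $\Normal{m}{\M}$ inside $\Offset{\M}{\rtube}$ and hence have the same projection $m$ onto $\M$, so $\sigma$ is vertical relative to $\M$. Since $\Conv{\sigma}$ contains a nondegenerate segment we have $\dim\sigma \ge 1$, while $\dim\sigma \le d-1$ because $K$ has no $d$-simplices; thus $\sigma$ is a vertical $i$-simplex with $0 < i < d$, contradicting the assumption that $K$ has a non-vertical skeleton relative to $\M$. This establishes $\UpperSkin{\US{K}} = \LowerSkin{\US{K}}$. The only step I expect to need a little care is the covering argument — that one of the finitely many pieces $\Conv{\sigma}\cap S$ must be nondegenerate; this can equally be phrased as a cardinality count ($S$ is uncountable, $K$ is finite), and one could alternatively route the argument through $\partial\US{K} = \UpperSkin{\US{K}}\cup\LowerSkin{\US{K}}$ from Lemma~\ref{lemma:upper-and-lower-skins-homeomorphic-to-manifold}, but that is not needed.
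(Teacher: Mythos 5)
Your proof is correct and follows essentially the same route as the paper's: the paper proves the contrapositive (distinct skins forces $G_\M(K)\neq\emptyset$) by assuming the nondegenerate normal segment $[\low{\US{K}}{m},\up{\US{K}}{m}]$ meets only $i$-simplices with $i<d$ and concluding, from finiteness of $K$, that one of them meets the segment in a nonzero-length interval and is therefore vertical with $0<i<d$, contradicting injective projection. You package the same argument as a direct contradiction from ``$G_\M(K)=\emptyset$, hence no $d$-simplices,'' and are a touch more explicit about excluding the $0$-dimensional case, but the core covering-and-verticality step is identical.
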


\begin{proof}
  We establish the contrapositive:
  \begin{equation*}
    \LowerSkin{\US{K}} \neq \UpperSkin{\US{K}} \quad \implies \quad G_\M(K) \neq \emptyset
  \end{equation*}
  Suppose that the two skins are distinct, in other words, that there
  exists $m \in \M$ such that $\low {\US{K}} m \neq \up {\US{K}}
  m$ and let us show that the segment $[\low {\US K} m,\up K m]$
  intersects the support of at least one $d$-simplex of $K$, implying
  $G_\M(K)\neq\emptyset$.  Suppose, for a contradiction, that $[\low
    {\US{K}} m,\up {\US K} m]$ only intersects the support of
  $i$-simplices of $K$ for $i<d$.  As $\low {\US K} m \neq \up {\US K}
  m$ and $K$ has a finite number of simplices, at least one of these
  $i$-simplices, say $\nu$, intersects $[\low {\US K} m,\up {\US K}
    m]$ in a non-zero length segment containing distinct points $x,y
  \in \Conv \nu \cap [\low {\US K} m,\up {\US K} m]$.  Hence, $x$ and
  $y$ share the same orthogonal projection $m$ onto \M, implying that
  $\nu$ is vertical relative to \M.  This contradicts the injective
  projection assumption on $K$ and therefore establishes the contrapositive.
\end{proof}

We now prove the correctness of $\NaiveVertSimp(K)$.

\begin{proof}[Proof of Theorem \ref{theorem:correctness-generic-simplification}.]
  The algorithm starts with $K$ that satisfies the theorem
  assumptions.  By Lemma~\ref{lemma:invariant}, after each iteration
  of the while-loop we obtain a new $K$ that continues to satisfy
  those assumptions.  Since each iteration involves a vertical
  collapse of $K$ relative to $\M$, the number of $d$-simplices of $K$
  is reduced. Thus, the algorithm must terminate. Upon termination,
  there are no vertically free simplices in $K$ relative to \M. By
  Lemma \ref{lemma:correspondance-free-terminal-node}, this implies
  that, when Algorithm~\ref{algo:generic} terminates, $G_\M(K)$ has no
  terminal node (neither a source nor a sink) and is therefore
  empty. By Lemma \ref{lemma:termination}, it follows that
    $\LowerSkin{\US{K}}=\UpperSkin{\US{K}}$.  By Lemma
  \ref{lemma:upper-equals-lower-consequence}, we have $K = \partial K$
  and Lemma~\ref{lemma:upper-and-lower-complex-equal-boundary} implies
  \[
  K = \partial K = \UpperC K = \LowerC K,
  \]
  with $K$ being a triangulation of \M.
\end{proof}

\subsection{Practical version}
\label{section:in-practice}

Algorithm~\ref{algo:generic} relies on knowledge of $\M$, which
renders it impractical for implementation since $\M$ is typically
unknown.  In this section, we introduce
Algorithm~\ref{algo:practical}, a feasible variant that is correct if
the $(d-1)$-simplices of $K$ form a sufficiently small angle with \M;
see Appendix \ref{appendix:angles} for a definition of the
angle between affine spaces. In this variant, we assign an affine
space $\HH_\tau$ to each $\tau \in K$: for a free simplex $\tau$ with
a $d$-dimensional coface $\sigma$, $\HH_\tau$ is defined as the
hyperplane spanned by any facet of $\sigma$. Otherwise, set $\HH_\tau
= \emptyset$. We also use the notion of vertically free simplices relative to
$\HH_\tau$, extending Definition~\ref{definition:vertically-free-simplex} as indicated in Remark~\ref{remark:vertically-free-relative-to-hyperplanes}.

\begin{algorithm}
  \caption{\hypertarget{practicalalgo}{$\PracticalVertSimp(K)$}}
  \begin{algorithmic}
    \WHILE{ there is a simplex $\tau$ vertically free in $K$ relative to $\HH_\tau$ }
    \STATE{ Collapse $\tau$ in $K$; }
    \ENDWHILE
  \end{algorithmic}
  \label{algo:practical}
\end{algorithm}

\begin{restatable}[Correctness]{theorem}{InPractice}
  \label{theorem:in-practice}
  Suppose that $K$ satisfies the assumptions of
  Theorem~\ref{theorem:correctness-generic-simplification} and, in addition, for all
  $(d-1)$-simplices $\nu$ of $K$
  \begin{equation}
    \label{eq:angle-in-practice}
    \max_{a \in \nu} \angle(\Aff \nu, \Tangent {\pi_\M(a)} \M) \, < \, \frac{\pi}{4}.    
  \end{equation}
  Then, $\PracticalVertSimp(K)$ transforms $K$ into a triangulation
  of $\M$.
\end{restatable}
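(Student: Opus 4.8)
The plan is to mirror the proof of Theorem~\ref{theorem:correctness-generic-simplification} — establish a loop invariant, show the loop terminates, and analyze the situation upon termination — but now the collapses are guided by the data-dependent hyperplane $\HH_\tau$ instead of by $\M$. The crucial observation to extract from condition~\eqref{eq:angle-in-practice} is the following local comparison: if $\tau$ is a free simplex of $K$ with $d$-dimensional coface $\sigma$, and $\HH_\tau$ is the hyperplane spanned by one of the facets of $\sigma$, then because every facet of $\sigma$ makes an angle $<\pi/4$ with the tangent space $\Tangent{\pi_\M(a)}\M$ at the relevant footpoints, the normal direction of $\HH_\tau$ makes an angle $<\pi/4$ with $\normal{m}$ for all footpoints $m$ of vertices of $\sigma$. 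Consequently, the partition of the facets of $\sigma$ into those ``above'' and ``below'' with respect to $\HH_\tau$ coincides with the partition $\Above\sigma$, $\Below\sigma$ relative to $\M$ from Section~\ref{section:upper-and-lower-facets}. In other words: \emph{a simplex $\tau$ is vertically free in $K$ relative to $\HH_\tau$ if and only if it is vertically free in $K$ relative to $\M$.} This equivalence is the heart of the argument; once it is in hand, Algorithm~\ref{algo:practical} performs exactly the same set of candidate collapses as Algorithm~\ref{algo:generic}, and all the machinery already developed applies verbatim.

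Concretely, I would proceed as follows. First, prove the angle comparison lemma sketched above: spell out that $\angle(\Aff\nu,\Tangent{\pi_\M(a)}\M)<\pi/4$ for every facet $\nu$ of $\sigma$ and every vertex $a$ of $\nu$ forces the unit normal of $\Aff\nu$ to lie within angle $<\pi/4$ of each $\normal{m}$, using that the angle between two hyperplanes equals the angle between their normals, together with the fact that the footpoints of the vertices of a single $d$-simplex $\sigma\subseteq\Offset\M\rtube$ have nearly-parallel normals (this is a standard reach estimate; it may even be absorbed into~\eqref{eq:angle-in-practice} by taking the max over $a\in\nu$ for each facet separately, since each facet individually sees all but one vertex of $\sigma$, and the remaining vertex is handled by another facet). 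From this, deduce that for a point moving along $\normal m\cap B(m,\rtube)$ the facets it exits through on one side of $\Conv\sigma$ are exactly $\Above\sigma$ and on the other side exactly $\Below\sigma$, so the defining conditions in Definition~\ref{definition:vertically-free-simplex} (extended to $\HH_\tau$ via Remark~\ref{remark:vertically-free-relative-to-hyperplanes}) match those relative to $\M$. Second, conclude the equivalence of vertically-free simplices for the two notions. Third, observe that the loop invariant (Lemma~\ref{lemma:invariant}) and the terminal-node correspondence (Lemma~\ref{lemma:correspondance-free-terminal-node}) are stated relative to $\M$ and hence hold unchanged, since each iteration of Algorithm~\ref{algo:practical} is, step for step, an iteration of Algorithm~\ref{algo:generic}. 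Finally, invoke the proof of Theorem~\ref{theorem:correctness-generic-simplification} wholesale: termination follows from the strict decrease in the number of $d$-simplices; upon termination no simplex is vertically free relative to $\HH_\tau$, hence none relative to $\M$, hence $G_\M(K)$ has neither source nor sink and is empty; then Lemmas~\ref{lemma:termination}, \ref{lemma:upper-equals-lower-consequence} and~\ref{lemma:upper-and-lower-complex-equal-boundary} give $K=\partial K=\UpperC K=\LowerC K$, a triangulation of $\M$.

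I expect the main obstacle to be the bookkeeping in the angle comparison lemma — specifically, making precise \emph{which} footpoints and \emph{which} facet-hyperplane are compared, and confirming that the threshold $\pi/4$ (rather than some smaller constant) genuinely suffices once the small discrepancy between the normals $\normal{m}$ at different footpoints of $\sigma$ is accounted for. The clean way around this is to note that a facet $\nu$ of $\sigma$ contains all vertices of $\sigma$ except one, so the hypothesis $\max_{a\in\nu}\angle(\Aff\nu,\Tangent{\pi_\M(a)}\M)<\pi/4$ already controls the orientation of $\HH_\tau=\Aff\nu$ against the normals at $d$ of the $d+1$ footpoints; the last footpoint is covered by choosing, in the definition of $\HH_\tau$, a different facet, or by a direct triangle-inequality-on-angles argument using that all facets of $\sigma$ satisfy~\eqref{eq:angle-in-practice} simultaneously. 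One should also check the degenerate boundary cases where $\Normal m\M\cap B(m,\rtube)\cap\Conv\sigma$ is a single point, but these are handled exactly as in Lemma~\ref{lemma:facets}. Apart from this lemma, the proof is essentially a reduction and carries no further difficulty.
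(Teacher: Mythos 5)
Your proposal is correct and follows essentially the same route as the paper: prove that a simplex is vertically free in $K$ relative to $\HH_\tau$ if and only if it is vertically free relative to $\M$, so that $\PracticalVertSimp$ executes exactly the moves of $\NaiveVertSimp$, and then cite Theorem~\ref{theorem:correctness-generic-simplification}. The one imprecise step you flag — whether the angle bound on $\HH_\tau=\Aff\nu$ controls the normal at the one vertex of $\sigma$ outside $\nu$ — is resolved precisely by your ``clean way around this'': the paper never compares $N_\nu$ to that last footpoint directly, but instead compares outward facet normals $N_{\nu_0},N_{\nu_1}$ pairwise via a shared vertex and a triangle inequality, obtaining a well-defined two-group partition of the facets that agrees with both the $\M$-partition and the $\Aff\nu$-partition.
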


\section{Correct reconstructions from \texorpdfstring{$\alpha$}{α}-complexes}
\label{section:alpha-complexes}

In this section, we assume that $\M$ is sampled by a finite point set $P$ and consider Algorithms~\ref{algo:naive-squash} and~\ref{algo:practical-squash}, which apply vertical collapses to $\Del{P,\alpha}$ either straightforwardly or practically. We introduce two parameters, $\varepsilon \geq 0$ and $\delta \geq 0$, to control the sample density and noise of $P$, respectively, and a scale parameter $\alpha \geq 0$. Section~\ref{section:stating-reconstruction-results} establishes conditions ensuring the correctness of Algorithms~\ref{algo:naive-squash} and~\ref{algo:practical-squash}, with the proof outlined in Section~\ref{section:proof-technique}. Section~\ref{section:surfaces-in-3D} shows how these conditions hold for a wide range of $\frac{\varepsilon}{\reach}$ and $\frac{\alpha}{\reach}$ when $\M$ is a surface in $\Rspace^3$ and $P$ is noiseless. This result is extended to the restricted Delaunay complex of $P$ in Section~\ref{section:restricted-Delaunay-complex}.
	
\begin{algorithm}
  \caption{\hypertarget{naivesquash}{$\NaiveSquash(P,\alpha)$}}
  \begin{algorithmic}
	\STATE{ $K \leftarrow \Del{P,\alpha}$;  $\NaiveVertSimp(K)$; {\bf return} $K$; }
  \end{algorithmic}
  \label{algo:naive-squash}
\end{algorithm}

\begin{algorithm}
  \caption{\hypertarget{practicalsquash}{$\PracticalSquash(P,\alpha)$}}
  \begin{algorithmic}
	\STATE{ $K \leftarrow \Del{P,\alpha}$;  $\PracticalVertSimp(K)$; {\bf return} $K$; }
  \end{algorithmic}
  \label{algo:practical-squash}
\end{algorithm}

\subsection{Sampling and angular conditions in $\Rspace^d$}
\label{section:stating-reconstruction-results}

	The next definition enables us to express our results in $\Rspace^d$ more concisely.

	\begin{definition}[Strict homotopy condition]
		We say that $\varepsilon, \delta \geq 0$ satisfy the {\em strict
			homotopy condition} if $(\reach-\delta)^2 - \varepsilon ^2 >
		(4\sqrt{2}-5)\reach^2$ for $\delta \leq \varepsilon$ and
		$\varepsilon + \sqrt{2}\delta < (\sqrt{2}-1)\reach$ for
		$\delta \geq \varepsilon$. 
	\end{definition}

	Let $I(\varepsilon,\delta)$ be an interval of $\alpha$ values so
    that $\Offset{P}{\alpha}$ is vertically convex with relation to
    $\M$. The exact definition can be found in
    Appendix~\ref{appendix:range-alpha}. The
    fact that this interval guarantees vertical convexity follows from
    the specialization of Propositions $5$ and $7$ in
    \cite{socg24-NSW} to the case where $\M$ has codimension-one:

\begin{theorem}[Specialization of \cite{socg24-NSW}]
	\label{theorem:NSW}
		Suppose that $\M \subseteq \Offset P \varepsilon$ and $P
        \subseteq \Offset \M \delta$ with $\varepsilon, \delta \geq 0$
        that satisfy the strict homotopy condition. Then, for all
        $\alpha \in I(\varepsilon,\delta)$, $\pi_\M(\Offset P \alpha)
        = \M$ and $\Offset P \alpha$ is vertically convex relative to
        $\M$. Thus, $\Offset P \alpha$ has associated upper and lower
        skins and deformation-retracts onto $\M$ along $\pi_\M$. In
        addition, the two skins partition $\partial \Offset
        P \alpha$.
\end{theorem}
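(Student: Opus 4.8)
The plan is to reduce the statement to the already-cited results from \cite{socg24-NSW} by matching hypotheses carefully, and then to derive the ``in addition'' clause about the two skins partitioning the boundary as a consequence of vertical convexity together with the general skin machinery developed earlier in the paper. First I would recall that, by definition of $I(\varepsilon,\delta)$ (Appendix~\ref{appendix:range-alpha}), this interval is precisely engineered so that the specialization to codimension one of Proposition~5 of \cite{socg24-NSW} applies; under the strict homotopy condition on $\varepsilon,\delta$ one gets that $\Offset P \alpha$ lies in a tubular neighborhood $\Offset \M \rtube$ with $\rtube < \Reach \M$ and that $\pi_\M$ restricted to $\Offset P \alpha$ is surjective onto $\M$, i.e.\ $\pi_\M(\Offset P \alpha) = \M$. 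I would spell out the inequalities $\M \subseteq \Offset P \varepsilon$ and $P \subseteq \Offset \M \delta$ that feed into their sampling hypotheses, and note that the two regimes $\delta \le \varepsilon$ and $\delta \ge \varepsilon$ correspond exactly to the two cases appearing in their Propositions~5 and~7.

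Next I would invoke the specialization of Proposition~7 of \cite{socg24-NSW}: under the same hypotheses and for $\alpha \in I(\varepsilon,\delta)$, each normal segment $\Normal m \M \cap B(m,\rtube)$ meets $\Offset P \alpha$ in a (possibly empty, possibly degenerate) segment, which is exactly the statement that $\Offset P \alpha$ is vertically convex relative to $\M$ in the sense of Definition~\ref{definition:vertical-convexity}. The deformation retraction onto $\M$ along $\pi_\M$ then follows: since every fiber of $\pi_\M$ inside $\Offset P\alpha$ is a nonempty segment containing its ``foot'' behavior along the normal direction, one builds the straight-line homotopy inside each normal segment toward $m$; vertical convexity guarantees the homotopy stays inside $\Offset P \alpha$, and continuity follows from continuity of $\pi_\M$ on $\Offset \M \rtube$ (valid since $\rtube < \Reach\M$) and of the endpoint maps $\up{\Offset P\alpha}{m}$, $\low{\Offset P\alpha}{m}$. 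This also immediately gives that $\Offset P\alpha$ has well-defined upper and lower skins $\UpperSkin{\Offset P\alpha}$ and $\LowerSkin{\Offset P\alpha}$ as defined in Section~\ref{section:upper-and-lower-skins}.

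For the final claim that the two skins partition $\partial \Offset P \alpha$, I would argue as in the proof of Lemma~\ref{lemma:upper-and-lower-skins-homeomorphic-to-manifold} but for a full-dimensional convex-along-normals set rather than a simplicial complex: a point $x \in \Offset P\alpha$ lies on the boundary iff it is an endpoint $\up{\Offset P\alpha}{m}$ or $\low{\Offset P\alpha}{m}$ of its normal segment, where $m = \pi_\M(x)$. The inclusion ``endpoint $\Rightarrow$ boundary'' uses that moving slightly past the endpoint along $\normal m$ leaves $\Offset P\alpha$ (again by vertical convexity, the fiber is exactly that segment), while ``boundary $\Rightarrow$ endpoint'' uses that an interior point of the normal segment has a full neighborhood inside $\Offset P\alpha$ because $\Offset P\alpha$ is a union of balls and hence its interior contains the relative interior of each such segment whenever the segment is nondegenerate; the degenerate (single-point) case is handled by noting such a point is then on $\UpperSkin{} = \LowerSkin{}$ and, being a lone intersection of the normal line with a union of balls, still lies on $\partial\Offset P\alpha$. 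The two skins being disjoint away from degenerate fibers, and coinciding exactly on degenerate fibers, gives the partition (``partition'' here understood in the same mildly loose sense as in Lemma~\ref{lemma:upper-and-lower-skins-homeomorphic-to-manifold}).

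The main obstacle is that the heavy lifting --- that $I(\varepsilon,\delta)$ is nonempty and that $\alpha \in I(\varepsilon,\delta)$ genuinely forces both $\pi_\M(\Offset P\alpha) = \M$ and the convexity of every normal-segment intersection --- is not something to be re-derived here; it must be imported verbatim as the codimension-one specialization of Propositions~5 and~7 of \cite{socg24-NSW}, so the real work is bookkeeping: checking that the normalization conventions (reach bound $\reach \le \Reach\M$, the role of $\delta$ as a two-sided noise bound, the $\Offset{}{}$ notation) agree between the two papers, and that the strict homotopy condition as stated is exactly the hypothesis under which their propositions hold in codimension one. I expect no essentially new geometric content beyond the skin/boundary argument in the last paragraph, which is a direct adaptation of the boundary analysis already carried out for simplicial complexes.
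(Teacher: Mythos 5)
Your proposal takes essentially the same approach as the paper: the paper does not give an independent proof of Theorem~\ref{theorem:NSW} but explicitly attributes the two core claims ($\pi_\M(\Offset P \alpha) = \M$ and vertical convexity of $\Offset P \alpha$) to the codimension-one specialization of Propositions~5 and~7 of \cite{socg24-NSW}, with $I(\varepsilon,\delta)$ defined (Appendix~\ref{appendix:range-alpha}) precisely so that those propositions apply; the remaining assertions (well-defined skins, retraction along $\pi_\M$, and the skins covering $\partial\Offset P\alpha$) are treated as immediate consequences of the definitions in Section~\ref{section:upper-and-lower-skins} once vertical convexity is in hand, exactly as you outline. One small improvement you could make: since $\alpha \in I(\varepsilon,\delta)$ guarantees $\Offset \M \beta \subseteq \Offset P\alpha$ for some $\beta>0$, every normal fiber of $\Offset P\alpha$ has length at least $2\beta>0$, so there are in fact no degenerate fibers and the two skins are genuinely disjoint; this removes the need for the ``mildly loose'' reading of ``partition'' in your last paragraph.
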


	The above concepts can be put together to state our main theorem: 

    \ignore{
    \domi{I changed the range for $\alpha$ in the theorem so that the ratios are in $[-1,1]$.}
    \sidebianca{Doesn't it make the theorem look more complicated?}
    \domi{Unfortunately, necessary to be sure that \eqref{eq:angle-condition-for-vertical-convexity} and \eqref{eq:angle-condition-acyclicity} are well-defined (unless we are able to show that redundant with $\alpha \in I(\varepsilon,\delta)$). }
    }

\begin{restatable}{theorem}{TheoremAlphaComplex}
	\label{theorem:alpha-complex}
	Assume $\M \subseteq \Offset P \varepsilon$ and $P \subseteq \Offset \M \delta$ for $\varepsilon, \delta \geq 0$ that satisfy the strict homotopy condition. Let $\alpha \in
    \left[\delta, \frac{2(\reach - \delta)}{3}\right) \cap I(\varepsilon, \delta)$ and $\beta > 0$ such that $\Offset \M \beta \subseteq \Offset P \alpha$. Suppose that for all $i$-simplices $\tau \in \Del{P, \alpha}$, $0 < i < d$ and all $(d-1)$-simplices $\nu \in \Del{P, \alpha}$ it holds:

	\begin{align}
		\max_{x \in \Conv\tau} \angle(\Aff \tau, \Tangent {\pi_\M(x)} \M) \, &< \, \frac{\pi}{2},
		\label{eq:angle-condition-for-skeleton}
		\\
		\min_{a \in \tau} \angle(\Aff \tau, \Tangent {\pi_\M(a)} \M) \, &< \, \arcsin\left(\frac{(\reach+\beta)^2 - (\reach+\delta)^2 - \alpha^2}{2(\reach+\delta)\alpha}\right) %
		\,\text{and}
		\label{eq:angle-condition-for-vertical-convexity}
		\\
		\min_{x \in \Conv\nu} \angle(\Aff \nu, \Tangent {\pi_\M(x)} \M)
		\, &< \, \frac{\pi}{2} - 2 \arcsin\left( \frac{\alpha}{2(\reach - \delta - \alpha)} \right).
		\label{eq:angle-condition-acyclicity}
	\end{align}
	Then, $\Del{P,\alpha}$ satisfies the injective projection, covering
	projection, vertical convexity and acyclicity assumptions of
	Theorem~\ref{theorem:correctness-generic-simplification}.  Furthermore, both the upper and lower complexes of $\Del{P, \alpha}$ relative to $\M$ are triangulations of $\M$ and $\NaiveSquash(P, \alpha)$ returns a triangulation of $\M$.

\end{restatable}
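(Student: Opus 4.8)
The plan is to verify that the four hypotheses of Theorem~\ref{theorem:correctness-generic-simplification} hold for $K = \Del{P,\alpha}$, and then invoke that theorem together with Lemma~\ref{lemma:upper-and-lower-complex-equal-boundary}. The covering projection $\pi_\M(\US{\Del{P,\alpha}}) = \M$ is the most direct: since $\beta>0$ satisfies $\Offset\M\beta \subseteq \Offset P\alpha$, and using Theorem~\ref{theorem:NSW} with $\alpha \in I(\varepsilon,\delta)$, we know $\Offset P\alpha$ is vertically convex and $\pi_\M(\Offset P\alpha)=\M$; I would then argue that $\US{\Del{P,\alpha}}$ — which by the standard $\alpha$-shape property (ii) deformation-retracts onto $\Offset P\alpha$ and lies inside it — contains a set that still surjects onto $\M$ under $\pi_\M$, essentially because no normal fiber $\Normal m\M \cap B(m,\rtube)$ is entirely swallowed by the joins of $\Offset P\alpha \setminus \US{\Del{P,\alpha}}^\circ$ (Figure~\ref{figure:alpha-complex-simple}, right) when $\alpha < \frac{2(\reach-\delta)}{3}$. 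The injective projection (non-vertical skeleton) follows from \eqref{eq:angle-condition-for-skeleton}: if some $i$-simplex $\tau$ with $0<i<d$ were vertical relative to $\M$, two distinct points of $\Conv\tau$ would share a projection onto $\M$, forcing $\Aff\tau$ to contain a direction normal to $\M$ at that projection, hence $\angle(\Aff\tau,\Tangent{\pi_\M(x)}\M) = \pi/2$ for a suitable $x$, contradicting the strict inequality.

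For vertical convexity of $\Del{P,\alpha}$ I would use \eqref{eq:angle-condition-for-vertical-convexity} to transfer the convexity of $\Offset P\alpha$ (granted by Theorem~\ref{theorem:NSW}) down to the $\alpha$-shape. The point is that a normal fiber $\Normal m\M \cap B(m,\rtube)$ meets $\Offset P\alpha$ in a segment; the portion removed in passing to $\US{\Del{P,\alpha}}$ consists of the parts covered by the joins described in~\cite{edelsbrunner2011alpha}, and each such join is a region where $\Offset P\alpha$ bulges out past a $(d-1)$-simplex of $\Del{P,\alpha}$. The angle bound in~\eqref{eq:angle-condition-for-vertical-convexity} — which is exactly an $\arcsin$ of a ratio comparing the ball radii $\reach+\beta$, $\reach+\delta$ and $\alpha$ — should guarantee that such a bulge does not re-enter the fiber after leaving it, so the intersection with the $\alpha$-shape stays an interval; here I would compare the curvature/reach of $\partial\Offset P\alpha$ against the slope of the relevant simplex and show the removed piece is always at one end of the fiber-segment. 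Finally, acyclicity of $\below$ on $d$-simplices: I expect this to be the main obstacle. The idea is that if $\sigma_0 \below \sigma_1 \below \cdots \below \sigma_0$ is a cycle, the shared facets $\nu_j$ are all "tilted the same way," and following the cycle forces a net rise along $\normal{\cdot}$ that cannot close up; the quantitative control comes from~\eqref{eq:angle-condition-acyclicity}, whose right-hand side $\frac{\pi}{2} - 2\arcsin\!\big(\frac{\alpha}{2(\reach-\delta-\alpha)}\big)$ bounds how much a Delaunay $(d-1)$-simplex circumscribed by a ball of radius $\le\alpha$ can deviate from being a genuine "graph-like" separator between an upper and a lower cell. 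One would show that for a facet $\nu$ shared by $\sigma_0\below\sigma_1$, the centers (or some canonical points) of $\sigma_0$ and $\sigma_1$ are strictly ordered along the normal direction $\normal{\pi_\M(\cdot)}$ by a definite amount controlled by this angle gap, giving a strictly monotone potential on the dual graph and hence no directed cycle.

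Having established all four assumptions, Theorem~\ref{theorem:correctness-generic-simplification} gives that $\NaiveVertSimp$ turns $\Del{P,\alpha}$ into a triangulation of $\M$, i.e. $\NaiveSquash(P,\alpha)$ returns a triangulation of $\M$; and Lemma~\ref{lemma:upper-and-lower-complex-equal-boundary} (applicable since $\Del{P,\alpha}$ has non-vertical skeleton, is vertically convex, and $\pi_\M(\US{\Del{P,\alpha}})=\M$) gives that $\UpperC{\Del{P,\alpha}}$ and $\LowerC{\Del{P,\alpha}}$ are themselves triangulations of $\M$. The main work, and where the three displayed angle inequalities each earn their keep, is in the vertical-convexity and acyclicity steps; the covering and injective-projection steps are comparatively routine bookkeeping on top of Theorem~\ref{theorem:NSW}. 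I would present the argument as four short lemmas (one per assumption) followed by a two-line assembly.
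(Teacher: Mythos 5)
Your plan matches the paper's decomposition line for line: the paper also proves the four hypotheses of Theorem~\ref{theorem:correctness-generic-simplification} one at a time --- injective projection from \eqref{eq:angle-condition-for-skeleton} by exactly the argument you give (a vertical simplex would force a normal direction into $\Aff\tau$, hence an angle of $\pi/2$; Lemma~\ref{lemma:non-vertical}), acyclicity from \eqref{eq:angle-condition-acyclicity} by the monotone potential you predict (the signed altitude $\alt{Z(\sigma)}$ of circumcenters strictly increases along $\below$; Lemmas~\ref{lemma:acyclicity} and~\ref{lemma:height-is-increasing}), and covering projection together with vertical convexity via the upper/lower joins (Lemma~\ref{lemma:alpha-complex-vertical-convexity}) --- then assembles via Theorem~\ref{theorem:correctness-generic-simplification} and Lemma~\ref{lemma:upper-and-lower-complex-equal-boundary} just as you do. The one place your sketch slips is the covering-projection step, which you call ``routine bookkeeping'' atop Theorem~\ref{theorem:NSW} and attribute to the bound $\alpha < \tfrac{2}{3}(\reach-\delta)$: that bound in fact has nothing to do with coverage --- Remark~\ref{remark:conditions-well-defined} shows it is there only to keep the right-hand side of \eqref{eq:angle-condition-acyclicity} well-defined. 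Moreover, the phrase ``no normal fiber is entirely swallowed by the joins'' is not quite the right picture: in the paper, the entire fiber segment $[u^+,u^-]$ \emph{is} covered by joins; what saves the day is that $u^+$ and $u^-$ lie in an upper and a lower join respectively, so somewhere along the fiber an upper join and a lower join intersect, and Remark~\ref{remark:intersection-joins} forces that intersection point onto $\US{\Del{P,\alpha}}$. In other words, non-emptiness of the fiber intersection is not a separate lightweight consequence of homotopy equivalence (which does not imply surjectivity of $\pi_\M$ on its own) but is half of the same join argument that gives vertical convexity, and it leans on Lemma~\ref{lemma:connecting-boundaries-simplified} and hence on \eqref{eq:angle-condition-for-vertical-convexity}. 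With that reallocation of where the work happens, your proposal and the paper's proof coincide.
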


One can check that Conditions~\eqref{eq:angle-condition-for-skeleton},
\eqref{eq:angle-condition-for-vertical-convexity} and
\eqref{eq:angle-condition-acyclicity}
are well-defined; see Remark~\ref{remark:conditions-well-defined}.

\begin{corollary}
	\label{theorem:correcness-squash-practical}
	Suppose the assumptions of Theorem \ref{theorem:alpha-complex} are
	satisfied and furthermore that for all $(d-1)$-simplices $\nu \in
	\Del{P,\alpha}$,
	\begin{align}
		\label{eq:angle-condition-practical}
		\max_{a \in \nu} \angle(\Aff \nu, \Tangent {\pi_\M(a)} \M) \, < \, \frac{\pi}{4}.
	\end{align}
	Then, $\PracticalSquash(P,\alpha)$ returns a triangulation of $\M$.
\end{corollary}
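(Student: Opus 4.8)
The plan is to deduce this corollary directly from Theorem~\ref{theorem:in-practice} (correctness of $\PracticalVertSimp$), with Theorem~\ref{theorem:alpha-complex} supplying all but one of its hypotheses. Recall that $\PracticalSquash(P,\alpha)$ first sets $K \leftarrow \Del{P,\alpha}$ and then runs $\PracticalVertSimp(K)$, returning the resulting complex. Hence it suffices to verify that $K = \Del{P,\alpha}$ meets the two hypotheses of Theorem~\ref{theorem:in-practice}: \textbf{(i)} $K$ satisfies the injective projection, covering projection, vertical convexity and acyclicity assumptions of Theorem~\ref{theorem:correctness-generic-simplification} (which in particular includes $\US{K} \subseteq \Offset \M r$ for some $r < \Reach{\M}$); and \textbf{(ii)} $\max_{a \in \nu} \angle(\Aff \nu, \Tangent {\pi_\M(a)} \M) < \frac{\pi}{4}$ for every $(d-1)$-simplex $\nu$ of $K$, i.e. inequality~\eqref{eq:angle-in-practice}.

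For \textbf{(i)}, I would observe that the hypotheses of the present corollary already contain verbatim all hypotheses of Theorem~\ref{theorem:alpha-complex}: the sampling conditions $\M \subseteq \Offset P \varepsilon$ and $P \subseteq \Offset \M \delta$ with $\varepsilon,\delta$ satisfying the strict homotopy condition, the range $\alpha \in \left[\delta, \tfrac{2(\reach - \delta)}{3}\right) \cap I(\varepsilon,\delta)$, the existence of $\beta > 0$ with $\Offset \M \beta \subseteq \Offset P \alpha$, and the three angle conditions \eqref{eq:angle-condition-for-skeleton}, \eqref{eq:angle-condition-for-vertical-convexity}, \eqref{eq:angle-condition-acyclicity}. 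Theorem~\ref{theorem:alpha-complex} then states exactly that $\Del{P,\alpha}$ satisfies the four assumptions of Theorem~\ref{theorem:correctness-generic-simplification}, which is \textbf{(i)}. For \textbf{(ii)}, the $(d-1)$-simplices $\nu$ of $K = \Del{P,\alpha}$ are precisely the $(d-1)$-simplices of $\Del{P,\alpha}$, so the added hypothesis \eqref{eq:angle-condition-practical} of the corollary is literally the inequality demanded in \eqref{eq:angle-in-practice}. With \textbf{(i)} and \textbf{(ii)} in hand, Theorem~\ref{theorem:in-practice} applies to $K$, so $\PracticalVertSimp(\Del{P,\alpha})$ transforms $K$ into a triangulation of $\M$; since this is exactly the object returned by $\PracticalSquash(P,\alpha)$, the claim follows.

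Since the argument is purely a matter of matching hypotheses across two already-established theorems, I do not anticipate a genuine obstacle. The only points deserving a line of care are: confirming that the quantifier in \eqref{eq:angle-in-practice} ranges over the same set of simplices (all $(d-1)$-simplices, including boundary ones) as \eqref{eq:angle-condition-practical}; and confirming that $\PracticalVertSimp$ is run on exactly the complex $\Del{P,\alpha}$ to which Theorem~\ref{theorem:alpha-complex} was applied. Both are immediate from the statements of Algorithms~\ref{algo:practical} and~\ref{algo:practical-squash}.
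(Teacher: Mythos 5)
Your proposal is correct and is exactly the intended argument: combine Theorem~\ref{theorem:alpha-complex} (which yields the four hypotheses of Theorem~\ref{theorem:correctness-generic-simplification} for $\Del{P,\alpha}$, together with the offset bound noted in Remark~\ref{remark:conditions-well-defined}) with the additional angle hypothesis~\eqref{eq:angle-condition-practical} so that Theorem~\ref{theorem:in-practice} applies to $K = \Del{P,\alpha}$. The paper presents this as an immediate corollary and omits the proof, but the hypothesis-matching you carry out is precisely what is meant.
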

	
\subsection{Partial proof technique for Theorem \ref{theorem:alpha-complex}}
\label{section:proof-technique}
In this section, we establish the covering projection and vertical convexity of $\Del{P,\alpha}$, as guaranteed by Theorem~\ref{theorem:alpha-complex}. The complete proof of that theorem can be found in Appendix~\ref{appendix:alpha-complex-theorem}.

\begin{lemma}
\label{lemma:alpha-complex-vertical-convexity}
Assume $\M \subseteq \Offset P \varepsilon$ and $P \subseteq \Offset
\M \delta$ for $\varepsilon,\delta \geq 0$ that satisfy the strict homotopy
condition. Let $\alpha \in [\delta,\reach-\delta) \cap
  I{(\varepsilon,\delta)}$ and $\beta > 0$ be such that $\Offset \M
  \beta \subseteq \Offset P \alpha$. Suppose that for all
  $i$-simplices $\tau \in \Del{P,\alpha}$ with $0 < i < d$, Conditions
  \eqref{eq:angle-condition-for-skeleton} and
  \eqref{eq:angle-condition-for-vertical-convexity} hold.  Then,
  $\pi_\M(\US{\Del{P,\alpha}}) = \M$ and $\Del{P,\alpha}$ is
  vertically convex relative to \M.
\end{lemma}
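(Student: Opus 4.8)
The plan is to reduce Lemma~\ref{lemma:alpha-complex-vertical-convexity} to Theorem~\ref{theorem:NSW} via the classical relationship between the $\alpha$-shape $\US{\Del{P,\alpha}}$ and the offset $\Offset P \alpha$. By Theorem~\ref{theorem:NSW} we already know that under the strict homotopy condition and for $\alpha \in I(\varepsilon,\delta)$, the offset $\Offset P \alpha$ projects onto all of $\M$ under $\pi_\M$ and is vertically convex relative to $\M$, with $\partial \Offset P \alpha$ partitioned into an upper and a lower skin. The two things we must transfer from $\Offset P \alpha$ to $\US{\Del{P,\alpha}}$ are: (i) the covering property $\pi_\M(\US{\Del{P,\alpha}}) = \M$, and (ii) vertical convexity, i.e.\ each normal segment $\Normal m \M \cap B(m,\rtube)$ meets $\US{\Del{P,\alpha}}$ in a (possibly empty or degenerate) segment.

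First I would set up the geometry of $\Offset P \alpha \setminus \US{\Del{P,\alpha}}^\circ$. The key structural fact, recalled in the preliminaries and illustrated in Figure~\ref{figure:alpha-complex-simple} (right), is that this set decomposes into ``joins'' of the form $\Conv\sigma * (V(\sigma,P)\cap\Offset P\alpha \setminus \cdots)$ associated to Delaunay simplices $\sigma$; more concretely, for each point $x \in \Offset P \alpha \setminus \US{\Del{P,\alpha}}^\circ$ there is a nearest point $p(x)$ in $\US{\Del{P,\alpha}}$, lying on the support of some simplex $\sigma$ of $\Del{P,\alpha}$, and the segment $[p(x),x]$ stays inside $\Offset P \alpha$ and is orthogonal to $\Aff\sigma$. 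For (i): given $m \in \M$, Theorem~\ref{theorem:NSW} gives a point of $\Offset P \alpha$ on $\Normal m \M$ projecting to $m$; pushing it toward $\US{\Del{P,\alpha}}$ along the retraction onto the $\alpha$-shape (which moves points within $\Offset P \alpha$, hence within $\Offset \M \rtube$, hence without changing the $\pi_\M$-image by much) and using that the $\alpha$-shape carries the upper/lower skin boundary pieces, one shows the normal line already meets $\US{\Del{P,\alpha}}$. Actually the cleaner route is: since $\Offset \M \beta \subseteq \Offset P \alpha$ and $\US{\Del{P,\alpha}}$ is homotopy equivalent to $\Offset P \alpha$ which deformation-retracts onto $\M$, and since the skins of $\Offset P \alpha$ lie on $\partial \Offset P\alpha$, the ``outer'' part of each normal segment belongs to the join region while the ``inner'' part, near $\M$, must be covered by the $\alpha$-shape — so $\pi_\M(\US{\Del{P,\alpha}}) \supseteq \M$, and the reverse inclusion is automatic since $\US{\Del{P,\alpha}} \subseteq \Offset P \alpha \subseteq \Offset \M \rtube$ lies in the domain of $\pi_\M$.

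The heart of the proof — and the step I expect to be the main obstacle — is (ii), vertical convexity of $\US{\Del{P,\alpha}}$. We know $\US{\Del{P,\alpha}} \subseteq \Offset P \alpha$, and $\Offset P\alpha\cap \Normal m\M\cap B(m,\rtube)$ is a segment $[\low{\Offset P\alpha}m,\up{\Offset P\alpha}m]$ for every $m$; so it suffices to show the intersection $S_m := \US{\Del{P,\alpha}}\cap\Normal m\M\cap B(m,\rtube)$ is connected (it is automatically contained in a segment). Suppose not: then there is a point $x$ of the normal segment lying strictly between two points of $S_m$ but with $x \notin \US{\Del{P,\alpha}}$, i.e.\ $x \in \Offset P\alpha \setminus \US{\Del{P,\alpha}}^\circ$. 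Then $x$ lies in one of the join cells, whose ``inner'' face is $\Conv\sigma$ for some proper Delaunay simplex $\sigma$ with $\dim\sigma < d$ — note $\dim\sigma<d$ is where Conditions~\eqref{eq:angle-condition-for-skeleton} and~\eqref{eq:angle-condition-for-vertical-convexity} enter, since if $\sigma$ were a $d$-simplex it would be in $\US{\Del{P,\alpha}}$ and its interior would contain $x$. The segment $[p(x),x]$ is orthogonal to $\Aff\sigma$ and stays in $\Offset P\alpha$; meanwhile the normal line $\Normal m\M$ passes through $x$. I would argue that the two points of $S_m$ flanking $x$ force the normal line to re-enter the $\alpha$-shape on both sides of the join cell, which geometrically pins the normal direction $\normal m$ to be nearly tangent to $\Aff\sigma$ — contradicting Condition~\eqref{eq:angle-condition-for-skeleton} ($\angle(\Aff\tau,\Tangent{\pi_\M(x)}\M)<\pi/2$ for all $i$-simplices $\tau$, $0<i<d$, meaning $\Aff\tau$ is never normal-containing). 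More quantitatively, the exit point of the normal segment from the join cell back into $\US{\Del{P,\alpha}}$ lies on $\Conv\sigma$ within distance $O(\alpha)$ of $x$, and Condition~\eqref{eq:angle-condition-for-vertical-convexity} controls the angle precisely so that the offset ball of radius $\alpha$ around the relevant site cannot ``bend'' the $\alpha$-shape boundary enough to produce the required re-entry. I would make this precise by picking the site $p \in P$ realizing $\|x - p\| \le \alpha$, considering the ball $B(p,\alpha)$, and using that the portion of $\partial\Offset P\alpha$ near $x$ on the join side is a spherical patch of $\partial B(p,\alpha)$ whose normal (pointing to $p$) makes a controlled angle with $\Aff\sigma$; comparing this against $\normal m$ via the triangle inequality on angles, and invoking~\eqref{eq:angle-condition-for-vertical-convexity} together with the bound $\alpha < \tfrac{2(\reach-\delta)}{3}$, rules out the disconnection. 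The routine-but-careful part is tracking the $\delta$-noise (sites are within $\delta$ of $\M$, not on it) and the $B(m,\rtube)$ truncation; I would handle the noise exactly as in the proof of Theorem~\ref{theorem:NSW}, replacing $\reach$ by $\reach \pm \delta$ in the relevant comparisons, and choose $\rtube$ to be the radius coming from Theorem~\ref{theorem:NSW} so that the truncation never bites.
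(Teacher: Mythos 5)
You have the right skeleton — reduce to Theorem~\ref{theorem:NSW}, use the decomposition of $\Offset P\alpha\setminus\US{\Del{P,\alpha}}^\circ$ into joins, and transfer the skin structure of $\Offset P\alpha$ to the $\alpha$-shape along normal segments. But the mechanism you invoke for vertical convexity is not the one that works, and the gap is not cosmetic.

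Your argument is that if a normal segment re-enters $\US{\Del{P,\alpha}}$ on both sides of a join cell with inner face $\Conv\sigma$, then $\normal m$ is ``pinned to be nearly tangent to $\Aff\sigma$,'' contradicting \eqref{eq:angle-condition-for-skeleton}. That does not follow: a normal line can perfectly well enter a join cell $F_\tau^\pm * \Conv\tau$ through the spherical face and exit through the simplicial face (or vice versa) without being close to tangent to $\Aff\tau$, and non-verticality of $\tau$ is compatible with this. Nothing in \eqref{eq:angle-condition-for-skeleton} alone, nor in a generic ``the ball cannot bend the boundary enough'' bound from \eqref{eq:angle-condition-for-vertical-convexity}, rules out a single join cell being crossed twice by the normal line. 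What the paper actually uses — and what you never invoke — is the dichotomy between \emph{upper} joins $F_\tau^+ * \Conv\tau$ and \emph{lower} joins $F_\tau^- * \Conv\tau$, together with the crucial fact (Remark~\ref{remark:intersection-joins}) that an upper join and a lower join can only meet on $\US{\Del{P,\alpha}}$. The contradiction is a sandwich: if $(a,b)\cap\US{\Del{P,\alpha}}=\emptyset$ with $a$ above $b$ along $\normal m$, then the segment is covered by joins; Lemma~\ref{lemma:connecting-boundaries-simplified} (which you never state, and which is where Condition~\eqref{eq:angle-condition-for-vertical-convexity} is really consumed, via Lemma~\ref{lemma:highest-point-on-upper-skin} on the signed position of $s_\tau(a)$ relative to the skins of $\Offset P\alpha$) forces the portion leaving $b$ upward to hit an upper join and the portion leaving $a$ downward to hit a lower join; the two families must therefore overlap somewhere in $(a,b)$, and that overlap point is in $\US{\Del{P,\alpha}}$, contradiction. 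Your description skips both the upper/lower dichotomy and the exclusion principle, so as written the step that produces the contradiction is missing.

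The same issue infects your covering argument. Your statement that ``the inner part, near $\M$, must be covered by the $\alpha$-shape'' is simply not true in general and would beg the question; what actually holds is that the two endpoints $u^\pm$ of the $\Offset P\alpha$-segment lie on the upper and lower skins and hence in an upper and a lower join, and the crossover point along $[u^+,u^-]$ (where an upper join first touches a lower join) lies in $\US{\Del{P,\alpha}}$ by Remark~\ref{remark:intersection-joins}. Separately, you quote the bound $\alpha < \tfrac{2(\reach-\delta)}{3}$, which is not an assumption of Lemma~\ref{lemma:alpha-complex-vertical-convexity} (it appears only in Theorem~\ref{theorem:alpha-complex} for acyclicity); the lemma only assumes $\alpha\in[\delta,\reach-\delta)\cap I(\varepsilon,\delta)$.
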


\begin{figure}
\centering
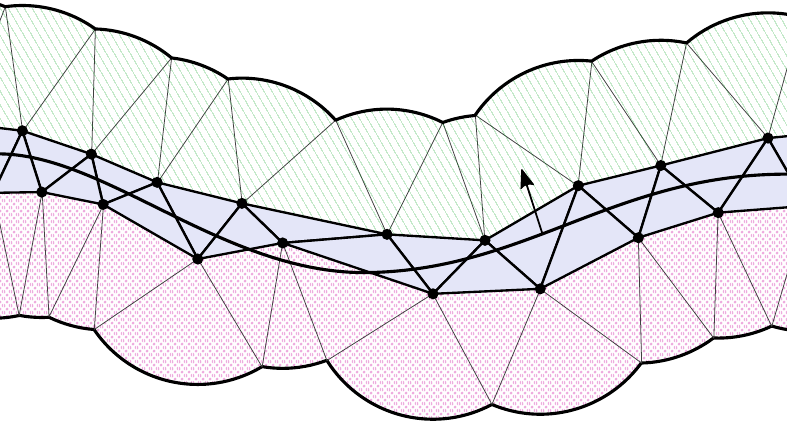
\caption{ Decomposing $\Offset P \alpha \setminus
  \US{\Del{P,\alpha}}^\circ$ into upper joins (hashed, in green) and lower joins (dotted, in pink).
  \label{figure:upper-and-lower-joins}
}
\end{figure}

\ignore{
\biancacmt{We could use subparagraphs instead of paragraphs here. It would be actually shorter than adding a phrase. And I personally find the structured presentation helpful.}
\subparagraph{Upper and lower joins.} Consider.....
\biancacmt{Of course there would be no need for a subparagraph "Proof sketch" given that the "Proof of Lemma 20" already fills that purpose. I can also see how having a single subparagraph environment can feel unnecessary, so I'm fine with what you choose.}
}

\subparagraph{Upper and lower joins.}
For proving this lemma, we introduce upper and lower joins.  Consider
$\M$, $P$, $\varepsilon$, $\delta$, $\alpha$ and $\beta$ that satisfy
the assumptions of Lemma~\ref{lemma:alpha-complex-vertical-convexity}
and notice that they also meet the conditions of
Theorem~\ref{theorem:NSW}.  Therefore, $\Offset P \alpha$ has
associated upper and lower skins and the two skins form a partition of
$\partial \Offset P \alpha$. Using that partition, we decompose the
set difference $\Offset P \alpha \setminus\US{\Del{P,\alpha}}^\circ$
into upper and lower joins, slightly adapting what is done
in~\cite{edelsbrunner2011alpha}; see Figures
\ref{figure:alpha-complex-simple} and
\ref{figure:upper-and-lower-joins}.  For that, notice that $\partial
\Offset P \alpha$ can be decomposed into faces, each face being the
restriction of $\partial \Offset P \alpha$ to a Voronoi cell of $P$.
There is a one-to-one correspondence between simplices of $\partial
\Del{P,\alpha}$ and faces of $\partial \Offset P \alpha$: the simplex
$\tau \in \partial \Del{P,\alpha}$ corresponds to the face $F_\tau =
V(\tau,P) \cap \partial \Offset P \alpha$ and conversely.  We can
further partition each face $F_\tau$ of $\partial \Offset P \alpha$
into a portion $F_\tau^+$ that lies on the upper skin of $\partial
\Offset P \alpha$ and a portion $F_\tau^-$ that lies on the lower skin
of $\partial \Offset P \alpha$. Note that $F_\tau^+$ or $F_\tau^-$ can
be empty. We refer to $F_\tau^+$ as an upper face and $F_\tau^-$ as a
lower face. The set of upper faces decompose the upper skin, while the
set of lower faces decompose the lower skin.  A \emph{join} $X * Y$ is
defined as the set of segments $[x,y]$ where $x \in X$ and $y \in
Y$~\cite{edelsbrunner2011alpha}. We call $F_\tau^+ * \Conv{\tau}$ an
\emph{upper join} and $F_\tau^- * \Conv{\tau}$ a \emph{lower
join}. The next lemma, proved in Appendix \ref{appendix:alpha-complex-vertical-convexity}, identifies
points in $\partial|\Del{P,\alpha}|$ that are connected to upper or
lower joins.

\begin{remark}
\label{remark:covering—joins}
The collection of upper and lower joins cover the set $\Offset P
\alpha \setminus \US{\Del{P,\alpha}}^\circ$.
\end{remark}

\begin{remark}
\label{remark:intersection-joins}
If an upper join and a lower join have a non-empty intersection, the
common intersection belongs to $\US{\Del{P,\alpha}}$.
\end{remark}

\begin{restatable}{lemma}{ConnectingBoundariesSimplified}
  \label{lemma:connecting-boundaries-simplified}
  Under the assumptions of Lemma
  \ref{lemma:alpha-complex-vertical-convexity}, let $\spx \in
  \Del{P,\alpha}$ and $x \in \Interior{\Conv{\gamma}}$. If for some
  $\lambda > 0$ ({\em resp.}  $\lambda<0$), the segment $(x,x+\lambda
  \normal {\pi_\M(x)}]$ lies outside $\US{\Del{P,\alpha}}$, then it
intersects an upper ({\em resp. lower} join).
\end{restatable}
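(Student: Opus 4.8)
I would work along the normal line at $m:=\pi_\M(x)$. Since $\gamma\in\Del{P,\alpha}$ and $x\in\Interior{\Conv\gamma}$ give $x\in\US{\Del{P,\alpha}}$, and since by Lemma~\ref{lemma:alpha-complex-vertical-convexity} the hypotheses of Theorem~\ref{theorem:NSW} hold, I may fix a radius $\rtube<\Reach\M$ witnessing both the vertical convexity of $\Offset P\alpha$ and that of $\Del{P,\alpha}$, and work inside $\Offset\M\rtube$, where $\pi_\M$ is well defined and the normal segments partition the offset. I treat $\lambda>0$ (upper joins); $\lambda<0$ is symmetric after exchanging ``upper'' and ``lower''. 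Orient $\ell:=\Normal m\M$ by $\normal m$ and set $y:=x+\lambda\normal m$. Then $x\in\ell\cap B(m,\rtube)$, and on this fiber $\US{\Del{P,\alpha}}$ is a segment with top $u_1:=\up{\US{\Del{P,\alpha}}}{m}$ (nonempty, as $\pi_\M$ is onto $\M$) contained in the segment $\Offset P\alpha\cap\ell\cap B(m,\rtube)$, whose top is $u_0:=\up{\Offset P\alpha}{m}$; in particular $x$ lies weakly below $u_1$, and $u_1$ weakly below $u_0$, along $\normal m$.

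\textbf{Step 1: $x=u_1$.} Because $(x,y]$ avoids $\US{\Del{P,\alpha}}$, a short upward step from $x$ cannot land in $\US{\Del{P,\alpha}}$; but if $x$ were strictly below $u_1$ such a step would stay inside the $\alpha$-shape's fiber segment, a subset of $\US{\Del{P,\alpha}}$. Hence $x=u_1$. In particular $x\in\UpperSkin{\US{\Del{P,\alpha}}}$, so by Lemma~\ref{lemma:upper-and-lower-skins-homeomorphic-to-manifold} $\gamma$ is a boundary simplex of $\Del{P,\alpha}$.

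\textbf{Step 2: reach the outer skin, then separate the joins.} Assume for now that $x$ is strictly below $u_0$ (this is Step 3). Then $(x,u_0]\subseteq\ell\cap B(m,\rtube)$ is contained in $\Offset P\alpha$, lies strictly above $u_1=x$ and hence misses $\US{\Del{P,\alpha}}$, and its endpoint $u_0\in\UpperSkin{\Offset P\alpha}$ lies on some upper face $F_\tau^+$, hence in the upper join $F_\tau^+*\Conv\tau$. Now Remark~\ref{remark:covering—joins} shows $\Offset P\alpha\setminus\US{\Del{P,\alpha}}^\circ$ — in particular $\Offset P\alpha\setminus\US{\Del{P,\alpha}}$ — is covered by the upper and lower joins, and Remark~\ref{remark:intersection-joins} shows the union $A$ of all upper joins and the union $B$ of all lower joins (both closed, being finite unions of compacta) intersect only inside $\US{\Del{P,\alpha}}$. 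Therefore $A$ and $B$ cut $\Offset P\alpha\setminus\US{\Del{P,\alpha}}$ into two disjoint relatively closed pieces, so any connected subset of it lies entirely in $A$ or entirely in $B$. Applying this to the connected set $(x,u_0]$, which contains $u_0\in A$, gives $(x,u_0]\subseteq A$. Finally, since $x$ is strictly below $u_0$ and $\lambda>0$, there is $\varepsilon\in(0,\lambda]$ with $x+\varepsilon\normal m\in(x,u_0]$; this point lies both on the prescribed segment $(x,y]$ and in $A$, i.e.\ in an upper join, which is the claim.

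\textbf{Step 3: the degenerate case, which is the crux.} It remains to rule out $u_1=u_0$, i.e.\ $x\in\UpperSkin{\Offset P\alpha}$; if this occurred, moving up from $x$ would immediately leave $\Offset P\alpha$ and therefore every join, so the statement would fail — hence this configuration must contradict the standing hypotheses, and pinning down why is the main obstacle. I would argue that $x\in\Interior{\Conv\gamma}$ with $\dim\gamma\ge1$ (if $\dim\gamma=0$ then $x\in P\subseteq(\Offset P\alpha)^\circ$, excluding $x\in\partial\Offset P\alpha$) forces $\Conv\gamma$ to be tangent to $\partial\Offset P\alpha$ at $x$: every $c\in\Conv\gamma$ lies in $\Offset P\alpha$, hence weakly below $\up{\Offset P\alpha}{\pi_\M(c)}$ on its own fiber, with equality at the relative-interior point $c=x$, so $x$ maximizes this ``height defect'' over $\Conv\gamma$ and $\Aff\gamma$ must lie in the tangent hyperplane to $\partial\Offset P\alpha$ at $x$ (Condition~\eqref{eq:angle-condition-for-skeleton}, i.e.\ the non-vertical-skeleton property, ensures $\normal m$ is not a direction of $\Aff\gamma$, so the fiber crosses $\Conv\gamma$ transversally at $x$ and this tangency statement is meaningful). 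Since $\partial\Offset P\alpha$ is locally a sphere of radius $\alpha$ — general position keeping $x$ off the lower-dimensional creases of the union of balls — a hyperplane tangent to that sphere meets $\Offset P\alpha$ only at $x$, contradicting $x\in\Interior{\Conv\gamma}$ with $\dim\gamma\ge1$. Carefully dispatching this geometric alternative, including ruling out the non-generic crease configurations, is the delicate point; once it is settled, Steps~1--2 conclude the proof, and the case $\lambda<0$ follows symmetrically.
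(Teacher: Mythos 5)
Your Steps~1--2 are clean and appealing, but your proposal has a fatal circularity right at the start, and the piece you single out as ``the crux'' (Step~3) does not close.

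\textbf{Circularity.} In Step~1 you treat the fiber $\Normal{m}{\M} \cap B(m,\rtube) \cap \US{\Del{P,\alpha}}$ as a segment with top endpoint $u_1=\up{\US{\Del{P,\alpha}}}{m}$, and the argument ``if $x$ were strictly below $u_1$, a short upward step would stay inside the $\alpha$-shape's fiber segment'' relies on exactly this fiber being connected. That is vertical convexity of $\Del{P,\alpha}$ --- which is precisely the conclusion of Lemma~\ref{lemma:alpha-complex-vertical-convexity}, whose proof is the sole consumer of the present lemma. You may legitimately cite Theorem~\ref{theorem:NSW} for the vertical convexity of $\Offset P\alpha$ (that only needs the hypotheses, not the conclusion, of Lemma~\ref{lemma:alpha-complex-vertical-convexity}), but you cannot ``fix $\rtube$ witnessing \emph{also} the vertical convexity of $\Del{P,\alpha}$''. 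The paper's own proof (Lemma~\ref{lemma:connecting-boundaries} in Appendix~\ref{appendix:alpha-complex-vertical-convexity}) is careful to proceed by descending induction on $\dim\gamma$, working at the level of Voronoi duality and $S(\gamma,\alpha)$, precisely so as to never invoke the vertical convexity of $\Del{P,\alpha}$ that is being established.

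\textbf{Step~3 and the missing angular condition.} Even setting the circularity aside, the case $x\in\UpperSkin{\Offset P\alpha}$ is not correctly eliminated. Your tangency argument needs the upper skin to be a smooth graph near $\pi_\M(x)$; but the configuration that threatens the lemma is precisely $x$ on a \emph{crease} of $\partial\Offset P\alpha$. Take an edge $\gamma=pq$ with $\|p-q\|=2\alpha$ and $x$ its midpoint: then $x\in\partial B(p,\alpha)\cap\partial B(q,\alpha)$, the line $\Aff\gamma$ is \emph{perpendicular} (not tangent) to either sphere at $x$, $\Conv\gamma$ lies inside $\Offset P\alpha$ with $x$ on the crease, and nothing in your Step~3 rules this out --- you cannot wave it away as ``non-generic'', because it is exactly the circumradius-$\alpha$ case that lives on the boundary of $\Del{P,\alpha}$. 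Relatedly, your proposal never uses Condition~\eqref{eq:angle-condition-for-vertical-convexity}, whereas the paper's Lemma~\ref{lemma:highest-point-on-upper-skin} relies on it essentially to show that the relevant point of $S(\gamma,\alpha)$, when it sits on $\partial\Offset P\alpha$, must be on the \emph{upper} and not the \emph{lower} skin. A proof that never invokes the angular hypothesis is strong evidence that something load-bearing is missing. Your Step~2 separation argument (finite unions of upper/lower joins are closed, intersect only in $\US{\Del{P,\alpha}}$, hence partition the complement into closed sets; a connected segment lies in one side) is correct and genuinely more economical than the paper's induction, and could in principle replace the induction step; but it only buys you something after a non-circular Step~1 and a real argument in Step~3, both of which must go through the angular condition and the Voronoi-dual geometry as the paper does.
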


\begin{proof}[Proof of Lemma \ref{lemma:alpha-complex-vertical-convexity}]
Consider $\M$, $P$, $\varepsilon$, $\delta$,
$\alpha$ and $\beta$ that satisfy the assumptions of
Lemma~\ref{lemma:alpha-complex-vertical-convexity}. As noted before, they also meet the
conditions of Theorem~\ref{theorem:NSW}. Therefore,
$\pi_\M(\Offset P\alpha) = \M$ and $\Offset P \alpha$ is
vertically convex relative to $\M$. Hence, there
exists $r< \Reach{\M}$ such that $\Offset P \alpha \subseteq \Offset
\M r$ and $\Offset P \alpha \cap \Normal m \M \cap B(m,\rtube)$ is a
line segment for any $m \in \M$. Fix $m \in \M$ arbitrarily. We show
that $\US{\Del{P,\alpha}} \cap \Normal m \M \cap B(m,\rtube)$ is also
a line segment.

First, we show  by contradiction that it is non-empty.
Let $u^+$ ({\em resp. $u^-$}) be the endpoint of the segment $\Offset P \alpha \cap \Normal m \M \cap B(m,\rtube)$ that lies on the upper (\emph{resp.} lower) skin of $\Offset P \alpha$ and hence is contained in an upper (\emph{resp.} lower) join.
By Remark~\ref{remark:covering—joins}, the entire segment $[u^+,u^-]$ is
covered by upper and lower joins.  Thus, at some point $c$ of $[u^+,u^-]$, an upper join and a lower join intersect. By Remark~\ref{remark:intersection-joins}, such an intersection places $c$ on $\US{\Del{P,\alpha}}$ as well.

\begin{figure}[htb]
	\centering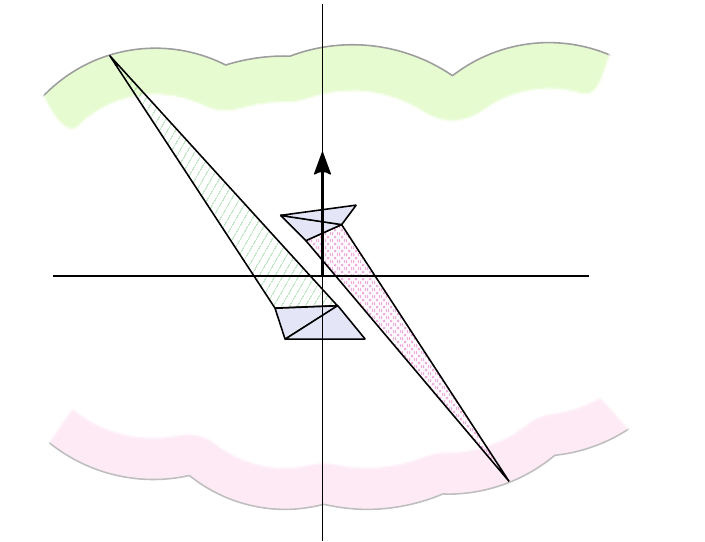 
	\caption{
		Reaching a contradiction in the proof of Lemma~\ref{lemma:alpha-complex-vertical-convexity}. We see the simplices of $\Del{P,\alpha}$ whose support intersects $\Normal m \M \cap B(m,r)$ (smooth filling, in pale blue) and one upper (hashed, in green) and one lower (dotted, in pink) joins that intersect $[a,b]$.
	\label{figure:proof-alpha-complex-vertical-convexity}
    }
\end{figure}

Second, we show by contradiction that $\US{\Del{P,\alpha}}\cap \Normal
m \M \cap B(m,r)$ is connected. Suppose that
$a,b\in\US{\Del{P,\alpha}}\cap\Normal m \M \cap B(m,r)$ are such that
$ [a,b] \cap \US{\Del{P,\alpha}} = \{a,b\}$ with $a$ being above
  $b$ along the direction of $\normal m$.  Since $a,b \in
\US{\Del{P,\alpha}} \subseteq \Offset P \alpha$ and $\Offset P \alpha$
is vertically convex, the segment $[a,b]$ is contained in $\Offset
P\alpha$. By Remark~\ref{remark:covering—joins}, the entire
  segment $[a,b]$ is covered by upper and lower joins. Let $\spx_a$
and $\spx_b$ be the simplices of $\Del{P,\alpha}$ that contain $a$ and
$b$, respectively, in their relative interior. Letting $\lambda =
  \frac{\|a-b\|}{2}$, then both segments $\left(a,a-\lambda \normal m\right]$ and
$\left(b,b+\lambda \normal m\right]$ lie outside $\US{\Del{P,\alpha}}$.
It follows, by Lemma~\ref{lemma:connecting-boundaries-simplified}, that
there are at least one lower and one upper joins among the joins that
cover $(a,b)$; see Figure
\ref{figure:proof-alpha-complex-vertical-convexity}. Hence, an upper
and a lower joins intersect at a point $c$ of the segment $(a,b)$. By
Remark~\ref{remark:intersection-joins}, $c$ lies in
$\US{\Del{P,\alpha}}$, a contradiction. Therefore, for all $m \in \M$,
$\US{\Del{P,\alpha}}\cap \Normal m \M \cap B(m,r)$ is non-empty and
  connected,
thus forming a line segment.
\end{proof}

\subsection{Sampling conditions for surfaces in $\Rspace^3$.}
\label{section:surfaces-in-3D}

Theorem~\ref{theorem:alpha-complex} requires that the $i$-simplices of $\Del{P,\alpha}$ form a small angle with the manifold \M, for $0 < i < d$. Ensuring this can be challenging in practice, especially for $i \geq 3$. 
However, in the specific case of noiseless edges ($i=1$) or triangles ($i=2$), it is possible to upper bound the angle these simplices form with \M.
For edges, it is known that:

\begin{lemma}[{\cite[Lemma 7.8]{boissonnat2018geometric}}]\label{lemma:angleBetweenEdgeAndTangentPlane}
	If $ab$ is a non-degenerate edge with $a,b \in \M$, then
	\[
	\sin \angle \Aff{ab}, \Tangent a \M \, \leq \,  \frac{\|b-a\|}{2\,\reach}.
	\]
\end{lemma}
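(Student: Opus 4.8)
The plan is to reduce the statement to an elementary ``empty ball'' estimate for sets of positive reach. First I would rewrite the quantity to be bounded. Write $\nn = \normal a$ for the unit normal to $\M$ at $a$, so that $\Tangent a \M$ is the hyperplane through $a$ orthogonal to $\nn$. Since the edge $ab$ is non-degenerate, $\Aff{ab}$ is the line through $a$ with unit direction $v = \frac{b-a}{\|b-a\|}$, and the angle between a line and a hyperplane through a common point is the complement of the angle between that line and the normal direction of the hyperplane. Hence
\[
  \sin \angle \Aff{ab}, \Tangent a \M \;=\; \bigl|\langle v, \nn\rangle\bigr| \;=\; \frac{\bigl|\langle b-a,\,\nn\rangle\bigr|}{\|b-a\|},
\]
so it suffices to prove the ``flatness estimate'' $\bigl|\langle b-a,\,\nn\rangle\bigr| \,\le\, \frac{\|b-a\|^2}{2\reach}$.

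For that I would use the normal-fibre decomposition of tubular neighbourhoods recalled in the preliminaries. Fix $t$ with $0 < t < \Reach\M$. The two points $a+t\,\nn$ and $a-t\,\nn$ lie on the normal line $\Normal a \M$ and within distance $t$ of $a$, so, choosing any $t'$ with $t < t' < \Reach\M$, they belong to the normal segment $\Normal a \M \cap B(a,t')$. As the normal segments $\{\Normal m \M \cap B(m,t')\}_{m\in\M}$ are pairwise disjoint and cover $\Offset{\M}{t'}$, the (unique) nearest point of $\M$ to each of $a\pm t\,\nn$ is $a$ itself; that is, $d(a\pm t\,\nn,\M) = t$, equivalently $B(a+t\,\nn,t)^\circ \cap \M = \emptyset = B(a-t\,\nn,t)^\circ \cap \M$. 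Since $b\in\M$, this gives $\|b-(a+t\,\nn)\|\ge t$ and $\|b-(a-t\,\nn)\|\ge t$; squaring and cancelling $t^2$ yields $\|b-a\|^2 \ge 2t\,\langle b-a,\nn\rangle$ and $\|b-a\|^2 \ge -2t\,\langle b-a,\nn\rangle$, i.e. $\|b-a\|^2 \ge 2t\,\bigl|\langle b-a,\nn\rangle\bigr|$ for all $t\in(0,\Reach\M)$. Letting $t\uparrow\reach$ (recall $\reach\le\Reach\M$) produces the flatness estimate; dividing by $\|b-a\|$ and combining with the first display completes the proof.

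I do not expect a real obstacle, since the argument is short. The only step needing a little care is the identity $d(a\pm t\,\nn,\M)=t$ — that a point of the normal fibre over $a$ has $a$ as its unique nearest point on $\M$ — which follows at once from the disjointness of the normal segments quoted in the preliminaries: a point lying simultaneously in the fibres over $a$ and over some $m\neq a$ would contradict that disjointness. Geometrically, the flatness estimate is the classical fact that the two balls of radius $\reach$ tangent to $\Tangent a\M$ at $a$ (one on each side) have interiors disjoint from $\M$, so $b$ must lie outside both; everything else is the one-line trigonometric identity for a line-versus-hyperplane angle together with a squaring.
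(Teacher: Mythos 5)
Your proof is correct, and the geometric content is exactly the same as the paper's: the paper cites this lemma from the literature, but proves the sharper local-feature-size version in Appendix~\ref{appendix:angles} with the same idea, namely that the two balls of radius $\reach$ tangent to $\Tangent a\M$ at $a$ have interiors disjoint from $\M$, so $b$ lies on or outside them and in the worst case $ab$ is a chord. Where the paper's appendix proof is a one-line picture ("$ab$ is a chord of that sphere"), you make the same estimate fully explicit by computing $\sin\angle(\Aff{ab},\Tangent a\M)=|\langle v,\nn\rangle|$ and deriving the flatness bound $|\langle b-a,\nn\rangle|\le\|b-a\|^2/(2\reach)$ directly from the empty-ball condition; the two are straightforwardly equivalent, and your derivation of the empty-ball condition from the disjoint normal-segment decomposition of $\Offset\M{t'}$ is sound.
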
\label{lemma:edges_bound_r3}

For triangles, let $\rcirc{\tau}$
be the radius of the smallest $(d-1)$-sphere circumscribing
$\tau$. We establish a simple bound that is tighter than the previous one (see~\cite[Lemma 3.5]{Dey:2006:reconstruction_book}):

\begin{lemma}
	\label{lemma:angleBetweenTriangleAndTangentPlane}
 	If $abc$ is a non-degenerate triangle with
	longest edge $bc$, for $a,b,c \in \M$, then
	\begin{align*}
		\sin \angle \Aff{abc}, \Tangent a \M \, &\leq \, \frac{\rcirc{abc}}{\reach} && \text{if $abc$ is an obtuse triangle and} \\
		\sin \angle \Aff{abc}, \Tangent a \M \, &\leq \,  \frac{\sqrt{3}\, \rcirc{abc}}{\reach} && \text{if $abc$ is an acute triangle.}
	\end{align*}
	If $abc$ is obtuse, the bound is tight and
	happens when $\M$ is a
	sphere of radius $\reach$.
\end{lemma}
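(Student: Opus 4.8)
The plan is to bound $\sin\angle(\Aff{abc}, \Tangent a \M)$ by relating the normal of the plane $\Aff{abc}$ to the normal $\normal a$ of $\M$ at $a$, using that each of the three vertices lies on $\M$ and that the edges cannot tilt too far away from $\Tangent a \M$ (Lemma~\ref{lemma:angleBetweenEdgeAndTangentPlane}). Writing $u = b-a$ and $v = c-a$, the affine plane $\Aff{abc}$ has unit normal $\nn$ proportional to $u\times v$ (for $d=3$; in general one works with the $2$-plane spanned by $u,v$). The angle between $\Aff{abc}$ and $\Tangent a\M$ equals the angle between their normals, i.e. $\sin\angle(\Aff{abc},\Tangent a\M) = \lvert\langle \nn, \normal a\rangle\rvert / \lVert \nn\rVert$ after normalizing appropriately. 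So I must bound $\lvert\langle u\times v, \normal a\rangle\rvert$ from above and $\lVert u\times v\rVert = 2\cdot\mathrm{area}(abc)$ from below.

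\textbf{Key steps, in order.} First, by Lemma~\ref{lemma:angleBetweenEdgeAndTangentPlane}, $\lvert\langle u,\normal a\rangle\rvert \le \lVert u\rVert^2/(2\reach)$ and likewise $\lvert\langle v,\normal a\rangle\rvert \le \lVert v\rVert^2/(2\reach)$; write $\langle u,\normal a\rangle = \lVert u\rVert\sin\theta_u$ and similarly for $v$, so $\lvert\sin\theta_u\rvert \le \lVert u\rVert/(2\reach)$, $\lvert\sin\theta_v\rvert\le\lVert v\rVert/(2\reach)$. Second, decompose $u = u_\parallel + \langle u,\normal a\rangle\normal a$ where $u_\parallel \in \Tangent a\M$ (identifying the linear tangent space), and similarly $v$. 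Then $u\times v$ — or more intrinsically, the component of the bivector $u\wedge v$ that pairs nontrivially with $\normal a$ — has its $\normal a$-component governed by $u_\parallel\times(\langle v,\normal a\rangle\normal a) + (\langle u,\normal a\rangle\normal a)\times v_\parallel$, whose magnitude is at most $\lVert u_\parallel\rVert\,\lvert\langle v,\normal a\rangle\rvert + \lvert\langle u,\normal a\rangle\rvert\,\lVert v_\parallel\rVert \le \frac{1}{2\reach}(\lVert u\rVert\lVert v\rVert^2 + \lVert u\rVert^2\lVert v\rVert)$ after bounding $\lVert u_\parallel\rVert\le\lVert u\rVert$, etc. Third, lower-bound $\lVert u\times v\rVert = 2\cdot\mathrm{area}(abc) = \lVert u\rVert\,\lVert v\rVert\,\sin\angle bac$ and use the extended law of sines: $\lVert bc\rVert = 2\rho\sin\angle bac$, hence $\sin\angle bac = \lVert bc\rVert/(2\rho)$. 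Putting it together,
\[
\sin\angle(\Aff{abc},\Tangent a\M) \;\le\; \frac{\frac{1}{2\reach}\lVert u\rVert\lVert v\rVert(\lVert u\rVert + \lVert v\rVert)}{\lVert u\rVert\lVert v\rVert\cdot\frac{\lVert bc\rVert}{2\rho}} \;=\; \frac{\rho\,(\lVert ab\rVert + \lVert ac\rVert)}{\reach\,\lVert bc\rVert}.
\]
Since $bc$ is the longest edge, $\lVert ab\rVert + \lVert ac\rVert \le 2\lVert bc\rVert$, giving the uniform bound $2\rho/\reach$; but the two regimes in the statement come from sharpening this last step. In the obtuse case, $\rho = \lVert bc\rVert/(2\sin\angle bac)$ with $\angle bac$ obtuse or one of the base angles; the key observation is that when $abc$ is obtuse the longest edge $bc$ is a diameter-like chord and one can show $\lVert ab\rVert + \lVert ac\rVert \le \sqrt{2}\,\lVert bc\rVert$ — no wait, one needs $\lVert ab\rVert+\lVert ac\rVert\le\lVert bc\rVert\cdot(\text{something giving }1)$; more carefully, using $\lVert ab\rVert^2 + \lVert ac\rVert^2 \le \lVert bc\rVert^2$ (obtuse at $a$) one gets $\lVert ab\rVert+\lVert ac\rVert\le\sqrt{2}\lVert bc\rVert$, which is still not enough, so instead the bound should be re-derived keeping $\sin\angle bac$ explicit: for obtuse triangles $\angle bac \ge \pi/2$ need not hold (the obtuse angle may be elsewhere), so I would instead bound each edge-tilt more carefully and track which angle is obtuse, ultimately extracting a factor that collapses $\sqrt 3 \to 1$.

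\textbf{Main obstacle.} The delicate part is the constant: getting $\sqrt{3}$ for acute triangles and $1$ for obtuse ones, rather than the crude $2$. This requires a more careful estimate than the triangle-inequality bound $\lVert ab\rVert+\lVert ac\rVert\le 2\lVert bc\rVert$ — specifically, one should not split $u\wedge v$ by the triangle inequality on the two cross-terms but instead bound $\lvert\langle u\times v,\normal a\rangle\rvert$ directly. Observe $\langle u\times v,\normal a\rangle = \langle u, v\times\normal a\rangle = \langle u_\parallel, v\times\normal a\rangle + \langle u,\normal a\rangle\langle\normal a, v\times\normal a\rangle$; the second term vanishes, and $v\times\normal a$ has norm $\lVert v_\parallel\rVert\le\lVert v\rVert$ and lies in $\Tangent a\M$, while the projection of $u$ onto the direction $v\times\normal a$ is controlled by... this still gives a product bound. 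The cleanest route is probably: pick the vertex ($a$) opposite the non-obtuse configuration appropriately, use that the circumradius relates every edge to its opposite angle, and optimize $\frac{\rho(\lVert ab\rVert+\lVert ac\rVert)}{\reach\lVert bc\rVert}$ over triangle shapes with $bc$ longest — for acute triangles the worst case is equilateral, where $\lVert ab\rVert=\lVert ac\rVert=\lVert bc\rVert$ and $\rho = \lVert bc\rVert/\sqrt 3$, yielding exactly $\sqrt3\rho/\reach \cdot (\text{check})$; for obtuse, letting the obtuse angle tend to $\pi$ forces $a$ onto segment $bc$ with $\rho\to\lVert bc\rVert/2$ and $\lVert ab\rVert+\lVert ac\rVert\to\lVert bc\rVert$, giving $\rho/\reach$, and the claimed sphere-of-radius-$\reach$ example shows tightness there. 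I would verify these extremal computations and the shape-optimization rigorously; that optimization, together with correctly handling which vertex plays the role of $a$, is where the real work lies.
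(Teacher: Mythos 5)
Your route is genuinely different from the paper's (they parametrize unit vectors in the plane of $abc$ and project onto a normal $\vec n$ of a hyperplane $H\supseteq\Tangent a\M$, then optimize a quadratic form; you use the cross product $u\times v$ in $\Rspace^3$), but as written it cannot produce the stated constants, and you say so yourself. Two local issues first. (i) The identity $\sin\angle(\Aff{abc},\Tangent a\M)=\lvert\langle\nn,\normal a\rangle\rvert/\lVert\nn\rVert$ is wrong — that is $\cos\angle$, since both spaces are hyperplanes and the angle between hyperplanes is the angle between their normal lines. However, the quantity you then actually compute, namely the part of $u\times v$ \emph{orthogonal} to $\normal a$ divided by $\lVert u\times v\rVert$ (your two cross-terms $v_n\,u_\parallel\times\normal a + u_n\,\normal a\times v_\parallel$ live in $\Tangent a\M$, while $u_\parallel\times v_\parallel$ is the $\normal a$-component), really is $\sin\angle$. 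So this is a naming slip, not a fatal one. (ii) Your parenthetical ``for obtuse triangles $\angle bac\ge\pi/2$ need not hold'' is false: $bc$ is the longest edge, so $A=\angle bac$ is the largest angle; if the triangle is obtuse, the obtuse angle is at $a$ — a fact the paper uses to split into cases by the sign of $\cos A$.

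The genuine gap is the triangle-inequality step $\lVert w_\parallel\rVert \le \lvert v_n\rvert\,\lVert u_\parallel\rVert + \lvert u_n\rvert\,\lVert v_\parallel\rVert$, which discards cancellation between the two cross-terms and is what leaves you stuck at $\frac{\rho(\lVert ab\rVert+\lVert ac\rVert)}{\reach\,\lVert bc\rVert}$. That expression equals $2\rho/\reach$ for an equilateral triangle — yet equilateral is exactly the extremal acute case where the claimed bound is $\sqrt3\,\rho/\reach$ — so no ``shape optimization'' of that last factor can recover $\sqrt3$, and likewise it gives $\sqrt2\,\rho/\reach$ for an isosceles right triangle rather than $\rho/\reach$. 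To get the sharp constants you must keep the two contributions coherent. The paper does this by writing, in an orthonormal basis $(\vec u,\vec w)$ of the plane $\Aff{abc}$ with $\vec w=\frac{1}{\sin A}(\vec v-\cos A\,\vec u)$,
\[
\sin^2\angle(\Aff{abc},H)\;=\;\frac{1}{\sin^2 A}\Bigl(\cos^2\theta_u+\cos^2\theta_v-2\cos A\,\cos\theta_u\,\cos\theta_v\Bigr),
\]
a degree-$2$ homogeneous polynomial in $(\cos\theta_u,\cos\theta_v)$. Under the box constraints from Lemma~\ref{lemma:angleBetweenEdgeAndTangentPlane} the maximum is at a corner of the rectangle; the sign of $\cos A$ decides which corner, and the corner value collapses to $\rcirc{ab'c}^2/\reach^2$ (acute, $b'=2a-b$) or $\rcirc{abc}^2/\reach^2$ (obtuse) exactly, whence $\sqrt3$ appears as $\max\rcirc{ab'c}/\rcirc{abc}$ over acute triangles with $bc$ longest. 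Note finally that the cross-product formulation ties you to $d=3$, whereas the lemma is stated for $\M\subseteq\Rspace^d$ of codimension one; the paper's argument avoids this restriction (and, via Lemma~\ref{lemma:angleSpacesAsMaxOnHyperplane}, even handles higher codimension and local feature size) since it never forms $u\times v$.
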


The proof is technical and is therefore provided in Appendix \ref{appendix:angles}. For the same reason, the proof of the following result, where we use the bounds for edges and triangles to establish sampling conditions for surfaces in $\Rspace^3$, is in Appendix \ref{appendix:alpha-complex-second-corrolary}.
Let us define
\begin{equation*}
  \label{eq:good-beta}
  \beta_{\varepsilon,\alpha} = -\frac{\varepsilon^2}{2\reach} + \sqrt{\alpha^2 + \frac{\varepsilon^4}{4\reach^2}-\varepsilon^2},
\end{equation*}
which is one particular value of $\beta$ that guarantees $\Offset \M \beta
\subseteq \Offset P \alpha$; see Appendix~\ref{appendix:size-r-beta}.

\begin{restatable}{theorem}{CorollaryThreeDimensionTheory}
  \label{corollary:3D-theory}
  Let $\M$ be a $C^2$ surface in $\Rspace^3$ whose reach is at
    least $\reach>0$. Let $P$ be a finite point set such that $P
    \subseteq \M \subseteq \Offset P \varepsilon$.
  \begin{enumerate}
  \item \label{item:naive-3D-sampling}
    For all $\varepsilon, \alpha \geq 0$ that satisfy
    $\frac{\sqrt{3} \, \alpha}{\reach} < \min \left\{
    \frac{(\reach+\beta_{\varepsilon,\alpha})^2 - \reach^2 - \alpha^2}{2\reach\alpha},\,
    \cos\left( 2 \arcsin\left( \frac{\alpha}{\reach} \right)\right)
    \right\}$,
	\begin{itemize}
	\item the upper and lower complexes of $\Del{P,\alpha}$ relative to \M are
	  triangulations of \M;
	\item $\NaiveSquash(P,\alpha)$ returns a triangulation of $\M$.
	\end{itemize}
  \item \label{item:practical-3D-sampling}
    For all  $\varepsilon, \alpha \geq 0$ that satisfy in addition
    $\frac{\sqrt{3} \, \alpha}{\reach} < \sin\left( \frac{\pi}{4} - 2 \arcsin \left( \frac{\alpha}{\reach} \right)\right)$,
	\begin{itemize}
	\item $\PracticalSquash(P,\alpha)$ returns a triangulation of $\M$.
	\end{itemize}
  \end{enumerate}
\end{restatable}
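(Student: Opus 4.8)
The plan is to obtain the two items as specializations of Theorem~\ref{theorem:alpha-complex} and Corollary~\ref{theorem:correcness-squash-practical} to the noiseless case $\delta=0$ in ambient dimension $d=3$; there the simplices whose angle with $\M$ must be controlled are the edges ($i=1$) and the triangles, the latter playing simultaneously the role of the $i=2$ simplices and of the $(d-1)$-simplices. I would first record the elementary geometry of a simplex $\tau\in\Del{P,\alpha}$ in this regime: all its vertices lie in $P\subseteq\M$; it admits a circumscribing sphere of radius $\le\alpha$, centered at any point of $V(\tau,P)\cap\Offset P\alpha$, so $\rcirc{\tau}\le\alpha$ and $\Conv\tau$ has diameter $\le 2\alpha$; and $\Conv\tau\subseteq\US{\Del{P,\alpha}}\subseteq\Offset P\alpha$ lies within distance $\alpha$ of $\M$.

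Next I would discharge the structural hypotheses of Theorem~\ref{theorem:alpha-complex} with $\delta=0$. The stated hypothesis forces $\alpha>\varepsilon$ (otherwise $\beta_{\varepsilon,\alpha}\le0$ and the first entry of the minimum is $\le 0<\tfrac{\sqrt3\alpha}{\reach}$) and $\alpha/\reach<\tfrac12$ (from $\tfrac{\sqrt3\alpha}{\reach}<\cos(2\arcsin(\alpha/\reach))=1-2(\alpha/\reach)^2$ one gets $2(\alpha/\reach)^2+\sqrt3\,(\alpha/\reach)-1<0$, hence $\alpha/\reach<\tfrac{\sqrt{11}-\sqrt3}{4}<\tfrac12$). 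Consequently $\varepsilon/\reach<\tfrac12<\sqrt{6-4\sqrt2}$, so the strict homotopy condition holds; the bound $\alpha/\reach<\tfrac12$ places $\alpha$ in the interval $[\delta,\tfrac{2(\reach-\delta)}{3})$; the membership $\alpha\in I(\varepsilon,0)$ is a routine verification against the definition in Appendix~\ref{appendix:range-alpha}; and $\beta:=\beta_{\varepsilon,\alpha}$ is positive (equivalent, from its formula, to $\alpha>\varepsilon$) and satisfies $\Offset{\M}{\beta}\subseteq\Offset P\alpha$ by Appendix~\ref{appendix:size-r-beta}.

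The heart of the proof is to verify Conditions~\eqref{eq:angle-condition-for-skeleton}, \eqref{eq:angle-condition-for-vertical-convexity} and~\eqref{eq:angle-condition-acyclicity} for all edges and triangles of $\Del{P,\alpha}$. Conditions~\eqref{eq:angle-condition-for-vertical-convexity} and~\eqref{eq:angle-condition-acyclicity} require only the minimum of the angle over the vertices of $\tau$ (respectively over $\Conv\nu$, which is no larger), so it suffices to exhibit one vertex with a small angle: Lemma~\ref{lemma:angleBetweenEdgeAndTangentPlane} gives $\sin\angle(\Aff\tau,\Tangent a\M)\le\alpha/\reach$ for an edge, and Lemma~\ref{lemma:angleBetweenTriangleAndTangentPlane}, applied at the vertex opposite the longest edge, gives $\sin\angle(\Aff\tau,\Tangent a\M)\le\sqrt3\,\rcirc{\tau}/\reach\le\sqrt3\,\alpha/\reach$ for a triangle (the acute bound dominating the obtuse one). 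Feeding in the uniform bound $\arcsin(\sqrt3\alpha/\reach)$, Condition~\eqref{eq:angle-condition-for-vertical-convexity} (with $\delta=0$, $\beta=\beta_{\varepsilon,\alpha}$) reduces to $\tfrac{\sqrt3\alpha}{\reach}<\tfrac{(\reach+\beta_{\varepsilon,\alpha})^2-\reach^2-\alpha^2}{2\reach\alpha}$, the first entry of the minimum, while Condition~\eqref{eq:angle-condition-acyclicity} reduces to $\tfrac{\sqrt3\alpha}{\reach}<\cos\bigl(2\arcsin\tfrac{\alpha}{2(\reach-\alpha)}\bigr)$, which the second entry $\cos(2\arcsin(\alpha/\reach))$ implies because $\alpha/\reach<\tfrac12$ gives $\tfrac{\alpha}{2(\reach-\alpha)}\le\tfrac{\alpha}{\reach}$. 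For Condition~\eqref{eq:angle-condition-for-skeleton} one must bound the angle at every point $x\in\Conv\tau$: I would transfer the vertex bound via a triangle inequality between affine subspaces together with a standard estimate for the drift of the tangent plane of $\M$ in terms of $\reach$ and distance. For an edge, picking the endpoint $a$ nearer to $x$ gives $\|a-x\|\le\alpha$, hence $\|a-\pi_\M(x)\|\le 2\alpha$, hence $\angle(\Aff\tau,\Tangent{\pi_\M(x)}\M)\le 3\arcsin(\alpha/\reach)<\tfrac{\pi}{2}$ since $\alpha/\reach<\tfrac12$; for a triangle the transfer is more delicate and needs a sharper control of the distance from $x$ (or from $\pi_\M(x)$) to a vertex certifying the angle. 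With the three conditions in hand, Theorem~\ref{theorem:alpha-complex} yields that $\Del{P,\alpha}$ satisfies the hypotheses of Theorem~\ref{theorem:correctness-generic-simplification}, that its upper and lower complexes relative to $\M$ triangulate $\M$, and that $\NaiveSquash(P,\alpha)$ returns a triangulation of $\M$ --- that is item~\ref{item:naive-3D-sampling}. For item~\ref{item:practical-3D-sampling}, Corollary~\ref{theorem:correcness-squash-practical} additionally needs Condition~\eqref{eq:angle-condition-practical}, that the angle at \emph{every} vertex of each triangle be below $\tfrac{\pi}{4}$; the critical vertex is one on the longest edge, and bounding its angle by $\arcsin(\sqrt3\alpha/\reach)+2\arcsin(\alpha/\reach)$ --- the first term from Lemma~\ref{lemma:angleBetweenTriangleAndTangentPlane} at the opposite vertex, the second from a tangent-plane transfer along an edge of length $\le 2\alpha$ --- shows that~\eqref{eq:angle-condition-practical} is implied by the extra hypothesis $\tfrac{\sqrt3\alpha}{\reach}<\sin\bigl(\tfrac{\pi}{4}-2\arcsin\tfrac{\alpha}{\reach}\bigr)$.

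I expect the main obstacle to be Condition~\eqref{eq:angle-condition-for-skeleton} for triangles over the whole range of $\alpha/\reach$ permitted by the hypothesis: bounding the displacement between a point of $\Conv\tau$ and a controlling vertex simply by the diameter $2\alpha$ only covers a subrange, so one has to exploit the small circumradius of the triangle together with the location of the point relative to the circumcenter, matched to a tight bound on the variation of the tangent plane of $\M$. The remaining issues --- that the clean hypothesis really does place $\alpha$ inside $I(\varepsilon,\delta)$ and that $\beta_{\varepsilon,\alpha}$ certifies $\Offset{\M}{\beta_{\varepsilon,\alpha}}\subseteq\Offset P\alpha$ --- are bookkeeping that I would import from the corresponding appendices rather than reprove here.
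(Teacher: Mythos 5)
Your plan matches the paper's own proof (Lemma~\ref{lemma:3D-conditions} and the short proof that follows it in Appendix~F): specialize Theorem~\ref{theorem:alpha-complex} and Corollary~\ref{theorem:correcness-squash-practical} to $\delta=0$, $d=3$, bound the angles at a vertex via Lemma~\ref{lemma:angleBetweenEdgeAndTangentPlane} and the $\sqrt3$-bound of Lemma~\ref{lemma:angleBetweenTriangleAndTangentPlane} to obtain $\arcsin(\sqrt3\alpha/\reach)$ for $\min_{a\in\tau}$, pass to $\max_{a\in\tau}$ by adding $2\arcsin(\alpha/\reach)$ via Lemma~\ref{lemma:difference-max-min-angular-deviation-noiseless}, plug into Conditions~\eqref{eq:angle-condition-for-vertical-convexity}, \eqref{eq:angle-condition-acyclicity} and \eqref{eq:angle-condition-practical}, and check that the displayed hypotheses are exactly what remains. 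Your observations that $\alpha>\varepsilon$ and that the cosine constraint forces $\alpha/\reach<\tfrac{\sqrt{11}-\sqrt3}{4}<\tfrac12$ (whence Condition~\eqref{eq:3D-domain-C3} is absorbed by Condition~\eqref{eq:3D-domain-C1}) mirror the paper's use of the auxiliary constraint~\eqref{eq:3D-feasible-range}. The one place you defer --- that the hypothesis places $\alpha$ in $I(\varepsilon,0)$ --- the paper also does not argue analytically: it invokes a numerical check that adding~\eqref{eq:3D-feasible-range} to~\eqref{eq:3D-domain} does not shrink the admissible domain (Figure~\ref{fig:sampling-condition-3D-homotopy}). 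You should be explicit that this step requires such a verification rather than calling it ``routine''.

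The ``main obstacle'' you anticipate is, however, not how the paper closes Condition~\eqref{eq:angle-condition-for-skeleton}. You propose transporting the vertex bound to an arbitrary $x\in\Conv\tau$ by controlling $\|\pi_\M(x)-a\|$ and paying an extra tangent-drift term, and you correctly compute that the crude version of this transport blows the budget for triangles near the boundary of the admissible range. The paper's Lemma~\ref{lemma:3D-conditions} does not attempt that transport at all: it deduces the stronger condition~\eqref{eq:3D-domain-C1} by ``replacing the left side of~\eqref{eq:angle-condition-for-skeleton} with the right side of~\eqref{eq:max-angle-bound}'', i.e.\ it substitutes the bound established for $\max_{a\in\tau}\angle(\Aff\tau,\Tangent{\pi_\M(a)}\M)$ in place of $\max_{x\in\Conv\tau}\angle(\Aff\tau,\Tangent{\pi_\M(x)}\M)$, without a separate argument bridging vertices to interior points. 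You were right to be uncomfortable with that passage: it is not an identity, since $\pi_\M(\Conv\tau)$ is strictly larger than the vertex set, and the na\"ive bridge does overshoot $\frac{\pi}{2}$ near $\alpha/\reach\approx0.396$. So the obstacle you flag is real but is not resolved differently (nor, visibly, at all) in the paper's proof; if you want a self-contained argument you would indeed have to supply a finer analysis here.
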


  \medskip \noindent
  The pairs of
  $(\frac{\varepsilon}{\reach},\frac{\alpha}{\reach})$ that satisfy
  \itemref{item:naive-3D-sampling} and
  \itemref{item:practical-3D-sampling} are depicted in Figures~\ref{fig:sampling-condition-3D-left} and \ref{fig:sampling-condition-3D-right}, respectively.

\ignore{
\bianca{I like this approach, but %
  I would add an hyperlink for $\beta_{\epsilon,\alpha}$, or even move its definition to just before this theorem, as we don't use it before.}
\DomiEmph{The two options would work, although, at the moment, I have a preference for the second option.}
}

\begin{remark}
	In particular, \itemref{item:naive-3D-sampling} holds for
    $\frac{\varepsilon}{\reach} \leq 0.225$ and $\frac{\alpha}{\reach}
    = 0.359$; and \itemref{item:practical-3D-sampling} for $\frac{\varepsilon}{\reach} \leq 0.178$
    and $\frac{\alpha}{\reach} = 0.207$, better bounds
    than the previous existing ones \cite[Theorem
      13.16]{cheng2013delaunay}\cite{dey2017curve}.
\end{remark}

\begin{figure}
	\centering
	\begin{subfigure}[b]{1.6em}
		\subcaption{}\label{fig:sampling-condition-3D-left}
	\end{subfigure}%
	\raisebox{-0.1cm}{
		\def\svgwidth{.35\textwidth}
\begingroup%
  \makeatletter%
  \providecommand\color[2][]{%
    \errmessage{(Inkscape) Color is used for the text in Inkscape, but the package 'color.sty' is not loaded}%
    \renewcommand\color[2][]{}%
  }%
  \providecommand\transparent[1]{%
    \errmessage{(Inkscape) Transparency is used (non-zero) for the text in Inkscape, but the package 'transparent.sty' is not loaded}%
    \renewcommand\transparent[1]{}%
  }%
  \providecommand\rotatebox[2]{#2}%
  \newcommand*\fsize{\dimexpr\f@size pt\relax}%
  \newcommand*\lineheight[1]{\fontsize{\fsize}{#1\fsize}\selectfont}%
  \ifx\svgwidth\undefined%
    \setlength{\unitlength}{227.92469788bp}%
    \ifx\svgscale\undefined%
      \relax%
    \else%
      \setlength{\unitlength}{\unitlength * \real{\svgscale}}%
    \fi%
  \else%
    \setlength{\unitlength}{\svgwidth}%
  \fi%
  \global\let\svgwidth\undefined%
  \global\let\svgscale\undefined%
  \makeatother%
  \begin{picture}(1,1.3542089)%
    \lineheight{1}%
    \setlength\tabcolsep{0pt}%
    \put(0,0){\includegraphics[width=\unitlength,page=1]{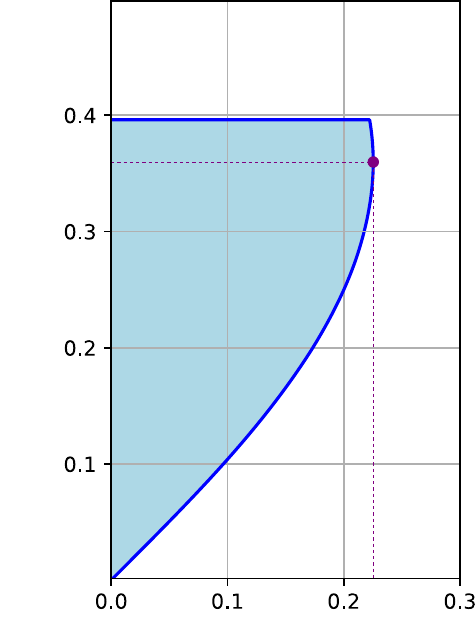}}%
    \put(0.5452672,0.00638831){\color[rgb]{0,0,0}\makebox(0,0)[lt]{\lineheight{1.25}\smash{\begin{tabular}[t]{l}$\varepsilon/\reach$\end{tabular}}}}%
    \put(-0.00254933,0.72754445){\color[rgb]{0,0,0}\makebox(0,0)[lt]{\lineheight{1.25}\smash{\begin{tabular}[t]{l}$\alpha/\reach$\end{tabular}}}}%
    \put(0,0){\includegraphics[width=\unitlength,page=2]{corollary-3D-theory.pdf}}%
    \put(0.61496754,0.16925673){\color[rgb]{0.50196078,0,0.50196078}\makebox(0,0)[lt]{\lineheight{1.25}\smash{\begin{tabular}[t]{l}$0.225$\end{tabular}}}}%
    \put(0.26069551,0.94885556){\color[rgb]{0.50196078,0,0.50196078}\makebox(0,0)[lt]{\lineheight{1.25}\smash{\begin{tabular}[t]{l}$0.359$\end{tabular}}}}%
    \put(0.26069551,1.13826739){\color[rgb]{0.50196078,0,0.50196078}\makebox(0,0)[lt]{\lineheight{1.25}\smash{\begin{tabular}[t]{l}$0.396$\end{tabular}}}}%
  \end{picture}%
\endgroup%

	}
	\hspace{0.8cm}
	\begin{subfigure}[b]{1.6em}
		\subcaption{}\label{fig:sampling-condition-3D-right}
	\end{subfigure}
	\raisebox{-0.1cm}{
		\def\svgwidth{.35\textwidth}
\begingroup%
  \makeatletter%
  \providecommand\color[2][]{%
    \errmessage{(Inkscape) Color is used for the text in Inkscape, but the package 'color.sty' is not loaded}%
    \renewcommand\color[2][]{}%
  }%
  \providecommand\transparent[1]{%
    \errmessage{(Inkscape) Transparency is used (non-zero) for the text in Inkscape, but the package 'transparent.sty' is not loaded}%
    \renewcommand\transparent[1]{}%
  }%
  \providecommand\rotatebox[2]{#2}%
  \newcommand*\fsize{\dimexpr\f@size pt\relax}%
  \newcommand*\lineheight[1]{\fontsize{\fsize}{#1\fsize}\selectfont}%
  \ifx\svgwidth\undefined%
    \setlength{\unitlength}{227.93559265bp}%
    \ifx\svgscale\undefined%
      \relax%
    \else%
      \setlength{\unitlength}{\unitlength * \real{\svgscale}}%
    \fi%
  \else%
    \setlength{\unitlength}{\svgwidth}%
  \fi%
  \global\let\svgwidth\undefined%
  \global\let\svgscale\undefined%
  \makeatother%
  \begin{picture}(1,1.35398391)%
    \lineheight{1}%
    \setlength\tabcolsep{0pt}%
    \put(0,0){\includegraphics[width=\unitlength,page=1]{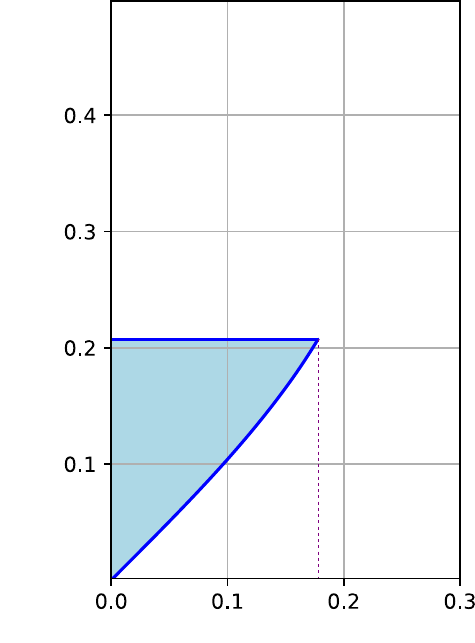}}%
    \put(0.54524107,0.00638801){\color[rgb]{0,0,0}\makebox(0,0)[lt]{\lineheight{1.25}\smash{\begin{tabular}[t]{l}$\varepsilon/\reach$\end{tabular}}}}%
    \put(-0.00254921,0.72750947){\color[rgb]{0,0,0}\makebox(0,0)[lt]{\lineheight{1.25}\smash{\begin{tabular}[t]{l}$\alpha/\reach$\end{tabular}}}}%
    \put(0,0){\includegraphics[width=\unitlength,page=2]{corollary-3D-practical.pdf}}%
    \put(0.26129193,0.67274098){\color[rgb]{0.50196078,0,0.50196078}\makebox(0,0)[lt]{\lineheight{1.25}\smash{\begin{tabular}[t]{l}$0.207$\end{tabular}}}}%
    \put(0.50100817,0.16069126){\color[rgb]{0.50196078,0,0.50196078}\makebox(0,0)[lt]{\lineheight{1.25}\smash{\begin{tabular}[t]{l}$0.178$\end{tabular}}}}%
    \put(0,0){\includegraphics[width=\unitlength,page=3]{corollary-3D-practical.pdf}}%
  \end{picture}%
\endgroup%

	}
	\caption{Pairs of $(\frac{\varepsilon}{\reach},\frac{\alpha}{\reach})$ for which
		$\NaiveSquash(P,\alpha)$ (a) and $\PracticalSquash(P,\alpha)$ (b) are
	  correct, for $P \subseteq \M \subseteq \Offset P \varepsilon$ and $d=3$.
      }
	\label{fig:sampling-condition-3D}
\end{figure}

\subsection{The restricted Delaunay complex}
\label{section:restricted-Delaunay-complex}

We recall from \cite{edelsbrunner1997triangulating} that the {\em restricted Delaunay complex} is
\[
\DelR{P} = \{ \sigma \subseteq P \mid  \sigma \neq \emptyset \text{ and }  V(\sigma,P) \cap \M \neq \emptyset \}.
\]

\begin{restatable}{theorem}{TheoremRestrictedDelaunayComplex}
  \label{theorem:3D-restricted-Delaunay-complex}
  Let $\M$ be a $C^2$ surface in $\Rspace^3$ whose reach is at least
  $\reach>0$. Let $P$ be a finite set such that $P \subseteq \M
  \subseteq \Offset P \varepsilon$ for $0 \leq
  \frac{\varepsilon}{\reach} \leq 0.225$.  Under the additional
    generic assumption that all Voronoi cells of $P$ intersect \M
    transversally, $\DelR{P}$ is a triangulation of \M.
\end{restatable}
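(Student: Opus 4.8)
The plan is to check that the restricted Delaunay complex $\DelR{P}$ itself has a non-vertical skeleton and is vertically convex relative to $\M$, that $\pi_\M$ restricted to $\US{\DelR{P}}$ is a bijection onto $\M$, and then to conclude with Lemma~\ref{lemma:upper-and-lower-skins-homeomorphic-to-manifold}. Indeed, once the fibre $\US{\DelR{P}}\cap\Normal m\M\cap B(m,r)$ is a single point for every $m$, the upper skin of $\US{\DelR{P}}$ equals $\US{\DelR{P}}$, and the lemma identifies it homeomorphically (via $\pi_\M$) with $\pi_\M(\US{\DelR{P}}) = \M$; hence $\DelR{P}$ is a triangulation of $\M$. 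No vertical collapse is ever needed, so Theorem~\ref{theorem:correctness-generic-simplification} enters only through the results it supports.

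First I would fix the scale, with $\delta = 0$ since $P\subseteq\M$, and take $\alpha = 0.359\,\reach$ as in the Remark following Theorem~\ref{corollary:3D-theory}, so that the hypotheses of Theorem~\ref{theorem:alpha-complex} hold for every $\tfrac{\varepsilon}{\reach}\le 0.225$. Then $\Offset P\alpha$ is vertically convex and its two skins partition $\partial\Offset P\alpha$ (Theorem~\ref{theorem:NSW}); $\Del{P,\alpha}$ has a non-vertical skeleton, is vertically convex and satisfies $\pi_\M(\US{\Del{P,\alpha}}) = \M$; and $\US{\Del{P,\alpha}}\subseteq\Offset P\alpha\subseteq\Offset\M r$ for some $r<\Reach\M$. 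Since $\varepsilon\le\alpha$, we get $\DelR{P}\subseteq\Del{P,\alpha}$: for $\sigma\in\DelR{P}$, any $x\in V(\sigma,P)\cap\M$ satisfies $x\in\M\subseteq\Offset P\varepsilon\subseteq\Offset P\alpha$, so $V(\sigma,P)\cap\Offset P\alpha\neq\emptyset$. Consequently $\DelR{P}$ has a non-vertical skeleton and $\US{\DelR{P}}\subseteq\Offset\M r$. Under the generic transversality hypothesis, $\DelR{P}$ moreover contains no $d$-simplex: the Voronoi cell of a $d$-simplex is a single point, and a point does not meet the surface $\M$ transversally in $\Rspace^3$. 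Hence every simplex of $\DelR{P}$ is non-vertical of dimension $<d$, so it meets each normal segment $\Normal m\M\cap B(m,r)$ in at most one point; thus $\US{\DelR{P}}\cap\Normal m\M\cap B(m,r)$ is a finite set, and vertical convexity of $\DelR{P}$ reduces to this set having at most one point, while covering projection reduces to it being non-empty.

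The substantial step is that $\pi_\M$ maps $\US{\DelR{P}}$ bijectively onto $\M$. For surjectivity I would fix $m\in\M$ and use the decomposition of $\Offset P\alpha\setminus\US{\Del{P,\alpha}}^\circ$ into upper and lower joins from the proof of Lemma~\ref{lemma:alpha-complex-vertical-convexity}: the normal segment meets $\US{\Del{P,\alpha}}$ in a non-empty segment $[u^-,u^+]$, its portion above $u^+$ (resp.\ below $u^-$) is covered by upper (resp.\ lower) joins, and the restricted Voronoi faces $\{V(\tau,P)\cap\M\}$ — which tile a neighbourhood of $m$ in $\M$, by the transversality hypothesis — should let me exhibit a $\tau\in\DelR{P}$ whose support contains $u^+$ or $u^-$, hence a point of $\US{\DelR{P}}$ over $m$. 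For injectivity I would suppose $x\neq y$ in $\US{\DelR{P}}$ share the projection $m$; both lie on $[u^-,u^+]$, and I would reach a contradiction by comparing, near $m$, the fibration of $\Offset\M r$ by normal segments with the restricted Voronoi decomposition of $\M$ and with the way $\Del{P,\alpha}$ is arranged around it, transversality controlling the local combinatorics.

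I expect this bijectivity step — surjectivity, and above all injectivity — to be the main obstacle: it is exactly the point at which one must understand how the restricted Voronoi diagram is embedded inside the vertically convex $\alpha$-complex relative to the normal fibres, and it must work on the whole range $\tfrac{\varepsilon}{\reach}\le 0.225$, on which the restricted Voronoi cells need not satisfy the closed ball property; this is what forces the argument to go through the vertical-convexity geometry rather than through the nerve lemma and the Topological Ball Theorem. Granting bijectivity of $\pi_\M|_{\US{\DelR{P}}}$, vertical convexity and covering projection of $\DelR{P}$ hold, and Lemma~\ref{lemma:upper-and-lower-skins-homeomorphic-to-manifold} finishes the proof.
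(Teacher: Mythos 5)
Your reduction to bijectivity of $\pi_\M$ on $\US{\DelR{P}}$ is structurally sound: under the transversality hypothesis $\DelR{P}$ indeed has no $d$-simplices (a Voronoi vertex cannot meet the hypersurface $\M$ transversally), $\DelR{P}\subseteq\Del{P,\alpha}$, the fibers of $\pi_\M$ over $\US{\DelR{P}}$ are finite (non-vertical skeleton), and once each fiber is exactly one point the upper and lower skins coincide with $\US{\DelR{P}}$ and Lemma~\ref{lemma:upper-and-lower-skins-homeomorphic-to-manifold} gives the homeomorphism. But the step you flag yourself as ``the main obstacle'' is exactly the missing proof: the two paragraphs on surjectivity and injectivity contain ``should let me exhibit'' and ``I would reach a contradiction by comparing\dots, transversality controlling the local combinatorics,'' which is a statement of intent, not an argument. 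In particular there is no reason given why one of $u^\pm$ must lie on a simplex of $\DelR{P}$, nor why two restricted Delaunay simplices cannot hit the same normal fiber. This is precisely where the difficulty of the theorem is concentrated, and nothing you wrote resolves it.

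The paper's route is quite different and, although more indirect, actually closes that gap. It first introduces the core complex $\DelC{P}=\Cl\bigl(\DelR{P}^{[d-1]}\bigr)$ and proves (Lemma~\ref{lemma:crossing-Delaunay}) that a particular execution of $\NaiveVertSimp$, namely $\NonCrossingVertSimp$, which always collapses a sink whose circumcenter is above $\M$ or a source whose circumcenter is below $\M$, terminates with $K=\DelC{P}$. That argument leans on the sinks/sources characterization (Lemma~\ref{lemma:correspondance-free-terminal-node}), the monotonicity of $z\mapsto\alt{z}$ along Voronoi edges (Lemma~\ref{lemma:height-is-increasing}), and the upper/lower join analysis (Lemma~\ref{lemma:connecting-boundaries}) to charge every $(d-1)$-simplex of $\Del{P,\alpha}\setminus\DelC{P}$ to some $d$-simplex whose removal also removes it; correctness of $\NaiveSquash$ (Theorem~\ref{theorem:correctness-generic-simplification} via Theorem~\ref{theorem:alpha-complex}) then yields that $\DelC{P}$ triangulates $\M$. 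Separately, Lemma~\ref{lemma:WhenRestrictedDelaunayIsPure2D} is a purely geometric argument showing that under the sampling bound $\DelR{P}$ is a pure $2$-complex, hence equals $\DelC{P}$. Both halves have nontrivial content that a direct bijectivity argument would have had to reproduce. So: right target, right auxiliary observations, but the centerpiece is still to be supplied.
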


\clearpage
\bibstyle{plainurl}

\clearpage
\appendix

\renewcommand{\thetheorem}{\Alph{section}\arabic{theorem}} 

\numberwithin{lemma}{section}
\renewcommand{\thelemma}{\Alph{section}\arabic{lemma}}

\renewcommand{\theremark}{\Alph{section}\arabic{remark}}
\renewcommand{\theproperty}{\Alph{section}\arabic{property}}

\etocarticlestyle

\etocruledstyle[1]{\etocfontminusone\color{gray}%
	\fboxrule1pt\fboxsep1ex
	\framebox[\linewidth]
	{\normalcolor\hss APPENDIX CONTENTS\hss}
}

\etocdepthtag.toc{mtappendix}
\etocsettagdepth{mtchapter}{none}
\etocsettagdepth{mtappendix}{subsection}
\tableofcontents

\clearpage
\section{Angle between flats}
\label{appendix:angles}

\subsection{Definitions and basic properties}

Consider two affine spaces $X,Y \subseteq \Rspace^d$. Let $U$ and $V$
be the vector spaces associated to $X$ and $Y$, respectively.  We
recall that the angle between two affine spaces $X$ and $Y$ is the
same as the angle between the two vector spaces $U$ and $V$ and is
defined as
\[
\angle(X,Y) = \angle(U,V) = \max_{\substack{u \in U \\ \|u\| = 1}}
\min_{\substack{v \in V \\ \|v\| = 1}} \angle (u,v).
\]
The definition is not symmetric in $X$ and $Y$, unless the two affine spaces $X$
and $Y$ (or equivalently, the two vector spaces $U$ and $V$) have same dimension. Given a vector space $V$, denote by
$V^\perp$ its orthogonal complement. The following properties will be
useful:

\begin{property}
  Let $U$ and $V$ be two vector spaces. Then,
  \[
  \angle(U,V) = \angle(V^\perp,U^\perp).
  \]
\end{property}

\begin{property}
  \label{property:complementary-angles}
  Let $U$ and $V$ be two vector spaces, with $\dim U = 1$. Then,
  \[
  \angle(U,V) + \angle(U,V^\perp) = \frac{\pi}{2}.
  \]
\end{property}

The angle between two vector spaces can also be written in terms of all hyperplanes that contain the second vector space:

\begin{lemma}\label{lemma:angleSpacesAsMaxOnHyperplane}
  Let $U$, $V$ be two vector spaces and $u\in U$. Then,
  \begin{align}\label{eq:angleSpacesAsMaxOnHyperplane_1}
	\min_{\substack{v \in V \\ \|v\| = 1}} \angle (u,v)
	= \max_{\substack {H \supseteq V  \\  \operatorname{codim}(H) = 1}} \min_{\substack{h \in H \\ \|h\|=1}} \angle (u,h)
  \end{align}
  and
  \begin{align}\label{eq:angleSpacesAsMaxOnHyperplane_2}
	\angle (U, V)  = \max_{\substack {H \supseteq V \\  \operatorname{codim}(H) = 1}} \angle (U,H),
  \end{align}
  where $H$ designates a vector space and $\operatorname{codim}(H) = \dim(H^\perp)$.
\end{lemma}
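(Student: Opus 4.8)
The plan is to prove \eqref{eq:angleSpacesAsMaxOnHyperplane_1} first and then read off \eqref{eq:angleSpacesAsMaxOnHyperplane_2} from it. We may assume $1 \le \dim V \le d-1$: when $\dim V = d$ there is no hyperplane containing $V$, and $\dim V = 0$ is excluded since the left-hand sides minimize over the unit sphere of $V$. Throughout I will use the elementary fact that, for a unit vector $u$ and a nonzero subspace $W \subseteq \Rspace^d$, the smallest angle from $u$ to a unit vector of $W$ is attained at the normalized orthogonal projection $\pi_W(u)$ of $u$ onto $W$, so that $\min_{\substack{w \in W \\ \|w\| = 1}} \angle(u,w) = \arccos \|\pi_W(u)\|$ (a consequence of Cauchy--Schwarz, with the convention that this equals $\tfrac{\pi}{2}$ when $\pi_W(u) = 0$). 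Note that for a linear subspace $W$ this $\pi_W$ is precisely the closest-point projection used elsewhere in the paper.

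For \eqref{eq:angleSpacesAsMaxOnHyperplane_1}, one inequality is immediate: if $H \supseteq V$ then the unit sphere of $H$ contains that of $V$, so $\min_{\substack{h \in H \\ \|h\| = 1}} \angle(u,h) \le \min_{\substack{v \in V \\ \|v\| = 1}} \angle(u,v)$, and taking the maximum over all such $H$ preserves the inequality. For the reverse direction I would exhibit a hyperplane $H_0 \supseteq V$ attaining equality. If $u \in V$ both sides vanish and any hyperplane through $V$ works. If $u \notin V$, set $n := u - \pi_V(u) \ne 0$ and $H_0 := n^\perp$; since $n \in V^\perp$ we have $V \subseteq H_0$ and $\operatorname{codim}(H_0) = 1$, and because $\pi_V(u) \perp n$ the decomposition $u = \pi_V(u) + n$ is exactly the splitting of $u$ along $H_0 \oplus H_0^\perp$, whence $\pi_{H_0}(u) = \pi_V(u)$. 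Applying the fact above with $W = H_0$ and then with $W = V$ gives $\min_{\substack{h \in H_0 \\ \|h\| = 1}} \angle(u,h) = \arccos\|\pi_V(u)\| = \min_{\substack{v \in V \\ \|v\| = 1}} \angle(u,v)$, as required. (When $\dim V = d-1$ this construction simply returns $H_0 = V$, the only admissible hyperplane, so the identity is trivial there.)

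Identity \eqref{eq:angleSpacesAsMaxOnHyperplane_2} then follows by substituting \eqref{eq:angleSpacesAsMaxOnHyperplane_1} into the definition of $\angle(U,V)$ and swapping the two maxima:
\[
\angle(U,V) \;=\; \max_{\substack{u \in U \\ \|u\| = 1}} \ \min_{\substack{v \in V \\ \|v\| = 1}} \angle(u,v) \;=\; \max_{\substack{u \in U \\ \|u\| = 1}} \ \max_{\substack{H \supseteq V \\ \operatorname{codim}(H) = 1}} \ \min_{\substack{h \in H \\ \|h\| = 1}} \angle(u,h) \;=\; \max_{\substack{H \supseteq V \\ \operatorname{codim}(H) = 1}} \angle(U,H),
\]
the middle interchange being the trivial identity $\max_u \max_H = \max_H \max_u$ (both sides equal the supremum over all admissible pairs $(u,H)$, which is attained by the previous paragraph).

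I do not expect a genuine obstacle here: the only substantive step is the choice $H_0 = (u - \pi_V(u))^\perp$ together with the verification $\pi_{H_0}(u) = \pi_V(u)$; the one thing to be careful about is the degenerate cases $u \in V$ and $u \perp V$, the latter being absorbed by allowing $\|\pi_V(u)\| = 0$ in the formula $\arccos\|\pi_V(u)\|$.
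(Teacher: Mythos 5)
Your proposal is correct and follows essentially the same route as the paper's proof: the one non‑trivial inequality in \eqref{eq:angleSpacesAsMaxOnHyperplane_1} is obtained by the same construction, taking $H_0 = (u - \pi_V(u))^\perp$ and observing that $\pi_{H_0}(u) = \pi_V(u)$ because $u - \pi_V(u)$ spans $H_0^\perp$ while $\pi_V(u) \in V \subseteq H_0$; and \eqref{eq:angleSpacesAsMaxOnHyperplane_2} follows by the same interchange of maxima. The only superficial difference is that you express the minimal angle as $\arccos\|\pi_W(u)\|$ (for $u$ normalized) whereas the paper writes $\arcsin(\|\pi_V(u) - u\|/\|u\|)$, which is the same quantity; your explicit handling of the degenerate cases $u \in V$, $u \perp V$ and the implicit dimension constraint $1 \le \dim V \le d-1$ is a minor added precaution rather than a different argument.
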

\begin{proof}
  For any codimension one vector space $H\supseteq V$, it holds that
  \[
  \min_{\substack{v \in V \\ \|v\| = 1}} \angle (u,v) \geq \min_{\substack{h \in H \\ \|h\|=1}} \angle (u,h).
  \]
  Thus, we have one direction of \eqref{eq:angleSpacesAsMaxOnHyperplane_1}:
	\[
	\min_{\substack{v \in V \\ \|v\| = 1}} \angle (u,v) 
	\geq \max_{ {H \supseteq V} \atop { \operatorname{codim}(H) = 1}} \min_{\substack{h \in H \\ \|h\|=1}} \angle (u,h).
	\]
	Note that if $u \in V$, \eqref{eq:angleSpacesAsMaxOnHyperplane_1}
    is trivially satisfied as both sides are $0$.  Hence, for the
    other direction of \eqref{eq:angleSpacesAsMaxOnHyperplane_1}, we consider $u \notin V$.
    Denote the vector in $V$ closest to $u$ by $\pi_{V} (u) = \arg
    \min_{v\in V} (v-u)^2$. From the definition of the angle, it
    holds that
	\[
    \min_{\substack{v \in V \\ \|v\| = 1}} \angle (u, v)
    =  \angle (u, \pi_{V} (u))
    = \arcsin \frac{\|\pi_{V} (u) - u\|}{\|u\|}.
    \]
	Next, define
	\[
	w = u - \pi_{V} (u) \in V^{\perp}
	\]
    and let $W$ be the vector space spanned by $w$. Since $\omega \in
    V^{\perp}$, one has $V \subseteq W^\perp$. Moreover, $\pi_{W^\perp}
    (u) = \pi_{V} (u)$ because $u - \pi_{V} (u) = \omega \in
    W$. Hence,
    \begin{align*}
	  \min_{\substack{v \in V \\ \|v\| = 1}} \angle (u,v) &=  \angle (u, \pi_{V} (u)) 
      =  \angle (u, \pi_{W^\perp} (u)) 
	  =  \min_{\substack{h \in W^\perp \\ \|h\|=1}} \angle (u,h) \\
	  &\leq \max_{ {H \supseteq V} \atop { \operatorname{codim}(H) = 1}} \min_{\substack{h \in H \\ \|h\|=1}} \angle (u,h). 
	\end{align*}
	Finally,~\eqref{eq:angleSpacesAsMaxOnHyperplane_2} holds because 
	\begin{align*}
		\angle (U, V)  &= \max_{\substack{u \in U \\ \|u\| = 1}} \min_{\substack{v \in V \\ \|v\| = 1}}  \angle (u,v) \\
		&=  \max_{\substack{u \in U \\ \|u\| = 1}} \max_{ {H \supseteq V} \atop { \operatorname{codim}(H) = 1}}   \min_{\substack{h \in H \\ \|h\|=1}} \angle (u,h) \\
		&= \max_{ {H \supseteq V} \atop { \operatorname{codim}(H) = 1}} \max_{\substack{u \in U \\ \|u\| = 1}}  \min_{\substack{h \in H \\ \|h\|=1}} \angle (u,h) \\
		&= \max_{ {H \supseteq V} \atop { \operatorname{codim}(H) = 1}} \angle (U,H). \qedhere
	\end{align*}
\end{proof}

\subsection{Angle between a triangle and a tangent space to the manifold}

Recall that Lemmas~\ref{lemma:angleBetweenEdgeAndTangentPlane}
and~\ref{lemma:angleBetweenTriangleAndTangentPlane} bound the angle
between the affine space spanned by a $1$- or $2$-simplex and a
tangent space to the manifold $\M$ in terms of the reach of $\M$. In this
section, we generalize these results by replacing the dependency on
the reach with a dependency on the local feature size. Additionally,
we relax the assumptions that $\M$ has codimension one and is
$C^2$. For that, assume $\M$ is a compact submanifold of $\Rspace^d$
with positive reach, from which it follows that $\M$ is
$C^{1,1}$~\cite[Proposition 1.4]{lytchak2005almost}.
Recall from~\cite{amenta1999surface} that the \emph{local feature size} at a
point $x$ on $\M$ is $\lfs{x} = d(x,\MA{\M})$ and that $\Reach{\M}
\leq \lfs{x}$ for all $x \in \M$.

\begin{lemma}
  \label{lemma:angleBetweenEdgeAndTangentPlane-lfs}
  Let \M be a compact submanifold of $\Rspace^d$ with positive reach.
	If $ab$ is a non-degenerate edge with $a,b \in \M$, then
	\[
	\sin \angle \Aff{ab}, \Tangent a \M \, \leq \,  \frac{\|b-a\|}{2\,\lfs{a}}.
	\]
\end{lemma}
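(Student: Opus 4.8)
The plan is to adapt the proof of the reach version, Lemma~\ref{lemma:angleBetweenEdgeAndTangentPlane}, replacing $\Reach{\M}$ by the possibly larger quantity $\lfs{a}=d(a,\MA{\M})$; the single genuinely new ingredient is that the ball of radius $\lfs{a}$ tangent to $\M$ at $a$ is empty. Throughout one only needs that $\M$ is a compact submanifold with positive reach (in particular the normal space $\Normal a \M$ may have dimension larger than one, but this causes no trouble).

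First I would dispose of the trivial case: if $b-a$ is tangent to $\M$ at $a$ then $\angle(\Aff{ab},\Tangent a \M)=0$ and there is nothing to prove. Otherwise, decompose $b-a=v_T+w$ with $v_T$ tangent to $\M$ at $a$ and $w\neq 0$ normal to $\M$ at $a$, and set $u=w/\|w\|$, a unit normal vector of $\M$ at $a$. Since $\Aff{ab}$ is a line with direction $b-a$, Property~\ref{property:complementary-angles} applied to the direction of $\Aff{ab}$ (which is one-dimensional) and the direction of $\Tangent a \M$ gives $\sin\angle(\Aff{ab},\Tangent a \M)=\cos\angle(\Aff{ab},\Normal a \M)=\|w\|/\|b-a\|=\langle b-a,u\rangle/\|b-a\|$. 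Hence it suffices to prove the scalar estimate $\langle b-a,u\rangle\le \|b-a\|^2/(2\lfs{a})$.

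The crux, and the step I expect to be the only delicate point, is to show that the open ball $B(a+\lfs{a}\,u,\ \lfs{a})$ contains no point of $\M$. I would argue as follows: for every $t\in[0,\lfs{a})$ the point $a+tu$ is at distance $t<\lfs{a}=d(a,\MA{\M})$ from $a$, hence at distance at least $\lfs{a}-t>0$ from $\MA{\M}$, so it is not on the medial axis and has a unique nearest point on $\M$; this nearest point is $a$ --- for $t<\Reach{\M}$ because $a+tu$ lies in the fiber over $a$ of the normal-segment partition of a reach-neighborhood of $\M$ recalled in the preliminaries, and for all $t<\lfs{a}$ because the foot point of a point moving along a normal ray cannot change before the ray meets the medial axis (a standard property of the local feature size; cf.\ \cite{amenta1999surface} and the reach version). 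Consequently $B(a+tu,t)^\circ\cap\M=\emptyset$ for every $t\in[0,\lfs{a})$, and since $\bigcup_{0\le t<\lfs{a}}B(a+tu,t)^\circ=B(a+\lfs{a}\,u,\ \lfs{a})^\circ$, the ball $B(a+\lfs{a}\,u,\ \lfs{a})^\circ$ is disjoint from $\M$. In the reach version the analogous empty ball has radius $\Reach{\M}$; the only point here is that it may be inflated up to radius $\lfs{a}$.

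Finally, since $b\in\M$ we get $b\notin B(a+\lfs{a}\,u,\ \lfs{a})^\circ$, i.e.\ $\|b-a-\lfs{a}\,u\|^2\ge \lfs{a}^2$; expanding and simplifying gives $\|b-a\|^2-2\lfs{a}\langle b-a,u\rangle\ge 0$, that is $\langle b-a,u\rangle\le \|b-a\|^2/(2\lfs{a})$. Combined with the identity $\sin\angle(\Aff{ab},\Tangent a \M)=\langle b-a,u\rangle/\|b-a\|$ from the second step, this yields $\sin\angle(\Aff{ab},\Tangent a \M)\le \|b-a\|/(2\lfs{a})$, as claimed. Apart from the flagged foot-point fact, every step is elementary algebra.
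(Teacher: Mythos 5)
Your proof is correct and follows the same strategy as the paper's one-line argument --- the empty tangent ball of radius $\lfs{a}$ at $a$ --- just with the ``worst case: $ab$ is a chord'' picture replaced by the equivalent algebraic expansion of $\|b-a-\lfs{a}\,u\|^2\ge\lfs{a}^2$. The foot-point fact you flag as the delicate step is precisely what the paper's terse proof silently invokes when it asserts that $b$ lies on or outside every sphere of radius $\lfs{a}$ tangent to $\M$ at $a$, and your justification of it is sound.
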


\begin{proof}
  Point $b$ lies on or outside the set of spheres of radius $\lfs{a}$ that are tangent to $\M$ at $a$. In the worst case, where $b$ lies on one of these spheres, 
  $ab$ is a chord of that sphere.
\end{proof}

Let $\rcirc{\tau}$ designate the radius of the smallest $(d-1)$-sphere circumscribing a simplex $\tau$. 

\begin{lemma}
  \label{lemma:angleBetweenTriangleAndTangentPlane-lfs}
  Let \M be a compact submanifold of $\Rspace^d$ with positive reach.
 	If $abc$ is a non-degenerate triangle with
	longest edge $bc$, for $a,b,c \in \M$, then
	\begin{align*}
		\sin \angle \Aff{abc}, \Tangent a \M \, &\leq \, \frac{\rcirc{abc}}{\lfs{a}} && \text{if $abc$ is an obtuse triangle and} \\
		\sin \angle \Aff{abc}, \Tangent a \M \, &\leq \,  \frac{\sqrt{3}\, \rcirc{abc}}{\lfs{a}} && \text{if $abc$ is an acute triangle.}
	\end{align*}
	If $abc$ is obtuse, the bound is tight and happens when $\M$ is a
    $k$-sphere of radius $\reach$ with $2 \leq k \leq d-1$.
\end{lemma}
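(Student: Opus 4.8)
The plan is to reduce the bound on the angle between the $2$‑plane $\Aff{abc}$ and the $k$‑plane $\Tangent a\M$ to the one‑dimensional estimate of Lemma~\ref{lemma:angleBetweenEdgeAndTangentPlane-lfs} for the two edges issuing from $a$, and then exploit that $bc$ is the longest edge. Translate $a$ to the origin, set $e_1=b-a$, $e_2=c-a$, $\ell_i=\|e_i\|$, and let $\theta=\angle(e_1,e_2)$ be the triangle's angle at $a$; write $U=\operatorname{span}(e_1,e_2)$ for the direction space of $\Aff{abc}$, $N=(\Tangent a\M)^\perp$ for the normal space at $a$, and $\pi_N$ for orthogonal projection onto $N$. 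By the definition of the angle between flats recalled above, $\sin\angle(\Aff{abc},\Tangent a\M)=\max_{u\in U,\ \|u\|=1}\|\pi_N u\|$. For a unit vector $u=\alpha e_1+\beta e_2\in U$ the triangle inequality gives $\|\pi_N u\|\le|\alpha|\,\|\pi_N e_1\|+|\beta|\,\|\pi_N e_2\|$, and applying Lemma~\ref{lemma:angleBetweenEdgeAndTangentPlane-lfs} to the edges $ab$ and $ac$ yields $\|\pi_N e_i\|=\ell_i\sin\angle(\Aff{ab\text{ or }ac},\Tangent a\M)\le \ell_i^2/(2\lfs a)$. Hence $\|\pi_N u\|\le\frac1{2\lfs a}\bigl(A\ell_1+B\ell_2\bigr)$ with $A:=|\alpha|\ell_1\ge0$ and $B:=|\beta|\ell_2\ge0$.

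Next I would feed in the normalization $\|u\|^2=\alpha^2\ell_1^2+2\alpha\beta\ell_1\ell_2\cos\theta+\beta^2\ell_2^2=1$, which rewrites as $A^2+B^2=1-2\alpha\beta\ell_1\ell_2\cos\theta\le 1+2AB|\cos\theta|$; thus $(A,B)$ lies in the solid ellipse $\{A^2+B^2-2AB|\cos\theta|\le1\}$, which is genuinely an ellipse since $|\cos\theta|<1$ ($abc$ being non‑degenerate). Maximizing the linear form $A\ell_1+B\ell_2$ over this ellipse is a one‑line quadratic‑form computation and equals $\frac1{\sin\theta}\sqrt{\ell_1^2+2\ell_1\ell_2|\cos\theta|+\ell_2^2}$. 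Therefore
\[
  \sin\angle(\Aff{abc},\Tangent a\M)\ \le\ \frac{\sqrt{\ell_1^2+2\ell_1\ell_2|\cos\theta|+\ell_2^2}}{2\,\lfs a\,\sin\theta}.
\]

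Now I would use that $bc$ is the longest edge, so $\theta$ is the largest angle of $abc$ and hence $\theta\ge\pi/3$, together with the law of sines $\|b-c\|=2\rho\sin\theta$ where $\rho:=\rcirc{abc}$ — first noting that the smallest $(d-1)$‑sphere through $a,b,c$ has radius equal to the circumradius of the triangle, since spheres through $a,b,c$ have centers on the flat of points equidistant from the three, whose distance to each is minimized at the circumcenter. If $abc$ is obtuse then $\theta\ge\pi/2$, so $|\cos\theta|=-\cos\theta$ and the numerator equals $\sqrt{\ell_1^2-2\ell_1\ell_2\cos\theta+\ell_2^2}=\|b-c\|=2\rho\sin\theta$ by the law of cosines, giving the bound $\rho/\lfs a$. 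If $abc$ is acute then $0<\cos\theta\le\frac12$, and from $\|b-c\|^2=\ell_1^2+\ell_2^2-2\ell_1\ell_2\cos\theta\ge\ell_1^2+\ell_2^2-\ell_1\ell_2\ge\ell_1\ell_2$ one gets $\ell_1^2+2\ell_1\ell_2\cos\theta+\ell_2^2=\|b-c\|^2+4\ell_1\ell_2\cos\theta\le\|b-c\|^2+2\ell_1\ell_2\le3\|b-c\|^2=12\rho^2\sin^2\theta$, giving the bound $\sqrt3\,\rho/\lfs a$. For tightness in the obtuse case, take $\M$ a round $k$‑sphere $S$ of radius $\reach$ centered at $O$ (so $\lfs a=\reach$ everywhere) and any triangle $abc$ inscribed in $S$: then $\Aff{abc}$ is the plane of the circumscribed circle $\Gamma=\Aff{abc}\cap S$, whose center $c_\Gamma$ satisfies $(c_\Gamma-O)\perp\Aff{abc}$ and $\|a-c_\Gamma\|=\rho$, so decomposing the unit normal $n=(a-O)/\reach$ along $U$ and $U^\perp$ gives $\|\pi_U n\|=\rho/\reach$, i.e. $\sin\angle(\Aff{abc},\Tangent a S)=\rho/\lfs a$, matching the bound.

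I expect the only delicate points to be the sign bookkeeping in passing to the variables $A,B$ (and checking the resulting quadratic form is positive definite), together with the acute case, where reaching the clean constant $\sqrt3$ hinges precisely on the chain ``longest edge $\Rightarrow\theta\ge\pi/3\Rightarrow\cos\theta\le\frac12$'' combined with $\ell_1\ell_2\le\|b-c\|^2$; each individual inequality is routine, but the estimate must be arranged so that nothing is lost beyond what the round sphere configuration already forces.
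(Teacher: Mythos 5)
Your proof is correct, and it follows a genuinely different route from the paper's, even though both start from the edge bound of Lemma~\ref{lemma:angleBetweenEdgeAndTangentPlane-lfs} and both arrive at the same intermediate quantity $\frac{1}{2\lfs{a}\sin\theta}\sqrt{\ell_1^2+2\ell_1\ell_2|\cos\theta|+\ell_2^2}$. The paper first reduces to codimension-one hyperplanes $H\supseteq\Tangent a \M$ via Lemma~\ref{lemma:angleSpacesAsMaxOnHyperplane}, computes $\sin\angle(\Aff{abc},H)$ \emph{exactly} as a Gram--Schmidt expansion in the directional cosines $\vec u\cdot\vec n$, $\vec v\cdot\vec n$, and then maximizes that homogeneous quadratic form over a box (the constraint rectangle), with the sign of $\cos A$ selecting the extremal corner; in the acute case the extremal corner is interpreted geometrically through the reflected point $b'=2a-b$, and the constant $\sqrt3$ comes from a stated-but-unproved ratio bound $\rcirc{ab'c}/\rcirc{abc}\leq\sqrt3$. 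You instead work directly with the full normal space $N=(\Tangent a \M)^\perp$ with no hyperplane reduction, pass through the triangle inequality $\|\pi_N u\|\leq|\alpha|\|\pi_Ne_1\|+|\beta|\|\pi_Ne_2\|$ (which is potentially lossy when $\operatorname{codim}\M\geq2$, but one-sided and therefore harmless, and becomes exact in the codimension-one and round-sphere cases relevant to the bound and its tightness), and then optimize a linear form over the ellipse $A^2+B^2-2AB|\cos\theta|\leq1$ via the $\sqrt{c^{\mathsf T}Q^{-1}c}$ formula; your acute case is closed with the elementary chain $\cos\theta\leq\frac12$, $\ell_1\ell_2\leq\|b-c\|^2$. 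What each buys: the paper's version keeps every step sharp all the way to the circumradius-ratio claim, which would matter if one wanted the optimal constant for general (not just obtuse) triangles; your version is shorter, avoids the hyperplane-reduction lemma entirely, and, in effect, supplies a line-by-line proof of the ratio bound that the paper only asserts.
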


\begin{proof}
  Let $A = \angle cab$, $\vec{u} = \frac{b-a}{\|b-a\|}$, and $\vec{v}
  = \frac{c-a}{\|c-a\|}$. Let $H$ be any hyperplane that contains
  $\Tangent a \M$ and let $\vec{n}$ be an unit vector orthogonal to
  $H$. By construction, $\vec{n}$ is orthogonal to \M at
  $a$. Introduce the two angles
  \begin{align*}
    \theta_u &= \arccos(\vec{u} \cdot \vec{n}), \\
    \theta_v &= \arccos(\vec{v} \cdot \vec{n}).
  \end{align*}
  Thanks to Lemma \ref{lemma:angleBetweenEdgeAndTangentPlane-lfs}, we have that
    \begin{align}
      \left| \cos \theta_u \right| &= \left| \vec{u} \cdot \vec{n} \right| = \sin \angle \Aff{ab}, H \leq \frac{ \| b-a \| }{2 \, \lfs{a}}, \label{eq.UDotN}\\
    \left| \cos \theta_v \right| &= \left| \vec{v} \cdot \vec{n} \right| = \sin \angle \Aff{ac}, H\leq \frac{ \| c-a \| }{2 \, \lfs{a}}.\label{eq.VDotN}
  \end{align}
 Consider the vector
  \[
  \vec{w} = \frac{\vec{v} - (\vec{u} \cdot \vec{v})\, \vec{u}}{\left\| \vec{v} - (\vec{u} \cdot \vec{v})\, \vec{u} \right\| }
  = \frac{1}{\sin A} \left( \vec{v} - \cos A \, \vec{u} \right).
  \]
  By construction, $\vec{w}$ is a unit vector parallel to the affine
  plane spanned by triangle $abc$ and orthogonal to $\vec{v}$. It
  follows that $(\vec{u}, \vec{w} )$ is an orthonormal basis of the
  vector plane supporting the triangle $abc$ and we can write
  \begin{align*}
    \sin \angle \Aff{abc}, H  &= \max_{t\in [0,2\pi]}  (\cos t\, \vec{u} + \sin t\, \vec{w}) \cdot \vec{n}  \\
    &= \max_{t\in [0,2\pi]}   (\vec{u} \cdot \vec{n}) \cos t + (\vec{w} \cdot \vec{n} ) \sin t .
  \end{align*}
  The maximum over $[0,2\pi]$ of the map $t \mapsto x \cos t + y \sin t$ is $\sqrt{x^2+ y^2}$. Thus,
  \begin{align*}
        \left(\sin \angle \Aff{abc},H \right)^2 &=  (\vec{u} \cdot \vec{n})^2 +  (\vec{w} \cdot \vec{n})^2 \\
    &= \cos^2 \theta_u + \frac{1}{\sin^2 A} \left(\cos \theta_v - \cos A \cos  \theta_u  \right)^2 \\
    &= \frac{1}{\sin^2 A} \left(\cos^2 \theta_u +  \cos^2 \theta_v - 2 \cos A \cos  \theta_u  \cos \theta_v \right).
  \end{align*}
  For fixed $A$, the last expression is a homogeneous degree $2$
  polynomial in $\cos \theta_u$ and $\cos \theta_v$. Under the
  constraints \eqref{eq.UDotN} and \eqref{eq.VDotN}, its maximum is
  reached at one of the four corners of the constraint rectangle, i.e.
  when $\cos \theta_u = \pm \frac{ \| b-a \| }{2 \lfs{a}}$ and $ \cos
  \theta_v = \pm \frac{ \| c-a \| }{2 \lfs{a}}$. We distinguish two
  cases, depending on whether the triangle $abc$ is obtuse or acute.

  \medskip

  If $abc$ is acute, then $A \leq \frac{\pi}{2}$ and $\cos A \geq
  0$. It follows that the maximum of the polynomial is reached either when $\cos
  \theta_u = \frac{ \| b-a \| }{2 \lfs{a}}$ and $ \cos \theta_v =
  -\frac{ \| c-a \| }{2 \lfs{a}}$ or when $\cos \theta_u = -\frac{ \|
    b-a \| }{2 \lfs{a}}$ and $ \cos \theta_v = \frac{ \| c-a \| }{2
    \lfs{a}}$. Letting $b' = 2a - b$ be the symmetric point of $b$ with
  respect to $a$, we obtain
  \begin{align*}
      \left(\sin \angle \Aff{abc}, H \right)^2 
    &\leq  \frac{1}{4 \lfs{a}^2} \frac{1}{\sin^2 A} \left( \| b-a \|^2 +   \| c-a \|^2 + 2 \cos A  \| b-a \|   \| c-a \| \right) \\
    &=  \frac{1}{4 \lfs{a}^2} \frac{1}{\sin^2 A}  \| b'-c\|^2 \\
    &=  \frac{\rcirc{ab'c}^2}{\lfs{a}^2}.
  \end{align*}
  Since the last equation holds for any hyperplane $H$ that
  contains $\Tangent a \M$, it follows from
  \eqref{eq:angleSpacesAsMaxOnHyperplane_2} in Lemma
  \ref{lemma:angleSpacesAsMaxOnHyperplane} that
  \begin{align*}
    \left(\sin \angle \Aff{abc}, \Tangent a \M  \right)^2 \leq  
    \frac{\rcirc{ab'c}^2}{\lfs{a}^2}.
  \end{align*}
  Under the constraint that the longest edge in the triangle $abc$ is
  $bc$, the ratio $\frac{\rcirc{ab'c}}{\rcirc{abc}}$ is maximized when
  $abc$ is an equilateral triangle, in which case the ratio is equal
  to $\sqrt{3}$. The bound for acute triangles follows.
  
   \medskip

  If $abc$ is obtuse, then $A \geq \frac{\pi}{2}$ since $bc$ is the longest edge. Therefore, $\cos A \leq
  0$ and the maximum of the polynomial is reached either when $\cos
  \theta_u = \frac{ \| b-a \| }{2 \lfs{a}}$ and $\cos \theta_v = \frac{
  	\| c-a \| }{2 \lfs{a}}$ or when $\cos \theta_u = -\frac{ \| b-a \|
  }{2 \lfs{a}}$ and $\cos \theta_v = -\frac{ \| c-a \| }{2 \lfs{a}}$.
  We obtain that
  \begin{align*}
  	\left(\sin \angle \Aff{abc}, H \right)^2 
  	&\leq  \frac{1}{4 \lfs{a}^2} \frac{1}{\sin^2 A} \left( \| b-a \|^2 +   \| c-a \|^2 - 2 \cos A  \| b-a \|   \| c-a \| \right) \\
  	&=  \frac{1}{4 \lfs{a}^2} \frac{1}{\sin^2 A}  \| b-c \|^2 \\
  	&=  \frac{\rcirc{abc}^2}{\lfs{a}^2},
  \end{align*}
  where we used the formula $\rcirc{abc} = \frac{\| b-c \|}{2 \sin A}$
  to obtain the last equality. As for the acute case, we get that:
 \begin{align*}
  	\left(\sin \angle \Aff{abc}, \Tangent a \M \right)^2 
        \leq   \frac{\rcirc{abc}^2}{\lfs{a}^2},
  \end{align*}
  
 To see that the bound is tight in the obtuse case, consider a
 $k$-sphere of radius $\reach$ and center $o$ with $2 \leq k
 \leq d-1$ as $\M$. Let $o'$ be the center of the smallest
 $2$-sphere circumscribing $abc$. Then, the
 angle $\theta$ between the lines $oa$ (or $ob$ or $oc$) and $oo'$ is
 \[
 \theta
 = \angle \Aff{abc}, \Tangent a \M
 = \angle \Aff{abc}, \Tangent b \M
 = \angle \Aff{abc}, \Tangent c \M
 \]
  and $\sin \theta$  is precisely the bound $\frac{\rho(abc)}{\reach}=\frac{\rho(abc)}{\lfs{a}}$.
\end{proof}

\subsection{Difference between angular deviations at vertices}

Consider a simplex $\tau$ with vertices on $\M$. In this section, we
bound the variation of the angle $\angle(\Aff \tau, \Tangent{a} \M)$
as $a$ varies over the vertices of $\tau$.  Recall that $\rcirc{\tau}$
denotes the radius of the smallest $(d-1)$-sphere circumscribing the
simplex $\tau$.

\begin{lemma}
  \label{lemma:difference-max-min-angular-deviation-noiseless}
  Let $0 < i < d$. For any $i$-simplex $\tau$ whose vertices lie
  on $\M$ with $\rcirc{\tau} < \reach$,
  \begin{equation}
      \max_{a \in\tau} \angle(\Aff \tau, \Tangent {a} \M)
      -
      \min_{a \in\tau} \angle(\Aff \tau, \Tangent {a} \M)
      \leq 2 \arcsin{\left( \frac{\rcirc{\tau}}{\reach} \right)}.
  \end{equation}
\end{lemma}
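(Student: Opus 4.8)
The plan is to bound the left-hand side by the angle $\angle(\Tangent a \M, \Tangent b \M)$ between the tangent hyperplanes of $\M$ at two well-chosen vertices $a,b$ of $\tau$, and then to control that angle through the reach. Since $0 < i < d$, the simplex $\tau$ has at least two vertices; among its (finitely many) vertices I pick $a$ realizing $\max_{v \in \tau} \angle(\Aff \tau, \Tangent v \M)$ and $b$ realizing $\min_{v \in \tau} \angle(\Aff \tau, \Tangent v \M)$.

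First I would record a triangle inequality for the (in general non-symmetric) angle between subspaces: for affine spaces $X,Y,Z \subseteq \Rspace^d$, one has $\angle(X,Z) \leq \angle(X,Y) + \angle(Y,Z)$. It suffices to prove this for the associated vector spaces $U,V,V'$. Fix a unit vector $u \in U$, let $v_0 \in V$ be a unit vector realizing $\min_{v \in V,\,\|v\|=1} \angle(u,v)$ and $v_0' \in V'$ a unit vector realizing $\min_{v \in V',\,\|v\|=1} \angle(v_0,v)$. The triangle inequality for the geodesic distance on the unit sphere gives $\angle(u,v_0') \leq \angle(u,v_0) + \angle(v_0,v_0') \leq \angle(U,V) + \angle(V,V')$, so $\min_{v \in V',\,\|v\|=1} \angle(u,v) \leq \angle(U,V) + \angle(V,V')$; maximizing over unit $u \in U$ proves the claim. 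Applying it to $\Aff \tau$, $\Tangent b \M$, $\Tangent a \M$ (via their direction spaces), and using that $\angle(\Tangent b \M, \Tangent a \M) = \angle(\Tangent a \M, \Tangent b \M)$ since these hyperplanes are equidimensional, we get
\begin{align*}
  \max_{v \in \tau} \angle(\Aff \tau, \Tangent v \M) - \min_{v \in \tau} \angle(\Aff \tau, \Tangent v \M)
  &= \angle(\Aff \tau, \Tangent a \M) - \angle(\Aff \tau, \Tangent b \M) \\
  &\leq \angle(\Tangent a \M, \Tangent b \M) .
\end{align*}

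It remains to bound $\angle(\Tangent a \M, \Tangent b \M)$. Both $a$ and $b$ lie on the smallest $(d-1)$-sphere circumscribing $\tau$, of radius $\rcirc{\tau}$, so $\|a-b\| \leq 2\rcirc{\tau}$; in particular $\|a-b\| < 2\reach$, since $\rcirc{\tau} < \reach$. I would then invoke the classical estimate on the variation of tangent spaces of a set with reach at least $\reach$ — namely $\sin\!\bigl(\tfrac12\angle(\Tangent a \M, \Tangent b \M)\bigr) \leq \|a-b\|/(2\reach)$, valid when $\|a-b\| < 2\reach$; see, e.g., \cite{boissonnat2018geometric} — which gives
\[
\angle(\Tangent a \M, \Tangent b \M) \;\leq\; 2\arcsin\!\Bigl(\frac{\|a-b\|}{2\reach}\Bigr) \;\leq\; 2\arcsin\!\Bigl(\frac{\rcirc{\tau}}{\reach}\Bigr),
\]
the arcsines being defined because $\rcirc{\tau}/\reach < 1$. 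Together with the displayed inequality above, this is exactly the claimed bound. I expect the main obstacle to be that both ingredients sit just outside the machinery assembled so far: the subspace triangle inequality is short but must be handled carefully because $\angle$ is asymmetric when the two spaces have unequal dimensions, while the tangent-space variation estimate is the genuinely geometric input — it is classical for sets of positive reach and is precisely what forces the factor $2$ and the $\arcsin$ in the statement.
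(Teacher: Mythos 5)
Your proposal is correct and follows essentially the same argument as the paper: the paper likewise bounds the difference by $\angle(\Tangent a \M, \Tangent b \M)$ via the triangle inequality for subspace angles and then applies the tangent-space variation bound of \cite[Corollary 3]{boissonnat2019reach} together with $\|a-b\| \leq 2\rcirc{\tau}$. You simply spell out two points the paper leaves implicit — the proof of the triangle inequality for the (asymmetric) angle between subspaces, and the observation that $\angle(\Tangent a \M, \Tangent b \M) = \angle(\Tangent b \M, \Tangent a \M)$ because the tangent hyperplanes have equal dimension — both of which are worth checking but constitute the same route.
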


\begin{proof}
  By \cite[Corollary 3]{boissonnat2019reach}, for all $m,m' \in \M$, we have $\angle (\Tangent m \M, \Tangent {m'} \M ) \leq 2 \arcsin \left( \frac{\|m'-m\|}{2\reach} \right)$.   Consider two vertices $a,b \in\tau$.
Since $\|a-b\| \leq 2\rcirc{\tau}$, it holds that
  \begin{equation*}
    \label{eq:angular-deviation-c-x}
    \angle (\Tangent {b} \M, \Tangent {a} \M )
    \leq 2 \arcsin\left( \frac{\rcirc{\tau}}{\reach} \right).
  \end{equation*}
  The result follows by combining the above inequality and the triangular inequality:
  \[
  \angle ( \Aff \tau, \Tangent {b} \M ) - \angle ( \Aff \tau, \Tangent {a} \M )
  \, \leq \,  \angle (\Tangent {b} \M, \Tangent {a} \M ) \,\leq\, 2 \arcsin\left( \frac{\rcirc{\tau}}{\reach} \right). \qedhere 
  \]
\end{proof}

\clearpage
\section{Omitted proofs of Section \ref{section:vertically-convex-simplicial-complexes}}
\label{appendix:vertically-convex-simplicial-complexes}

In this section, we provide the proofs omitted in Section~\ref{section:vertically-convex-simplicial-complexes}, restating the statements for clarity.

\subsection{Upper and lower skins}

\SkinsHomeomorphicToManifold*

\begin{proof}
  It suffices to show that $\UpperSkin {\US{K}}$ is a closed set and that the
  mapping $\pi_\M : \UpperSkin {\US{K}} \to \N$ is a homeomorphism. This
  result will automatically extend to $\LowerSkin {\US{K}}$ because the lower
  skin of $\US{K}$ becomes the upper skin of ${\US{K}}$ when the orientation of
  ${\US{K}}$ is reversed.
  	  	
  We begin by proving that $\UpperSkin {\US{K}}$ is a closed set. To do this,
  we define a set $A$, which is closed by construction, and show that
  $A = \UpperSkin {\US{K}}$. Let the set of points lying above ${\US{K}}$ in the
  offset region $\Offset \M \rtube$ be
  \[
  \AboveSet {\US{K}} = \bigcup_{m\in \N} [\up {\US{K}} m, m + \rtube \normal{m}]
  \]
  and set
  \[
  A = \Closure{\AboveSet {\US{K}}} \cap |K|.
  \]
  $A$ is the intersection of two closed sets and hence it is closed. Recalling that
  \[
  \US{K}  = \bigcup_{m \in \N} [\low {\US{K}} m, \up {\US{K}} m],
  \]
  we observe that the sets $\AboveSet {\US{K}}$ and $\US{K}$ intersect at a
  single point, $\up {\US{K}} m$, along each normal segment $[m - \rtube
    \normal{m}, m + \rtube \normal{m}]$. Thus, by construction,
  $\UpperSkin {\US{K}} = \AboveSet {\US{K}} \cap \US{K}$ and hence $\UpperSkin {\US{K}}
  \subseteq A$.
  	
  To prove the reverse inclusion, we first show that each normal
  segment $\Normal m \M \cap B(m, \rtube)$ intersects $A$ at exactly
  one point. We proceed in three stages.

  \begin{figure}[htb]
  	\centering\includegraphics{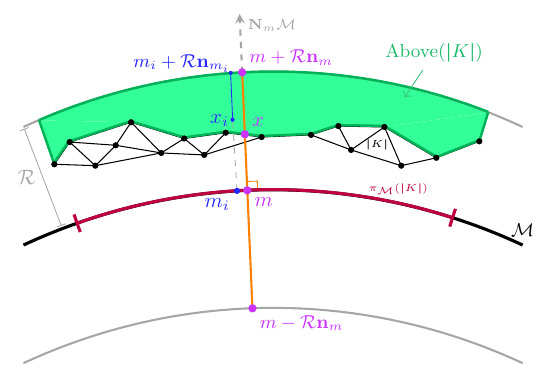}
  	\caption{Schematics of the construction in Stage one.}
  	\label{fig:appBStage}
  \end{figure}

  \smallskip\noindent{\bf\underline{Stage one.}} We establish the following implication:
  \[
  x \in \Closure{\AboveSet {\US{K}}} \implies [x,m + \rtube \normal{m}] \subseteq \Closure{\AboveSet {\US{K}}}.
  \]
  Consider a point $x \in \Closure{\AboveSet {\US{K}}}$, implying the
  existence of a sequence of points $(x_i)_{i\in \Nintegers}$ in
  $\AboveSet {\US{K}}$ that converges to $x$ as $i \to +\infty$. Let $m =
  \pi_{\M}(x)$ and $m_i = \pi_{\M}(x_i)$, and consider the sequence of
  segments $([x_i, m_i+ \rtube \normal{m_i}])_{i \in \Nintegers}$ (see
  Figure~\ref{fig:appBStage} for a depiction). Each segment $[x_i,
    m_i+ \rtube \normal{m_i}]$ in the sequence belongs to $\AboveSet
  {\US{K}}$ since, given $x_i \in \AboveSet {\US{K}}$, it holds that  
  \[
    [x_i, m_i+ \rtube \normal{m_i} ] \subseteq [\up{K}{m_i} , m_i+ \rtube \normal{m_i} ] \subseteq \AboveSet {\US{K}}.
  \]
  Furthermore, we show that the sequence of segments $([x_i, m_i+
    \rtube \normal{m_i}])_{i\in \Nintegers}$ converges in Hausdorff
  distance to $[x, m+\rtube \normal{m}]$, completing the proof. It is
  enough to show the convergence of the endpoints of segments and
  since $\lim_{i \to +\infty} x_i = x$, we only need to show that
  $\lim_{i \to +\infty} (m_i + \rtube \normal{m_i}) = m + \rtube
  \normal{m}$. Let $f: \AboveSet {\US{K}} \to \AboveSet {\US{K}}$ be defined by $f(z) =
  \pi_\M(z) + \rtube \normal{\pi_\M(z)}$. Both $\pi_\M: \AboveSet{\US{K}}  \to \M$ and $\mathbf{n} : \M \to \normal m$ are continuous maps,
  hence $f$ is continuous and $\lim_{i \to +\infty} f(x_i) =
  f(x)$. Thus, $\lim_{i \to +\infty} (m_i + \rtube \normal{m_i}) = m +
  \rtube \normal{m}$.
       
  \smallskip\noindent{\bf\underline{Stage two.}} We show that for all
  $m \in \N$, the set $A \cap \Normal m \M \cap B(m,\rtube)$ is
  connected. Consider two points $x, y \in A \cap \Normal m \M \cap
  B(m,\rtube)$. Since $x, y \in \US{K} \cap \Normal m \M \cap
  B(m,\rtube)$, the vertical convexity assumption guarantees that $[x,y]
  \in \US{K} \cap \Normal m \M \cap B(m,\rtube)$. Since $x, y \in
  \Closure{\AboveSet {\US{K}}} \cap \Normal m \M \cap B(m,\rtube)$, the previous
  stage guarantees that $[x,y] \in \Closure{\AboveSet {\US{K}}} \cap \Normal m
  \M \cap B(m,\rtube)$. Therefore, $[x,y]$ belongs to $A \cap \Normal m
  \M \cap B(m,\rtube)$.

  \smallskip\noindent{\bf\underline{Stage three.}}  We show that $A
  \cap \Normal m \M \cap B(m,\rtube)$ is reduced to a single point for
  all $m \in \N$. In the previous stage, we established that $A \cap
  \Normal m \M \cap B(m,\rtube)$ is connected, thus forming a
  segment. Assume, for a contradiction, that the endpoints $x$ and $y$
  of this segment are distinct. For each $z \in [x,y]$, let $\sigma_z$
  denote the unique simplex of ${\US{K}}$ containing $z$ in its relative
  interior. Since $A \subseteq \partial \US{K} = \US{\partial K}$,
  $\sigma_z$ belongs to the boundary complex $\partial {\US{K}}$ of ${\US{K}}$ and
  has $\dim \sigma_z < d$ for all $z \in [x, y]$. Given that $x \neq
  y$, there exists $z\in [x,y]$ with $\dim \sigma_z > 0$. Since
  $\sigma_z$ is a boundary simplex of ${\US{K}}$ containing $[x,y]$ locally
  around $z$, it must be vertical with $0 < \dim \sigma_z < d$,
  contradicting our assumption that ${\US{K}}$ contains no vertical
  $i$-simplices for $0 < i < d$.

  \smallskip

	From stage three, $A$ intersects $[m - \rtube\normal{m}, m
		+ \rtube \normal{m}]$ at a single point for all $m \in \N$.
		By construction, so does $\UpperSkin {\US{K}}$. Since $\UpperSkin {\US{K}}
		\subseteq A$, it follows that $\UpperSkin {\US{K}} = A $ along the
		normal segment $[m - \rtube \normal{m}, m + \rtube \normal{m}]$
		for all $m \in \N$. Hence, the upper skin of ${\US{K}}$ coincides with $A$ and is therefore closed.
		
  	We are now ready to conclude. The upper skin of ${\US{K}}$ is a closed
    set and, therefore, compact. The map $\pi_{\M}: \UpperSkin {\US{K}} \to
    \N$ is a continuous bijection from one compact set to another, and
    is thus a homeomorphism. Let us prove Equation \eqref{eq:boundary-equals-upper-and-lower}, that is,
    \[
    \partial \US{K} = \UpperSkin {\US{K}} \cup \LowerSkin {\US{K}}.
    \]
    By definition, the boundary of $\US{K}$ is the set $\partial \US{K} = \US{K}
    \cap \Closure{ \Rspace^d \setminus \US{K}}$.

    Introducing the following sets (see Figure~\ref{fig:appB_schematics})
  \begin{align*}
    \BelowSet{{\US{K}}} &= \bigcup_{m \in \N} [m - \rtube \normal m, \low {\US{K}} m] \,\,\,\,\text{and}\\
    \Side{{\US{K}}} &= \big\{ x \in  \Offset \M \rtube \mid \pi_\M (x) \notin \N \big\},
  \end{align*}
  we can write
  \[
  \Rspace^d \setminus \US{K} = \AboveSet {\US{K}} \cup \BelowSet {\US{K}} \cup (\Rspace^d \setminus \Offset \M \rtube) \cup \Side{\US{K}} .
  \]
  
  \begin{figure}[htb]
  	\centering
  	\includegraphics{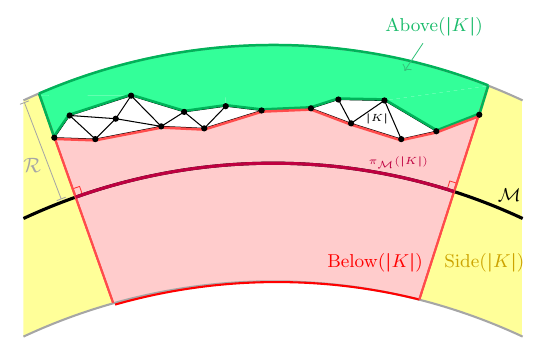}
  	\caption{The different regions in $\Offset \M \rtube$.}
 	\label{fig:appB_schematics}
  \end{figure}
  
  Taking the closure on both sides, it follows that
  \[\arraycolsep=1.4pt\def\arraystretch{2.2}
  \begin{array}{rcl}
  \Closure{\Rspace^d \setminus \US{K}} &= & \Closure{\AboveSet {\US{K}}} \cup \Closure{\BelowSet {\US{K}}} \cup \\[-3ex]
  & & \cup \Closure{\Rspace^d \setminus \Offset \M \rtube}
  \cup \Closure{\Side{\US{K}}}.
  \end{array}\]
  Intersecting both sides of the equality with $\US{K}$ and noting
  that the third term in the union has an empty
  intersection with $\US{K}$, we obtain that
  \begin{align*}
    \partial \US{K} &= \US{K} \cap \Closure{ \Rspace^d \setminus \US{K}}\\
    &= \US{K} \cap \left(  \Closure{\AboveSet {\US{K}}} \cup \Closure{\BelowSet {\US{K}}} \cup \Closure{\Side {\US{K}}} \right)\\
    &= \left( \US{K} \cap  \Closure{\AboveSet {\US{K}}} \right) \cup \left( \US{K} \cap \Closure{\BelowSet {\US{K}}} \right)\\
    &= \UpperSkin {\US{K}} \cup \LowerSkin {\US{K}},
  \end{align*}
  which concludes the proof.
\end{proof}

\UpperEqualsLowerConsequence*

\begin{proof}
  Since $\UpperSkin {\US{K}} = \LowerSkin {\US{K}}$, it follows that
  $\low {\US{K}} m = \up {\US{K}} m$ for all $m \in \pi_\M(\US{K})$
  and $\US{K} \subseteq \UpperSkin {\US{K}} = \LowerSkin {\US{K}}$. By
  Lemma~\ref{lemma:upper-and-lower-skins-homeomorphic-to-manifold},
  \[
  \US{K} \subseteq \UpperSkin {\US{K}} \cup \LowerSkin {\US{K}} = \partial\US{K} = \US{\partial K}.
  \]
  Thus, $\US{K} \subseteq \US{\partial K}$ and therefore $K = \partial
  K$.
\end{proof}

By Lemma~\ref{lemma:upper-and-lower-skins-homeomorphic-to-manifold}, we already know
that the union of the upper and lower skins is a simplicial
complex. However, since their intersection is non-empty, it is unclear
whether each one of them separately is a simplicial complex as well.
Lemma \ref{lemma:upper-and-lower-complex-equal-boundary}
establishes exactly that, replacing each set in
Equation~(\ref{eq:boundary-equals-upper-and-lower}) by a simplicial
counterpart. Its proof relies on the next lemma.

\begin{lemma}
  \label{lemma:exclusive-upper-and-lower}
  Suppose that $K$ has a non-vertical skeleton and is vertically convex relative to
  \M. Then, for all simplices $\nu \in {{K}}$, the following two implications hold:
  \begin{enumerate}
  \item $\Interior{\Conv\nu} \cap \UpperSkin {\US{K}} \neq \emptyset$ $\implies$ $\Conv\nu \subseteq \UpperSkin {\US{K}}$;
  \item $\Interior{\Conv\nu} \cap \LowerSkin {\US{K}} \neq \emptyset$ $\implies$ $\Conv\nu \subseteq \LowerSkin {\US{K}}$.
  \end{enumerate}
\end{lemma}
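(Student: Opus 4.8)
The plan is to reduce the statement to a local claim about how $\UpperSkin{\US{K}}$ meets a single simplex near a point, and then to invoke the non-vertical skeleton hypothesis twice: once to control $\pi_\M$ on $\Conv\nu$ itself, and once — this is the crucial use — on a facet of an ambient $d$-simplex.

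First I would dispatch the extreme cases. If $\dim\nu = 0$ the implication is trivial, and if $\dim\nu = d$ it is vacuous: $\Interior{\Conv\nu}$ is then open in $\Rspace^d$ and contained in $\US{K}$, hence contained in the interior of $\US{K}$, so it is disjoint from $\partial\US{K} \supseteq \UpperSkin{\US{K}}$ by Lemma~\ref{lemma:upper-and-lower-skins-homeomorphic-to-manifold}. So I may assume $0 < \dim\nu < d$; then $\nu$ is non-vertical, so $\pi_\M$ is injective on $\Conv\nu$. Replacing the chosen unit normal field by its opposite swaps the two skins, so it suffices to prove implication~1. Assume $z \in \Interior{\Conv\nu} \cap \UpperSkin{\US{K}}$. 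Because $\UpperSkin{\US{K}}$ is closed (Lemma~\ref{lemma:upper-and-lower-skins-homeomorphic-to-manifold}) and $\Conv\nu = \Closure{\Interior{\Conv\nu}}$, it is enough to prove $\Interior{\Conv\nu} \subseteq \UpperSkin{\US{K}}$; and since $\Interior{\Conv\nu}$ is connected while $\UpperSkin{\US{K}} \cap \Interior{\Conv\nu}$ is nonempty and relatively closed, I only need to show this set is relatively open in $\Interior{\Conv\nu}$. Thus the whole lemma reduces to the local claim: if $z \in \Interior{\Conv\nu} \cap \UpperSkin{\US{K}}$, then some neighbourhood of $z$ in $\Interior{\Conv\nu}$ is contained in $\UpperSkin{\US{K}}$.

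To prove the local claim I would argue by contradiction: suppose there are points $z_j \in \Interior{\Conv\nu}$ with $z_j \to z$ and $z_j \notin \UpperSkin{\US{K}}$, and set $m_j = \pi_\M(z_j) \to m_z = \pi_\M(z)$. Since $z_j \in \US{K}$ lies on the normal over $m_j$ but is not its top point, the point $w_j := \up{\US{K}}{m_j}$ lies strictly above $z_j$ along $\normal{m_j}$; and since $m \mapsto \up{\US{K}}{m}$ is the inverse of the homeomorphism $\pi_\M \colon \UpperSkin{\US{K}} \to \pi_\M(\US{K})$ from Lemma~\ref{lemma:upper-and-lower-skins-homeomorphic-to-manifold}, it is continuous, so $w_j \to \up{\US{K}}{m_z} = z$. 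Picking a ball $B$ around $z$ small enough that $B \cap \US{K} = B \cap \bigcup_{\sigma \in \Star{\nu}{K}} \Conv\sigma$ (possible because $z \in \Interior{\Conv\nu}$), we get, for $j$ large, $w_j \in \Conv{\sigma_j}$ for some coface $\sigma_j$ of $\nu$; passing to a subsequence, $\sigma_j$ is a fixed $\sigma \in K$ with $\nu \subseteq \sigma$. As $\pi_\M$ is injective on $\Conv\nu$, the simplex $\Conv\nu$ meets the normal over $m_j$ only in $z_j$, so $w_j \notin \Conv\nu$; hence $\nu \subsetneq \sigma$ and $\dim\sigma \geq \dim\nu + 1 \geq 2$. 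Moreover $\dim\sigma$ cannot be strictly between $0$ and $d$, for then $\sigma$ would be non-vertical, contradicting that $z_j$ and $w_j$ are two distinct points of $\Conv\sigma$ with the same projection $m_j$; therefore $\dim\sigma = d$. Now comes the key step: since $w_j = \up{\US{K}}{m_j}$ belongs to $\Conv\sigma \subseteq \US{K}$, it is the topmost point of the convex body $\Conv\sigma$ along the normal over $m_j$, and therefore $w_j \in \partial\Conv\sigma$ — which, $\sigma$ being a $d$-simplex, is the union of the supports of its facets. Passing to a further subsequence, $w_j \in \Conv\mu$ for a fixed facet $\mu$ of $\sigma$; then $z = \lim_j w_j \in \Conv\mu$, and since $z$ lies in $\Interior{\Conv\nu}$ this forces $\nu \subseteq \mu$. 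But $w_j \notin \Conv\nu$ while $w_j \in \Conv\mu$, so $\mu \neq \nu$, whence $\dim\mu = d-1$ with $0 < d-1 < d$; thus $\mu$, being a face of $\sigma \in K$, is non-vertical. Yet $w_j \in \Conv\mu$ and $z_j \in \Conv\nu \subseteq \Conv\mu$ are distinct points sharing the projection $m_j$, contradicting the non-verticality of $\mu$. This contradiction establishes the local claim and hence the lemma; implication~2 follows by the normal-reversal symmetry.

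The step I expect to be the main obstacle is this last one. A priori $\US{K}$ could protrude above $\Conv\nu$ arbitrarily close to $z$ through some coface of $\nu$; the non-vertical skeleton hypothesis eliminates every coface of dimension $< d$, forcing the protrusion to come from a genuine $d$-simplex $\sigma$, and then pinning down which facet of $\sigma$ carries the offending point produces a $(d-1)$-face $\mu \supseteq \nu$ that is itself vertical, which the hypothesis again forbids. Everything else — the two reductions, the continuity of the upper-skin section, and the elementary fact that the extreme point of a convex body along a line lies on its boundary — is routine once Lemma~\ref{lemma:upper-and-lower-skins-homeomorphic-to-manifold} is available.
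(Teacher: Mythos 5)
Your proof is correct, but it follows a genuinely different route from the paper's.

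The paper works with tangent cones: for $x \in \Interior{\Conv\nu}$ it defines $\TanSpace(x,\US{K})$, observes $x \in \UpperSkin{\US{K}} \iff \normal{\pi_\M(x)} \notin \TanSpace(x,\US{K})$, and then, supposing $x_0\in\UpperSkin{\US{K}}$ and $x_1\notin\UpperSkin{\US{K}}$ both lie in $\Interior{\Conv\nu}$, uses continuity of $x\mapsto\normal{\pi_\M(x)}$ along the segment $[x_0,x_1]$ to produce an intermediate $x$ at which the normal hits $\partial\TanSpace(x,\US{K})$. That boundary is covered by the cones of star-simplices of dimension $<d$, yielding a vertical $\tau$ with $0<\dim\tau<d$. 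You instead take a sequence $z_j\to z$ with $z_j\notin\UpperSkin{\US{K}}$, pass to the corresponding top points $w_j=\up{\US{K}}{m_j}\to z$ via continuity of the upper-skin section, force $w_j$ into a fixed coface $\sigma$ of $\nu$ which by non-verticality of intermediate-dimensional simplices must be $d$-dimensional, then push $w_j$ onto $\partial\Conv\sigma$ and hence a fixed facet $\mu\supseteq\nu$ that is itself vertical. Both arguments isolate the same obstruction — a simplex of dimension strictly between $0$ and $d$ hit twice by one normal line — but the paper finds it by intermediate-value reasoning inside a single tangent cone, while you find it by a limiting argument exploiting that $\partial\Conv\sigma$ is a union of facets for a $d$-simplex. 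Your openness/closedness/connectedness reduction also replaces the paper's explicit closure step; the substance is the same. A small stylistic quibble: in your case $\dim\nu=d-1$ the facet $\mu$ you construct would have to equal $\nu$, and the contradiction already arises from $w_j\in\Conv\mu=\Conv\nu$ versus $w_j\notin\Conv\nu$, but your narrative reads as though the final step always runs through the verticality of a $\mu\supsetneq\nu$. This is harmless (a contradiction is reached either way), but noting the two exits would make the write-up cleaner. Also worth making explicit: the isolating ball $B$ exists because $K$ is finite.
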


\begin{proof}
Let $\nu$ be a simplex of ${K}$ and $x \in
\Interior{\Conv\nu}$. Consider $t>0$ such that the only simplices of
${{K}}$ whose support intersects $B(x,t)$ are in the star of $\nu$.
Following ~\cite{federer1959curvature}, we define the \emph{tangent cone} of $x$ in ${\US{K}}$ as
\begin{equation}\label{eq:tangent_cone} \TanSpace(x,\US{K})=   \Big\{ \lambda u \mid x + u \in |K| \cap B(x, t) , \, \lambda \geq 0 \Big\}.
\end{equation}
Similarly, for any $\tau \in \Star{\nu}{{{K}}}$, we introduce the cone
\[
\TanSpace(x,\tau) =   \Big\{ \lambda u \mid x + u \in  \Interior{\Conv\tau} \cap B(x, t) , \, \lambda \geq 0 \Big\}. 
\]
Since $K$ is embedded, the family $\{\TanSpace(x,\tau) \mid \tau \in \Star{\nu}{{{K}}}\}$ is a partition of $\TanSpace(x,\US{K})$. Note that $x \mapsto \TanSpace(x,\US{K})$ remains constant on $\Interior{\Conv\nu}$ for each $\nu \in {{K}}$, thus depending only on the simplex of ${{K}}$ whose support contains $x$ in its relative interior. Additionally, we have:
\[
x \in \UpperSkin{{\US{K}}} \quad \iff \quad \normal{\pi_\M(x)} \notin \TanSpace(x,\US{K}).
\]
We next establish that:
\[
\Interior{\Conv\nu} \cap \UpperSkin {\US{K}}  \neq \emptyset \implies \Interior{\Conv\nu}
\subseteq \UpperSkin {\US{K}}.
\]
If $\dim \nu = 0$, the result is clear. For $\dim \nu > 0$, assume for contradiction that $x_0, x_1 \in \Interior{\Conv\nu}$ with $x_0 \in \UpperSkin{\US{K}}$ but $x_1 \notin \UpperSkin{\US{K}}$. The segment $[x_0, x_1]$ lies entirely within $\Interior{\Conv\nu}$. Since $K$ is vertically convex relative to $\M$, there exists $0 < r < \Reach{\M}$ such that $\US{K} \subseteq \Offset{\M}{r}$. Therefore, the projection map $\pi_\M$ is well-defined and continuous on $\US{K}$, as is the map $x \mapsto \normal{\pi_\M(x)}$. Consequently, at some point $x \in [x_0, x_1]$, we must have $\normal{\pi_\M(x)} \in \partial \TanSpace(x_0,\US{K})$, or equivalently:
\[
\normal{\pi_\M(x)} \in \partial \TanSpace(x,\US{K}).
\]
Observe that $\partial \TanSpace(x,\US{K})$ is contained in $\bigcup_{\tau} \TanSpace(x,\tau)$ such that $\tau \in \Star{\nu}{{{K}}}$ has dimension at most $d-1$. Hence, $\normal{\pi_\M(x)}$ belongs to $\TanSpace(x,\tau)$ for some $\tau \in \Star{\nu}{{K}}$ with $0 < \dim \tau < d$. This implies $\tau$ is vertical with $0 < \dim \tau < d$, contradicting the assumption that $K$ has a non-vertical skeleton. It follows that the relative interior of $\nu$ is contained in the upper skin. Moreover, the upper skin is closed, by Lemma~\ref{lemma:upper-and-lower-skins-homeomorphic-to-manifold}. Hence,
\[
\Interior{\Conv\nu} \subseteq \UpperSkin{{\US{K}}} \implies \Conv{\nu} \subseteq \UpperSkin{{\US{K}}}.
\]
The first result, for the upper skin, follows. The proof for the lower skin is analogous. 
\end{proof}

\uppLowComplex*

\begin{proof}
  By construction, $\US{\UpperC K} \subseteq \UpperSkin{\US{K}}$. For
  the reverse inclusion, consider $x \in \UpperSkin{\US{K}}$ and let
  $\nu \in K$ be the simplex whose support contains $x$ in its
  relative interior. By Lemma \ref{lemma:exclusive-upper-and-lower},
  $\Conv \nu \subseteq \UpperSkin{\US{K}}$, showing that $\nu \in
  \UpperC K$ and $x \in \US{\UpperC K}$. We have just established the
  first equality. The second one is obtained similarly. For the third
  one, applying Lemma
  \ref{lemma:upper-and-lower-skins-homeomorphic-to-manifold}, we obtain that
  \begin{align*}
    \US{\partial K}
    &= \UpperSkin{\US{K}} \cup \LowerSkin{\US{K}} \\
    &= \US{\UpperC K} \cup \US{\LowerC K} \\
    &= \US{\UpperC K \cup \LowerC K}.
  \end{align*}
  Since $\UpperC K \cup \LowerC K \subseteq \partial K$, the third
  equality follows. Finally, by Lemma
  \ref{lemma:upper-and-lower-skins-homeomorphic-to-manifold}, when
  $\pi_\M(\US{K}) = \M$, both the upper complex and lower complex of
  $K$ are triangulations of \M.
\end{proof}

\subsection{Upper and lower facets of a $d$-simplex}

\facetsLem*

\begin{proof}
  Recall that the set of $i$-simplices of a simplicial complex $K$ is denoted $K^{[i]}$.
  By Lemma \ref{lemma:upper-and-lower-complex-equal-boundary} applied with $K = \Cl\sigma$,
  \[
  (\Cl\sigma)^{[d-1]} = \Above \sigma \cup \Below \sigma.
  \]
  We show that $\Above \sigma \cap \Below \sigma =
  \emptyset$. Suppose, for contradiction, that a facet $\nu$ of
  $\sigma$ satisfies $\Conv\nu \subseteq \UpperSkin{\US{K}}$ and
  $\Conv\nu \subseteq \LowerSkin{\US{K}}$. This would imply
  $\Conv\sigma \subseteq \Aff \nu$, making $\sigma$ degenerate, which
  is impossible.
  	
  Next, we prove that $\Above \sigma \neq \emptyset$ by
  contradiction. If $\Above \sigma = \emptyset$, then $\Below \sigma$
  would contain all facets of $\sigma$, contradicting Lemma
  \ref{lemma:upper-and-lower-skins-homeomorphic-to-manifold}, which
  states that the two skins of $\Cl \sigma$ (relative to $\M$) are
  homeomorphic. Similarly, $\Below \sigma \neq \emptyset$.
\end{proof}

\clearpage
\section{Omitted proofs of Section \ref{section:vertical-collapses}}

\subsection{Correctness}

\correspondFreeTerminalNode*

\begin{proof}
  Since the second item of the lemma can be obtained from the first
  one by reversing the orientation of \M, we only need to prove the first
  item. We start with the forward direction $\Longrightarrow$. Suppose
  that $\tau \in K$ is a free simplex of $K$ from above relative to
  \M. This implies that
  \[
  {\Star \tau K}^{[d-1]} = \Above \sigma,
  \]
  where $\Sigma^{[i]}$ stands for the set of $i$-simplices of
  $\Sigma$.  Observe that $\Star \tau K
  \setminus \{\sigma\} \subseteq \partial K$ and therefore $\Above
  \sigma \subseteq \partial K$. Hence, $\sigma$ has no
  $d$-simplex in $K$ above it. In other words, $\sigma$ is a sink
  of the dual graph $G_\M(K)$.

  Let us prove the backward direction $\Longleftarrow$. Suppose that
  $\sigma$ is a sink of $G_\M(K)$. Equivalently, $\sigma$ is a
  $d$-simplex of $K$ such that ${\Above \sigma} \subseteq \partial
  K$. By Lemma \ref{lemma:facets}, $\Above \sigma \neq \emptyset$ and it
  follows that $\tau \neq \emptyset$ and $\tau \in \partial K$.

  To establish the backward direction $\Longleftarrow$, it
    suffices to show that $\tau$ is a free simplex of $K$. Indeed, if
    we are able to show this, it follows immediately that $\tau$ is also free from
    above with respect to \M since its coface $\sigma$ has dimension
    $d$ and its $(d-1)$-cofaces are the upper facets of $\sigma$ by
    construction, {\em i.e.}
  \[
    {\Star \tau K}^{[d-1]} = \Above \sigma.
  \]
  The rest of the proof is devoted to showing that $\tau$ is a free
  simplex of $K$. By definition, this boils down to proving that
  $\Link \tau K$ is the closure of a simplex. We do this by showing
  that
  \begin{equation}\label{eq:LinksEqual}
    \Link \tau K = \Link \tau {\Cl \sigma}.      
  \end{equation}
  We first claim that the above equation is implied by the following one
  \begin{equation}\label{eq:BoundaryOfLinksEqual}
     \Link{\tau}{ \partial K } =  \Link{\tau}{ \partial \Cl{\sigma}}.
  \end{equation}
  Indeed, suppose that we have \eqref{eq:BoundaryOfLinksEqual}. Since $
  \Cl{\sigma}\subseteq K$, we get that $\Link{\tau}{ \Cl{\sigma}}
  \subseteq \Link{\tau}{ K }$ and we only need to establish the
  reverse inclusion to get \eqref{eq:LinksEqual}. For this, we use the fact that for any simplicial
  complex $X$ embedded in $\Rspace^d$ and any $\tau \in X$, we have
  that $\partial \Link \tau X = \Link \tau {\partial X}$. We thus deduce
  from \eqref{eq:BoundaryOfLinksEqual} that
  \[
  \partial \Link{\tau}{ K } =  \partial \Link{\tau}{ \Cl{\sigma}}.
  \]
   Because $\tau \in \partial K$, it follows that $\US{\Link{\tau}{ K }}$
   is topologically equivalent to a proper subset of the
   $(d-k-1)$-sphere, where $k = \dim \tau$. Because $\tau \in \partial
   \Cl \sigma$, we have similarly that $\US{\Link{\tau}{ \Cl \sigma }}$ is also
   topologically equivalent to a proper subset of the
   $(d-k-1)$-sphere.  Since the latter link is a subset of the former
   link and the two links share the same boundary complex, the only
   possibility is that they are equal and we get
   \eqref{eq:LinksEqual}. We have just established our claim that
   \eqref{eq:BoundaryOfLinksEqual} $\implies$
   \eqref{eq:LinksEqual}. Hence, to prove that $\tau$ is a free
   simplex of $K$, it suffices to establish
   \eqref{eq:BoundaryOfLinksEqual}. Let
   \begin{align*}
     \Bold A &= \Link \tau {\partial K} \\
     \Bold B &= \Link \tau {\UpperC K} \\
     \Bold C &= \Link \tau {\UpperC {\Cl \sigma}}\\
     \Bold D &= \Link \tau {\partial \Cl\sigma}.
   \end{align*}
   We prove \eqref{eq:BoundaryOfLinksEqual} by showing that $\Bold A =
   \Bold B = \Bold C = \Bold D$ in three steps. Before we start, we
   write two useful equalities due to the fact $\partial \Cl \sigma$ is a pure simplicial $(d-1)$-complex:
   \begin{align}
     \Cl{(\Above {\sigma})} &= \UpperC {\Cl \sigma}, \label{eq:from-upper-facets-to-complex}\\
     \Cl{(\Below {\sigma})} &= \LowerC {\Cl \sigma}. \label{eq:from-lower-facets-to-complex}
   \end{align}
   We refer the reader to Figure~\ref{figure:proof-free-terminal} for a pictorial representation of the objects in the proof. 

   \begin{figure}[htb]
     \def\svgwidth{0.7\linewidth}
     \centering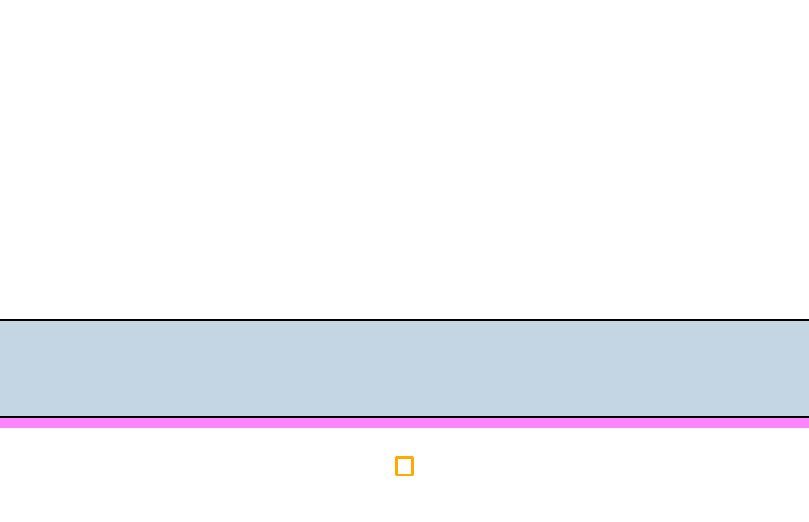
     \caption{Notation for the proof of
       Lemma~\ref{lemma:correspondance-free-terminal-node}. The upper
       skin of $K$ is in green and its lower skin in pink. We have
       schematically represented $\sigma$ using curved edges. Its
       upper skin (in pale green) is a subset of the upper skin of
       $K$. The $k$-simplex $\tau$ is represented as a black dot. Its
       link in $\partial K$ is represented by the two white dots and
       is proved to triangulate the $(d-k-2)$-sphere.
       \label{figure:proof-free-terminal}}
   \end{figure}

   \smallskip{\bf Step 1:} $\Bold A = \Bold B$. By Lemma
   \ref{lemma:facets}, $\Above \sigma$ and $\Below \sigma$ partition
   the facets of $\sigma$.  Let us make a general observation. Suppose
   that $\Sigma$ and $\Sigma'$ partition the set of facets of
   $\sigma$. Let $\tau = \bigcap \Sigma$ and consider a point $x$ in
   the relative interior of $\Conv {\Sigma}$. Since $\Cl \sigma$ is
   embedded, $x$ does not lie on $\US{\Cl{(\Sigma')}}$. Applying this
   with $\Sigma = \Above \sigma$ and $\Sigma' = \Below \sigma$, we get
   that for any point $x$ in the relative interior of $\Conv\sigma$,
   we have that
   \[
   x \not \in \US{\Cl{(\Below {\sigma})}} = \US{\LowerC {\Cl \sigma}} = \LowerSkin {\Cl \sigma},
   \]
   where the first equality comes from \eqref{eq:from-lower-facets-to-complex} and the second
   equality comes from Lemma \ref{lemma:upper-and-lower-complex-equal-boundary}.  Let $m = \pi_\M(x)$. By
   construction, the normal segment $\Normal m \M \cap B(m,\rtube)$
   passes through $x$. It intersects $\LowerSkin {\Cl \sigma}$ in a
   point $y$ and $\LowerSkin {K}$ in a point $z$ in such a way that
   along direction $\normal m$, $x$ is strictly above $y$ and $z$
   cannot be above $y$.  This shows that $x \neq z$ and consequently
   $x \not\in \LowerSkin K$. Since $\LowerSkin{K}$ is a closed set by
   Lemma \ref{lemma:upper-and-lower-skins-homeomorphic-to-manifold},
   $x$ is at distance greater than some $\epsilon > 0$ of
   $\LowerSkin{K}$.
      It follows
      that
   \begin{equation*}\label{eq:InBallAroundXSkinIsUpperSkin}
     B(x, \epsilon ) \cap  \US{ \partial K }   = B(x, \epsilon ) \cap  \US{ \UpperC {K} }.
   \end{equation*}
   Since $K$ is canonically embedded, the star (and therefore the link) 
   of $\tau$ in any subcomplex of $K$ depends only upon simplices whose support intersects some neighborhood of $x$. It follows that:
   \[
   \Link{\tau}{\partial K } =  \Link{\tau}{ \UpperC {K} }.
   \]

   \smallskip{\bf Step 2:} $\Bold C = \Bold D$. Consider again two sets $\Sigma$ and $\Sigma'$ that partition the set of facets of
   $\sigma$. Setting $\tau =
   \bigcap \Sigma$, one can observe that $\Link \tau {\Cl{(\Sigma)}} = \Link \tau {\partial
     \Cl \sigma}$. Applying this with $\Sigma = \Above \sigma$ and
   $\Sigma' = \Below \sigma$ and using \eqref{eq:from-upper-facets-to-complex}, we get that
   \[
   \Link \tau {\UpperC {\Cl \sigma}} = \Link \tau {\partial \Cl\sigma}.
   \]
   We see in particular that $\US{\Link \tau {\UpperC {\Cl \sigma}}}$ is a topological $(d-k-2)$-sphere.

   \smallskip{\bf Step 3:} $\Bold B = \Bold C$. Recall that
   ${\Above\sigma} \subseteq \partial K$. By Lemma
   \ref{lemma:upper-and-lower-complex-equal-boundary}, $\partial K$ is
   the union of the two subcomplexes $\UpperC K$ and $\LowerC K$ whose
   underlying spaces are the upper and lower skins of $K$ relative
   to \M, respectively. Each upper facet of $\sigma$ is contained in
   one of those two skins but not both. The only possibility is thus
   that $\Above {\sigma} \subseteq \UpperC {K}$ and therefore
   using~\eqref{eq:from-upper-facets-to-complex}
  \[
  \UpperC {\Cl{\sigma}} \subseteq  \UpperC {K}.
  \]
  Taking the  link of $\tau$ in each of those two subcomplexes
  of $K$, we get that
  \[
  \Link \tau {\UpperC {\Cl{\sigma}}} \subseteq \Link \tau {\UpperC {K}}.
  \]
  On the left side, we know from the previous step that the link is a
  triangulation of the $(d-k-2)$-sphere. We now study the link on the
  right side. By Lemma
  \ref{lemma:upper-and-lower-skins-homeomorphic-to-manifold},
  $\US{\UpperC {K}}$ is homeomorphic to \M. Since \M is a
  $(d-1)$-manifold without boundary, we deduce that so is $\US{\UpperC
    {K}}$. This implies that the link of any simplex in $\UpperC{K}$
  is a triangulation of the $(d-k-2)$-sphere. In summary, writing
  $X \approx Y$ if $X$ is homeomorphic to $Y$ and $\Sphere^i$ for the
  $i$-sphere, we obtain that
  \[
  \Sphere^{d-k-2} \, \approx \, \US{\Link{\tau}{\UpperC {\Cl{\sigma}} }} \, \subseteq \, \US{\Link{\tau}{ \UpperC {K} }} \, \approx \, \Sphere^{d-k-2}.
  \]
  The only subset of $\Sphere^{d-k-2}$ homeomorphic to
  $\Sphere^{d-k-2}$ being $\Sphere^{d-k-2}$ itself, we get that:
  \begin{equation*}\label{eq:LinkAreEqualsOnBoundary}
    \Link{\tau}{\UpperC {\Cl{\sigma}} }   = \Link{\tau}{ \UpperC {K} }
  \end{equation*}

  We have just proved that $\tau$ is a free simplex of $K$. This concludes the proof.
\end{proof}

\AlgorithmInvariant*

\begin{proof}
  Because $K' \subseteq K$, clearly $\US{K'} \subseteq \Offset \M
  \rtube$ and $K'$ has a non-vertical skeleton with respect to \M. Let us establish
  vertical convexity and covering projection for $K'$. Because $\tau$ is
  a vertically free simplex in $K$ relative to \M, the star
  $\Star \tau K$ has a unique inclusion-maximal simplex $\sigma$ whose
  dimension is $d$. Furthermore, one of the following two situations occurs:
  \[
    \text{(1) }  {\Star \tau K}^{[d-1]} = \Above \sigma \quad \text{or} \quad \text{(2) }  {\Star \tau K}^{[d-1]} = \Below \sigma,
  \]
  where $\Sigma^{[i]}$ stands for the set of $i$-simplices of
  $\Sigma$. Suppose that situation (1) occurs. In other words, $\tau$
  is a free simplex of $K$ from above relative to \M.
  This implies that the upper skin of $\US{\sigma}$
    is contained in the upper skin of $\US{K}$ and $\up {\Cl \sigma} m = \up K
    m$ for all $m \in \pi_\M(\Conv \sigma)$. We deduce that:
  \[
  \US{K'} \cap \Normal m \M \cap B(m,\rtube) =
  \begin{cases}
    [\low K m, \low {\Cl \sigma} m] & \text{if $m \in \pi_\M(\Conv \sigma)$},\\
    \US{K} \cap \Normal m \M \cap B(m,\rtube) & \text{otherwise}.
  \end{cases}
  \]
  We conclude that $K'$ is both vertically convex relative to \M and
  has a covering projection. Similarly, we obtain the same conclusion
  when situation (2) occurs. It remains to check that the acyclicity
  condition holds for $K'$ as well. Note that acyclicity of the
  relation $\below$ over $d$-simplices of $K$ is equivalent to
  acyclicity of the graph $G_\M(K)$.  By Lemma
  \ref{lemma:correspondance-free-terminal-node}, a vertical collapse
  in $K$ corresponds to the removal of a terminal node of
  $G_\M(K)$. We thus obtain acyclicity of the graph $G_\M(K')$ or,
  equivalently, acyclicity of the relation $\below$ over $d$-simplices
  of $K'$.
\end{proof}

\subsection{Practical version}

\InPractice*

\begin{proof}
  Let $\tau$ be a free simplex of $K$ whose inclusion-maximal coface
  $\sigma$ has dimension $d$. We prove the theorem by establishing the
  following equivalence: $\tau$ is vertically free in $K$ relative
  to \M if and only if $\tau$ is vertically free in $K$ relative to
  $\HH_\tau$.

  For any facet $\nu$ of $\sigma$, let $N_\nu$ denote the unit normal
  to $\Conv\nu$ pointing outwards $\Conv\sigma$, as illustrated in
  Figure~\ref{figure:outward-normals}.  We have assumed that for any
  facet $\nu$ of $\sigma$ and any vertex $a$ of $\nu$:
  \[
  \angle(\Aff \nu, \Tangent {\pi_\M(a)} \M) \, < \, \frac{\pi}{4}.
  \]
  It follows that for any facet $\nu$ of $\sigma$ and any vertex $a
  \in \nu$,
  \begin{align*}
    \angle(\normal {\pi_\M(a)}, N_\nu) &< \frac{\pi}{4}, && \text{if $\nu \in \Above \sigma$},\\
    \angle(\normal {\pi_\M(a)}, N_\nu) &> \frac{3\pi}{4}, && \text{if $\nu \in \Below \sigma$}.
  \end{align*}

  \begin{figure}[htb]
    \centering{
\begingroup%
  \makeatletter%
  \providecommand\color[2][]{%
    \errmessage{(Inkscape) Color is used for the text in Inkscape, but the package 'color.sty' is not loaded}%
    \renewcommand\color[2][]{}%
  }%
  \providecommand\transparent[1]{%
    \errmessage{(Inkscape) Transparency is used (non-zero) for the text in Inkscape, but the package 'transparent.sty' is not loaded}%
    \renewcommand\transparent[1]{}%
  }%
  \providecommand\rotatebox[2]{#2}%
  \newcommand*\fsize{\dimexpr\f@size pt\relax}%
  \newcommand*\lineheight[1]{\fontsize{\fsize}{#1\fsize}\selectfont}%
  \ifx\svgwidth\undefined%
    \setlength{\unitlength}{244.54839751bp}%
    \ifx\svgscale\undefined%
      \relax%
    \else%
      \setlength{\unitlength}{\unitlength * \real{\svgscale}}%
    \fi%
  \else%
    \setlength{\unitlength}{\svgwidth}%
  \fi%
  \global\let\svgwidth\undefined%
  \global\let\svgscale\undefined%
  \makeatother%
  \begin{picture}(1,0.49209294)%
    \lineheight{1}%
    \setlength\tabcolsep{0pt}%
    \put(0,0){\includegraphics[width=\unitlength,page=1]{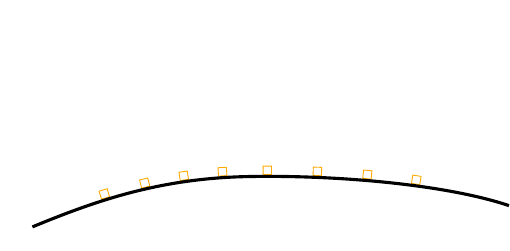}}%
    \put(0.04243613,0.07577115){\color[rgb]{0,0,0}\makebox(0,0)[t]{\lineheight{1.25}\smash{\begin{tabular}[t]{c}$\M$\end{tabular}}}}%
    \put(0,0){\includegraphics[width=\unitlength,page=2]{socg-simplex-outward-normals.pdf}}%
    \put(0.66211937,0.30469948){\makebox(0,0)[lt]{\lineheight{1.25}\smash{\begin{tabular}[t]{l}$\nu$\\\end{tabular}}}}%
    \put(0.65586693,0.46413164){\makebox(0,0)[lt]{\lineheight{1.25}\smash{\begin{tabular}[t]{l}$N_\nu$\end{tabular}}}}%
    \put(0.43182852,0.37347412){\makebox(0,0)[lt]{\lineheight{1.25}\smash{\begin{tabular}[t]{l}$\tau$\end{tabular}}}}%
    \put(0.44954319,0.25572367){\makebox(0,0)[lt]{\lineheight{1.25}\smash{\begin{tabular}[t]{l}$\sigma$\end{tabular}}}}%
  \end{picture}%
\endgroup%
}
    \caption{Outward-pointing normal vectors to a triangle in
      $\Rspace^2$ can be partitioned in two groups, depending on the
      angle they make. One of the group consists of normal vectors to upper
      facets of $\sigma$, the other group consists of normal vectors to lower facets
      of $\sigma$.
      \label{figure:outward-normals} }
  \end{figure}

  Given any pair of facets $\nu_0, \nu_1$ of $\sigma$, we note that we
  can always find a common vertex $a \in \nu_0 \cap \nu_1$. It follows
  that $\angle(N_{\nu_0},N_{\nu_1}) < \frac{\pi}{2}$ if $\nu_0$ and
  $\nu_1$ are both upper facets of $\sigma$ relative to \M or both
  lower facets of $\sigma$ relative to \M and
  $\angle(N_{\nu_0},N_{\nu_1}) > \frac{\pi}{2}$ otherwise.  We can
  thus partition the facets of $\sigma$ in two groups, depicted in two
  different colors in Figure \ref{figure:outward-normals}. In each
  group, any two facets have their outward-pointing normal vectors
  that make an angle smaller than $\frac{\pi}{2}$. Two facets not in
  the same group have their outward-pointing normal vectors making an
  angle greater than $\frac{\pi}{2}$. Let $\nu$ be any facet of
  $\sigma$. Let $N$ be one of the two unit normal vectors to $\Aff\nu$
  and use it to orient $\Aff\nu$.  If $N = N_\nu$ ({\em resp.} $N = -
  N_\nu$), facets in the group of $\nu$ are precisely those that are
  upper ({\em resp.} lower) facets of $\sigma$ relative to
  $\Aff\nu$.  We thus deduce that either
  \[
  \begin{cases}
    \UpperFacets \sigma \M = \UpperFacets \sigma {\Aff \nu},\\
    \LowerFacets \sigma \M = \LowerFacets \sigma {\Aff \nu},
  \end{cases}
  \]
  or
  \[
  \begin{cases}
    \UpperFacets \sigma \M = \LowerFacets \sigma {\Aff \nu},\\
    \LowerFacets \sigma \M = \UpperFacets \sigma {\Aff \nu},
  \end{cases}
  \]
  depending upon the choice of $N$ to orient $\Aff\nu$.  The result
  follows immediately by noting that for $\X \in \{\M,\Aff\nu\}$,
  the simplex $\tau$ is vertically free in $K$ relative to \X if and
  only if
  \[
  {\Star \tau K}^{[d-1]} = \UpperFacets \sigma \X \quad \text{or} \quad {\Star \tau K}^{[d-1]} = \LowerFacets \sigma \X,
  \]
  where $\Sigma^{[i]}$ stands for the set of $i$-simplices of
  $\Sigma$.  Thus, $\tau$ is vertically free in $K$ relative to \M
  if and only if $\tau$ is vertically free in $K$ relative to
  $\Aff\nu$.  We get the desired equivalence.
\end{proof}

\clearpage
\section{Additional material for Section \ref{section:alpha-complexes}}
\label{appendix:missing-material}

This appendix contains material that was omitted in Section
\ref{section:alpha-complexes}.

\subsection{Definition of $I(\varepsilon,\delta)$}
\label{appendix:range-alpha}

We define $I(\varepsilon,\delta)$ as the range of values of $\alpha$
used in \cite{attali2024optimal} for Theorem
\ref{theorem:NSW}. Following \cite{attali2024optimal}, we define
$I(\varepsilon,\delta)$ by distinguishing two cases. If $\delta \geq
\varepsilon$, then the authors in \cite{attali2024optimal} establish
that the following interval of values for $\alpha$ is valid:
\[
I(\varepsilon,\delta) = \left  [ \frac{1}{2} \left(\reach+\varepsilon  -  \sqrt{\Delta_0}\right),  
\frac{1}{2} \left(\reach+\varepsilon +  \sqrt{\Delta_0}\right)  \right  ],
\]
where $\Delta_0 = 2(\reach-\delta)^2-(\reach+\varepsilon)^2$. The
definition of $I(\varepsilon,\delta)$ is more involved when $\delta
\leq \varepsilon$. Let
\[
\Delta_1 = \frac{1}{\reach^2}\left(\varepsilon^2 - \left(\reach-\delta\right)^2\right)^2 - 10\left(\varepsilon^2 - \left(\reach-\delta\right)^2\right) - 7\reach^2.
\]
Assuming that $\varepsilon < \reach$ and $\delta < \reach$ satisfy the
strict homotopy condition:
\[
(\reach-\delta)^2 - \varepsilon ^2 \geq (4\sqrt{2}-5)\reach^2,
\]
it has been shown in \cite{attali2024optimal} that $\Delta_1$ is non-negative and thus, it makes sense to define
\[\begin{array}{l}
  \beta_{\min} =  \dfrac{1}{4} \left(\dfrac{(\reach-\delta)^2 +\reach^2 -\varepsilon ^2}{\reach} - \sqrt{ \Delta_1 }\right)
  \quad \text{ and } \\[3ex]
\beta_{\max} =  
\dfrac{1}{4} \left(\dfrac{(\reach-\delta)^2 +\reach^2 -\varepsilon ^2}{\reach} +\sqrt{ \Delta_1 }\right).
\label{eq:DefAlphaminMax} 
\end{array}\]
$\Delta_1$ in \cite{attali2024optimal} has been chosen precisely to be
the discriminant of the quadratic polynomial:
\[
\varepsilon^2 - \left(\reach-\delta\right)^2 + \reach^2 + \beta \,
\frac{1}{\reach}\left(\varepsilon^2 - \reach^2-
\left(\reach-\delta\right)^2\right) +2\beta^2
\]
and $[\beta_{\min},\beta_{\max}]$ represents the interval of values of
$\beta$ for which the polynomial is non-negative. The non-negativity
can be rewritten as:
\begin{equation}
  \label{eq:beta}
\left(1 +  \frac{\beta }{\reach} \right) \varepsilon^2 + \beta^2 + \frac{\beta}{\reach} \left( \reach^2 - (\reach - \delta)^2\right)
 \, \leq \,  (\reach -  \delta)^2 - (\reach - \beta)^2.  
\end{equation}
Let
\begin{align*}
  \alpha_{\min} &= \sqrt{\left(1 +  \frac{\beta_{\min} }{\reach} \right) \varepsilon^2 + \beta_{\min}^2 + \frac{\beta_{\min}}{\reach} \left( \reach^2 - (\reach - \delta)^2\right)},\\
  \alpha_{\max} &=  \sqrt{(\reach -  \delta)^2 - (\reach - \beta_{\max})^2},
\end{align*}
Equation \eqref{eq:beta} shows that $\alpha_{\min} \leq
\alpha_{\max}$. It is then proven in \cite{attali2024optimal} that
when $\delta \leq \varepsilon$ the non-empty interval
$I(\varepsilon,\delta) = [\alpha_{\min},\alpha_{\max}]$ can be used for Theorem \ref{theorem:NSW}.

\subsection{Possible $\beta$ for Theorem \ref{theorem:alpha-complex}}
\label{appendix:size-r-beta}

The next lemma enunciates a result of~\cite{attali2024optimal} that
provides a ``large'' value of $\beta$ for which $\Offset \M \beta
\subseteq \Offset P \alpha$ and which can be plugged in Theorem \ref{theorem:alpha-complex}.

\begin{lemma}[\cite{attali2024optimal}]
  \label{lemma:tubular-neighborhood-in-alpha-offset}
  Let $\M$ be a smooth submanifold whose reach is larger than or equal
  to \reach and let $P$ be a point set such that $\M \subseteq \Offset P
  \varepsilon$ for some $\varepsilon \geq 0$. Then, $\Offset \M \beta
  \subseteq \Offset P \alpha$ whenever
  \[
  \beta = -\frac{\varepsilon^2}{2\reach} + \sqrt{\alpha^2 +
      \frac{\varepsilon^4}{4\reach^2}-\varepsilon^2}.
  \]
\end{lemma}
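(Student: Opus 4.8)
The plan is to reduce the inclusion $\Offset\M\beta \subseteq \Offset P\alpha$ to a single pointwise distance estimate and then to feed in the reach of $\M$ through the two balls of radius $\reach$ tangent to $\M$ at a given point. First I would record the algebraic identity hidden in the definition of $\beta$: squaring $\beta + \frac{\varepsilon^2}{2\reach} = \sqrt{\alpha^2 + \frac{\varepsilon^4}{4\reach^2} - \varepsilon^2}$ gives $\alpha^2 = \varepsilon^2 + \beta^2 + \frac{\beta\varepsilon^2}{\reach}$, which in particular shows $\alpha \ge \varepsilon$. Then I would fix an arbitrary $y \in \Offset\M\beta$ and aim to exhibit $p \in P$ with $\|y-p\| \le \alpha$. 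The case $y \in \M$ is immediate because $\M \subseteq \Offset P\varepsilon \subseteq \Offset P\alpha$; otherwise I set $m = \pi_\M(y)$ (well defined, since the $\beta$ of the statement lies below $\reach$ in the relevant parameter range), $t = \|y-m\| \in (0,\beta]$, and $n = (y-m)/t$, a unit normal to $\M$ at $m$.

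The heart of the argument: choose $p \in P$ with $\|m-p\| \le \varepsilon$ (possible since $\M \subseteq \Offset P\varepsilon$), noting that $p \in \M$ because $P$ samples $\M$ without noise. Expanding along the normal, $\|y-p\|^2 = \|m-p\|^2 + t^2 - 2t\,\langle n, p-m\rangle$. I would bound the first two terms by $\varepsilon^2$ and $\beta^2$ and control the cross term with the reach: for every $\rho < \reach$ the closed ball $B(m+\rho n, \rho)$ meets $\M$ only at $m$, so $\|p - (m+\rho n)\| \ge \rho$; letting $\rho \uparrow \reach$, and doing the same with $-n$ in place of $n$, I get $\|p - (m \pm \reach n)\| \ge \reach$, which on expansion gives $|\langle n, p-m\rangle| \le \frac{\|p-m\|^2}{2\reach} \le \frac{\varepsilon^2}{2\reach}$. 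In particular $\langle n, p-m\rangle \ge -\frac{\varepsilon^2}{2\reach}$, so $\|y-p\|^2 \le \varepsilon^2 + t^2 + \frac{t\varepsilon^2}{\reach}$. Since the right-hand side increases with $t$ on $[0,\beta]$, it is at most $\varepsilon^2 + \beta^2 + \frac{\beta\varepsilon^2}{\reach} = \alpha^2$, hence $y \in B(p,\alpha) \subseteq \Offset P\alpha$. The argument is insensitive to the codimension, since $n$ and $-n$ are unit normals at $m$ regardless.

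I expect the bookkeeping above to be routine; the one place that needs care — and the reason the sharp constant $\alpha$ comes out instead of the trivial $\beta+\varepsilon$ — is the two-sided estimate $\langle n, p-m\rangle \ge -\varepsilon^2/(2\reach)$, which genuinely uses \emph{both} tangent balls and therefore the hypothesis that the covering point $p$ lies on $\M$. If $P$ carried noise $\delta > 0$, the same scheme would survive only with $\varepsilon$ replaced by a slightly inflated radius and a correspondingly weaker $\beta$. The remaining minor points to dispatch are the well-definedness of $\pi_\M$ on the $\beta$-tube (which needs $\beta < \reach$, as one reads off the formula in the paper's parameter regime) and, if one insists on it, the boundary case $\beta = \reach$ by continuity in $\beta$.
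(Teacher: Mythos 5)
Your proof is correct, and since the paper imports this lemma verbatim from the cited reference without reproducing its proof, there is no in-text argument to compare against. The chain of ideas you use is the standard one for such tube-inclusion estimates: rewrite the definition of $\beta$ as the algebraic identity $\alpha^2 = \varepsilon^2 + \beta^2 + \beta\varepsilon^2/\reach$; for $y \in \Offset\M\beta$ with $m = \pi_\M(y)$, $t=\|y-m\|$, and $n=(y-m)/t$, expand $\|y-p\|^2 = \|m-p\|^2 + t^2 - 2t\langle n, p-m\rangle$; and bound the cross term with the two tangent balls $B\bigl(m\pm\reach\,n,\reach\bigr)$ to get $\lvert\langle n,p-m\rangle\rvert \le \|p-m\|^2/(2\reach)$, which after plugging in $\|m-p\|\le\varepsilon$ and maximizing over $t\in[0,\beta]$ closes the argument. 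One thing worth stating more prominently than you did (you only mention it parenthetically): the tangent-ball step needs $p\in\M$, yet the lemma as written in the paper hypothesizes only $\M\subseteq\Offset P\varepsilon$. This missing hypothesis is not cosmetic. For $\M$ a circle of radius $R$ in $\Rspace^2$, $\reach=R$, $P$ the singleton at the center, one has $\M\subseteq\Offset P R$ (so $\varepsilon=R$), and taking $\alpha=2R$ gives $\beta=\tfrac{\sqrt{13}-1}{2}R$, whence $\Offset\M\beta=B(0,R+\beta)\not\subseteq B(0,2R)=\Offset P\alpha$. So the statement silently relies on the noiseless setting $P\subseteq\M$, which is indeed how the lemma is instantiated in the paper (always with $\delta=0$). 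Your identification of the one-sided-versus-two-sided estimate as both the source of the sharp constant and the place where noise would hurt is exactly right.
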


\clearpage
\section{Proof of Theorem \ref{theorem:alpha-complex} in Section \ref{section:alpha-complexes}}
\label{appendix:alpha-complex-theorem}

The goal of this appendix is to establish Theorem
\ref{theorem:alpha-complex}:

\TheoremAlphaComplex*

\begin{remark}
  \label{remark:conditions-well-defined}
  It is easy to check that Conditions \eqref{eq:angle-condition-for-skeleton},
  \eqref{eq:angle-condition-for-vertical-convexity} and
  \eqref{eq:angle-condition-acyclicity} in Theorem
  \ref{theorem:alpha-complex} are well-defined. Because $P \subseteq
  \Offset \M \delta$, we obtain that $\US{\Del{P,\alpha}} \subseteq
  \Offset P \alpha \subseteq \Offset \M {(\delta+\alpha)}$. Since
  $\delta + \alpha < \reach \leq \Reach{\M}$,
    \[
    \US{\Del{P,\alpha}} \subseteq \Rspace^d \setminus \MA{\M}
    \]
    and the projection map $\pi_\M$ is thus well-defined on the
    support of any simplex in $\Del{P,\alpha}$. It follows that the
    left sides of Conditions \eqref{eq:angle-condition-for-skeleton},
    \eqref{eq:angle-condition-for-vertical-convexity} and
    \eqref{eq:angle-condition-acyclicity} are well-defined.
    We claim that 
    \[
    -1 \leq \frac{(\reach+\beta)^2 - (\reach+\delta)^2 - \alpha^2}{2(\reach+\delta)\alpha} \leq 1.
    \]
    The left inequality is equivalent to $\beta + \alpha \geq \delta$
    which holds since we have assumed that $\alpha \geq \delta$. The
    right inequality is equivalent to $\beta \leq \delta + \alpha$
    which holds because we have assumed that $P \subseteq \Offset
    \M \delta$ and $\Offset \M \beta \subseteq \Offset P \alpha$,
    implying that $\Offset \M \beta \subseteq \Offset \M {(\delta +
      \alpha)}$. Hence, the right side of Condition
    \eqref{eq:angle-condition-for-vertical-convexity} is
    well-defined. Similarly, using the assumption $\alpha \leq
    \frac{2}{3}(\reach - \delta)$, we easily get that
    \[
    0 \leq \frac{\alpha}{2(\reach - \delta - \alpha)} \leq 1,
    \]
    showing that the right side of Condition \eqref{eq:angle-condition-acyclicity} is also well-defined.
\end{remark}

\begin{proof}[Proof of Theorem \ref{theorem:alpha-complex}.]
  Injective projection for $\Del{P,\alpha}$ is shown in
  Section~\ref{appendix:alpha-complex-injective-projection} (Lemma~\ref{lemma:non-vertical}).
  Covering projection and vertical
  convexity for $\Del{P,\alpha}$ are established in
  Section~\ref{section:proof-technique}  (Lemma~\ref{lemma:alpha-complex-vertical-convexity}).
  Acyclicity is established in Section \ref{appendix:acyclicity}   (Lemma \ref{lemma:acyclicity}).
  Hence, all the assumptions of
  Lemma~\ref{lemma:upper-and-lower-complex-equal-boundary} and
  Theorem~\ref{theorem:correctness-generic-simplification} are
  satisfied. By
  Lemma~\ref{lemma:upper-and-lower-complex-equal-boundary}, the upper
  and lower complexes of $\Del{P,\alpha}$ relative to \M are each a
  triangulation of \M. By
  Theorem~\ref{theorem:correctness-generic-simplification},
  $\NaiveSquash(P,\alpha)$ returns a triangulation of \M.
\end{proof}

\subsection{Injective projection}
\label{appendix:alpha-complex-injective-projection}

In this section, we characterize non-verticality of an $i$-simplex $\tau
\in \Del{P,\alpha}$ relative to $\M$.

\begin{lemma}
  \label{lemma:non-vertical}
  Let $P$ be a finite point set in $\Rspace^d$ such that $P \subseteq
  \Offset \M \delta$ for some $\delta \geq 0$. Let $\alpha \in [0,
    \reach - \delta)$. Then, for all $i$-simplices $\tau \in
    \Del{P,\alpha}$ with $0 < i < d$,
  \[
  \eqref{eq:angle-condition-for-skeleton}
  \quad \iff \quad
  \text{$\tau$ is non-vertical relative to \M}
  \]
\end{lemma}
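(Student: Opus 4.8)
The statement is an equivalence between an analytic condition — that every point $x \in \Conv\tau$ has $\angle(\Aff\tau, \Tangent{\pi_\M(x)}\M) < \pi/2$ — and the combinatorial-geometric condition that $\tau$ is non-vertical relative to $\M$, meaning no two distinct points of $\Conv\tau$ share the same projection onto $\M$. The plan is to prove the contrapositive in both directions: $\tau$ is vertical $\iff$ there exists $x \in \Conv\tau$ with $\angle(\Aff\tau, \Tangent{\pi_\M(x)}\M) = \pi/2$ (equivalently $\geq \pi/2$, but since we only need to contradict the strict inequality, an angle of exactly $\pi/2$ suffices). First I would record the standing setup: since $P \subseteq \Offset\M\delta$ we have $\US{\Del{P,\alpha}} \subseteq \Offset\M{(\delta+\alpha)}$, and because $\delta+\alpha < \reach \leq \Reach\M$, the projection $\pi_\M$ is well-defined and $C^1$ on a neighborhood of $\Conv\tau$; this is exactly the well-definedness argument already used in Remark~\ref{remark:conditions-well-defined}.

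For the direction ``$\tau$ vertical $\implies$ \eqref{eq:angle-condition-for-skeleton} fails'': suppose $p, q \in \Conv\tau$ are distinct with $\pi_\M(p) = \pi_\M(q) = m$. Then both $p$ and $q$ lie on the normal line $\Normal m \M$, so the segment $[p,q]$ — which is contained in the convex set $\Conv\tau$ — is a nonzero vector parallel to $\normal m$. Hence $\normal m$ lies in the direction space of $\Aff\tau$. Since $\Tangent m \M$ is the hyperplane orthogonal to $\normal m$, the angle between $\Aff\tau$ and $\Tangent m \M$ is $\pi/2$: indeed, taking $u = \normal m$ as the unit vector in the direction space of $\Aff\tau$, we have $\min_{\|v\|=1, v \in \Tangent m\M} \angle(u,v) = \pi/2$, so the max over all unit $u$ in the direction space of $\Aff\tau$ is at least $\pi/2$, and of course at most $\pi/2$. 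Thus $\angle(\Aff\tau, \Tangent m \M) = \pi/2$ with $m = \pi_\M(x)$ for $x = p \in \Conv\tau$, contradicting \eqref{eq:angle-condition-for-skeleton}.

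For the converse direction ``\eqref{eq:angle-condition-for-skeleton} fails $\implies \tau$ vertical'': suppose there is $x_0 \in \Conv\tau$ with $\angle(\Aff\tau, \Tangent{m_0}\M) \geq \pi/2$ where $m_0 = \pi_\M(x_0)$. By definition of the angle between affine spaces, this forces the existence of a unit vector $u$ in the direction space $U$ of $\Aff\tau$ with $\min_{v \in \Tangent{m_0}\M, \|v\|=1}\angle(u,v) = \pi/2$, i.e. $u \perp \Tangent{m_0}\M$, i.e. $u = \pm\normal{m_0}$. So the normal direction at $m_0$ is a direction of $\Aff\tau$. Now consider the line $\ell$ through $x_0$ in direction $\normal{m_0}$; it is contained in $\Aff\tau$. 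The key geometric step is that moving from $x_0$ along $\ell$ in either direction, for a small enough time, keeps the projection onto $\M$ constant: points of $\ell$ near $x_0$ lie on the normal line $\Normal{m_0}\M$ (since $\ell$ passes through $x_0 \in \Normal{m_0}\M$ with the same direction $\normal{m_0}$), so $\ell \cap B(m_0,\reach)$ is a subsegment of $\Normal{m_0}\M \cap B(m_0,\reach)$, all of whose points project to $m_0$. Because $x_0 \in \Offset\M{(\delta+\alpha)}$ with $\delta+\alpha < \reach$, there is genuine room: a whole relatively-open neighborhood $N$ of $x_0$ within $\ell$ lies in $\Offset\M{(\delta+\alpha)} \subseteq \Rspace^d\setminus\MA\M$ and projects to $m_0$. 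The remaining point is to produce two \emph{distinct} points of $\Conv\tau$ (not merely of $\Aff\tau$) on $\ell$: if $x_0$ lies in the relative interior of $\Conv\tau$ this is immediate, as $\Conv\tau \cap \ell$ is then a nondegenerate segment around $x_0$; if $x_0$ is on the relative boundary of $\Conv\tau$, one passes to the lowest-dimensional face $\tau'$ of $\tau$ whose support contains $x_0$ in its relative interior — note $\dim\tau' \geq 1$ because $u \in U$ is a direction of $\Aff\tau$ realized by $\ell$ through $x_0$, which forces $x_0$ to not be a vertex — and then $\Conv{\tau'}\cap\ell$ is a nondegenerate segment, giving two distinct points $p \neq q$ in $\Conv{\tau'} \subseteq \Conv\tau$ with $\pi_\M(p) = \pi_\M(q) = m_0$. (This last sub-case is the one deserving care; alternatively one can argue directly that since $\ell$ meets the open convex set $\Interior{\Conv\tau}$ — which it does, as $u$ is a direction of the full-dimensional-in-$\Aff\tau$ set $\Conv\tau$ — one gets a nondegenerate sub-segment of $\ell$ inside $\Conv\tau$.) Hence $\tau$ is vertical relative to $\M$.

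The main obstacle is the converse direction's boundary sub-case: one must be careful that an offending point $x_0$ merely in $\Conv\tau$ (possibly on its relative boundary, where $\tau$ might have a vertex) still produces \emph{two distinct} points of the \emph{support} with the same projection, rather than just a tangency phenomenon at a vertex. The resolution — observing that $u = \normal{m_0}$ being a direction of $\Aff\tau$ means the line $\ell$ through $x_0$ in that direction meets the relative interior of $\Conv\tau$, so that $\ell$ crosses $\Conv\tau$ in a nondegenerate segment — is routine once set up, and one uses that all points of a short enough such segment share the projection $m_0$ because they all lie on $\Normal{m_0}\M \cap B(m_0,\reach)$, a set on which $\pi_\M$ is constant.
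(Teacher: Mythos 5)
The easy direction --- $\tau$ vertical forces $\normal{m}$ into the direction space of $\Aff\tau$, hence $\angle(\Aff\tau, \Tangent{m}{\M}) = \pi/2$ at $m = \pi_\M(p)$ for a shared-projection point $p$ --- you get right, and it is in fact the only direction the paper uses afterwards (Condition~\eqref{eq:angle-condition-for-skeleton} implies that $\Del{P,\alpha}$ has a non-vertical skeleton, i.e.\ the ``injective projection'' hypothesis of Theorem~\ref{theorem:correctness-generic-simplification}). For the converse, both of the arguments you offer for the boundary sub-case fail. The claim that ``$u \in U$ is a direction of $\Aff\tau$ realized by $\ell$ through $x_0$, which forces $x_0$ to not be a vertex'' is simply not true: any direction of $\Aff\tau$ gives a line through any point of $\Aff\tau$, including vertices. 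The fall-back --- that $\ell$ must meet $\Interior{\Conv\tau}$ because its direction lies in the direction space of $\Aff\tau$ --- is also false for $i\geq 2$: when $x_0$ is a vertex of $\tau$ there are directions $u$ in the direction space of $\Aff\tau$ for which both $u$ and $-u$ lie outside the tangent cone of $\Conv\tau$ at $x_0$, in which case $\ell\cap\Conv\tau = \{x_0\}$ and $\ell$ never enters the interior. This gap is not cosmetic. Take $d=3$, $\M$ locally the graph $z=yx^2$, $P=\{a,b,c\}$ with $a=(0,0,0)$ and $b,c=(2\varepsilon,0,\pm\varepsilon)$, and $\tau=\{a,b,c\}$, so $\Aff\tau$ is the $xz$-plane. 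A short computation shows $\pi_\M$ is injective on $\Conv\tau$ --- so $\tau$ is non-vertical --- yet $\normal{\pi_\M(a)}=(0,0,1)$ lies in the direction space of $\Aff\tau$, so~\eqref{eq:angle-condition-for-skeleton} fails at the vertex $a$.

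The paper's own proof reduces the lemma, exactly as you do, to the equivalence ``$\tau$ non-vertical iff for all $x\in\Conv\tau$, $\angle(\Normal{\pi_\M(x)}{\M},\Aff\tau)>0$'' via Property~\ref{property:complementary-angles}, and then declares this intermediate equivalence ``easy to see.'' You are thus following the paper's route and attempting to supply the detail the paper elides; your instinct that the boundary sub-case ``deserves care'' is exactly right, but the two justifications you offer do not hold, and the ``only if'' direction appears false in the stated generality. The safe formulation is the one-sided implication (Condition~\eqref{eq:angle-condition-for-skeleton} $\implies$ $\tau$ non-vertical), which is all that the paper ever applies.
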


\begin{proof}
  It is easy to see that $\tau$ is
  non-vertical relative to $\M$ if and only if for all $x \in
  \Conv\tau$
  \begin{equation*}
    \angle( \Normal {\pi_\M(x)} \M, \Aff\tau) ) > 0.
  \end{equation*}
  Using Property \ref{property:complementary-angles}, 
  $\angle( \Normal {\pi_\M(x)} \M, \Aff\tau) ) = \frac{\pi}{2} -
  \angle( \Aff \tau, \Tangent {\pi_\M(x)} \M )$ and the above
  inequality is equivalent to
  $
  \angle( \Aff \tau, \Tangent {\pi_\M(x)} \M ) < \frac{\pi}{2}.
  $
  Hence, $\tau$ is non-vertical relative to \M if and only if
  \[
  \max_{x \in \Conv\tau} \angle(\Aff \tau, \Tangent {\pi_\M(x)} \M) \, < \, \frac{\pi}{2},
  \]
  which is equivalent to \eqref{eq:angle-condition-for-skeleton}.
\end{proof}

\subsection{Vertical convexity}
\label{appendix:alpha-complex-vertical-convexity}

Recall that vertical convexity for $\Del{P,\alpha}$ was proved in Section~\ref{section:proof-technique} (Lemma~\ref{lemma:alpha-complex-vertical-convexity}) and that the proof relies on Lemma~\ref{lemma:connecting-boundaries-simplified}. This section is dedicated to proving the latter lemma.

\ConnectingBoundariesSimplified*

Plugging the assumptions of Lemma
\ref{lemma:alpha-complex-vertical-convexity} inside Lemma
\ref{lemma:connecting-boundaries-simplified} and using Lemma
\ref{lemma:non-vertical} to replace
Condition~\eqref{eq:angle-condition-for-skeleton} with the assumption
that $\tau$ is non-vertical relative to \M, we can rewrite Lemma
\ref{lemma:connecting-boundaries-simplified} as follows:

\begin{lemma}
	\label{lemma:connecting-boundaries}
	Let $\M \subseteq \Offset P \varepsilon$ and $P \subseteq \Offset
    \M \delta$ for some $\varepsilon,\delta \geq 0$ that satisfy the
    strict homotopy condition.  Let $\alpha \in [\delta,\reach-\delta)
      \cap I{(\varepsilon,\delta)}$ and $\beta > 0$ be such that
      $\Offset \M \beta \subseteq \Offset P \alpha$. Suppose that for
      all $i$-simplices $\tau \in \Del{P,\alpha}$ with $0 < i < d$,
      $\tau$ is non-vertical relative to \M and
      Condition~\eqref{eq:angle-condition-for-vertical-convexity}
      hold.  Let $\spx \in \Del{P,\alpha}$ and $x \in
      \Interior{\Conv{\gamma}}$. If for some $\lambda > 0$ ({\em
        resp.}  $\lambda<0$), the segment $(x,x+\lambda \normal
               {\pi_\M(x)}]$ lies outside $\US{\Del{P,\alpha}}$, then
    it intersects an upper ({\em resp. lower} join).
\end{lemma}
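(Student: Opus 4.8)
The plan is to fix $\gamma \in \Del{P,\alpha}$ and $x \in \Interior{\Conv\gamma}$, and take $\lambda > 0$ such that the half-open segment $S = (x, x + \lambda\normal{\pi_\M(x)}]$ avoids $\US{\Del{P,\alpha}}$ (the case $\lambda < 0$ is symmetric by reversing the orientation of $\M$). The key observation is that $x$ itself lies in $\US{\Del{P,\alpha}}$, whereas by Theorem~\ref{theorem:NSW} the point $x$ lies in $\Offset P \alpha$, which is vertically convex relative to $\M$, so the whole normal segment from the lower skin of $\Offset P\alpha$ to the upper skin of $\Offset P \alpha$ through $x$ is contained in $\Offset P \alpha$. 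In particular, for $\lambda$ small enough, the endpoint $x + \lambda\normal{\pi_\M(x)}$ is still in $\Offset P\alpha$, and since it is outside $\US{\Del{P,\alpha}}^\circ$ (it is outside $\US{\Del{P,\alpha}}$ entirely), Remark~\ref{remark:covering—joins} tells us it is covered by an upper or a lower join. The point of the lemma is that it must be covered by an \emph{upper} join.

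First I would argue that it suffices to treat the case where $\lambda$ is arbitrarily small, i.e.\ to show that points of $S$ immediately above $x$ lie in an upper join; once a point $y \in S$ close to $x$ is known to be in some upper join $F_\tau^+ * \Conv\tau$, we are done. So pick $y = x + \mu\normal{\pi_\M(x)}$ with $\mu \in (0,\lambda]$ small. By Remark~\ref{remark:covering—joins}, $y$ lies in some join $F_\tau^{\pm} * \Conv\tau$. The heart of the proof is to rule out that $y$ lies in a lower join. Suppose $y \in F_\tau^- * \Conv\tau$, so $y = (1-t)p + tq$ with $p \in F_\tau^- \subseteq \LowerSkin{\Offset P \alpha}$, $q \in \Conv\tau \subseteq \US{\Del{P,\alpha}}$, and $t \in [0,1]$. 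Since $q \in \US{\Del{P,\alpha}}$ while $y \notin \US{\Del{P,\alpha}}$, we cannot have $t = 1$, so $t < 1$ and $p$ genuinely participates. Now I want to derive a contradiction with the fact that $y$ lies just above $x \in \US{\Del{P,\alpha}}$ along the normal direction, using that $p$ is on the \emph{lower} skin of $\Offset P\alpha$ and $x$ is somewhere between the two skins of $\Offset P\alpha$.

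The contradiction comes from a height comparison along $\normal{\pi_\M(x)}$ combined with the angle condition~\eqref{eq:angle-condition-for-vertical-convexity}, which (via Lemma~\ref{lemma:non-vertical} and the rewriting in the paper) controls how far Voronoi faces and simplices can tilt relative to $\M$. The idea is: $x$ lies in the support of the simplex $\gamma$, so $\pi_\M(x)$ and $\pi_\M(q)$ are close (both simplices have small circumradius, $\leq \alpha$, and small angle with $\M$), hence the normal lines $\Normal{\pi_\M(x)}\M$ and $\Normal{\pi_\M(q)}\M$ are nearly parallel; meanwhile $p$ being on the lower skin of $\Offset P\alpha$ forces $p$ to be on the \emph{far} side (below) of $q$ relative to $\M$, or at least not strictly above it by more than a controlled amount. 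On the other hand $y$ is strictly above $x$ (hence above a point of $\US{\Del{P,\alpha}}$) by the positive amount $\mu$. Writing $y$ as the convex combination $(1-t)p + tq$ and projecting onto the $\normal{\pi_\M(x)}$ direction, the ``height'' of $y$ is a weighted average of the heights of $p$ and $q$; since neither exceeds the height of $x$ by the requisite margin (using that $q$'s simplex is below-or-through $x$'s region and $p$ is on the lower skin), one gets $\mathrm{height}(y) \leq \mathrm{height}(x)$, contradicting $\mu > 0$. Making this precise requires the angle bound~\eqref{eq:angle-condition-for-vertical-convexity} to guarantee that the ``tilt error'' terms coming from the non-parallelism of the relevant normal lines are dominated, and this is exactly where I expect the main technical obstacle to lie: bookkeeping the several simplices involved ($\gamma$ containing $x$, the simplex $\tau$ whose join contains $y$, and the relation between their supports and projections) and showing the geometric inequality closes with the stated bound. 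Once the lower-join case is excluded, $y$ lies in an upper join, and since $y \in S$, the segment $S$ intersects an upper join, which is what we wanted.
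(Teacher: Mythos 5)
Your approach differs fundamentally from the paper's, and the key step is left as a sketch with the difficulties explicitly acknowledged rather than resolved, so this is a genuine gap.

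The paper proves Lemma~\ref{lemma:connecting-boundaries} by descending induction on $\dim\gamma$. In the base case $\dim\gamma = d-1$, the dual sphere $S(\gamma,\alpha)$ is a $0$-sphere and one of its two points lies on $\partial\Offset P\alpha$ by Voronoi--Delaunay duality; Lemma~\ref{lemma:highest-point-on-upper-skin} (which is where condition~\eqref{eq:angle-condition-for-vertical-convexity} enters) forces it onto the upper skin. In the inductive step, either $\gamma$ has no proper coface (so the whole connected sphere $S(\gamma,\alpha)$ lies on one skin, necessarily the upper one) or $\gamma$ has a proper coface $\sigma$; in the latter case one constructs a path $\Gamma$, inside a small ball $B$, that stays in $\Offset P\alpha \setminus \US{\Del{P,\alpha}}$ and links a point above $z \in \Interior{\Conv\gamma}$ to a point above $z' \in \Interior{\Conv\sigma}$. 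Using Remark~\ref{remark:intersection-joins}, the joins covering $\Gamma$ are all of one type, and by the inductive hypothesis they are upper joins near $z'$, hence everywhere along $\Gamma$.

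Your strategy is a direct contradiction: if the point $y = x + \mu\normal{\pi_\M(x)}$ lies in a lower join $F_\tau^- * \Conv\tau$, write $y = (1-t)p + tq$ with $p$ on the lower skin and $q \in \Conv\tau$, and argue a ``height'' contradiction. The difficulty you flag as the ``main technical obstacle'' is in fact fatal for the approach as stated: the simplex $\tau$ and the point $p$ need not be local to $x$ at all, so $\pi_\M(p)$, $\pi_\M(q)$ and $\pi_\M(x)$ can be spread out on $\M$, and the ``height'' $\alt{\cdot}$ (signed distance along the normal through the projection) is not a linear function that behaves well under the convex combination $(1-t)p + tq$ across different normal fibers. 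The inequality $\alt{y} \leq \alt{x}$ simply does not follow from $p$ being on the lower skin somewhere else and $q$ being in $\US{\Del{P,\alpha}}$ somewhere else; one would need tight control of the angular deviations over the whole join, which is not what the hypothesis~\eqref{eq:angle-condition-for-vertical-convexity} directly provides. The paper's path construction sidesteps exactly this problem by working inside a single small ball $B$ in which the directions $\normal{\pi_\M(\cdot)}$ vary by less than the gap angle $\varphi$, and by handing off the ``upper vs.\ lower'' decision to the highest-dimensional cofaces through the induction. To repair your argument you would essentially have to reinvent that localization, which is the real content of the lemma.

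Separately, your reduction to small $\mu$ needs a brief justification that $y = x + \mu\normal{\pi_\M(x)}$ is still inside $\Offset P\alpha$ for small $\mu$; this is true because $x \in \US{\Del{P,\alpha}}$ and the normal fiber through $\pi_\M(x)$ meets $\Offset P\alpha$ in an interval, but it deserves a sentence (and it is handled implicitly in the paper's proof via the tangent-cone set-up around $z$).
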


\begin{figure}[htb]
  \centering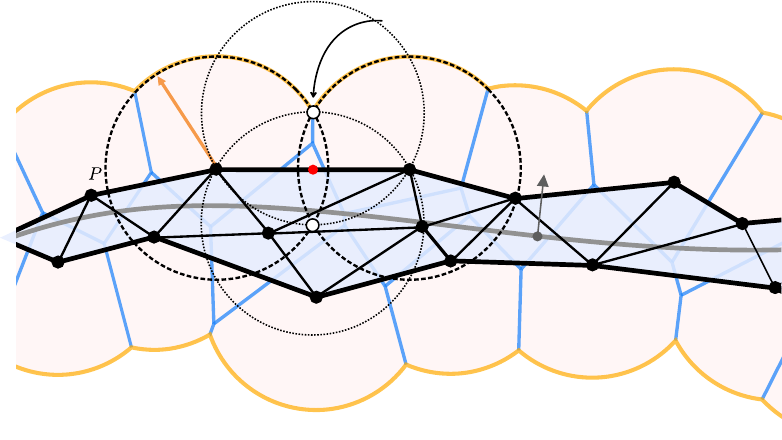
  \caption{For $d=2$ and $\gamma = \{a,b\}$, the set $S(\gamma,\alpha)$ is a 0-sphere that consists
    of the two white dots. Only one of them lies on $\partial \Offset
    P \alpha$ and therefore belongs to $F_\gamma$. In this particular case,
    $F_\gamma^+ = F_\gamma$.}
  \label{figure:notation-faces}
\end{figure}

Hence, proving Lemma \ref{lemma:connecting-boundaries-simplified} is equivalent
to proving Lemma \ref{lemma:connecting-boundaries}.  Before giving the
proof at the end of the section, consider a finite point set $P
\subseteq \Rspace^d$ and recall that, as explained in Section
\ref{section:proof-technique}, the boundary of $\Offset P \alpha$ can be
decomposed into faces. Those faces are in one-to-one correspondence
with the boundary simplices of $\Del{P,\alpha}$. Associate to each
boundary $i$-simplex $\spx \in \partial \Del{P,\alpha}$ the
$(d-i-1)$-sphere
\[
S(\spx,\alpha) = \bigcap_{p \in \spx} \partial B(p,\alpha).
\]
Equivalently, $S(\spx,\alpha)$ can be defined as the locus of the
centers of the $(d-1)$-spheres with radius $\alpha$ that circumscribe
$\spx$. The center $c_\spx$ of $S(\spx,\alpha)$ coincides with the
center of the smallest $(d-1)$-sphere circumscribing $\spx$; see
Figure \ref{figure:notation-faces}.  The face of
$\partial \Offset P \alpha$ dual to $\spx \in \partial\Del{P,\alpha}$
can then be described as
\[
F_\spx = S(\spx,\alpha) \cap \partial \Offset P \alpha.
\]
We also recall that the upper face $F_\spx^+$ ({\em resp.} lower face
$F_\spx^-$) designates the restriction of $F_\spx$ to the upper skin
({\em resp.} lower skin) of $\Offset P \alpha$. The set $F_\spx^+ *
\Conv \spx$ ({\em resp.} $F_\spx^- * \Conv \spx$) designates the
upper ({\em resp.} lower) join of $\spx$.

  \begin{figure}[htb]
    \def\svgwidth{\linewidth}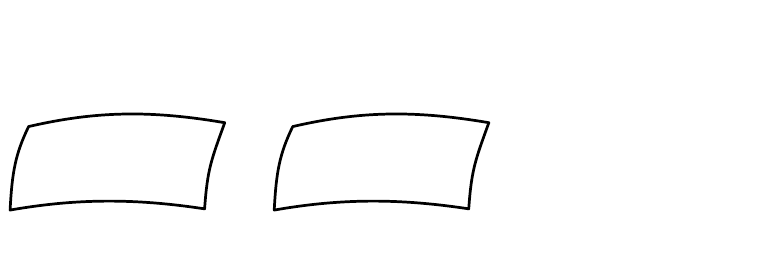
    \caption{Highest point $s_\spx(x)$ of $S(\spx,\alpha)$ in
      direction $\normal {x^*}$ when $\M$ is a surface in $\Rspace^3$
      and $\dim\spx = 0$ (left), $\dim\spx = 1$ (middle), and
      $\dim\spx = 2$ (right).
      \label{figure:highest-point-on-dual-sphere}}
  \end{figure}

  More generally, given any simplex $\spx \subseteq \Rspace^d$ (not necessarily in $\Del{P,\alpha}$), we
  define $S(\spx,\alpha) = \bigcap_{p \in \spx} \partial B(p,\alpha)$.
  Assuming that $\Conv{\spx} \subseteq \Rspace^d \setminus \MA{\M}$ and
  $S(\spx,\alpha) \neq \emptyset$, we associate to $\spx$ and any $x \in \Conv{\spx}$
  the highest point of $S(\spx,\alpha)$ in direction $\normal
  {\pi_\M(x)}$ which we denote as $s_\spx(x)$; see Figure
  \ref{figure:highest-point-on-dual-sphere}.  Let $\Bis{\spx}$ be the
  bisector of $\spx$.  By construction, $s_\spx(x) \in S(\spx,\alpha)
  \subseteq \Bis{\spx}$. Writing $x^* = \pi_\M(x)$ for short, we make
  the following remark:

  \begin{remark}
    \label{remark:cone-s-plus}
    For any simplex $\spx \subseteq \Rspace^d$ and $\alpha \geq 0$
    such that $\Conv{\spx} \subseteq \Rspace^d \setminus \MA{\M}$ and
    $S(\spx,\alpha) \neq \emptyset$ and for any $x \in \Conv{\spx}$,
    \begin{equation*}
      \angle(\normal{x^*} , s_\spx(x) - c_\spx) = \angle(\Normal {x^*} \M, \Bis \spx) = \angle(\Aff {\spx}, \Tangent {x^*} \M).
    \end{equation*}
  \end{remark}

  We start with a lemma.

  \begin{lemma}
    \label{lemma:highest-point-on-upper-skin}
    Assume that $\M \subseteq \Offset P \varepsilon$ and $P \subseteq
    \Offset \M \delta$ for some $\varepsilon,\delta \geq 0$, and let
    $\alpha \in [\delta,\reach-\delta)$. Let $\beta > 0$ be such that $\Offset
      \M \beta \subseteq \Offset P \alpha$. Consider a simplex $\tau$
      with $0 < \dim \tau < d$ satisfying~\eqref{eq:angle-condition-for-vertical-convexity}.
      Suppose $\tau$ is a boundary simplex of $\Del{P, \alpha}$. Then, $s_\tau(a)$ exists for all $a \in \tau$, and there exists $a \in \tau$ at which the following implication holds:
    \begin{equation}
      \label{eq:implication-upper-skin}
    s_\tau(a) \text{ lies on } \partial\Offset P \alpha \quad
    \implies \quad s_\tau(a) \text{ lies on the upper skin of } \Offset P \alpha.
    \end{equation}
  \end{lemma}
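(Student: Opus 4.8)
The plan is to choose for $a$ the vertex of $\tau$ realising the minimum in \eqref{eq:angle-condition-for-vertical-convexity}, to set $a^* := \pi_\M(a)$ and $\theta := \angle(\Aff\tau,\Tangent{a^*}\M)$ (so that $\theta$ obeys the bound in \eqref{eq:angle-condition-for-vertical-convexity}), and to show this $a$ works. Existence of $s_\tau(a)$ for every $a\in\tau$ is immediate: $\tau$ being a boundary simplex of $\Del{P,\alpha}$, its dual face is non-empty, hence $S(\tau,\alpha)\neq\emptyset$ and $\rho := \rcirc\tau\le\alpha$; and $\Conv\tau\subseteq\Offset P\alpha\subseteq\Offset\M{(\delta+\alpha)}$ with $\delta+\alpha<\reach\le\Reach\M$, so $\Conv\tau$ avoids $\MA\M$ and the highest point of the compact set $S(\tau,\alpha)$ in direction $\normal{a^*}$ is well defined. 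The rest of the argument has two independent parts. The first is the purely metric claim $\|s_\tau(a)-o\|<\reach+\beta$, where $o := a^*+\reach\,\normal{a^*}$. The second uses that $B(o,\reach)^\circ$ is disjoint from $\M$ (since $\Reach\M\ge\reach$), so $o$ lies at distance $\reach$ from $\M$ on its positive side, while a point on the negative side of $\M$ at depth $h$ is at distance at least $\reach+h$ from $o$; hence, by the first part, $s_\tau(a)$ cannot lie on the negative side at depth $\ge\beta$. Since $s_\tau(a)\in\partial\Offset P\alpha$ forces $d(s_\tau(a),\M)\ge\beta$ — otherwise $s_\tau(a)\in(\Offset\M\beta)^\circ\subseteq(\Offset P\alpha)^\circ$, using $\Offset\M\beta\subseteq\Offset P\alpha$ — this will pin $s_\tau(a)$ onto the positive side at depth $\ge\beta$, and then onto the upper skin.

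For the metric claim, write $s_\tau(a)-a = w + v$ with $w := s_\tau(a)-c_\tau$ orthogonal to $\Aff\tau$ and $v := c_\tau-a$ a direction vector of $\Aff\tau$; then $w\perp v$, $\|v\|=\rho$ (circumradius of $\tau$), and $\|w\|=\sqrt{\alpha^2-\rho^2}$, because $S(\tau,\alpha)$ is the sphere of that radius centred at $c_\tau$ inside $\Bis\tau$. By Remark~\ref{remark:cone-s-plus}, $\angle(w,\normal{a^*})=\theta$, so $w\cdot\normal{a^*}=\sqrt{\alpha^2-\rho^2}\cos\theta$; and by Property~\ref{property:complementary-angles} applied with $\angle(\Aff\tau,\Tangent{a^*}\M)=\theta$, every unit direction vector of $\Aff\tau$ makes an angle $\ge\tfrac\pi2-\theta$ with $\normal{a^*}$, so $|v\cdot\normal{a^*}|\le\rho\sin\theta$. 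Using $\rho\le\alpha$,
\[
(s_\tau(a)-a)\cdot\normal{a^*}\;\ge\;\sqrt{\alpha^2-\rho^2}\cos\theta-\rho\sin\theta\;\ge\;-\alpha\sin\theta .
\]
Since $a\in P\subseteq\Offset\M\delta$ we have $a-a^*=\eta\,\normal{a^*}$ with $|\eta|\le\delta$, hence $o-a=(\reach-\eta)\normal{a^*}$ with $0<\reach-\eta\le\reach+\delta$, so $(s_\tau(a)-a)\cdot(o-a)\ge-(\reach+\delta)\alpha\sin\theta$. Expanding $\|s_\tau(a)-o\|^2=\|s_\tau(a)-a\|^2-2(s_\tau(a)-a)\cdot(o-a)+\|o-a\|^2$ and bounding the three summands by $\alpha^2$, by $2(\reach+\delta)\alpha\sin\theta$, and by $(\reach+\delta)^2$ respectively, the inequality $2(\reach+\delta)\alpha\sin\theta<(\reach+\beta)^2-(\reach+\delta)^2-\alpha^2$ coming from \eqref{eq:angle-condition-for-vertical-convexity} gives exactly $\|s_\tau(a)-o\|^2<(\reach+\beta)^2$. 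I expect this to be the main obstacle: one must spot the auxiliary point $o$ and realise that the right-hand side of \eqref{eq:angle-condition-for-vertical-convexity} is calibrated precisely so that the crude estimate $-2(s_\tau(a)-a)\cdot(o-a)\le2(\reach+\delta)\alpha\sin\theta$ is enough — which itself hinges on $\sqrt{\alpha^2-\rho^2}\cos\theta-\rho\sin\theta\ge-\alpha\sin\theta$, so the positive term $\sqrt{\alpha^2-\rho^2}\cos\theta$ never has to be used.

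To finish, suppose $s_\tau(a)\in\partial\Offset P\alpha$, so $d(s_\tau(a),\M)\ge\beta$ as noted. If $s_\tau(a)$ lay on the negative side of $\M$, at depth $h\ge\beta$, then — since $o$ lies at distance $\reach$ from $\M$ on its positive side and the straight segment between a negative-side and a positive-side point of a narrow tube around $\M$ must cross $\M$, at a point $z$ say — we would have $\|s_\tau(a)-o\|=\|s_\tau(a)-z\|+\|z-o\|\ge h+\reach\ge\reach+\beta$, contradicting the metric claim. Hence $s_\tau(a)$ lies on the positive side of $\M$, at depth $\ge\beta$. Now $\Offset P\alpha$ is vertically convex relative to $\M$ and, by Theorem~\ref{theorem:NSW}, $\partial\Offset P\alpha=\UpperSkin{\Offset P\alpha}\cup\LowerSkin{\Offset P\alpha}$; moreover every point of $\LowerSkin{\Offset P\alpha}$ lies on the negative side of $\M$ (at signed depth $\le-\beta$), because the segment $\Offset P\alpha\cap\Normal{m'}\M\cap B(m',\rtube)$ contains $[\,m'-\beta\normal{m'},\,m'+\beta\normal{m'}\,]$ for every $m'\in\M$. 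Since $s_\tau(a)$ is on $\partial\Offset P\alpha$ but on the positive side, it does not belong to the lower skin, hence $s_\tau(a)\in\UpperSkin{\Offset P\alpha}$; this is \eqref{eq:implication-upper-skin} for the chosen vertex $a$, and the lemma follows.
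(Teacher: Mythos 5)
Your proof is correct and follows essentially the same route as the paper's: you introduce the same auxiliary point $o = a^* + \reach\,\normal{a^*}$ (the paper's $z$), invoke Remark~\ref{remark:cone-s-plus} at the same point in the argument, and your expansion of $\|s_\tau(a)-o\|^2$ is exactly the paper's law-of-cosines estimate on the triangle $a,\,z,\,s_\tau(a)$, merely rearranged into a direct bound rather than a contradiction. Two small notes. First, your orthogonal split $s_\tau(a)-a = w+v$ is arguably a cleaner route to $(s_\tau(a)-a)\cdot\normal{a^*}\geq -\alpha\sin\theta$ than the paper's case analysis, which along the way writes $\angle(\normal{a^*},s_\tau(a)-c_\tau)=\angle(\normal{a^*},c_\tau-a)-\frac{\pi}{2}$ as an equality though in general only ``$\geq$'' holds (the final inequality survives, but your version avoids the issue). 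Second, you should justify that the segment $[s_\tau(a),o]$ actually meets $\M$: a segment joining points on opposite sides of $\M$ need not cross it if the segment leaves the tube on which $\alt{\cdot}$ is defined. Here it does not leave, because $u\mapsto\|u-a^*\|$ is convex and equals $\|s_\tau(a)-a^*\|\leq\alpha+\delta<\reach$ and $\|o-a^*\|=\reach$ at the endpoints, so $d(u,\M)\leq\|u-a^*\|\leq\reach$ along the whole segment and the signed distance $\alt{\cdot}$ is continuous there and changes sign.
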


  \begin{figure}[htb]
    \centering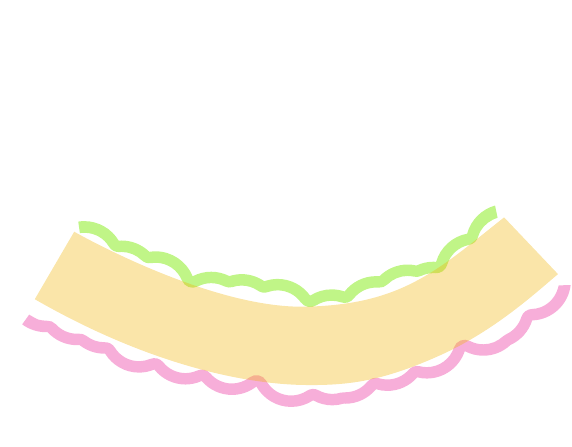
    \caption{We reach a contradiction in the proof of Lemma
      \ref{lemma:highest-point-on-upper-skin} by assuming that $s_\tau(a)$ lies on the lower skin of $\partial \Offset P \alpha$.
      \label{figure:connecting-boundaries}
    }
  \end{figure}

  \begin{proof}
    Because $\tau$ is a boundary simplex of $\Del{P,\alpha}$, we have
    that $S(\tau,\alpha) \neq \emptyset$ and for any $x \in
    \Conv{\tau}$, one can define $s_\tau(x)$ as the
    point of $S(\tau,\alpha)$ highest in direction $\normal {\pi_\M(x)}$.
    Thus, $s_\tau(x)$ exists for all $x \in \Conv{\tau}$.
    Consider $a \in \tau$
    and write $a^* = \pi_\M(a)$ for short.  Suppose that $a \in \tau$
    is such that
    \begin{align}
      \label{eq:claim-skin}
      s_\tau(a) \text{ lies on } \partial\Offset P \alpha
      \quad \text{ and } \quad  s_\tau(a) \text{ does not lie on the upper skin of } \Offset P \alpha.
    \end{align}
    The only possibility is thus that $s_\tau(a)$ lies on the lower
    skin of $P$, as illustrated in Figure
    \ref{figure:connecting-boundaries}.  Let $z = {a^*} + \reach
    \normal {a^*}$ and note that the ball $B(z,\reach)$ is tangent to
    \M at $a^*$ and does not contain any point of \M in its
    interior. Since $\Offset \M \beta \subseteq \Offset P \alpha$, we
    deduce that the lower skins of both $\Offset \M \beta$ and
    $\Offset P \alpha$ do not intersect the interior of $B(z,\beta +
    \reach)$. Hence, $\|z-{s_\tau(a)}\| \geq \reach + \beta$. By
    construction, $\|a-s_\tau(a)\| = \alpha$ and $\|a - z\| \leq
    \|a-{a^*}\| + \|{a^*}-z\| = \delta + \reach$. Consider the
    triangle with vertices $a$, ${s_\tau(a)}$ and $z$. Using the
    formula that gives the angle at vertex $a$ in triangle
    $az{s_\tau(a)}$ with respect to the length of the sides of the
    triangle, we obtain
  \begin{align}
    \notag
    \cos\left( \angle(\normal {a^*} , s_\tau(a) - a) \right)
    \, &= \, \frac{\|a-z\|^2 + \|a-{s_\tau(a)}\|^2 - \|z-{s_\tau(a)}\|^2}{2 \|a-z\| \|a-{s_\tau(a)}\|}\\
    \notag
    \, &\leq \, \frac{\|a-z\|^2 + \alpha^2 - (\reach+\beta)^2 }{2 \|a-z\| \alpha}\\
    \label{eq:angle-towards-lower-skin}
    \, &\leq \, \frac{(\delta+\reach)^2 + \alpha^2 - (\reach+\beta)^2 }{2 (\delta+\reach) \alpha},
  \end{align}
  where the first inequality comes from the fact that
  $\|a-s_\tau(a)\|=\alpha$ and $\|z-{s_\tau(a)}\| \geq \reach + \beta$
  and the second inequality comes from the fact that $\lambda \mapsto
  \frac{\lambda^2 - A}{\lambda}$ is increasing for $A >0$ and
  $\lambda>0$, applied with $\lambda = \|a-z\|$ and $A = - \alpha^2 +
  (\reach+\beta)^2$. We are going to exhibit a point $a \in \tau$ for
  which \eqref{eq:angle-towards-lower-skin} does not hold, showing
  that for that $a$, statement \eqref{eq:claim-skin} is false and its
  negation, statement \eqref{eq:implication-upper-skin}, is true. By
  Remark \ref{remark:conditions-well-defined}
  \[
    -1 \, \leq \, \frac{(\reach+\beta)^2 - (\reach+\delta)^2 -
      \alpha^2}{2(\reach+\delta)\alpha} \, \leq  \, 1.
    \]
    and it makes sense to introduce the angle
  \[
    \theta = \arcsin \left(\frac{(\reach+\beta)^2 - (\reach+\delta)^2
      - \alpha^2}{2(\reach+\delta)\alpha} \right) \in
    \left[-\frac{\pi}{2},\frac{\pi}{2}\right].
    \]
  Assumption \eqref{eq:angle-condition-for-vertical-convexity}
  can then be rewritten as $\min_{a \in \tau} \angle(\Aff \tau,
  \Tangent {\pi_\M(a)} \M) < \theta$ and we let $a \in \tau$ be any vertex
  such that
  \begin{equation}
  \label{eq:angle-condition-for-vertical-convexity-obtuse-in-proof}
  \angle(\Aff\tau,\Tangent {a^*} \M) < \theta.
  \end{equation}
  Let us show that:
  \begin{equation}
    \label{eq:inside-cone-obtuse}
    \angle (\normal {a^*} , s_\tau(a) - a) < \frac{\pi}{2} + \theta.
  \end{equation}

  \begin{figure}[htb]
    \centering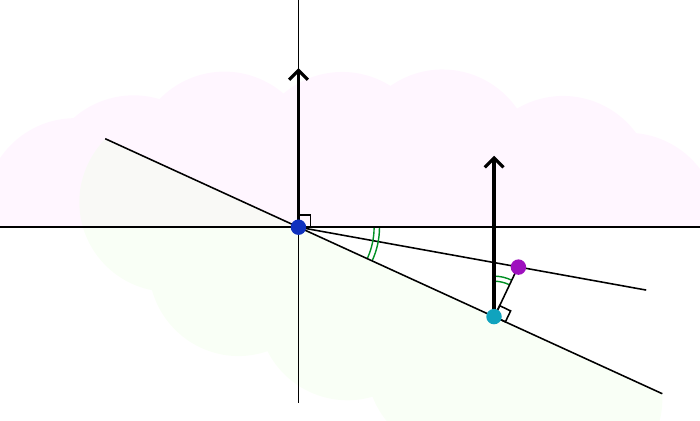
    \caption{Notation for the proof of Lemma
      \ref{lemma:highest-point-on-upper-skin}.
      \label{figure:angle-normal-direction-highest-point}
    }
  \end{figure}

  \noindent If $\angle (\normal {a^*} , s_\tau(a) - a) < \frac{\pi}{2}$,
  clearly \eqref{eq:inside-cone-obtuse} holds. If $\angle (\normal {a^*} ,
  s_\tau(a) - a) \geq \frac{\pi}{2}$ (see Figure
  \ref{figure:angle-normal-direction-highest-point}), then using
  \eqref{eq:angle-condition-for-vertical-convexity-obtuse-in-proof} and Remark
  \ref{remark:cone-s-plus}, we get that
  \begin{align*}
    \theta \, &> \, \angle(\Aff {\tau}, \Tangent {a^*} \M) = \angle (\normal {a^*}, s_\tau(a) - c_\tau) 
    = \angle (\normal {a^*}, c_\tau - a) - \frac{\pi}{2} \\
    \, &\geq \, \angle (\normal {a^*}, s_\tau(a) - a) - \frac{\pi}{2}.
  \end{align*}
  Hence, we obtain \eqref{eq:inside-cone-obtuse} as desired. Taking
  the cosine on both sides of \eqref{eq:inside-cone-obtuse}, we obtain that
  \begin{align*}    
    \cos\left( \angle(\normal {a^*} , s_\tau(a) - a) \right)
    > - \sin\theta = \frac{(\reach+\delta)^2 + \alpha^2 - (\reach+\beta)^2 }{2 (\reach+\delta)\alpha}.
  \end{align*}
  But, this contradicts \eqref{eq:angle-towards-lower-skin} which was
  a consequence of \eqref{eq:claim-skin}. Hence, for that particular
  $a$, \eqref{eq:claim-skin} is false. We thus deduce that there
  exists $a \in \tau$ such that if $s_\tau(a)$ lies on
  $\partial \Offset P \alpha$, then $s_\tau(a)$ lies on the upper skin
  of $\Offset P \alpha$. This concludes the proof.
  \end{proof}

  Before proving Lemma \ref{lemma:connecting-boundaries}, we make a useful remark:

\begin{remark}
  \label{remark:sign-dot-product-invariant}
  Consider a $(d-1)$-simplex $\nu \subseteq \Rspace^d$ that is non-vertical relative to \M
  and whose support is contained in $\Rspace^d \setminus \MA{\M}$ and let
  $N$ be a unit vector orthogonal to $\Aff{\nu}$. If $N \cdot \normal
  {\pi_\M(x)} > 0$ for some $x \in \Conv\nu$, then $N \cdot \normal
  {\pi_\M(x)} > 0$ for all $x \in \Conv\nu$.
\end{remark}

\begin{proof}[Proof of Lemma \ref{lemma:connecting-boundaries}.]
  The proof is by descending induction over the dimension of $\spx$.

  \smallskip \noindent \underline{Induction basis:} Suppose that
  $\spx$ has dimension $d-1$. Then, $\Aff\spx$ is an hyperplane and
  $S(\spx,\alpha)$ is a $0$-sphere that consists of two points
  (possibly equal).  Consider $x \in \Interior{\Conv{\spx}}$ and
  $\lambda > 0$ such that the segment $(x,x+\lambda
  \normal{\pi_\M(x)}]$ lies outside $\US{\Del{P,\alpha}}$.  Let $N$ be
  the unit vector orthogonal to $\Aff \spx$ that satisfies
  \[
  \normal{\pi_\M(x)} \cdot N > 0
  \]
  and observe that, for $\mu>0$ small enough, the segment
  $(x,x+\mu N]$ lies outside $\US{\Del{P,\alpha}}$. Let $s$ be
the highest point of $S(\spx,\alpha)$ along direction $N$. By duality
between the boundary simplices of $\Del{P,\alpha}$ and the faces of
$\partial \Offset P \alpha$, we deduce that $s$ lies on
$\partial\Offset P \alpha$.  Let $a \in \gamma$. Since $\gamma$ is non-vertical, Remark
\ref{remark:sign-dot-product-invariant} tells us that
    \[
    \normal{\pi_\M(a)} \cdot N > 0.
    \]
    Note that $s$ is also the point of $S(\spx,\alpha)$ highest along
    direction $\normal{\pi_\M(a)}$, namely $s = s_\spx(a)$. Thus,
    $s_\spx(a)$ lies on  $\partial\Offset P \alpha$. Applying
    Lemma \ref{lemma:highest-point-on-upper-skin} with $\tau=\spx$, we
    get that $s_\spx(a)$ lies on the upper skin of $\Offset P
    \alpha$. Hence, the set $\{s_\spx(a)\} * \Conv\spx$ is contained
    in the upper join of $\spx$ and the segment $(x,x+\lambda \normal
    {\pi_\M(x)}]$ has a non-empty intersection with the upper join of
      $\spx$. This concludes the induction basis.

    \smallskip \noindent \underline{Induction step:} Let $k \in
    \{0,\ldots,d-1\}$ and assume that the lemma holds for all $\spx$
    with dimension $i \geq k+1$. Let us prove that the lemma also
    holds when $\spx$ has dimension $k$.  We distinguish two cases:

    \smallskip\styleitem{(a)} $\spx$ has no proper coface in
    $\Del{P,\alpha}$. In that case, the whole sphere $S(\spx,\alpha)$
    lies on $\partial \Offset P \alpha$ and so does $s_\spx(a)$ for
    any $a \in \gamma$. Pick $a \in \gamma$. By Lemma
    \ref{lemma:highest-point-on-upper-skin} applied with $\tau =
    \spx$, $s_\spx(a)$ lies on the upper skin of $\Offset P
    \alpha$. Since for $0 \leq k < d-1$, the $(d-k-1)$-sphere
    $S(\spx,\alpha)$ is connected, the whole sphere $S(\spx,\alpha)$
    has to lie on the upper skin. Hence, $S(\spx,\alpha) * \Conv \spx$
    is the upper join of $\spx$. It follows that for any point $x \in
    \Interior{\Conv \spx}$ and any $\lambda >0$, the upper join $S(\spx,\alpha) *
    \Conv\spx$ has a non-empty intersection with the segment
    $(x,x+\lambda \normal {\pi_\M(x)}]$. This concludes
      case~\styleitem{(a)}.

  \begin{figure}[htb]
    \centering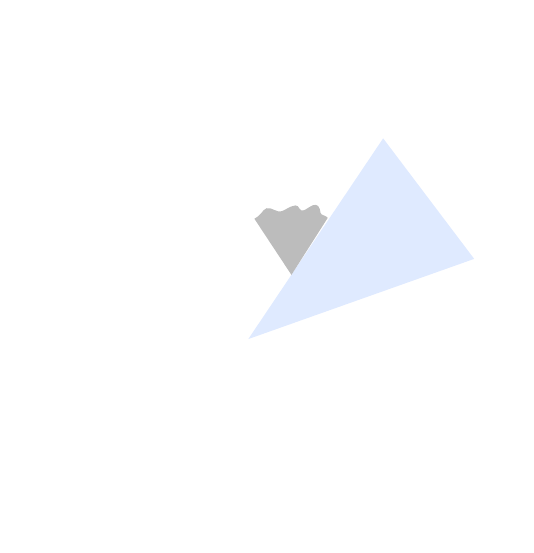
    \caption{
      Construction of a path $\Gamma$ connecting $z \in
      \Interior{\Conv\spx}$ to $z'\in\Interior{\Conv\sigma}$ when
      $d=2$ and $\spx$ is a vertex. As we move any point $x \in [z,z']$ in direction
      $\normal{x^*}$, $x$ enters the gray cone formed of all points $p$
      such that $\angle(\normal{z^*},p-x) < \varphi$ and therefore goes outside $\US{\Del{P,\alpha}}$.
      \label{figure:path-face-to-coface}}
  \end{figure}

  \smallskip\styleitem{(b)} $\spx$ has at least one proper coface in
  $\Del{P,\alpha}$; see Figure \ref{figure:path-face-to-coface}.
  Consider $z \in \Interior{\Conv \spx}$ and $\lambda > 0$ such that the
  segment $(z,z+\lambda \normal{\pi_\M(z)}]$ lies outside
  $\US{\Del{P,\alpha}}$ and let us prove that the segment intersects an upper
  join. Write $z^* = \pi_\M(z)$ for short. Let $\Sigma =
  \Star{\spx}{\Del{P,\alpha}}$ denote the set of cofaces of $\spx$ in
  $\Del{P,\alpha}$. Recall that the tangent cone to $\US{\Sigma}$
  at $z$ (see Equation~\eqref{eq:tangent_cone}) is denoted $\TanSpace(z,\US{\Sigma})$. Let
  \[
  \varphi = \min_{0 \neq v \in \TanSpace(z,\US{\Sigma})}
  \angle(\normal{z^*},v).
  \]
  Because of our assumption that $(z,z+\lambda \normal{\pi_\M(z)}]$ lies
    outside $\US{\Del{P,\alpha}}$, we have that $\varphi > 0$.
    Consider a ball $B$ centered at $z$ whose radius is small enough
    so that $B \subseteq \Offset P \alpha$ and $B$ does not intersect
    the support of any simplex in $\Del{P,\alpha} \setminus
    \Sigma$. Note that $\US{\Sigma} \cap B$ is star-shaped at $z$,
    that is,
  \[
  x \in \US{\Sigma} \cap B \quad \implies \quad [x,z] \subseteq \US{\Sigma} \cap B.
  \]
  By considering $B$ small enough, we may assume furthermore that for
  all $x \in \US{\Sigma} \cap B$ with $x \neq z$, the half-line with
  origin at $z$ and passing through $x$ intersects $\US{\Sigma} \cap
  B$ in a radius of $B$, that is, a segment connecting the center of
  $B$ to its bounding sphere $\partial B$. We also assume that $B$ is
  small enough so that for all $x \in B$
  \begin{equation*}
    \label{eq:cone-outside-around-north-pole}
    \angle (\normal{z^*},\normal{x^*}) < \varphi,
  \end{equation*}
  where, as usual, we write $x^* = \pi_\M(x)$. Consider the $(d-k)$-dimensional
  affine space $H$ passing through $z$ and orthogonal to $\Aff\spx$.
  Associate to each proper coface $\sigma \in \Sigma \setminus
  \{\spx\}$ the angle:
  \[
  \psi_\sigma = \min_{s \in \Conv{\sigma} \cap \partial B \cap H} \angle (\normal{z^*}, s-z).
  \]
  Note that  $\psi_\sigma \geq \varphi$ because $\varphi$ can be expressed as
  \[
  \varphi = \min_{s \in \US{\Sigma} \cap \partial B} \angle (\normal{z^*}, s-z).
  \]
  Let $\sigma \in \Sigma \setminus \{\spx\}$ that achieves the
  smallest $\psi_\sigma$. One can always find $z' \in
  \Interior{\Conv{\sigma}} \cap B$ and $\mu>0$ such that the
  $\mu$-offset of the segment $[z,z']$ is contained in $B$; see
  Figure \ref{figure:path-face-to-coface}.  Let
  \[
  f(x) = x + \mu \normal{x^*}.
  \]
  By construction, for all $x \in [z,z']$, we have $(x,f(x)] \subseteq
    B \subseteq \Offset P \alpha$. Moreover,
  \[
  (x,f(x)] \subseteq \Offset P \alpha \setminus \US{\Del{P,\alpha}}.
    \]
    Indeed,  $\sigma$ has been chosen so that any
    point $x \in \Conv\sigma \cap B$ is the highest point of
    $\US{\Sigma} \cap B$ along direction $\normal{z^*}$ and therefore
    \[
    (x,x+\mu \normal{z^*}] \subseteq \Offset P \alpha \setminus
      \US{\Del{P,\alpha}}
      \]
      Because $\angle (\normal{z^*},\normal{x^*}) < \varphi \leq
      \psi_\sigma$, the same remains true when replacing
      $\normal{z^*}$ with $\normal{x^*}$ in the above
      inclusion; see Figure \ref{figure:path-face-to-coface}. Consider now the path
  \[
  \Gamma = (z,f(z)] \cup f([z,z']) \cup [f(z'),z').
  \]
  By construction,
  \[
  \Gamma \subseteq  \Offset P \alpha \setminus \US{\Del{P,\alpha}}
  \]
  and, by Remark \ref{remark:intersection-joins}, $\Gamma$ is covered
  either by upper joins or by lower joins but cannot be covered by
  both. Applying the induction hypothesis to $\sigma$, we obtain that
  sufficiently close to $z' \in \Conv\sigma$, the path $\Gamma$ is
  covered by an upper join. Hence, the entire path $\Gamma$ is covered
  by upper joins and the segment $(z,z+\lambda \normal{z^*}]$ intersects
    an upper join. This concludes case~\styleitem{(b)} and the proof
    of the lemma.
\end{proof}

\subsection{Acyclicity}
\label{appendix:acyclicity}

The relation $\below$ over the $d$-simplices of a simplicial complex
$K$ is not acyclic in general; see Figure~\ref{figure:cycle-2d}.
However, when $K$ is the $\alpha$-complex of a finite point set $P$
within distance $\delta$ of $\M$, we state conditions under which
$\below$ becomes acyclic. The result is akin to an acyclicity result
on Delaunay complexes in \cite{edelsbrunner2001geometry}.

\begin{lemma}[Acyclicity]
  \label{lemma:acyclicity}
  Let $P$ be a finite point set in $\Rspace^d$ such that $P \subseteq
  \Offset \M \delta$ for some $\delta \geq 0$. Let $\alpha \in
   \left[0,\frac{2(\reach - \delta)}{3}\right)$.  Assume that all $(d-1)$-simplices $\nu \in
    \Del{P,\alpha}$ are non-vertical relative to \M and satisfy
    \[
      \min_{x \in \Conv\nu} \angle(\Aff \nu, \Tangent {\pi_\M(x)} \M) \, < \,  
      \frac{\pi}{2} - 2 \arcsin\left( \frac{\alpha}{2(\reach - \delta - \alpha)} \right).
    \]
    Then, the relation $\below$ defined over the Delaunay $d$-simplices is acyclic.
\end{lemma}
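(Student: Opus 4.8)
The plan is to equip the $d$-simplices of $\Del{P,\alpha}$ with a real-valued \emph{height} that strictly increases along $\below$; since a strictly increasing potential forbids directed cycles in the dual graph $G_\M(\Del{P,\alpha})$, acyclicity follows immediately. First I would record the consequence $\alpha+\delta<\reach$ of $\alpha<\tfrac{2}{3}(\reach-\delta)$. For a $d$-simplex $\sigma\in\Del{P,\alpha}$ the Voronoi cell $V(\sigma,P)$ is the single point $c_\sigma$, the circumcenter of $\sigma$, and $\sigma\in\Del{P,\alpha}$ exactly when $\rho(\sigma)\le\alpha$, where $\rho(\sigma)=\min_{p\in P}\|c_\sigma-p\|$ is the circumradius (the nearest points of $P$ to $c_\sigma$ being precisely the vertices of $\sigma$). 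Hence $c_\sigma\in\Offset{P}{\alpha}\subseteq\Offset{\M}{\alpha+\delta}\subseteq\Rspace^d\setminus\MA{\M}$, so the signed distance function $\ell(y)=\langle y-\pi_\M(y),\normal{\pi_\M(y)}\rangle$ — which, $\M$ being $C^2$, is differentiable on $\Offset{\M}{r}$ for every $r<\reach$ with $\nabla\ell(y)=\normal{\pi_\M(y)}$ — is defined at every such $c_\sigma$. I would take $h(\sigma):=\ell(c_\sigma)$ and prove $\sigma_0\below\sigma_1\Rightarrow h(\sigma_0)<h(\sigma_1)$.

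So fix $\sigma_0\below\sigma_1$ with common facet $\nu$, and let $N$ be the unit normal to $\Aff{\nu}$ pointing from $\sigma_0$ towards $\sigma_1$ (the two simplices lie on opposite sides of $\Aff{\nu}$ since $\Del{P,\alpha}$ is canonically embedded). Two geometric ingredients are needed. First, the circumcenters $c_0:=c_{\sigma_0}$ and $c_1:=c_{\sigma_1}$ both lie on the bisector line $\Bis{\nu}$, orthogonal to $\Aff{\nu}$; writing $c_i=c_\nu+t_iN$ with $c_\nu$ the circumcenter of $\nu$, the Delaunay emptiness of the ball $B(c_0,\rho(\sigma_0))$ (it does not contain the vertex of $\sigma_1$ off $\nu$, which lies strictly on the $N$-side of $\Aff\nu$) gives, via a short Pythagorean computation along $\Bis{\nu}$, that $t_1>t_0$, i.e. $c_1-c_0=\lambda N$ with $\lambda>0$. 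Second, because $\nu$ is an upper facet of $\sigma_0$ and a lower facet of $\sigma_1$ and is non-vertical, both conditions — once $N$ is oriented from $\sigma_0$ to $\sigma_1$ — amount to the single inequality $\langle N,\normal{\pi_\M(x)}\rangle>0$ for every $x\in\Conv{\nu}$, equivalently $\angle(\Aff\nu,\Tangent{\pi_\M(x)}{\M})<\tfrac\pi2$. Set $\phi:=2\arcsin\!\big(\tfrac{\alpha}{2(\reach-\delta-\alpha)}\big)$. If $\phi\ge\tfrac\pi2$ the angular hypothesis is unsatisfiable, forcing $\Del{P,\alpha}$ to have no $(d-1)$-simplex hence no $d$-simplex, so $\below$ is vacuously acyclic; thus I may assume $\phi<\tfrac\pi2$. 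The hypothesis on $\nu$ then supplies some $x_0\in\Conv{\nu}$ with, writing $m_0:=\pi_\M(x_0)$, $\angle(N,\normal{m_0})=\angle(\Aff\nu,\Tangent{m_0}{\M})<\tfrac\pi2-\phi$.

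The core of the argument is to propagate this inequality along the segment $[c_0,c_1]$. For $t\in[0,1]$ put $c(t)=c_0+t\lambda N$ and $m(t)=\pi_\M(c(t))$. Each vertex $q$ of $\nu$ is a vertex of both $\sigma_0$ and $\sigma_1$, so $\|q-c_0\|=\rho(\sigma_0)\le\alpha$ and $\|q-c_1\|=\rho(\sigma_1)\le\alpha$, whence $\|q-c(t)\|\le\alpha$; writing $x_0$ as a convex combination of the $q$'s then yields the crucial bound $\|x_0-c(t)\|\le\alpha$. Since $P\subseteq\Offset{\M}{\delta}$, both $x_0$ and $c(t)$ lie in $\Offset{\M}{\alpha+\delta}$, on which $\pi_\M$ is $\tfrac{\reach}{\reach-\alpha-\delta}$-Lipschitz, so $\|m_0-m(t)\|\le\tfrac{\alpha\reach}{\reach-\alpha-\delta}$. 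The bound on the variation of tangent spaces, \cite[Corollary 3]{boissonnat2019reach}, which in codimension one reads $\angle(\normal{m},\normal{m'})\le 2\arcsin(\|m-m'\|/2\reach)$ for nearby $m,m'$, then gives $\angle(\normal{m_0},\normal{m(t)})\le\phi$, and the triangle inequality for angles between lines yields $\angle(N,\normal{m(t)})\le\angle(N,\normal{m_0})+\angle(\normal{m_0},\normal{m(t)})<\tfrac\pi2$. Hence $\langle N,\normal{m(t)}\rangle>0$ for all $t\in[0,1]$, so $t\mapsto\ell(c(t))$ is differentiable with derivative $\lambda\langle\normal{m(t)},N\rangle>0$, is therefore strictly increasing, and $h(\sigma_1)=\ell(c_1)>\ell(c_0)=h(\sigma_0)$.

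The main obstacle is the last paragraph: one must get $\|x_0-c(t)\|\le\alpha$ (the naive triangle-inequality bound $2\alpha$ is a factor of two too weak) through the convex-combination trick, and then thread it through the Lipschitz estimate for $\pi_\M$ and the tangent-variation bound so that exactly the threshold $\phi=2\arcsin\!\big(\tfrac{\alpha}{2(\reach-\delta-\alpha)}\big)$ of the hypothesis comes out — leaving no slack. A secondary point deserving care is ingredient one (monotonicity of the circumcenter under the flip across $\nu$, using only Delaunayness of $\sigma_0$ and $\sigma_1$), together with checking that "$\nu$ upper for $\sigma_0$ and lower for $\sigma_1$'' really does collapse to the one inequality $\langle N,\normal{\pi_\M(x)}\rangle>0$ once $N$ is oriented consistently.
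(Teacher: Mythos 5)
Your proof is correct and follows essentially the same route as the paper: the paper's Lemma~\ref{lemma:acyclicity} is derived from Lemma~\ref{lemma:height-is-increasing}, which establishes exactly your potential argument by showing that the signed altitude $\alt{z}$ is strictly increasing along the Voronoi edge $V(\nu,P)\cap\Offset P\alpha$ (whose endpoints are the circumcenters $Z(\sigma_0), Z(\sigma_1)$), using the very same ingredients: the convex-combination bound $\|z-x\|\le\alpha$, the $\tfrac{\reach}{\reach-\alpha-\delta}$-Lipschitz property of $\pi_\M$, the normal-variation bound from \cite[Corollary~3]{boissonnat2019reach}, and the angle triangle inequality carried out so that exactly the threshold $2\arcsin\bigl(\tfrac{\alpha}{2(\reach-\delta-\alpha)}\bigr)$ emerges. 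The only surface-level difference is that you phrase the consistent orientation of $N$ and the circumcenter ordering $t_1>t_0$ as separate geometric preliminaries, whereas the paper packages them into its notion of a consistently oriented normal and the identification of the segment endpoints.
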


Any $(d-1)$-simplex $\nu$ that satisfies the conditions of the above
lemma is non-vertical relative to \M and has a support contained in $\Rspace^d
\setminus \MA{\M}$. 

\begin{definition}
  \label{definition:consistently-oriented}
  Consider a $(d-1)$-simplex $\nu$ that is non-vertical relative to
  \M and whose support is contained in $\Rspace^d \setminus
  \MA{\M}$. We say that the unit vector $N$ normal to $\Aff\nu$ is
     {\em consistently oriented} with \M if there exists $x \in
     \Conv\nu$ such that $N \cdot \normal {\pi_\M(x)} > 0$.
\end{definition}

By Remark \ref{remark:sign-dot-product-invariant}, $\nu$ has exactly
one of its two unit normal vectors that is consistently oriented with
\M and for that unit vector $N$, we have that $N \cdot \normal
   {\pi_\M(x)} > 0$ for all $x \in \Conv\nu$.  Let $Z(\sigma)$
   designates the center of the $(d-1)$-sphere that circumscribes a
   $d$-simplex $\sigma$ and associate to each point $z \in \Rspace^d
   \setminus \MA{\M}$ the real number:
\[
\alt{z} = (z - \pi_\M(z)) \cdot \normal{ \pi_\M(z) }.
\]

The proof of Lemma \ref{lemma:acyclicity} relies on the following
lemma:

\begin{lemma}
  \label{lemma:height-is-increasing}
  Let $P$ be a finite point set in $\Rspace^d$ such that $P \subseteq
   \Offset \M \delta$ for some $\delta \geq 0$. Let $\alpha \in
           \left[0,\frac{2(\reach - \delta)}{3}\right)$ and consider a $(d-1)$-simplex $\nu \in
             \Del{P,\alpha}$ that is non-vertical relative to
             \M. Assume that
     \[
     \min_{x \in \Conv\nu} \angle(\Aff \nu, \Tangent {\pi_\M(x)} \M) \, < \,  
     \frac{\pi}{2} - 2 \arcsin\left( \frac{\alpha}{2(\reach - \delta - \alpha)} \right).
     \]
   \begin{enumerate}
   \item Let $N$ be the unit normal vector to $\Aff\nu$ that is
     consistently oriented with \M. As we move a point $z$ on the
     segment $V(\nu,P) \cap \Offset P \alpha$ along direction $N$,
     $\alt{z}$ is increasing.
   \item Suppose that $\nu$ is the common facet of two
     $d$-simplices $\sigma_0, \sigma_1 \in \Del{P,\alpha}$ such that
     $\sigma_0 \below \sigma_1$.  As we move a point $z$ on the
     Voronoi edge dual to $\nu$ in direction $Z(\sigma_1) -
     Z(\sigma_0)$, $\alt{z}$ is increasing and therefore:
     \[
     \sigma_0 \below \sigma_1 \quad \implies \quad \alt{Z(\sigma_0)} < \alt{Z(\sigma_1)}.
     \]
   \end{enumerate} 
\end{lemma}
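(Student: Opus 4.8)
The plan is to reduce both parts to the behaviour of $\alt{}$ along the one-dimensional Voronoi face $V(\nu,P)$ dual to $\nu$. Since the $d$ vertices of $\nu$ are affinely independent, the points equidistant from all of them form the line $L=\{c_\nu+tN\mid t\in\Rspace\}$ through the circumcenter $c_\nu$ of $\nu$, orthogonal to $\Aff\nu$, and $V(\nu,P)$ is a convex subset of $L$. Fixing a vertex $a$ of $\nu$, one has $d(x,P)=\|x-a\|$ for every $x\in V(\nu,P)\subseteq V(a,P)$, so $V(\nu,P)\cap\Offset P \alpha=V(\nu,P)\cap B(a,\alpha)$ is convex and every $z$ in it obeys $\|z-a\|\le\alpha$. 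I would first prove Part~1 --- that $\alt{}$ strictly increases along $V(\nu,P)\cap\Offset P \alpha$ in direction $N$ --- and then deduce Part~2 by showing that $Z(\sigma_1)-Z(\sigma_0)$ is a positive multiple of $N$. For that last point, $Z(\sigma_i)$ is the unique point of the Voronoi cell $V(\sigma_i,P)\subseteq V(\nu,P)$, and $\sigma_i\in\Del{P,\alpha}$ forces $Z(\sigma_i)\in\Offset P \alpha$; hence $Z(\sigma_0),Z(\sigma_1)$ lie in the convex set $V(\nu,P)\cap\Offset P \alpha$ and Part~1 applies to the segment joining them.

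For Part~1, recall that $\alt{}$ coincides with the signed distance to $\M$, which on $\Offset \M t$ for $t<\reach\le\Reach\M$ is continuously differentiable with $\nabla\alt(z)=\normal{\pi_\M(z)}$ \cite{federer1959curvature}. Since $\alpha<\tfrac{2}{3}(\reach-\delta)$ gives $\delta+\alpha<\reach$ and $V(\nu,P)\cap\Offset P \alpha\subseteq\Offset P \alpha\subseteq\Offset \M {(\delta+\alpha)}$, parametrizing the segment by $z(t)=z_0+tN$ yields $\tfrac{d}{dt}\alt{z(t)}=\normal{\pi_\M(z(t))}\cdot N$, so it suffices to establish $\normal{\pi_\M(z)}\cdot N>0$ on the segment. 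Put $\theta_0=2\arcsin\bigl(\tfrac{\alpha}{2(\reach-\delta-\alpha)}\bigr)$ and, using compactness of $\Conv\nu$, pick $x_0\in\Conv\nu$ with $\angle(\Aff\nu,\Tangent {\pi_\M(x_0)} \M)<\tfrac{\pi}{2}-\theta_0$. Since $\angle(U,V)=\angle(V^\perp,U^\perp)$ for vector spaces, this angle equals the angle between the line directed by $N$ and $\Normal {\pi_\M(x_0)} \M$, so, $N$ being consistently oriented with $\M$ (Remark~\ref{remark:sign-dot-product-invariant} at $x_0\in\Conv\nu$), $\angle(N,\normal{\pi_\M(x_0)})<\tfrac{\pi}{2}-\theta_0$. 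For $z\in V(\nu,P)\cap\Offset P \alpha$, from $\|z-a\|\le\alpha$ for each vertex $a$ of $\nu$ and convexity of the norm I get $\|z-x_0\|\le\alpha$; the $\tfrac{\reach}{\reach-\delta-\alpha}$-Lipschitz bound for $\pi_\M$ on $\Offset \M {(\delta+\alpha)}$ \cite{federer1959curvature} gives $\|\pi_\M(z)-\pi_\M(x_0)\|\le\tfrac{\reach\,\alpha}{\reach-\delta-\alpha}$, hence by \cite[Corollary~3]{boissonnat2019reach}, $\angle(\normal{\pi_\M(z)},\normal{\pi_\M(x_0)})\le 2\arcsin\bigl(\tfrac{\|\pi_\M(z)-\pi_\M(x_0)\|}{2\reach}\bigr)\le\theta_0$. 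The triangle inequality for angles then gives $\angle(\normal{\pi_\M(z)},N)<\tfrac{\pi}{2}$, that is $\normal{\pi_\M(z)}\cdot N>0$, as wanted.

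For Part~2, let $v_i$ be the vertex of $\sigma_i$ outside $\nu$ and $\eta_i=(v_i-c_\nu)\cdot N\neq0$. Unfolding the definition of upper and lower facets: for $x$ in the relative interior of $\Conv\nu$, saying $\nu$ is an upper facet of $\sigma_0$ means $x$ is the topmost point of $\Conv{\sigma_0}$ along $\normal{\pi_\M(x)}$, so $+\normal{\pi_\M(x)}$ points out of $\Conv{\sigma_0}$ at $x$; as the inward unit normal of $\nu$ in $\sigma_0$ is $\operatorname{sign}(\eta_0)\,N$ and $\normal{\pi_\M(x)}\cdot N>0$, this forces $\eta_0<0$, and symmetrically $\nu$ being a lower facet of $\sigma_1$ forces $\eta_1>0$. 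Writing $Z(\sigma_i)=c_\nu+t_iN$, a short computation from equidistance to the vertices of $\nu$ and to $v_i$ gives $2\eta_i t_i=\|c_\nu-v_i\|^2-\rcirc{\nu}^2$. Emptiness of the Delaunay circumball of $\sigma_0$, of radius $\sqrt{\rcirc{\nu}^2+t_0^2}$, applied to the point $v_1\in P$ reads $\|Z(\sigma_0)-v_1\|^2>\rcirc{\nu}^2+t_0^2$ --- strict, since general position forbids $d+2$ cospherical points --- which expands to $\|c_\nu-v_1\|^2-\rcirc{\nu}^2>2\eta_1 t_0$, i.e. $2\eta_1 t_1>2\eta_1 t_0$, whence $t_1>t_0$ because $\eta_1>0$. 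Therefore $Z(\sigma_1)-Z(\sigma_0)=(t_1-t_0)N$ is a positive multiple of $N$, the segment $[Z(\sigma_0),Z(\sigma_1)]$ lies in the convex set $V(\nu,P)\cap\Offset P \alpha$, and Part~1 yields $\alt{Z(\sigma_0)}<\alt{Z(\sigma_1)}$.

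The main obstacle is Part~1: controlling $\normal{\pi_\M(z)}\cdot N$ at points $z$ of the Voronoi segment that need not lie on $\Conv\nu$. The chain of estimates must be arranged so that the tangent-plane drift between $\pi_\M(x_0)$ and $\pi_\M(z)$ is at most exactly the slack $\theta_0$ built into the hypothesis --- which is precisely why the term $2\arcsin(\alpha/2(\reach-\delta-\alpha))$ appears in the angle condition --- so the Lipschitz constant of $\pi_\M$ and the reach-controlled variation of tangent spaces have to be tracked with some care. Once that is in place, the sign bookkeeping in Part~2 and the convexity of $V(\nu,P)\cap\Offset P \alpha$ are routine.
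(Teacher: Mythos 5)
Your proof is correct and follows essentially the same route as the paper's for Part 1: parametrize the segment $V(\nu,P)\cap\Offset P\alpha$, compute $\tfrac{d}{dt}\alt{z(t)}=\normal{\pi_\M(z(t))}\cdot N$ via $\nabla\alt=\normal{\pi_\M(\cdot)}$, bound $\|z-x\|\le\alpha$ for $x\in\Conv\nu$, apply the $\tfrac{\reach}{\reach-\delta-\alpha}$-Lipschitz constant of $\pi_\M$ and the reach-controlled tilt of normals, and absorb the resulting $2\arcsin(\cdot)$ against the slack built into the angular hypothesis. The only cosmetic difference is that you fix a single witness $x_0$ realizing the minimum whereas the paper minimizes at the end --- same estimate.

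For Part 2 you go further than the paper, which merely asserts that $z_0=Z(\sigma_0)$, $z_1=Z(\sigma_1)$, and $N=(z_1-z_0)/\|z_1-z_0\|$ and lets Part 1 finish. You actually derive this: you unfold the definition of upper/lower facets to get $\eta_0<0$ and $\eta_1>0$ for $\eta_i=(v_i-c_\nu)\cdot N$, establish the identity $2\eta_i t_i=\|c_\nu-v_i\|^2-\rcirc\nu^2$ with $Z(\sigma_i)=c_\nu+t_iN$, and then use Delaunay emptiness (strict by general position) to obtain $t_1>t_0$. This is a genuine, and welcome, filling-in of a step the paper treats as immediate; it also cleanly yields that $[Z(\sigma_0),Z(\sigma_1)]$ is non-degenerate, so the strict conclusion $\alt{Z(\sigma_0)}<\alt{Z(\sigma_1)}$ follows from Part 1. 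No gaps.
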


\begin{proof}
  We first note that $V(\nu,P) \cap \Offset P \alpha$ is indeed a segment because
  \[
  V(\nu,P) \cap \Offset P \alpha = V(\nu, P) \cap \bigcap_{p \in
    \nu} B(p,\alpha)
  \]
  and this segment is parallel to $N$.
  Let $z_0$ and $z_1$ be the two endpoints of that segment and name
  them so that $(z_1-z_0)\cdot N \geq 0$.  For all $t \in [0,1]$, let
  $z(t) = (1-t) z_0 + t z_1$ and $\varphi(t) = \alt{z(t)}$. We show
  the first item of the lemma by establishing that $\varphi'(t) > 0$
  for all $t \in (0,1)$. Since $\nabla \alt{z} = \normal {\pi_\M(z)}$
  and $z'(t) = z_1 - z_0$, we obtain that
  \begin{align*}
    \varphi'(t) &= \nabla \alt{z(t)} \cdot z'(t)\\
    &= \normal {\pi_\M(z(t))} \cdot (z_1 - z_0).
  \end{align*}
  To show that $\varphi'(t) > 0$, it suffices to show that
  $\angle(\normal {z(t)} , z_1 - z_0) < \frac{\pi}{2}$ for all $t \in
  (0,1)$; see Figure~\ref{figure:proof-increasing-height}, left. Consider
  $t \in (0,1)$ and $x \in \Conv\nu$. Write $z = z(t)$, $z^* =
  \pi_\M(z(t))$, and $x^* = \pi_\M(x)$ for short. We have
  \begin{equation*}
    \angle(\normal {z^*} , z_1 - z_0) \leq \angle(\normal {z^*}, \normal {x^*}) + \angle( \normal {x^*}, z_1-z_0 ).
  \end{equation*}

  \begin{figure}[htb]
    \begin{center}
      \def\svgwidth{.49\linewidth}
      \raisebox{1.5cm}{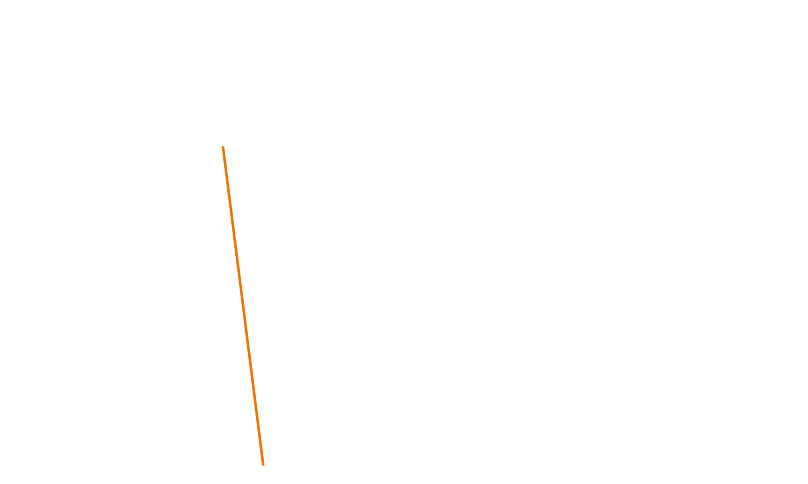}
      \def\svgwidth{.49\linewidth}
      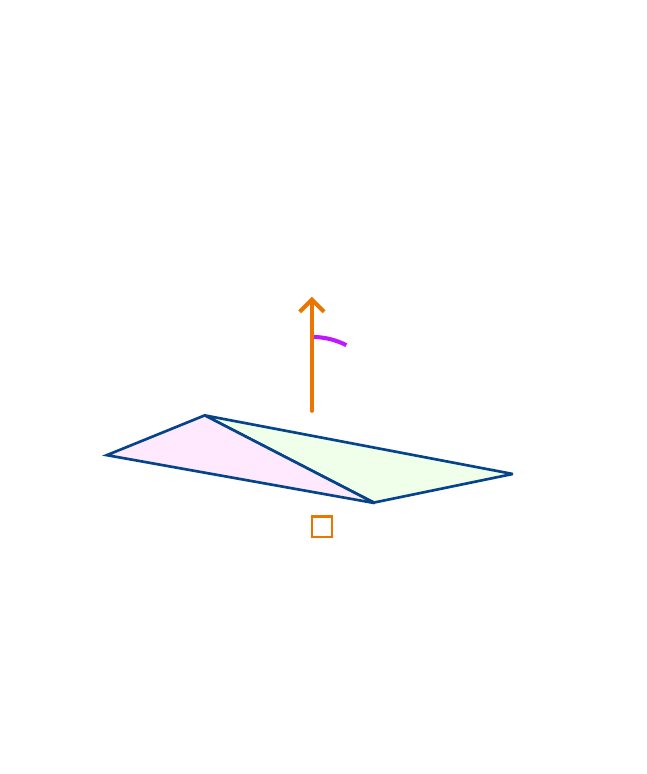
    \end{center}
    \caption{Notation for the proof of Lemma
      \ref{lemma:height-is-increasing}. The angle $\angle(\normal {x^*}, z_1 - z_0)$ is
       in purple. \label{figure:proof-increasing-height}}
  \end{figure}

  Let us first upper bound $\angle(\normal {z^*}, \normal {x^*})$. Set $r = \delta
  + \alpha < \Reach{\M}$. Since $P \subseteq \Offset \M \delta$, we get that $x,z
  \in \Offset P \alpha \subseteq \Offset \M {(\delta+\alpha)} =
  \Offset \M r$. We know from \cite[page 435]{federer1959curvature}
  that for $r < \Reach{\M}$, the projection map $\pi_\M$ onto
  $\M$ is $\left( \frac{\reach}{\reach - r} \right)$-Lipschitz for
  points in $\Offset \M r$.  Hence,
  $ \|z^* - x^*\| \leq \frac{\reach}{\reach-\alpha-\delta} \, \|z-x\|$.
  Then, using a result in \cite{boissonnat2018geometric} which says that for all 
   $m,m' \in \M$, we have 
  $\angle (\normal m, \normal {m'}) \leq 2 \arcsin \left( \frac{\|m'-m\|}{2\reach} \right)$, 
  we obtain that
 \begin{equation*}
  \angle(\normal {z^*}, \normal {x^*}) \leq 2 \arcsin\left( \frac{\|z^* -
    x^*\|}{2\reach} \right) \leq 2 \arcsin \left( \frac{\|z-x\|}{2(\reach-\alpha-\delta)} \right).
 \end{equation*}
 Because $z \in V(\nu,P) \cap \bigcap_{p
   \in \nu} B(p,\alpha)$ and $x \in \Conv{\nu}$, we deduce that
  $\|z-x\| \leq \alpha$ and 
 \begin{equation}
   \label{eq:bound-angle-between-normals}
  \angle(\normal {z^*}, \normal {x^*}) \leq 2 \arcsin \left(
  \frac{\alpha}{2(\reach-\alpha-\delta)} \right).
 \end{equation}
 Because $N$ is consistently oriented with \M, we have that $N \cdot
 \normal{x^*} > 0$ and
 \begin{equation}
   \label{eq:relation-normals-tangents}
    \angle( \normal {x^*}, z_1-z_0 ) = \angle( \normal {x^*}, N ) =  \angle (\Tangent {x^*} \M, \Aff {\nu})
 \end{equation}
 Using \eqref{eq:bound-angle-between-normals} and \eqref{eq:relation-normals-tangents}, we obtain that
 \begin{align*}
   \angle(\normal {z^*} , z_1 - z_0) &\leq \angle(\normal {z^*}, \normal {x^*}) + \angle( \normal {x^*}, z_1-z_0 )\\   &\leq  2 \arcsin \left(
  \frac{\alpha}{2(\reach-\alpha-\delta)} \right) + \angle (\Tangent {x^*} \M, \Aff {\nu}).
 \end{align*}
 Since the above inequality is true for all $x$, we get that
 \begin{equation*}
   \angle(\normal {z^*} , z_1 - z_0) \leq
   2 \arcsin \left(\frac{\alpha}{2(\reach-\alpha-\delta)} \right) + \min_{x \in \Conv\nu} \angle (\Tangent {x^*} \M, \Aff {\nu})
 \end{equation*}
 and using our angular hypothesis, we deduce that $ \angle(\normal
 {z^*} , z_1 - z_0) < \frac{\pi}{2}$ and $\varphi'(t) > 0$ for all $t \in (0,1)$ as desired.

 \medskip

 For the proof of the second item of the lemma, define $z_0$ and $z_1$
 as in the proof of the first item and note that the hypothesis
 $\sigma_0 \below \sigma_1$ implies $z_1 = Z(\sigma_1)$, $z_0 =
 Z(\sigma_0)$, and $N = \frac{z_1-z_0}{\|z_1-z_0\|}$; see
 Figure~\ref{figure:proof-increasing-height}, right.  The first item
 then implies that
 \[
 \alt{Z(\sigma_0)} < \alt{Z(\sigma_1)},
 \]
 which yields the second item.
\end{proof}

\begin{proof}[Proof of Lemma \ref{lemma:acyclicity}.]
   Let $\sigma_0, \sigma_1 \in \Del{P,\alpha}$ be two $d$-simplices
   with a common $(d-1)$-face. If $\sigma_0 \below \sigma_1$, then
   Lemma~\ref{lemma:height-is-increasing} implies that
   $\alt{Z(\sigma_0)} < \alt{Z(\sigma_1)}$. The acyclicity of the
   relation $\below$ follows, because no function can increase along a
   cycle.
\end{proof}

\clearpage
\section{Sampling conditions for surfaces in $\Rspace^3$}
\label{appendix:alpha-complex-second-corrolary}

\ignore{
\biancacmt{It would make it easier to know what we are referring to if the section title would be more explicit, like "Proof of the sampling conditions on a surface" or "When is the sample a triangulation of M".} \Domi{What about the following section title?}
}

This appendix contains the proof of
Theorem~\ref{corollary:3D-theory} in
Section~\ref{section:alpha-complexes}. Recall that
\[
  \beta_{\varepsilon,\alpha} = -\frac{\varepsilon^2}{2\reach} + \sqrt{\alpha^2 + \frac{\varepsilon^4}{4\reach^2}-\varepsilon^2}.
\]

\CorollaryThreeDimensionTheory*

\bigskip

As an intermediate step, we establish Lemma \ref{lemma:3D-conditions}.

\begin{lemma}
  \label{lemma:3D-conditions}
  Let \M be a surface in $\Rspace^3$ and $P \subseteq \M$ a finite
  point set such that $\M \subseteq \Offset P \varepsilon$. Let $\delta = 0$ and $\beta =
  \beta_{\varepsilon,\alpha}$. Consider $\varepsilon \geq 0$ and $\alpha \geq 0$ that satisfy 
  the following condition:
  \begin{equation}
    \label{eq:3D-domain}
    \frac{\sqrt{3} \, \alpha}{\reach} < \min \left\{
    \frac{(\reach+\beta_{\varepsilon,\alpha})^2 - \reach^2 - \alpha^2}{2\reach\alpha},\,
    \cos\left( 2 \arcsin\left( \frac{\alpha}{\reach} \right)\right)
    \right\}.
  \end{equation}
  Then, the assumptions of Theorem
  \ref{theorem:alpha-complex} are satisfied. If furthermore
  \begin{equation}
    \label{eq:3D-practical}
    \frac{\sqrt{3} \, \alpha}{\reach} < \sin\left( \frac{\pi}{4} - 2 \arcsin \left( \frac{\alpha}{\reach} \right)\right),
  \end{equation}
   then the assumptions of Corollary
   \ref{theorem:correcness-squash-practical} are also satisfied.
\end{lemma}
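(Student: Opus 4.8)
The plan is to verify that the concrete numerical conditions \eqref{eq:3D-domain} and \eqref{eq:3D-practical} imply, with $\delta = 0$ and $\beta = \beta_{\varepsilon,\alpha}$, all the abstract angular hypotheses of Theorem~\ref{theorem:alpha-complex} and Corollary~\ref{theorem:correcness-squash-practical}. First I would record the bookkeeping facts: since $P \subseteq \M$, we have $\delta = 0$, so the strict homotopy condition for the regime $\delta \leq \varepsilon$ reads $\reach^2 - \varepsilon^2 > (4\sqrt 2 - 5)\reach^2$, i.e. $\frac{\varepsilon}{\reach} < \sqrt{6 - 4\sqrt 2} \approx 0.586$, which is implied by \eqref{eq:3D-domain} (one checks the right-hand side of \eqref{eq:3D-domain} forces a much smaller ratio; in fact $\frac{\sqrt3\,\alpha}{\reach} < 1$ already forces $\frac{\alpha}{\reach} < \frac{1}{\sqrt3}$, and the cosine term together with $\beta_{\varepsilon,\alpha}$ being real constrains $\varepsilon$). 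By Lemma~\ref{lemma:tubular-neighborhood-in-alpha-offset}, $\beta = \beta_{\varepsilon,\alpha}$ indeed satisfies $\Offset \M \beta \subseteq \Offset P \alpha$. I also need $\alpha \in [\delta, \frac{2(\reach-\delta)}{3}) \cap I(\varepsilon,\delta)$; the first containment is $\alpha < \frac{2\reach}{3}$, implied by $\frac{\sqrt3\,\alpha}{\reach} < 1$; membership in $I(\varepsilon,\delta)$ should follow from the strict homotopy condition together with the smallness of $\alpha$, and I would cite Appendix~\ref{appendix:range-alpha} and \cite{attali2024optimal} for this, checking the endpoint inequalities numerically.

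Next I would discharge the three angular conditions of Theorem~\ref{theorem:alpha-complex} for $d = 3$. The key input is Lemma~\ref{lemma:angleBetweenTriangleAndTangentPlane} (and Lemma~\ref{lemma:angleBetweenEdgeAndTangentPlane} for edges), giving $\sin\angle(\Aff\tau, \Tangent a \M) \leq \frac{\sqrt 3\,\rcirc{\tau}}{\reach}$ for any triangle with vertices on $\M$ (the worst case being acute), and $\leq \frac{\rcirc{\tau}}{\reach} \leq \frac{\sqrt3\,\rcirc{\tau}}{\reach}$ for an edge. Since every simplex $\tau \in \Del{P,\alpha}$ satisfies $\rcirc{\tau} \leq \alpha$, and $i$-simplices for $0 < i < d = 3$ are exactly edges and triangles, we get $\max_{a\in\tau}\angle(\Aff\tau,\Tangent a \M) \leq \arcsin\left(\frac{\sqrt3\,\alpha}{\reach}\right)$ (well-defined because $\frac{\sqrt3\,\alpha}{\reach} < 1$ by \eqref{eq:3D-domain}); note also $\pi_\M(a) = a$ since $a \in P \subseteq \M$, and for a general $x \in \Conv\tau$ one uses $\max_{x\in\Conv\tau}\angle(\Aff\tau,\Tangent{\pi_\M(x)}\M) \leq \max_{a\in\tau}\angle(\Aff\tau,\Tangent a\M) + 2\arcsin(\rcirc\tau/\reach)$ via Lemma~\ref{lemma:difference-max-min-angular-deviation-noiseless}, or more directly bound the $\max$ over $x$ by combining the vertex bound with the tangent-variation bound $2\arcsin(\frac{\|x^* - a^*\|}{2\reach})$. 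Then: \eqref{eq:angle-condition-for-skeleton} ($< \frac{\pi}{2}$) holds as soon as $\frac{\sqrt3\,\alpha}{\reach} < 1$ plus a small slack from the $2\arcsin$ term, which \eqref{eq:3D-domain} provides; \eqref{eq:angle-condition-for-vertical-convexity} with $\delta = 0$ reads $\min_{a\in\tau}\angle(\cdots) < \arcsin\left(\frac{(\reach+\beta)^2 - \reach^2 - \alpha^2}{2\reach\alpha}\right)$, and since $\min_a \leq \arcsin(\frac{\sqrt3\,\alpha}{\reach})$ and $\arcsin$ is increasing, this is exactly the first branch of the $\min$ in \eqref{eq:3D-domain}; \eqref{eq:angle-condition-acyclicity} with $\delta = 0$ reads $\min_{x}\angle(\Aff\nu,\cdots) < \frac{\pi}{2} - 2\arcsin\left(\frac{\alpha}{2(\reach-\alpha)}\right)$, and taking sines (legitimate since both sides lie in $[0,\pi/2)$ under our smallness) this becomes $\frac{\sqrt3\,\alpha}{\reach} < \sin\left(\frac{\pi}{2} - 2\arcsin\left(\frac{\alpha}{2(\reach-\alpha)}\right)\right) = \cos\left(2\arcsin\left(\frac{\alpha}{2(\reach-\alpha)}\right)\right)$; one then checks $\cos\left(2\arcsin\left(\frac{\alpha}{\reach}\right)\right) \leq \cos\left(2\arcsin\left(\frac{\alpha}{2(\reach-\alpha)}\right)\right)$ holds when $\frac{\alpha}{\reach} \geq \frac{\alpha}{2(\reach-\alpha)}$, i.e. $2(\reach-\alpha) \geq \reach$, i.e. $\alpha \leq \frac{\reach}{2}$, which follows from $\frac{\sqrt3\,\alpha}{\reach} < 1$. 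Hence the second branch of the $\min$ in \eqref{eq:3D-domain} implies \eqref{eq:angle-condition-acyclicity}. This establishes all hypotheses of Theorem~\ref{theorem:alpha-complex}.

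Finally, for Corollary~\ref{theorem:correcness-squash-practical} I need the extra condition \eqref{eq:angle-condition-practical}: $\max_{a\in\nu}\angle(\Aff\nu,\Tangent{a}\M) < \frac{\pi}{4}$ for all $(d-1)$-simplices $\nu = $ triangles. Again $\max_{a\in\nu}\angle \leq \arcsin\left(\frac{\sqrt3\,\alpha}{\reach}\right)$ (here the $\max$ is over vertices only, so no tangent-variation slack is needed, though one should double-check whether the corollary's $\max$ is over $\Conv\nu$ or over $\nu$ — in \eqref{eq:angle-condition-practical} it is over $a \in \nu$, vertices); so the condition is implied by $\arcsin\left(\frac{\sqrt3\,\alpha}{\reach}\right) < \frac{\pi}{4}$, i.e. $\frac{\sqrt3\,\alpha}{\reach} < \sin\frac{\pi}{4}$. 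But \eqref{eq:3D-practical} demands the strictly stronger $\frac{\sqrt3\,\alpha}{\reach} < \sin\left(\frac{\pi}{4} - 2\arcsin\left(\frac{\alpha}{\reach}\right)\right) \leq \sin\frac{\pi}{4}$, so \eqref{eq:angle-condition-practical} follows. I expect the main obstacle to be the careful handling of the "$\max$ over $\Conv\tau$" versus "$\max$ over vertices" discrepancy and the verification that $\alpha \in I(\varepsilon,\delta)$ — this requires unwinding the somewhat intricate definition in Appendix~\ref{appendix:range-alpha} and confirming the numerical endpoint inequalities; the angular estimates themselves are routine once Lemmas~\ref{lemma:angleBetweenTriangleAndTangentPlane} and~\ref{lemma:difference-max-min-angular-deviation-noiseless} are in hand. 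A remark at the end would note that monotonicity in $\varepsilon$ (through $\beta_{\varepsilon,\alpha}$, which is decreasing in $\varepsilon$) lets one read off the explicit numerical pairs $(\varepsilon/\reach, \alpha/\reach)$ claimed in the statement of Theorem~\ref{corollary:3D-theory}.
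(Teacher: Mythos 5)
Your overall strategy matches the paper's: you invoke Lemma~\ref{lemma:tubular-neighborhood-in-alpha-offset} for $\beta$, bound angles of $\alpha$-complex simplices via Lemmas~\ref{lemma:angleBetweenEdgeAndTangentPlane} and~\ref{lemma:angleBetweenTriangleAndTangentPlane}, upgrade vertex bounds to $\max$ bounds via Lemma~\ref{lemma:difference-max-min-angular-deviation-noiseless}, and plug into the three angular conditions. There are, however, two genuine slips in your chain.

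First, the redundancy argument for \eqref{eq:angle-condition-acyclicity}: you need $\alpha \leq \reach/2$ so that $\arcsin\bigl(\tfrac{\alpha}{2(\reach-\alpha)}\bigr) \leq \arcsin\bigl(\tfrac{\alpha}{\reach}\bigr)$, but you claim this follows from $\tfrac{\sqrt3\,\alpha}{\reach} < 1$. That only gives $\alpha < \reach/\sqrt3 \approx 0.577\,\reach$, which does \emph{not} imply $\alpha \leq \reach/2$. The correct justification is that the second branch of \eqref{eq:3D-domain} itself forces $\alpha < \reach/2$: at $\alpha = \reach/2$ the inequality would read $\sqrt3/2 < \cos(2\arcsin(1/2)) = 1/2$, which is false, and the left side is increasing while the right side is decreasing, so any $\alpha$ satisfying the second branch has $\alpha < \reach/2$ (in fact $\lesssim 0.396\,\reach$). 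The paper handles this by introducing the explicit auxiliary condition \eqref{eq:3D-feasible-range} (which includes $\alpha \leq \reach/\sqrt3$ and membership in $I(\varepsilon)$) and checking numerically that adding it does not shrink the domain of \eqref{eq:3D-domain}; your deferral of the $\alpha \in I(\varepsilon,\delta)$ verification to ``numerical endpoint checks'' is in the same spirit but should be stated as such rather than as something that obviously follows.

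Second, in the last paragraph you assert $\max_{a\in\nu}\angle(\Aff\nu,\Tangent a\M) \leq \arcsin\bigl(\tfrac{\sqrt3\,\alpha}{\reach}\bigr)$ ``because the max is over vertices only, so no tangent-variation slack is needed.'' That is not right: Lemma~\ref{lemma:angleBetweenTriangleAndTangentPlane} establishes the $\sqrt3\rho/\reach$ bound only at the vertex $a$ opposite the longest edge $bc$, not at all three vertices. The bound over all vertices genuinely requires the $2\arcsin(\alpha/\reach)$ slack from Lemma~\ref{lemma:difference-max-min-angular-deviation-noiseless}, exactly as in the paper's \eqref{eq:max-angle-bound}. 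Your final implication happens to land correctly because \eqref{eq:3D-practical} already contains that slack, but the intermediate claim is false and should be replaced by the same $\max$-bound you used earlier for the skeleton condition.
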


\begin{figure}
	\centering
    \def\svgwidth{.5\textwidth}
\begingroup%
  \makeatletter%
  \providecommand\color[2][]{%
    \errmessage{(Inkscape) Color is used for the text in Inkscape, but the package 'color.sty' is not loaded}%
    \renewcommand\color[2][]{}%
  }%
  \providecommand\transparent[1]{%
    \errmessage{(Inkscape) Transparency is used (non-zero) for the text in Inkscape, but the package 'transparent.sty' is not loaded}%
    \renewcommand\transparent[1]{}%
  }%
  \providecommand\rotatebox[2]{#2}%
  \newcommand*\fsize{\dimexpr\f@size pt\relax}%
  \newcommand*\lineheight[1]{\fontsize{\fsize}{#1\fsize}\selectfont}%
  \ifx\svgwidth\undefined%
    \setlength{\unitlength}{348.17102051bp}%
    \ifx\svgscale\undefined%
      \relax%
    \else%
      \setlength{\unitlength}{\unitlength * \real{\svgscale}}%
    \fi%
  \else%
    \setlength{\unitlength}{\svgwidth}%
  \fi%
  \global\let\svgwidth\undefined%
  \global\let\svgscale\undefined%
  \makeatother%
  \begin{picture}(1,0.90454487)%
    \lineheight{1}%
    \setlength\tabcolsep{0pt}%
    \put(0,0){\includegraphics[width=\unitlength,page=1]{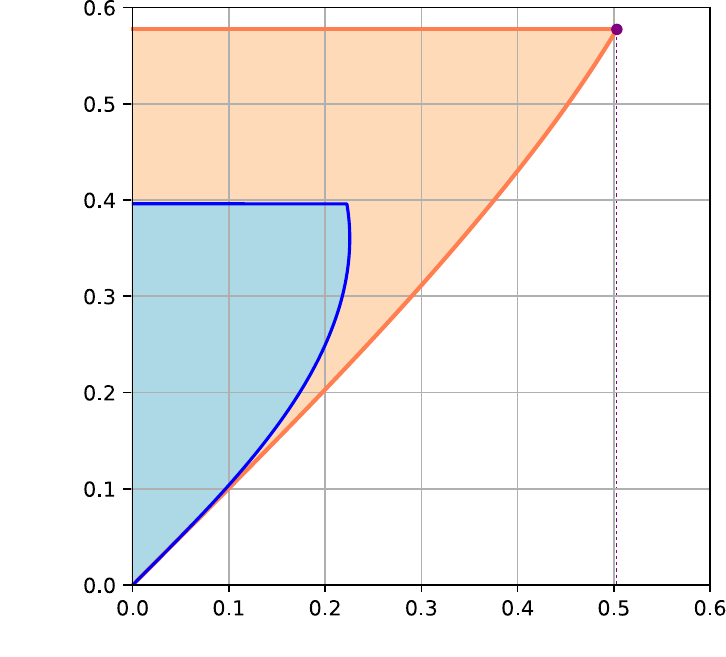}}%
    \put(0.53785752,0.00418201){\color[rgb]{0,0,0}\makebox(0,0)[lt]{\lineheight{1.25}\smash{\begin{tabular}[t]{l}$\varepsilon/\reach$\end{tabular}}}}%
    \put(-0.00166888,0.48470549){\color[rgb]{0,0,0}\makebox(0,0)[lt]{\lineheight{1.25}\smash{\begin{tabular}[t]{l}$\alpha/\reach$\end{tabular}}}}%
    \put(0.05562407,0.83065006){\color[rgb]{0.50196078,0,0.50196078}\makebox(0,0)[lt]{\lineheight{1.25}\smash{\begin{tabular}[t]{l}$1/\sqrt{3}$\end{tabular}}}}%
    \put(0.86418425,0.11580868){\color[rgb]{0.50196078,0,0.50196078}\makebox(0,0)[lt]{\lineheight{1.25}\smash{\begin{tabular}[t]{l}$0.503$\end{tabular}}}}%
    \put(0,0){\includegraphics[width=\unitlength,page=2]{corollary-3D-homotopy.pdf}}%
  \end{picture}%
\endgroup%

	\caption{Pairs of $(\frac{\varepsilon}{\reach},\frac{\alpha}{\reach})$ for which Condition \eqref{eq:3D-feasible-range} is satisfied (in orange) and for which both Conditions \eqref{eq:3D-domain} and \eqref{eq:3D-feasible-range} are satisfied (in blue), for $P \subseteq \M \subseteq \Offset P \varepsilon$ and $d=3$.}
	\label{fig:sampling-condition-3D-homotopy}
\end{figure}

\begin{proof}
  For short, set $I(\varepsilon) = I(\varepsilon,0)$.  We are going to
  establish the lemma, adding to Condition \eqref{eq:3D-domain} the
  following one:
  \begin{equation}
    \label{eq:3D-feasible-range}
    \varepsilon \leq \sqrt{6 - 4 \sqrt{2}}\, \reach \quad \text{and}
    \quad \alpha \in I(\varepsilon) \cap \left[0,\frac{\reach}{\sqrt{3}}\right].
  \end{equation}
  Indeed, one can check numerically that the domain of $\varepsilon
  \geq 0$ and $\alpha \geq 0$ defined by \eqref{eq:3D-domain}
  and \eqref{eq:3D-feasible-range} is exactly the same as the one defined by
  \eqref{eq:3D-domain} alone; see Figure \ref{fig:sampling-condition-3D-homotopy}. Hence, adding
  \eqref{eq:3D-feasible-range} to \eqref{eq:3D-domain}
  does not change the statement of the lemma, except that under that
  form, the lemma becomes easier to establish.

  Let us first check that the assumptions of Theorem
  \ref{theorem:alpha-complex} are satisfied. In Theorem
  \ref{theorem:alpha-complex}, we require that $\varepsilon, \delta
  \geq 0$ satisfy the strict homotopy condition. Since $\delta = 0$,
  this condition can be rewritten as $\reach^2-\varepsilon^2 \geq
  (4\sqrt{2}-5)\reach^2$ or equivalently
\[
\varepsilon \leq \sqrt{6 - 4 \sqrt{2}}\, \reach.
\]
We also require that $\alpha \in I(\varepsilon,\delta) \cap \left[\delta,\frac{2(\reach-\delta)}{3}\right)$ which can be rewritten as $\alpha \in I(\varepsilon) \cap \left[0,\frac{\reach}{\sqrt{3}}\right]$ since $\delta=0$.  Lemma~\ref{lemma:tubular-neighborhood-in-alpha-offset} ensures that
  for $\beta = \beta_{\varepsilon,\alpha}$, we have $\Offset \M \beta
  \subseteq \Offset P \alpha$, as required in Theorem
  \ref{theorem:alpha-complex}. Suppose that $\tau$ is either an edge
  or a triangle of $\Del{P,\alpha}$. Then, $\rcirc{\tau} \leq \alpha$
  and applying Lemmas~\ref{lemma:angleBetweenEdgeAndTangentPlane} and
  \ref{lemma:angleBetweenTriangleAndTangentPlane}, we obtain that for
  $\alpha \leq \frac{\reach}{\sqrt{3}}$
  \begin{equation}
    \label{eq:min-angle-bound}
\min_{x \in \Conv\tau} \angle(\Aff \tau, \Tangent {\pi_\M(x)} \M) \leq \min_{a \in \tau} \angle(\Aff \tau, \Tangent {\pi_\M(a)} \M) \leq
\arcsin\left( \frac{\sqrt{3} \, \alpha}{\reach}\right).    
  \end{equation}
Applying Lemma
\ref{lemma:difference-max-min-angular-deviation-noiseless}, we deduce that for any edge and triangle $\tau
\in \Del{P,\alpha}$
  \begin{equation}
    \label{eq:max-angle-bound}
\max_{a \in \tau} \angle(\Aff \tau, \Tangent {\pi_\M(a)} \M) \leq \arcsin\left(\frac{\sqrt{3} \, \alpha}{\reach}\right) + 2 \arcsin \left( \frac{\alpha}{\reach} \right).
  \end{equation}
Replacing the left side of Condition~\eqref{eq:angle-condition-for-skeleton} with the right side of the above inequality, we obtain the stronger condition:
\begin{equation}
  \label{eq:3D-domain-C1}
\arcsin\left(\frac{\sqrt{3} \, \alpha}{\reach}\right) < 
    \frac{\pi}{2} - 2 \arcsin\left( \frac{\alpha}{\reach} \right).
\end{equation}
Similarly, when replacing the left side of
Conditions~\eqref{eq:angle-condition-for-vertical-convexity} and
\eqref{eq:angle-condition-acyclicity} with
the right side of \eqref{eq:min-angle-bound} and after
plugging $\beta = \beta_{\varepsilon,\alpha}$ and $\delta = 0$ on the
right side, we get the respectively stronger conditions:
\begin{align}
  \label{eq:3D-domain-C2}
  \arcsin\left(\frac{\sqrt{3} \, \alpha}{\reach}\right) &< 
  \arcsin\left(\frac{(\reach+\beta_{\varepsilon,\alpha})^2 - \reach^2 - \alpha^2}{2\reach\alpha}\right),\\
  \label{eq:3D-domain-C3}
  \arcsin\left(\frac{\sqrt{3} \, \alpha}{\reach}\right) &< 
  \frac{\pi}{2} - 2 \arcsin\left( \frac{\alpha}{2(\reach - \alpha)} \right).
\end{align}
Hence, for $\beta=\beta_{\varepsilon,\alpha}$ and $\delta = 0$, Conditions \eqref{eq:angle-condition-for-skeleton},
\eqref{eq:angle-condition-for-vertical-convexity}, and
\eqref{eq:angle-condition-acyclicity} are implied by
\eqref{eq:3D-domain-C1}, \eqref{eq:3D-domain-C2}, and
\eqref{eq:3D-domain-C3} which can be rewritten as \eqref{eq:3D-domain}
since \eqref{eq:3D-domain-C3} is redundant with \eqref{eq:3D-domain-C1}.

Second, let us check that if furthermore Condition
\eqref{eq:3D-practical} is satisfied, the assumptions of Corollary
\ref{theorem:correcness-squash-practical} are also satisfied.
When
replacing the left side of Condition
\eqref{eq:angle-condition-practical} with the right side of 
\eqref{eq:max-angle-bound}, we get a stronger condition, namely
\eqref{eq:3D-practical}.
\end{proof}

Thanks to the above lemma, the proof of
Theorem~\ref{corollary:3D-theory} is now straightforward.

\begin{proof}[Proof of Theorem~\ref{corollary:3D-theory}.]
  Let $\delta = 0$ and $\beta = \beta_{\varepsilon,\alpha}$.  By Lemma
  \ref{lemma:3D-conditions}, if \itemref{item:naive-3D-sampling}
  holds, then the assumptions of Theorem \ref{theorem:alpha-complex}
  are met. It can thus be applied to obtain the first part of the
  theorem. In addition, if \itemref{item:practical-3D-sampling}
  holds, the assumptions of Corollary
  \ref{theorem:correcness-squash-practical} are met. It can thus be
  applied to obtain the second part of the theorem.
\end{proof}

\clearpage
\section{The restricted Delaunay complex}
\label{appendix:restricted-Delaunay-complex}

Given $\M \subseteq \Rspace^d$ and a finite $P
\subseteq \Rspace^d$, we recall that the {\em restricted Delaunay
  complex} is
\[
\DelR{P} = \{ \sigma \subseteq P \mid \sigma \neq \emptyset \text{ and
} V(\sigma,P) \cap \M \neq \emptyset \}.
\]
The goal of this section is to establish the following theorem.

\TheoremRestrictedDelaunayComplex*

Instrumental to the proof, we introduce the core Delaunay
complex. Given $\M \subseteq \Rspace^d$ and a finite point set $P
\subseteq \Rspace^d$, the {\em core Delaunay complex} is the
collection of $d-1$ simplices in the restricted Delaunay complex
together with all their faces:
\[
\DelC{P} = \Cl{\left(\DelR{P}^{[d-1]}\right)}.
\]

\begin{proof}[Proof of Theorem \ref{theorem:3D-restricted-Delaunay-complex}]
  By Lemma \ref{lemma:3D-crossing-Delaunay}, $\DelC{P}$ triangulates
  \M. By Lemma \ref{lemma:WhenRestrictedDelaunayIsPure2D}, the
  restricted Delaunay complex is a pure 2-dimensional simplicial
  complex. In other words,
  \[
  \DelR{P} = \DelC{P},
  \]
  yielding the result.
\end{proof}

In the rest of the section, we establish the two lemmas needed for the
proof of Theorem~\ref{theorem:3D-restricted-Delaunay-complex}.

\subsection{The core Delaunay complex}
\label{appendix:core-Delaunay-complex}

In this section, we provide conditions under which the core Delaunay
complex triangulates \M. The first lemma is true for all dimensions $d$
and the second one is true for $d=3$ and is the one used in the proof
of Theorem \ref{theorem:3D-restricted-Delaunay-complex}.

\begin{lemma}
	\label{lemma:crossing-Delaunay}
	Under the assumptions of Theorem~\ref{theorem:alpha-complex} and
    the additional generic assumption that none of the Voronoi
    vertices of $P$ lie on \M, there exists an execution of
    $\NaiveSquash(P,\alpha)$ that outputs $\DelC{P}$. Consequently,
    $\DelC{P}$ is a triangulation of \M.
\end{lemma}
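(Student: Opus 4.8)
\textbf{Proof plan for Lemma~\ref{lemma:crossing-Delaunay}.}
The plan is to show that one particular run of $\NaiveSquash(P,\alpha)$ terminates with the complex $\DelC{P}$, after which the fact that $\DelC{P}$ triangulates $\M$ follows immediately from Theorem~\ref{theorem:alpha-complex} (which tells us that $\NaiveSquash(P,\alpha)$ outputs \emph{a} triangulation of $\M$, and the output of any terminating run of the naive algorithm is a triangulation of $\M$ by Theorem~\ref{theorem:correctness-generic-simplification}). So the heart of the matter is to identify the terminal complex combinatorially. First I would recall that a $d$-simplex $\sigma\in\Del{P,\alpha}$ has its dual Voronoi vertex $Z(\sigma)$, and, under the generic assumption that no Voronoi vertex of $P$ lies on $\M$, each $Z(\sigma)$ is either strictly above $\M$ ($\alt{Z(\sigma)}>0$) or strictly below ($\alt{Z(\sigma)}<0$). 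The key observation is that the $\below$ relation, shown to be acyclic in Lemma~\ref{lemma:acyclicity}, moreover has the property from Lemma~\ref{lemma:height-is-increasing}: $\sigma_0\below\sigma_1 \Rightarrow \alt{Z(\sigma_0)}<\alt{Z(\sigma_1)}$. I would use this to argue that there is an ordering of the $d$-simplices of $\Del{P,\alpha}$ compatible with $\below$ in which one may first collapse away all $d$-simplices $\sigma$ with $\alt{Z(\sigma)}<0$ (each such $\sigma$, among the currently remaining ones with negative altitude, can be taken to be a source of the current dual graph $G_\M$, since an incoming arc would come from a simplex of even smaller altitude, still negative), and then collapse away all $d$-simplices with $\alt{Z(\sigma)}>0$ as sinks in decreasing order of altitude. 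By Lemma~\ref{lemma:correspondance-free-terminal-node}, removing a source/sink of $G_\M$ is exactly a vertical collapse, so this is a legal execution of $\NaiveSquash(P,\alpha)$; and by Lemma~\ref{lemma:invariant} all the loop invariants are preserved throughout, so the argument that a source (resp.\ sink) of the required altitude sign always exists goes through at each step.

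Next I would determine which $(d-1)$-simplices survive this execution. A $(d-1)$-simplex $\nu$ of $\Del{P,\alpha}$ is a face of either one or two $d$-simplices of $\Del{P,\alpha}$ (it lies in $\partial\Del{P,\alpha}$ in the former case, by the duality between $\partial\Del{P,\alpha}$ and the faces of $\partial\Offset P\alpha$). When a $d$-simplex $\sigma$ is collapsed as, say, a sink via its free simplex $\tau=\bigcap\Above\sigma$, the $(d-1)$-simplices that disappear are exactly the upper facets of $\sigma$, i.e.\ those $\nu\in\Above\sigma$; the lower facets of $\sigma$ remain (they are facets of $\sigma$'s neighbours below, which have smaller altitude and hence persist, or they lie on $\partial\Del{P,\alpha}$). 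Tracking this bookkeeping: a $(d-1)$-simplex $\nu$ shared by $\sigma_0\below\sigma_1$ survives the whole run precisely when $\nu$ is never the facet through which a collapse happens, which is the case exactly when $\nu$ separates a simplex of negative altitude from one of positive altitude — equivalently, when the Voronoi edge dual to $\nu$ crosses $\M$. By the intermediate value theorem applied to $\alt{\cdot}$ along the Voronoi edge from $Z(\sigma_0)$ to $Z(\sigma_1)$ (which is continuous and, by Lemma~\ref{lemma:height-is-increasing}, strictly monotone when $\sigma_0\below\sigma_1$), this happens iff $V(\nu,P)\cap\M\neq\emptyset$, i.e.\ $\nu\in\DelR{P}$. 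A similar (easier) analysis for boundary $(d-1)$-simplices $\nu$ with a single coface $\sigma$: $\nu$ survives iff it is a facet of $\sigma$ on the side of $\M$ opposite to $Z(\sigma)$, and I would check this is again equivalent to the dual Voronoi face meeting $\M$. Hence the surviving $(d-1)$-simplices are exactly $\DelR{P}^{[d-1]}$, and since the terminal complex of $\NaiveVertSimp$ equals $\partial K$ and is a pure $(d-1)$-complex (being a triangulation of the $(d-1)$-manifold $\M$), it equals $\Cl(\DelR{P}^{[d-1]})=\DelC{P}$.

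Finally, having shown that this legal execution outputs $\DelC{P}$, I invoke Theorem~\ref{theorem:correctness-generic-simplification} (whose hypotheses hold for $K=\Del{P,\alpha}$ by Theorem~\ref{theorem:alpha-complex}) to conclude that $\US{\DelC{P}}$ is homeomorphic to $\M$, i.e.\ $\DelC{P}$ triangulates $\M$. The main obstacle I anticipate is the careful bookkeeping in the second paragraph: making rigorous the claim that, along the chosen altitude-sorted collapse sequence, the $(d-1)$-simplex through which each collapse occurs is precisely one that does \emph{not} have its dual Voronoi face crossing $\M$, and conversely that every non-$\M$-crossing dual face corresponds to exactly one collapse — i.e.\ establishing the bijection between collapses and "non-crossing" Voronoi edges/faces, consistently across both the negative-altitude (source) phase and the positive-altitude (sink) phase, and handling the boundary simplices of $\Del{P,\alpha}$ uniformly with the interior ones. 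The altitude monotonicity of $\below$ from Lemma~\ref{lemma:height-is-increasing} is the tool that makes this work, but pinning down that the sources/sinks available at each step are exactly the extreme-altitude remaining simplices requires a short induction that I would need to write out.
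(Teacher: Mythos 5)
Your overall architecture matches the paper's: you build a specific legal execution of $\NaiveSquash(P,\alpha)$ governed by the signed altitude $\alt{Z(\sigma)}$, using Lemma~\ref{lemma:height-is-increasing} to guarantee the needed sources and sinks exist, and then identify the surviving $(d-1)$-simplices with $\DelR{P}^{[d-1]}$. The paper packages the same idea as $\NonCrossingSquash$ (collapse sinks whose circumcenter is above \M and sources whose circumcenter is below \M), whereas you phrase it as a two-phase altitude-sorted schedule; these are essentially the same execution. Your bookkeeping for \emph{interior} $(d-1)$-simplices, shared by two $d$-cofaces, is correct, and your claim that such a $\nu$ survives iff the dual Voronoi edge joins one circumcenter above \M to one below is exactly the intermediate-value-plus-monotonicity argument the paper also uses implicitly.

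The genuine gap is where you call the boundary case ``a similar (easier) analysis.'' It is in fact the hard core of the paper's proof. For a boundary $(d-1)$-simplex $\nu$ with unique $d$-coface $\sigma$ (say a lower facet of $\sigma$ with $\alt{Z(\sigma)}>0$), your scheme lets $\nu$ survive — but to conclude $\nu\in\DelR{P}$ you must show that the dual segment $\gamma^* = V(\nu,P)\cap\Offset P\alpha$ reaches altitude $0$. It starts at $Z(\sigma)$ at positive altitude and runs in the direction of decreasing altitude (Lemma~\ref{lemma:height-is-increasing}), but nothing in what you have said prevents it from exiting $\Offset P\alpha$ before it reaches \M; the point $c$ where it exits lies on $\partial\Offset P\alpha$, and since the direction of $\gamma^*$ is normal to $\Aff\nu$ rather than to \M, it is not automatic that $c$ is on the \emph{lower} skin of $\Offset P\alpha$ (which is what would force $\alt c\le 0$). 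Ruling this out is precisely what the paper's Lemma~\ref{lemma:connecting-boundaries-simplified} / Lemma~\ref{lemma:connecting-boundaries} supplies, via the upper/lower-join decomposition of $\Offset P\alpha\setminus\US{\Del{P,\alpha}}^\circ$; this is the charging argument in Step~4 of the paper's proof, and it takes several pages. Without invoking that lemma (or something equivalent), the ``survives iff dual face meets \M'' equivalence is not established for boundary simplices. You also silently use $\DelC P\subseteq\Del{P,\alpha}$ (so that all of $\DelR{P}^{[d-1]}$ is available to survive); the paper proves this separately in Step~2 from $\M\subseteq\Offset P\varepsilon$ and $\varepsilon\le\alpha$, and you would need to include that step as well.
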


\begin{algorithm}
  \caption{$\NonCrossingVertSimp(K)$
  }
  \begin{algorithmic}
    \WHILE{ $G_\M(K) \neq \emptyset$ }
    \IF{ $G_\M(K)$ has a sink $\sigma$ whose circumcenter lies above \M }
    \STATE { Collapse  $\bigcap \Above \sigma$ in $K$;}
    \ELSIF { $G_\M(K)$ has a source $\sigma$ whose circumcenter lies below \M }
    \STATE { Collapse  $\bigcap \Below \sigma$ in $K$;}
    \ENDIF
    \ENDWHILE
  \end{algorithmic}
  \label{algo:non-crossing-simplification}
\end{algorithm}

\begin{algorithm}
  \caption{$\NonCrossingSquash(P,\alpha)$}
  \begin{algorithmic}
	\STATE{ $K \leftarrow \Del{P,\alpha}$;  $\NonCrossingVertSimp(K)$; {\bf return} $K$; }
  \end{algorithmic}
  \label{algo:non-crossing-squash}
\end{algorithm}

\begin{proof}
  Before we start, recall that given a point $x \in \Rspace^d \setminus
  \MA{\M}$,
  \[
  \alt{x} = (x - \pi_\M(x)) \cdot \normal{ \pi_\M(x) }
  \]
  is the signed distance of $x$ from \M. We shall say that a point $x
  \in \Rspace^d \setminus \MA{X}$ lies above \M iff $\alt{x} \geq 0$,
  $x$ lies below \M iff $\alt{x} \leq 0$, and $x$ lies on \M iff
  $\alt{x} = 0$. We also note that the core Delaunay complex can 
  equivalently be defined as
  \[
  \DelC{P} = \Cl{\left( \{ \nu \in \Del{P}^{[d-1]} \mid V(\nu,P) \cap \M
    \neq \emptyset\} \right)}.
  \]
  The proof is done in four steps:

  \Step{1} We show that $\NonCrossingSquash(P,\alpha)$ corresponds to
  one particular execution of $\NaiveSquash(P,\alpha)$.  For this, we
  show that when running $\NonCrossingVertSimp(K)$
  on $K =
  \Del{P,\alpha}$, each execution of the while-loop results in one
  vertical collapse relative to \M. This boils down to proving that at
  least one of the two conditional expressions in $\NonCrossingVertSimp(K)$
  evaluates to true
  whenever $G_\M(K) \neq \emptyset$. Suppose that $G_\M(K) \neq
  \emptyset$ and contains a node $\sigma_0$, which, by definition, is
  a $d$-simplex of $K$. Let $Z(\sigma_0)$ denote the circumcenter of
  $\sigma_0$. We consider two cases:

  \smallskip \styleitem{(a)} Suppose that $Z(\sigma_0)$ lies above
  \M. Because $G_\M(K)$ is a directed acyclic graph, we can find a
  directed path that connects $\sigma_0$ to a sink $\sigma$.  By Lemma
  \ref{lemma:height-is-increasing}, the map that associates to each
  node $v$ the quantity $\alt{Z(v)}$ is increasing along the directed
  path. Since $Z(\sigma_0)$ lies above \M, so does $Z(\sigma)$. Thus,
  we find a sink whose circumcenter lies above \M.

  \smallskip \styleitem{(b)} Suppose that $Z(\sigma_0)$ lies below
  \M. Similarly, we find a source $\sigma$ whose circumcenter lies
  below \M.

  \smallskip

  This shows that at least one of the two conditional expressions in
  $\NonCrossingVertSimp(K)$
  evaluates to true whenever
  $G_\M(K) \neq \emptyset$. 

  \Step{2} We show that
  $
  \DelC P \subseteq \Del{P,\alpha}.
  $
  Indeed, consider a $(d-1)$-simplex $\gamma \in \DelC P$. By
  definition, there exists a sphere that circumscribes $\gamma$ and
  whose center lies on \M. Because $\M \subseteq \Offset P
  \varepsilon$, we deduce that the radius of that sphere is
  $\varepsilon$ at most and $\gamma \in \Del{P,\varepsilon}$.  Because
  $\alpha \in I(\varepsilon,\delta) = [\alpha_{\min},\alpha_{\max}]$,
  one can easily check from the definition of $\alpha_{\min}$ in
  Appendix~\ref{appendix:range-alpha} that $\varepsilon \leq \alpha$
  and therefore $\gamma \in \Del{P,\alpha} = K$.

  \Step{3} We show that when running
  $\NonCrossingVertSimp(K)$
  on $K = \Del{P,\alpha}$, we have the loop invariant:
  \begin{equation*}
    \label{eq:loop-invariant}
    \DelC P \subseteq K \subseteq \Del{P,\alpha}.
  \end{equation*}
  This is true before beginning the while-loop thanks to the previous
  step. We only need to prove that during an execution of the
  while-loop, one cannot remove $(d-1)$-simplices of $\DelC P$. We
  examine in turn each of the two possibilities that may occur:

  \smallskip \styleitem{(a)} $G_\M(K)$ has a sink $\sigma$ whose
  circumcenter $Z(\sigma)$ lies above \M and
  $
  \bigcap \Above \sigma
  $
  is collapsed in $K$. In that case, the set of $(d-1)$-simplices of
  $K$ that disappear are exactly the upper facets of $\sigma$. Each
  upper facet $\gamma$ of $\sigma$ is dual to some Voronoi edge
  $V(\gamma,P)$. Let
  \begin{equation}
    \label{eq:restricted-Voronoi-edge}
      \gamma^* = V(\gamma,P) \cap \Offset P \alpha = V(\gamma, P) \cap
  \bigcap_{p \in \gamma} B(p,\alpha).
  \end{equation}
  $\gamma^*$ is a segment with one endpoint at $Z(\sigma)$. By
  Lemma~\ref{lemma:height-is-increasing}, the map $x \mapsto \alt{x}$
  is increasing along the segment $\gamma^*$ as we move $x$ on
  $\gamma^*$ from $Z(\sigma)$ to the other endpoint. Since $\alt{x}>0$
  for all $x \in \gamma^*$, we deduce that $\gamma^*$ does not
  intersect \M. Since $V(\gamma,P) \setminus \gamma^*$ lies outside
  $\Offset P \alpha$ and $\M \subseteq \Offset P \alpha$, we deduce
  that $V(\gamma,P) \cap \M = \emptyset$ and $\gamma \not \in \DelC
  P$.

  \smallskip \styleitem{(b)}  $G_\M(K)$ has a source $\sigma$ whose
  circumcenter $Z(\sigma)$ lies below \M and
  $
  \bigcap \Below \sigma
  $
  is collapsed in $K$. Similarly, we can show that none of the
  $(d-1)$-simplices that disappear from $K$ during the collapse of
  $\tau$ in $K$ belong to $\DelC P$.

  \Step{4} We show that as we run
   $\NonCrossingVertSimp(K)$
  on $K = \Del{P,\alpha}$, when
  the algorithm terminates, then $K = \DelC P$. For this, we are going
  to prove that during the course of the algorithm, all
  $(d-1)$-simplices $\gamma \in \Del{P,\alpha} \setminus \DelC P$
  disappear at some point.  Consider a $(d-1)$-simplex $\gamma \in
  \Del{P,\alpha} \setminus \DelC P$ and define again $\gamma^*$ as in
  \eqref{eq:restricted-Voronoi-edge}:
  \begin{equation*}
    \label{eq:restricted-Voronoi-edge}
      \gamma^* = V(\gamma,P) \cap \Offset P \alpha = V(\gamma, P) \cap
  \bigcap_{p \in \gamma} B(p,\alpha).
  \end{equation*}
  Note that $\gamma^*$ is connected. Since $\gamma^*$ does not
  intersect \M and is connected, either $\gamma^*$ lies above \M or
  $\gamma^*$ lies below \M. We are going to charge any $(d-1)$-simplex
  $\gamma \in \Del{P,\alpha} \setminus \DelC P$ to a $d$-simplex
  $\sigma \in \Del{P,\alpha}$, considering  two cases:

  \smallskip \styleitem{(a)} Suppose that $\gamma^*$ lies above \M. We
  claim that, in that case, there is a $d$-simplex $\sigma \in
  \Del{P,\alpha}$ such that $\sigma \below \gamma$ with $Z(\sigma)$
  above \M.  Suppose for a contradiction that this is not true and
  $\gamma$ has no $d$-simplex of $\Del{P,\alpha}$ below it
  relative to \M. Then, Lemma~\ref{lemma:connecting-boundaries}
  implies that $\gamma$ belongs to a lower join of the form $c * \Conv
  \gamma$ with $c \in \gamma^*$ lying on the lower skin of $\Offset P
  \alpha$. Since the lower skin of $\Offset P \alpha$ lies below \M,
  we thus get that $c \in \gamma^*$ lies below \M, yielding a
  contradiction. In that case, we charge $\gamma$ to $\sigma$.

  \smallskip \styleitem{(b)} Suppose that $\gamma^*$ lies below
  \M. Similarly, one can show that there is a $d$-simplex $\sigma \in
  \Del{P,\alpha}$ such that $\gamma \above \sigma$ with
  $Z(\sigma)$ lying below \M. In that case, we charge $\gamma$ to
  $\sigma$.

  \smallskip
  At the end of Algorithm~\ref{algo:non-crossing-simplification}, every
  $d$-simplex $\sigma$ of $\Del{P,\alpha}$ has been removed at some
  point. The charging is done in such a way that, when the $d$-simplex
  $\sigma$ is removed from $K$, so are all $(d-1)$-simplices charged
  to it.  Hence, when Algorithm~\ref{algo:non-crossing-simplification}
  terminates, $K$ does not contain any $(d-1)$-simplices of
  $\Del{P,\alpha} \setminus \DelC P$ and therefore $K = \DelC P$.

  \medskip

  To summarize, Step~1 guarantees that $\NonCrossingSquash(P,\alpha)$
  corresponds to one particular execution of
  $\NaiveSquash(P,\alpha)$. Applying
  Theorem~\ref{theorem:alpha-complex}, we obtain that it returns a
  triangulation of \M, which, by Step~4, is $\DelC P$.
\end{proof}

\begin{lemma}
  \label{lemma:3D-crossing-Delaunay}
  Let $\M$ be a $C^2$ surface in $\Rspace^3$ whose reach is at least
  $\reach>0$. Let $P$ be a finite point set such that $P \subseteq \M
  \subseteq \Offset P \varepsilon$ for $\frac{\varepsilon}{\reach}
  \leq 0.225$.  Assuming no Voronoi vertices of $P$ lie on \M,
  $\DelC{P}$ is a triangulation of \M.
\end{lemma}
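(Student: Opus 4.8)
\textbf{Proof plan for Lemma~\ref{lemma:3D-crossing-Delaunay}.} The goal is to show that when $\M$ is a $C^2$ surface in $\Rspace^3$ with reach $\geq \reach$ and $P \subseteq \M \subseteq \Offset P \varepsilon$ with $\frac{\varepsilon}{\reach} \leq 0.225$, the core Delaunay complex $\DelC{P}$ triangulates $\M$, under the generic assumption that no Voronoi vertex of $P$ lies on $\M$. The strategy is to reduce this to an invocation of Lemma~\ref{lemma:crossing-Delaunay}, which already states that $\DelC{P}$ triangulates $\M$ whenever the hypotheses of Theorem~\ref{theorem:alpha-complex} are met and no Voronoi vertex lies on $\M$. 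So the entire task is to exhibit a choice of parameters $(\alpha, \delta, \beta)$ for which Theorem~\ref{theorem:alpha-complex} applies.

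\textbf{Key steps.} First I would set $\delta = 0$ (since $P \subseteq \M$ is noiseless) and $\beta = \beta_{\varepsilon,\alpha}$, the value supplied by Lemma~\ref{lemma:tubular-neighborhood-in-alpha-offset} that guarantees $\Offset \M \beta \subseteq \Offset P \alpha$. Next I would recall from Theorem~\ref{corollary:3D-theory}, item~\ref{item:naive-3D-sampling}, and the surrounding Remark that the sampling condition $\frac{\varepsilon}{\reach} \leq 0.225$ can be realized together with the choice $\frac{\alpha}{\reach} = 0.359$ (or, more precisely, that the pair $(\frac{\varepsilon}{\reach}, \frac{\alpha}{\reach}) = (0.225, 0.359)$ satisfies Condition~\eqref{eq:3D-domain} of Lemma~\ref{lemma:3D-conditions}). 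By Lemma~\ref{lemma:3D-conditions}, Condition~\eqref{eq:3D-domain} implies that the assumptions of Theorem~\ref{theorem:alpha-complex} hold for that $\alpha$. Then I would verify the remaining bookkeeping hypotheses of Theorem~\ref{theorem:alpha-complex}: the strict homotopy condition (which with $\delta = 0$ reduces to $\varepsilon \leq \sqrt{6 - 4\sqrt{2}}\,\reach$, easily satisfied since $\sqrt{6-4\sqrt 2} \approx 0.586 > 0.225$), the range $\alpha \in [\delta, \tfrac{2(\reach-\delta)}{3}) \cap I(\varepsilon,\delta) = [0, \tfrac{2\reach}{3}) \cap I(\varepsilon,0)$, which holds since $0.359 < 2/3 \approx 0.667$ and the interval $I(\varepsilon)$ was already checked compatible in the proof of Lemma~\ref{lemma:3D-conditions}, and the existence of $\beta$, handled by Lemma~\ref{lemma:tubular-neighborhood-in-alpha-offset}. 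Finally, since $\frac{\varepsilon}{\reach} \leq 0.225$ means the point $(\frac{\varepsilon}{\reach}, 0.359)$ lies in the feasible region depicted in Figure~\ref{fig:sampling-condition-3D-left}, all of Theorem~\ref{theorem:alpha-complex}'s hypotheses are met. Applying Lemma~\ref{lemma:crossing-Delaunay} with this $\alpha$ and the generic assumption that no Voronoi vertex of $P$ lies on $\M$ then yields that $\DelC{P}$ is a triangulation of $\M$, completing the proof.

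\textbf{Main obstacle.} The only genuinely delicate point is making sure the parameter choice is internally consistent: one needs a single $\alpha$ that simultaneously lies in $I(\varepsilon, 0) \cap [0, \tfrac{2\reach}{3})$, satisfies Condition~\eqref{eq:3D-domain} for every admissible $\varepsilon \leq 0.225\,\reach$, and keeps $\beta_{\varepsilon,\alpha}$ well-defined (i.e. the expression under the square root in $\beta_{\varepsilon,\alpha}$ must be nonnegative, which requires $\alpha^2 + \frac{\varepsilon^4}{4\reach^2} \geq \varepsilon^2$, equivalently $\alpha \geq \varepsilon\sqrt{1 - \varepsilon^2/(4\reach^2)}$ — satisfied at $\alpha = 0.359\reach$, $\varepsilon \leq 0.225\reach$). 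Since Condition~\eqref{eq:3D-domain} is exactly what defines the blue region of Figure~\ref{fig:sampling-condition-3D-left}, and the claim that $\varepsilon/\reach \leq 0.225$ is realizable within that region was already asserted in the Remark following Theorem~\ref{corollary:3D-theory}, this amounts to quoting the numerics already established rather than redoing them. Thus the proof is essentially a short deduction chaining Theorem~\ref{theorem:alpha-complex}, Lemma~\ref{lemma:crossing-Delaunay}, and the numerical bound from the $\Rspace^3$ analysis.
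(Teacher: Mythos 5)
Your proof follows essentially the same route as the paper's: set $\delta = 0$ and $\beta = \beta_{\varepsilon,\alpha}$, invoke the numerics of Lemma~\ref{lemma:3D-conditions} (equivalently Theorem~\ref{corollary:3D-theory}\,\itemref{item:naive-3D-sampling}) to pick an admissible $\alpha$ for which the assumptions of Theorem~\ref{theorem:alpha-complex} hold, and then apply Lemma~\ref{lemma:crossing-Delaunay} under the generic no-Voronoi-vertex-on-$\M$ assumption. The only cosmetic difference is that you fix a concrete $\alpha/\reach = 0.359$ for all $\varepsilon \le 0.225\reach$ while the paper lets $\alpha$ depend on $\varepsilon$ via the feasible region; both are valid and supported by Remark following Theorem~\ref{corollary:3D-theory}.
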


\begin{proof}
  Let $\varepsilon \in [0, 0.225\reach]$ and note that it is always
  possible to choose $\alpha \geq 0$ such that the pair
  $(\frac{\varepsilon}{\reach},\frac{\alpha}{\reach})$ belongs to the
  region depicted in Figure~\ref{fig:sampling-condition-3D-left} or
  equivalently such that $\varepsilon,\alpha \geq 0$ satisfy
  \itemref{item:naive-3D-sampling} in Lemma \ref{corollary:3D-theory}.
  Let $\delta=0$ and $\beta = \beta_{\varepsilon,\alpha}$. By Lemma
  \ref{lemma:3D-conditions}, the assumptions of Theorem
  \ref{theorem:alpha-complex} are satisfied.  It is thus possible to
  apply Lemma \ref{lemma:crossing-Delaunay} and deduce that $\DelC{P}$
  is a triangulation of \M.
\end{proof}

\subsection{A pure 2-dimensional simplicial complex when $d=3$}
\label{appendix:pure-simplicial-complex}

In this section, we establish conditions under which the restricted
Delaunay complex is a pure 2-dimensional complex, when $\M$ is a
smooth surface in $\Rspace^3$.

\begin{lemma}
  \label{lemma:WhenRestrictedDelaunayIsPure2D}
  Let $\M$ be a $C^2$ surface in $\Rspace^3$ whose reach is at least
  $\reach>0$. Let $P$ be a finite point set such that $P \subseteq \M
  \subseteq \Offset P \varepsilon$ for $\varepsilon \geq 0$. Let us make
  the generic assumption that all Voronoi cells of $P$ intersect \M
  transversally\footnote{We recall that two smooth submanifolds of
  $\Rspace^d$ intersect transversally if at every intersection point,
  the tangent spaces of the two submanifolds span $\Rspace^d$.}. For
  all $\varepsilon < \reach$ that satisfy
  \[
  2  \arcsin \frac{\varepsilon}{2\reach} +   \arcsin \frac{\varepsilon}{\reach} < \frac{\pi}{2}
  \]
  or, equivalently, for all $\frac{\varepsilon}{\reach} \leq 0.732\ldots$,
  $\DelR{P}$ is a pure 2-dimensional simplicial complex.
\end{lemma}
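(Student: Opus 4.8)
The plan is to establish the two ingredients of purity separately: that $\DelR{P}$ has dimension at most $2$, and that every vertex and every edge of $\DelR{P}$ is a face of a $2$-simplex. The first is immediate, since a $3$-simplex $\{p,q,r,s\}\in\DelR{P}$ would force the Voronoi vertex $V(\{p,q,r,s\},P)$ — a single point — to meet $\M$, which a point cannot do transversally in $\Rspace^3$; the transversality hypothesis (equivalently: no Voronoi vertex of $P$ lies on $\M$) rules this out. So the work is in the second ingredient, and its heart is the claim for edges.

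Fix an edge $\{p,q\}\in\DelR{P}$; let $H=\Bis(\{p,q\})$ with unit normal $N=\tfrac{q-p}{\|q-p\|}$, and let $F=V(\{p,q\},P)$ be the corresponding Voronoi facet, a convex polygon in $H$ with $F\cap\M\neq\emptyset$. Every $x\in F\cap\M$ has $p$ as a nearest point of $P$, and $x\in\Offset{P}{\varepsilon}$ gives $\|x-p\|\le\varepsilon$; hence $F\cap\M\subseteq B(p,\varepsilon)$ and $\|p-q\|\le 2\varepsilon$. If some point of $F\cap\M$ lies on $\partial F$, it lies on a bounding Voronoi edge $V(\{p,q,r\},P)$, and being on $\M$ it witnesses $\{p,q,r\}\in\DelR{P}$, so $\{p,q\}$ is a face of a $2$-simplex. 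It remains to exclude the case $F\cap\M\subseteq\Interior{F}$. By the transversality assumption (applied along the facet $F$ of $V(p,P)$) the plane section $\M\cap H$ is a disjoint union of $C^2$ circles, and a short connectedness argument — $\ell\cap F$ is clopen in any such circle $\ell$ meeting $F$ — shows that some circle $\ell\subseteq\M\cap H$ lies entirely in $\Interior{F}\subseteq B(p,\varepsilon)$. I will derive a contradiction from the existence of such a small planar circle on $\M$.

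Let $\normal p$ be the unit normal of $\M$ at $p$, and inside $H$ choose the unit vector $u$ maximizing $\langle\,\cdot\,,\normal p\rangle$ together with the unit vector $v\perp u$ in $H$. Lemma~\ref{lemma:angleBetweenEdgeAndTangentPlane} applied to the edge $pq$, with $\|q-p\|\le 2\varepsilon$, gives $\angle(\Aff{\{p,q\}},\Tangent{p}{\M})\le\arcsin\tfrac{\varepsilon}{\reach}$, whence $|\langle N,\normal p\rangle|\le\tfrac{\varepsilon}{\reach}$ and so $\langle u,\normal p\rangle=\sqrt{1-\langle N,\normal p\rangle^{2}}$, i.e. $\angle(u,\normal p)\le\arcsin\tfrac{\varepsilon}{\reach}$. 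Now pick $x_0\in\ell$ maximizing $\langle\,\cdot\,,v\rangle$; since $\ell$ is a regular closed $C^2$ curve in $H$, its unit tangent $t_0$ at $x_0$ satisfies $\langle t_0,v\rangle=0$, so $t_0=\pm u$, and $t_0\in\Tangent{x_0}{\M}$ forces $\angle(u,\normal{x_0})=\tfrac{\pi}{2}$. But $\|x_0-p\|\le\varepsilon$, so the bound on the variation of the unit normal field of $\M$ gives $\angle(\normal p,\normal{x_0})\le 2\arcsin\tfrac{\varepsilon}{2\reach}$, and the triangle inequality then yields $\tfrac{\pi}{2}=\angle(u,\normal{x_0})\le\angle(u,\normal p)+\angle(\normal p,\normal{x_0})\le\arcsin\tfrac{\varepsilon}{\reach}+2\arcsin\tfrac{\varepsilon}{2\reach}<\tfrac{\pi}{2}$ by hypothesis, a contradiction. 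Hence every edge of $\DelR{P}$ is a face of a $2$-simplex.

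Finally, every vertex $\{p\}\in\DelR{P}$ is a face of a $2$-simplex: the restricted cell $R_p=V(p,P)\cap\M$ contains $p$ (recall $P\subseteq\M$) and lies in $B(p,\varepsilon)$, hence has diameter $<2\reach$, while the connected component of $\M$ through $p$ has diameter $\ge 2\reach$ (from the disjoint tangent balls of radius $\reach$ granted by $\Reach{\M}\ge\reach$); so $R_p$ is a proper closed subset of that component, and since the cells $\{R_{p'}\}_{p'\in P}$ cover $\M$, $R_p$ meets some $R_q$, giving $\{p,q\}\in\DelR{P}$, which by the previous paragraph lies in a $2$-simplex, hence so does $\{p\}$. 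Together with dimension $\le 2$, this shows $\DelR{P}$ is pure $2$-dimensional. The main obstacle is the excluded case of the edge claim: extracting from the transversality hypothesis that $\M\cap H$ consists of smooth closed curves, then producing on such a small circle an extremal point whose tangent is pinned to $\pm u$, and finally combining the edge–tangent estimate with the normal–variation estimate so that the precise threshold $2\arcsin\tfrac{\varepsilon}{2\reach}+\arcsin\tfrac{\varepsilon}{\reach}<\tfrac{\pi}{2}$ is exactly what closes the argument.
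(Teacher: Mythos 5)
Your proposal is correct and reaches the same contradiction with the same quantitative ingredients (the edge-angle bound of Lemma~\ref{lemma:angleBetweenEdgeAndTangentPlane} applied to $pq$ with $\|p-q\|\leq 2\varepsilon$, the normal-variation bound $2\arcsin\tfrac{\varepsilon}{2\reach}$, and a triangle inequality against $\tfrac{\pi}{2}$), but your route through the edge case differs from the paper's in a way that simplifies it. The paper applies the Jordan curve theorem to pass from the loop to a disk $D$ in the bisector, then picks a point $x\in D$ maximizing $d(x,\M)$ and its nearest point $y\in\M$, obtaining $\angle(\Tangent y\M,\Pi_{pq})=0$; it must then prove $y\in B(p,\varepsilon)$ by comparing the dome of a tangent ball of radius $\reach$ at $y$ above $\Pi_{pq}$ with the dome of $B(p,\varepsilon)$. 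You instead take the witness directly on the loop $\ell$: at the point $x_0\in\ell$ extremal in the direction $v$, the tangent to $\ell$ is forced to be $\pm u$, which immediately places $u$ in $\Tangent{x_0}\M$ and hence pins $\angle(u,\normal{x_0})=\tfrac{\pi}{2}$; since $\ell\subset F\cap\M\subseteq B(p,\varepsilon)$ automatically, the ``dome'' step is avoided entirely. (One slip of phrasing: you write $\Interior{F}\subseteq B(p,\varepsilon)$, which is false in general — the Voronoi facet can be large or unbounded — but the inclusion you actually use, $\ell\subseteq F\cap\M\subseteq B(p,\varepsilon)$, is correct.) Your treatment of vertices is also a different argument: you bound the diameter of $R_p$ by $2\varepsilon<2\reach$, compare with the diameter $\geq 2\reach$ of the connected component through $p$, and close with a connectedness argument over the closed cover $\{R_{p'}\}$ plus the already-established edge case, whereas the paper instead contrasts the non-contractibility of the component with the contractibility of $\M\cap B(p,\varepsilon)$ via a cited lemma. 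Both vertex arguments are valid; yours trades a topological citation for an elementary diameter estimate but depends on the edge statement, while the paper's is self-contained at the vertex level.
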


Before giving the proof, we start with some remarks.  Define the
restricted Voronoi region of $q \in P$ as the intersection of the
Voronoi region of $q$ with \M:
\[
V_\M(q,P) = V(q,P) \cap \M = \{ x \in \M \mid \|x-q\| \leq \|x-p\|,
\text{ for all $p \in P$} \}.
\]

\begin{remark}
  $\DelR{P}$ is the nerve of the collection of restricted Voronoi
  regions,
  \[
  \DelR{P} = \Nerve{\{ V_\M(q,P) \}_{q \in P}}.
  \]
\end{remark}

\begin{remark}
  \label{remark:restricted-region-small}
  If $\M \subseteq \Offset P \varepsilon$, then $V_\M(p,P) \subseteq B(p,\varepsilon)$ for all $p \in P$.
\end{remark}

\begin{proof}[Proof of Lemma \ref{lemma:WhenRestrictedDelaunayIsPure2D}]
Our generic assumption implies that no Voronoi vertices of $P$ lie on
\M and therefore $\DelR{P}$ has dimension 2 or less.

\begin{figure}[htb]
  \centering
  \def\svgwidth{.49\textwidth}
\begingroup%
  \makeatletter%
  \providecommand\color[2][]{%
    \errmessage{(Inkscape) Color is used for the text in Inkscape, but the package 'color.sty' is not loaded}%
    \renewcommand\color[2][]{}%
  }%
  \providecommand\transparent[1]{%
    \errmessage{(Inkscape) Transparency is used (non-zero) for the text in Inkscape, but the package 'transparent.sty' is not loaded}%
    \renewcommand\transparent[1]{}%
  }%
  \providecommand\rotatebox[2]{#2}%
  \newcommand*\fsize{\dimexpr\f@size pt\relax}%
  \newcommand*\lineheight[1]{\fontsize{\fsize}{#1\fsize}\selectfont}%
  \ifx\svgwidth\undefined%
    \setlength{\unitlength}{263.70189276bp}%
    \ifx\svgscale\undefined%
      \relax%
    \else%
      \setlength{\unitlength}{\unitlength * \real{\svgscale}}%
    \fi%
  \else%
    \setlength{\unitlength}{\svgwidth}%
  \fi%
  \global\let\svgwidth\undefined%
  \global\let\svgscale\undefined%
  \makeatother%
  \begin{picture}(1,0.69078077)%
    \lineheight{1}%
    \setlength\tabcolsep{0pt}%
    \put(0,0){\includegraphics[width=\unitlength,page=1]{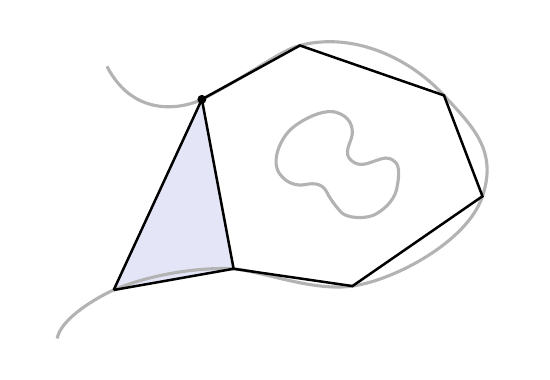}}%
    \put(0.63198455,0.36604051){\makebox(0,0)[lt]{\lineheight{1.25}\smash{\begin{tabular}[t]{l}$p$\end{tabular}}}}%
    \put(0,0){\includegraphics[width=\unitlength,page=2]{proof-pure-vertices-left.pdf}}%
    \put(0.1803362,0.58578953){\makebox(0,0)[lt]{\lineheight{1.25}\smash{\begin{tabular}[t]{l}$\M$\end{tabular}}}}%
    \put(0,0){\includegraphics[width=\unitlength,page=3]{proof-pure-vertices-left.pdf}}%
  \end{picture}%
\endgroup%
 \hfill
    \def\svgwidth{.49\textwidth}
\begingroup%
  \makeatletter%
  \providecommand\color[2][]{%
    \errmessage{(Inkscape) Color is used for the text in Inkscape, but the package 'color.sty' is not loaded}%
    \renewcommand\color[2][]{}%
  }%
  \providecommand\transparent[1]{%
    \errmessage{(Inkscape) Transparency is used (non-zero) for the text in Inkscape, but the package 'transparent.sty' is not loaded}%
    \renewcommand\transparent[1]{}%
  }%
  \providecommand\rotatebox[2]{#2}%
  \newcommand*\fsize{\dimexpr\f@size pt\relax}%
  \newcommand*\lineheight[1]{\fontsize{\fsize}{#1\fsize}\selectfont}%
  \ifx\svgwidth\undefined%
    \setlength{\unitlength}{263.70189276bp}%
    \ifx\svgscale\undefined%
      \relax%
    \else%
      \setlength{\unitlength}{\unitlength * \real{\svgscale}}%
    \fi%
  \else%
    \setlength{\unitlength}{\svgwidth}%
  \fi%
  \global\let\svgwidth\undefined%
  \global\let\svgscale\undefined%
  \makeatother%
  \begin{picture}(1,0.69078077)%
    \lineheight{1}%
    \setlength\tabcolsep{0pt}%
    \put(0,0){\includegraphics[width=\unitlength,page=1]{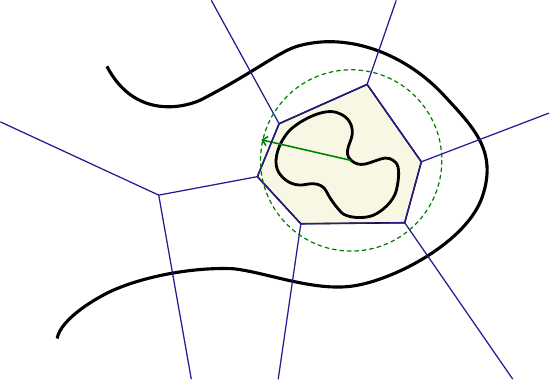}}%
    \put(0.64767789,0.46768133){\makebox(0,0)[lt]{\lineheight{1.25}\smash{\begin{tabular}[t]{l}$\M_p$\end{tabular}}}}%
    \put(0.43525727,0.44992588){\color[rgb]{0,0.50196078,0}\makebox(0,0)[lt]{\lineheight{1.25}\smash{\begin{tabular}[t]{l}$\varepsilon$\end{tabular}}}}%
    \put(0,0){\includegraphics[width=\unitlength,page=2]{proof-pure-vertices-right.pdf}}%
    \put(0.63198455,0.36604051){\makebox(0,0)[lt]{\lineheight{1.25}\smash{\begin{tabular}[t]{l}$p$\end{tabular}}}}%
    \put(0,0){\includegraphics[width=\unitlength,page=3]{proof-pure-vertices-right.pdf}}%
    \put(0.1803362,0.58578953){\makebox(0,0)[lt]{\lineheight{1.25}\smash{\begin{tabular}[t]{l}$\M$\end{tabular}}}}%
    \put(0,0){\includegraphics[width=\unitlength,page=4]{proof-pure-vertices-right.pdf}}%
  \end{picture}%
\endgroup%

    \caption{ If the restricted Delaunay complex $\DelR{P}$ has one
      vertex $p$ with no coface (left), then the connected component $\M_p$ of $\M$ passing through $p$ is contained in
      the (yellow) Voronoi region $V(p,P)$ (right).
      \label{figure:proof-pure-vertices}
    }
\end{figure}

Let us show that any vertex in $\DelR{P}$ has at least one edge as a
coface. Suppose for a contradiction that $\DelR{P}$ contains a vertex
$p \in P$ with no proper coface. Consider the connected component of
\M that passes through $p$ and denote it as $\M_p$. Because $\M_p$
passes through $p \in V(p,P)$ and does not intersect the boundary
of $V(p,P)$, we obtain that $\M_p \subseteq V(p,P)$; see Figure \ref{figure:proof-pure-vertices}. By
Remark~\ref{remark:restricted-region-small}, $V_\M(p,P) \subseteq
B(p,\varepsilon)$ and therefore
\[
\M_p \subseteq \M \cap B(p,\varepsilon).
\]
$\M_p$ being a surface without boundary, it cannot be contractible and
neither can be $\M \cap B(p,\varepsilon)$ which is a surface (possibly
with boundary) that contains $\M_p$. But, this contradicts Lemma~6 in
\cite{Dominique_Andre:2015:collapse_cech_into_triangulation} which
states that $\M \cap B(p,\varepsilon)$ is contractible, being the
non-empty intersection of a set $\M$ with a ball whose radius
$\varepsilon$ is smaller than the reach of \M.

We now focus on the non-trivial part of the proof, showing that any
edge in $\DelR{P}$ has at least one triangle as a coface. Suppose for a
contradiction that $\DelR{P}$ contains an edge $pq$ with no proper
coface. Consider the restricted Voronoi cell of $pq$:
\[
V_\M(pq,P) = V_\M(p,P) \cap V_\M(q,P) = V(pq,P) \cap \M.
\]
Recall that the Voronoi cell $V(pq,P)$ is contained in the bisector
$\Pi_{pq}$ of $p$ and $q$ and that its relative boundary is defined as
its boundary within the hyperplane $\Pi_{pq}$. We note that $pq$ is an
edge of $\DelR{P}$ with no proper coface if and only if $\M$
intersects $V(pq,P)$ and does not intersect the relative boundary of
$V(pq,P)$.

Our generic assumption implies that $\M$ intersects $V(pq,P)$ in a
smooth curve (precisely, a $1$-dimensional $C^1$ submanifold).
Thanks to Remark \ref{remark:restricted-region-small}, the curve is
contained in $B(p,\varepsilon) \cap B(q,\varepsilon)$ and therefore is
bounded.  On the other hand, the curve does not intersect the relative
boundary of $V(pq,P)$. Thus, the curve consists of one or several
loops, {\em i.e. topological circles}.  Let us choose one of these
loops which, by Jordan Theorem, is the boundary of some closed
topological disk $D$ within $\Pi_{pq}$; see Figure
\ref{figure:proof-pure-edges-3D}.  Note that
\begin{equation}
  \label{eq:D-bounded}
    D \subseteq B(p, \varepsilon) \cap B(q,\varepsilon) \cap \Pi_{pq}.  
\end{equation}

\begin{figure}[htb]
  \centering
  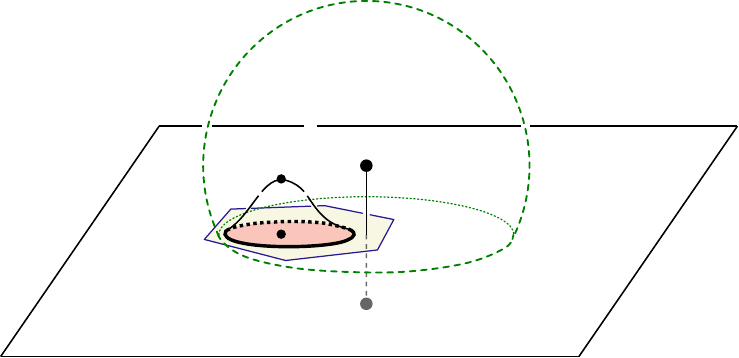
    \caption{ If $pq$ is an edge of $\DelR{P}$ with no proper coface,
      then $\M$ intersects the (yellow) Voronoi cell $V(pq,\M)$ in one
      or several loops (in bold). A key step in the proof of Lemma
      \ref{lemma:WhenRestrictedDelaunayIsPure2D} consists in showing
      that $y$ belongs to the dome (in dashed green) of
      $B(p,\varepsilon)$ above $\Pi_{pq}$.
      \label{figure:proof-pure-edges-3D}
    }
\end{figure}

By compactness, there exist a point $x \in D$ that maximizes $d(x,\M)$
and a point $y \in \M$ such that $\|x - y\| = d(x,\M)$. Furthermore,
we have
\begin{equation}\label{eq:TgtParallelToPIpq}
\angle(\Tangent y \M, \Pi_{pq}) = 0,
\end{equation}
as illustrated in Figure~\ref{figure:proof-pure-edges-3D}. Without loss of generality, we may rename
$p$ and $q$ so that $p$ and $y$ both lie on the same side of $\Pi_{pq}$;
see Figure~\ref{figure:proof-pure-edges-3D}.

We claim that $y\in B(p, \varepsilon)$.  If $y=x$, the claim is
trivial since $x \in D \subseteq B(p, \varepsilon) \cap \Pi_{pq}$. If
$y \neq x$, consider the half-line with origin at $y$ and passing
through $x$. Let $z$ be the point on this half-line whose distance to
$y$ is $\reach$; see Figure \ref{fig:EdgePQContradicts}.
By construction, the sphere centered at $z$ with radius $\reach$ is
tangent to \M at $y$. Since the reach of $\M$ is at least $\reach$,
the open ball $B^\circ(z,\reach)$ does not intersect $\M$. In
particular, it does not intersect the relative boundary of $D$ which,
by construction, is contained in \M. Hence, on one hand, $x$ belongs
to both the open disk $B^\circ(z,\reach) \cap \Pi_{pq}$ and the
topological disk $D$. On the other hand, the open disk
$B^\circ(z,\reach) \cap \Pi_{pq}$ does not meet the relative boundary
of $D$. The only possibility is that $B^\circ(z,\reach) \cap \Pi_{pq}
\subseteq D$. Combining this inclusion with the inclusion in
\eqref{eq:D-bounded}, we get that
\[
 B^\circ(z,\reach) \cap \Pi_{pq} \subseteq B(p, \varepsilon) \cap \Pi_{pq}.
\]
Denote by $\Pi_{pq}^+$ the half-space whose boundary is $\Pi_{pq}$ and
which contains $p$ (and therefore $y$). Next, we show that the above
inclusion still holds when replacing $\Pi_{pq}$ with $\Pi_{pq}^+$. In
the context of the proof of the claim, for any ball $B$ (either closed
or open), let us call the restriction of $B$ to $\Pi_{pq}^+$ the {\em
  dome} of $B$ above $\Pi_{pq}$. Because $\| x-y\| = d(x,\M) \leq \|
x-p\| \leq \varepsilon < \reach$, the dome of $B^\circ(z,\reach)$
above $\Pi_{pq}$ is less than half the ball
$B^\circ(z,\reach)$. Because $\varepsilon < \reach$, the dome of $B(p,
\varepsilon)$ above $\Pi_{pq}$ must contain the dome of
$B^\circ(z,\reach)$ above $\Pi_{pq}$:
\[
 B^\circ(z,\reach) \cap \Pi_{pq}^+ \subseteq B(p, \varepsilon) \cap \Pi_{pq}^+.
\]
Since $y$ lies on the boundary of the dome on the left side, it also
belongs to the dome on the right side.  This establishes our claim
that $y \in B(p,\varepsilon)$.

\begin{figure}[htb]
  \centering
  \scalebox{0.85}{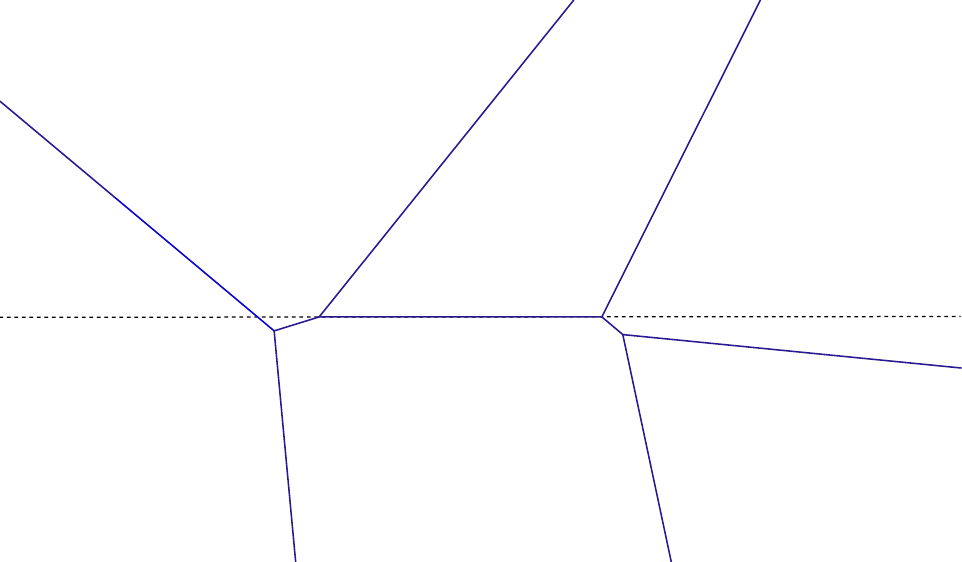}
  \caption{Notation for the proof of the claim in Lemma
    \ref{lemma:WhenRestrictedDelaunayIsPure2D}.  Note that, although
    the proof assumes $\varepsilon < \reach$, the ball
    $B(p,\varepsilon)$ is depicted as significantly larger than
    $B(z,\reach)$, which might be misleading.
	This is because the assumptions made within the proof are designed to lead to a contradiction, making the situation impossible to depict while satisfying all of those assumptions.
}
  \label{fig:EdgePQContradicts}
\end{figure}

We are now ready to reach a
contradiction.  By \cite[Corollary 3]{boissonnat2019reach}:
 \begin{equation}\label{eq:AngleTgtpy}
\angle (\Tangent y \M, \Tangent p \M) \leq 2  \arcsin \frac{\varepsilon}{2\reach}.
\end{equation}
Since $x\in B(p, \varepsilon) \cap B(q, \varepsilon) \neq \emptyset$,
we get that $||p-q|| \leq 2 \varepsilon$ and by Lemma
\ref{lemma:angleBetweenEdgeAndTangentPlane},
\begin{equation}\label{eq:AngleTgtpq}
\angle (q-p , \, \Tangent p \M) \leq \arcsin
\frac{\varepsilon}{\reach}.
\end{equation}
Since $q-p$ is orthogonal to $\Pi_{pq}$, we get that
\eqref{eq:TgtParallelToPIpq} is equivalent to $\angle (q-p, \,
\Tangent y \M) = \frac{\pi}{2}$.  This with \eqref{eq:AngleTgtpy} and
\eqref{eq:AngleTgtpq} gives:
\begin{equation*}
2  \arcsin \frac{\varepsilon}{2\reach} +   \arcsin \frac{\varepsilon}{\reach} \geq \frac{\pi}{2}.
\end{equation*}
But, this contradicts our assumption that
\[
2  \arcsin \frac{\varepsilon}{2R} +   \arcsin \frac{\varepsilon}{R} < \frac{\pi}{2}.
\]  
Hence, the assumed situation cannot occur. In other words, if an edge
belongs to $\DelR{P}$, it must have at least one triangle as a
coface.
\end{proof}

\end{document}